\newtheorem{theorem}{Theorem}
\newtheorem{proposition}{Proposition}
\newtheorem{corollary}{Corollary}
\newtheorem{lemma}{Lemma}
\newtheorem{conj}{Conjecture}
\newtheorem{definition}{Definition}
\newtheorem{remark}{Remark}
\numberwithin{equation}{section}
\def\bea{\begin{eqnarray}}
\def\eea{\end{eqnarray}}
\def\be{\begin{equation}}
\def\ee{\end{equation}}
\def\ba{\begin{align}}
\def\ea{\end{align}}
\def\bse{\begin{subequations}}
\def\ese{\end{subequations}}
\newcommand{\nn}{\nonumber}
\def\det{\,{\rm det}\, }
\def\PTi{{\rm P}}
\def\Im{\,{\rm Im}\,}
\def\Re{\,{\rm Re}\,}
\def\PE{{\rm PE }}
\DeclareMathOperator{\Td}{Td}
\DeclareMathOperator{\ch}{ch}
\DeclareMathOperator{\rk}{rk}
\DeclareMathOperator{\Hom}{Hom}
\DeclareMathOperator{\Coh}{Coh}
\DeclareMathOperator{\Ext}{Ext}
\DeclareMathOperator{\Stab}{Stab}
\DeclareMathOperator{\Pic}{Pic}
\DeclareMathOperator{\Li}{Li}
\DeclareMathOperator{\Aut}{Aut}
\def\({\left(}
\def\){\right)}
\def\[{\left[}
\def\]{\right]}
\def\<{\left\langle}
\def\>{\right\rangle}
\def\hf{{1\over 2}}
\newcommand{\p}{\partial}
\renewcommand\v{\mathsf v}
\newcommand\w{\mathsf w}
\newcommand{\eps}{\epsilon}
\newcommand{\I}{\mathrm{i}}
\newcommand{\cA}{\mathcal{A}}
\newcommand{\cC}{\mathcal{C}}
\newcommand{\cD}{\mathcal{D}}
\newcommand{\cE}{\mathcal{E}}
\newcommand{\cF}{\mathcal{F}}
\newcommand{\cH}{\mathcal{H}}
\newcommand{\cL}{\mathcal{L}}
\newcommand{\cM}{\mathcal{M}}
\newcommand{\cN}{\mathcal{N}}
\newcommand{\cO}{\mathcal{O}}
\newcommand{\cS}{\mathcal{S}}
\newcommand{\cT}{\mathcal{T}}
\newcommand{\cU}{\mathcal{U}}
\newcommand{\cW}{\mathcal{W}}
\newcommand{\PP}{{\mathbb P}}
\newcommand{\Z}{{\mathbb Z}}
\newcommand{\IR}{\mathds{R}}
\newcommand{\IC}{\mathds{C}}
\newcommand{\IZ}{\mathds{Z}}
\newcommand{\IQ}{\mathds{Q}}
\newcommand{\IH}{\mathds{H}}
\newcommand{\IP}{\mathds{P}}
\def\scM{\mathscr{M}}
\def\tF{\tilde F}
\def\hq{\hat q}
\def\CY{\mathfrak{Y}}
\def\bOm{\overline{\Omega}}
\def\bOmPi{\lefteqn{\overline{\phantom{\Omega}}}\Omega^\Pi}
\def\hpol{h^{\rm (p)}}
\def\ths#1{\theta^{(#1)}}
\def\vths#1{\vartheta^{(#1)}}
\def\sf#1{{\color{green} #1}}
\def\mm{\ell_0}
\def\rmz{{\rm z}}
\def\brmz{\bar{\rm  z}}
\def\gmax{g_{\rm max}}
\def\gC{g_C}
\def\bOmH{\bOm_H}
\def\chiOD{\chi_{\cD}}
\newcommand{\q}{\mbox{q}}
\newcommand\PT{\operatorname{PT}}
\newcommand\DT{\operatorname{DT}}
\newcommand{\GLt}{\widetilde{GL^+}(2,\IR)}
\def\GW{{\rm GW}}
\def\GV{{\rm GV}}
\newcommand{\GVg}[2][Q]{{\GV}^{(#2)}_{#1}}
\title{
Quantum geometry, stability and modularity
}
\author{Sergei Alexandrov}
\address{Laboratoire Charles Coulomb (L2C), Universit\'e de Montpellier,
CNRS, \\ F-34095, Montpellier, France}
\email{sergey.alexandrov@umontpellier.fr}
\author{Soheyla Feyzbakhsh}
\address{ Department of Mathematics, Imperial College, London SW7 2AZ, United Kingdom}
\email{s.feyzbakhsh@imperial.ac.uk}
\author{Albrecht Klemm}
\address{Bethe Center for Theoretical Physics and 
Hausdorff Center for Mathematics,\\ Universit\"at Bonn, D-53115, Germany}
\email{aoklemm@th.physik.uni-bonn.de}
\author{Boris Pioline \ and \ Thorsten Schimannek }
\address{Sorbonne Universit\'e, CNRS, 
Laboratoire de Physique Th\'eorique et Hautes Energies, \\
Campus Pierre et Marie Curie, 4 place Jussieu, F-75005, Paris, France}
\email{pioline@lpthe.jussieu.fr, schimannek@lpthe.jussieu.fr}
\begin{document}
\setlength{\parskip}{0.2cm}

\begin{abstract}
By exploiting new mathematical relations between Pandharipande-Thomas (PT) invariants, 
closely related to Gopakumar-Vafa (GV) invariants, and rank 0 Donaldson-Thomas (DT) invariants counting D4-D2-D0 BPS bound states, we  rigorously compute the first few terms in the generating series of Abelian D4-D2-D0 indices for compact one-parameter Calabi-Yau threefolds of hypergeometric type. 
In all cases where  GV invariants can be computed to sufficiently high genus,
we find striking confirmation that the generating series is modular, and predict infinite series of Abelian D4-D2-D0 indices. Conversely, we use these results to provide new constraints for the direct integration method, which allows to compute GV invariants (and therefore the topological string partition function) to higher genus than hitherto possible. The  triangle of relations between GV/PT/DT invariants is powered by a new explicit formula relating PT and rank 0 DT invariants, which is proven in an Appendix by the second named author. 
As a corollary, we obtain rigorous Castelnuovo-type bounds for PT and GV invariants for CY threefolds with Picard rank one.

\end{abstract}

\maketitle

\newpage
\tableofcontents

\setlength{\parskip}{0.2cm}

\section{Introduction}

More than 25 years after Strominger and Vafa's celebrated breakthrough \cite{Strominger:1996kf}, 
the precision counting of BPS black hole microstates in string vacua with $\cN=2$ supersymmetry 
in 4 dimensions remains an outstanding challenge at the frontier of theoretical physics and mathematics. 
Unlike in cases with higher supersymmetry, the index $\Omega_z(\gamma)$ counting BPS states with fixed electromagnetic 
charge $\gamma$ has an intricate chamber structure with respect to the moduli $z$ specifying the internal manifold, 
while that moduli space is itself  subject to complicated quantum corrections. 
As a result, the indices  $\Omega_z(\gamma)$ are almost never known exactly.

For type IIA strings compactified on a Calabi-Yau (CY) threefold $\CY$, the proper mathematical framework 
involves the derived category of coherent sheaves $\cC=D^b\Coh \CY$, the associated space of 
Bridgeland stability conditions $\cS=\Stab\cC$ and the Donaldson-Thomas (DT) invariants $\Omega_\sigma(\gamma)$ 
counting semi-stable objects in $\cC$ with charge $\gamma$ for a stability condition $\sigma=(Z,\cA)\in\cS$,
where $Z$ is a central charge function and $\cA$ a certain Abelian subcategory of $\cC$ locally determined by $Z$. 
While physics (or rather mirror symmetry) selects a particular slice $\Pi\subset\cS$
where $Z$ is a computable
function of the (complexified) K\"ahler moduli $z\in\cM_K$, the DT invariants $\Omega_\sigma(\gamma)$ are in principle 
well-defined in the larger space $\cS$. In cases where $\Omega_\sigma(\gamma)$ can be
shown to vanish at some particular point $\sigma\in \cS$ (which need not belong to the physical slice $\Pi$), 
it then becomes possible to determine it elsewhere using the universal wall-crossing formulae of
\cite{ks,Joyce:2008pc}.

This strategy has been pursued in a recent series of mathematical 
works~\cite{Toda:2011aa, Feyzbakhsh:2020wvm,Feyzbakhsh:2021rcv,Feyzbakhsh:2021nds},
which culminated in explicit formulae \cite{Feyzbakhsh:2022ydn}
relating rank 0 DT invariants, counting D4-D2-D0 bound states,
to Pandharipande-Thomas (PT) invariants, counting D6-D2-D0 bound states with one unit of D6-brane charge. 
These rigorous results depend on a conjectural inequality
which lies at the heart of the construction of stability conditions on CY threefolds \cite{bayer2011bridgeland,bayer2016space}, 
and is widely believed to hold in general but
proven only in a handful of cases. PT invariants are in turn related to
Gopakumar-Vafa (GV) invariants entering the A-model topological string partition function $Z_{\rm top}$
on $\CY$ \cite{gw-dt}, and are in principle computable by integrating the holomorphic anomaly equations satisfied 
by $Z_{\rm top}$, a procedure sometimes called `direct integration' \cite{Bershadsky:1993ta,Huang:2006hq,Grimm:2007tm}. These relations between D4-D2-D0 indices and
topological strings are in the spirit of the OSV conjecture \cite{Ooguri:2004zv}, and in fact imply
a special case of the latter \cite{Denef:2007vg,Toda:2011aa,Feyzbakhsh:2022ydn}.

On the other hand, the fact that D4-D2-D0 bound states in type IIA string theory lift to 
M5-branes wrapped on a divisor $\cD\subset \CY$ times a circle strongly suggests that suitable 
generating series of rank 0 DT invariants should exhibit modular properties  \cite{Maldacena:1997de}. 
Specifically, in the simplest case of a single M5-brane wrapped on an ample divisor $\cD$, 
the corresponding series of rank 0 DT invariants, which we call Abelian D4-D2-D0 indices, should transform as a vector-valued modular form, arising from the theta-series decomposition of the elliptic genus of 
the $(0,4)$ superconformal field theory obtained by reducing
the M5-brane along $\cD$ \cite{deBoer:2006vg,Gaiotto:2006wm,Gaiotto:2007cd,Manschot:2007ha}.
More generally, for a reducible divisor  the generating series should 
transform as 
a vector-valued mock modular form of higher depth, with a fixed modular anomaly 
\cite{Alexandrov:2016tnf,Alexandrov:2017qhn,Alexandrov:2018lgp} (see \cite{Manschot:2010sxc,Alim:2010cf,Dabholkar:2012nd,Cheng:2017dlj} for related work).
Since the space of such vector-valued (mock) modular forms is finite-dimensional, 
this opens up the possibility of computing infinite families of D4-D2-D0 indices, 
provided the singular terms in the generating series (also known as polar terms)
can be determined independently. 

This approach was applied long ago for a few CY threefolds $\CY$
with $b_2(\CY)=1$ in \cite{Gaiotto:2006wm,Gaiotto:2007cd,Collinucci:2008ht,VanHerck:2009ww}. It was extended recently in \cite{Alexandrov:2022pgd} to the full list of 13 smooth complete intersections 
in weighted projected space (the so-called hypergeometric CY threefolds), see Table \ref{table1}. Unfortunately, the analysis in \cite{Alexandrov:2022pgd} 
was based on an educated guess for the coefficients of the polar terms, which reproduced earlier results in
\cite{Gaiotto:2006wm,Gaiotto:2007cd} and provided plausible answers for 5 additional models, 
but failed to produce a modular form for the last 3 models in Table \ref{table1}.
Although a strategy to compute non-Abelian D4-D2-D0 indices was spelled out, it was eventually inconclusive, again due to lack of control on the polar coefficients.

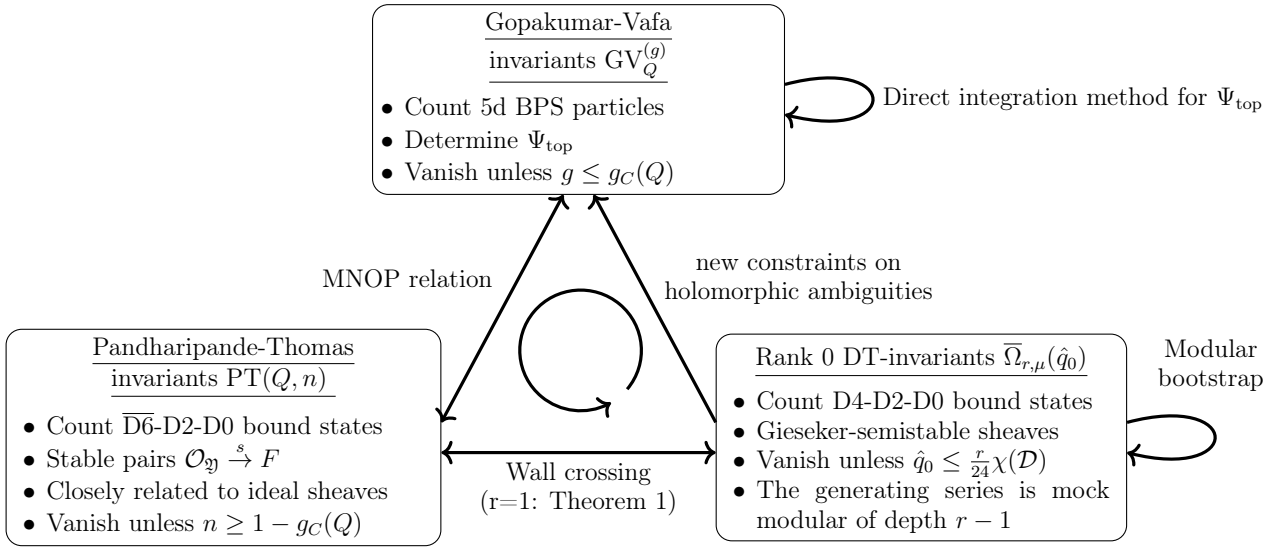
\begin{figure}[t!]
\centering
\begin{tikzpicture}[
scale=.8, every node/.style={scale=0.8},
dbox/.style = {draw, rectangle, align=center, rounded corners=0.5em,fill=white!20}
]
\node [dbox, minimum width=1cm, minimum height=3cm, align=center ] (GV)   {
	\underline{Gopakumar-Vafa}\\\underline{invariants $\GVg{g}$}\\[.2cm]%
\begin{minipage}{6.5cm}
\begin{itemize}[leftmargin=*]\setlength\itemsep{.1em}
\item Count 5d BPS particles
\item Determine $\Psi_{\rm top}$
\item Vanish unless $g\leq g_C(Q)$
\end{itemize}
\end{minipage}  
};  
\node [ minimum width=1cm, minimum height=1cm, align=center, below= of GV  ] (empty1)   {};  
\node [ minimum width=2cm, minimum height=1cm, align=center, below= of empty1  ] (empty2)   {};  

 \node[dbox, minimum width=1cm, minimum height=3cm, align=center, left=of empty2] (PT)   {
	 \underline{Pandharipande-Thomas}\\\underline{invariants $\text{PT}(Q,n)$ }\\[.2cm]
	 {
\begin{minipage}{6.6cm}
\begin{itemize}[leftmargin=*]\setlength\itemsep{.1em}
\item Count $\overline{\rm D6}$-D2-D0 bound states
\item Stable pairs $\cO_\CY\stackrel{s}{\to} F$
\item Closely related to ideal sheaves
\item Vanish unless $n\geq 1-g_C(Q)$
\end{itemize}\end{minipage}
	}
};  
 \node[dbox, minimum width=1cm, minimum height=3cm, align=center, right=of empty2] (BH)   {
	 \underline{Rank 0 DT-invariants $\bOm_{r,\mu}(\hq_0)$}\\[.2cm]
	 {
\begin{minipage}{6.2cm}
\begin{itemize}[leftmargin=*]\setlength\itemsep{.0em}
\item Count D4-D2-D0 bound states 
\item Gieseker-semistable sheaves 
\item Vanish unless $\hq_0\leq \frac{r}{24}\chi(\cD)$
\item The generating series is mock modular of depth $r-1$
\end{itemize}\end{minipage}
}
};
\draw[<->, very thick, align=center] ([shift={(0,-.25)}]PT.east) to node[align=center, below] {Wall crossing\\(r=1: Theorem 1)}([shift={(0,-.25)}]BH.west);
\draw[<->, very thick, align=center] ([shift={(-.25,0)}]GV.south) to node[align=center, left, shift={(0,.5)}] {MNOP relation}([shift={(0,.25)}]PT.east);
\draw[->, very thick, align=center] ([shift={(0,.25)}]BH.west) to node[align=center, right, shift={(0,.5)}] {new constraints on\\holomorphic ambiguities}([shift={(.25,0)}]GV.south);
\draw[->, very thick, align=center] ([shift={(0,.25)}]BH.east) to [loop right,looseness=5,min distance=20mm] node[align=center, right, shift={(-1.1,1.2)}] {Modular\\bootstrap}([shift={(0,-.25)}]BH.east);
\draw[->, very thick, align=center] ([shift={(0,.25)}]GV.east) to [loop right,looseness=5,min distance=20mm] node[align=center, right, shift={(0,0)}] {Direct integration method for $\Psi_{\rm top}$}([shift={(0,-.25)}]GV.east);

\draw[->, very thick, align=center] ([shift={(.8, .8)}]empty2) arc (-40:300:1) ;
\end{tikzpicture}
\caption{The triangle of relations between GV/PT/DT invariants \vspace{-0.01cm}}
\label{figflow}
\end{figure}

In this work, we revisit the analysis in \cite{Alexandrov:2022pgd} in light of 
the recent mathematical results in \cite{Feyzbakhsh:2022ydn}. More specifically,  
we exploit a new and powerful explicit formula \eqref{thmS11} relating PT and rank 0 DT invariants, which is proven by one of the authors in Appendix \ref{sec_appS} of this paper, and depicted by the horizontal arrow at the bottom of Figure \ref{figflow}. Among other applications, this formula allows to prove rigorous Castelnuovo-type bounds for PT and GV invariants, and determines the GV invariants $\GVg{g}$ for maximal genus $g=\gmax(Q)$, assuming some congruence condition on the degree $Q$. Along with various optimizations of the computer implementation, this allows us to push the direct integration method of \cite{Huang:2006hq} to high genus.
By combining the formula \eqref{thmS11} with these results for GV invariants, 
we are able to rigorously compute all polar terms and a large number of non-polar terms for most of the 13 hypergeometric CY threefolds, 
and find striking confirmations of the modularity of the
corresponding generating series (as well as supporting evidence  for
the validity of the BMT inequality in those models where it is not yet known to hold). 
Expanding these generating series to arbitrary order, we predict an infinite set of Abelian D4-D2-D0 indices.

Turning  the logic around and assuming that the generating series of Abelian D4-D2-D0 indices is indeed the one dictated by 
modularity, we predict infinite series of GV invariants $\GVg{g}$ lying at finite distance from the Castelnuovo bound $g=g_{C}(Q)$. This in turn provides additional 
boundary conditions for the direct integration method, which in principle allows us to push it beyond the maximal
genus (indicated as $g_{\rm integ}$ in Table \ref{table1}) at which the leading behaviour at special points in the moduli space and the Castelnuovo vanishing conditions no longer suffice to fix the holomorphic ambiguities. The maximal genus  attainable using these additional boundary conditions is indicated in the column $g_{\rm mod}$ in Table \ref{table1}. The updated data are available at~\cite{CYdata}.

More specifically, we find the generating series of D4-D2-D0 indices for 11 out of 13 models listed in Table \ref{table1}.
For 5 models, namely $X_{10}$, $X_{4,3}$, $X_{6,2}$, $X_{6,4}$ and $X_{4,2}$,
our results imply that the polar terms differ from the naive Ansatz of \cite{Alexandrov:2022pgd} (in particular,
the result for $X_{10}$ disagrees with \cite{Gaiotto:2007cd} but confirms the  proposal in \cite{VanHerck:2009ww}). In all these cases, we find spectacular confirmation that the generating series is modular.  
For the last 2 models in this Table, namely $X_{3,2,2}$ and $X_{2,2,2,2}$, we are not yet able to uniquely fix the generating series due to our limited knowledge of GV invariants for these models. 

The outline of this work is as follows. In \S\ref{sec_prim}, we give a rather extensive introduction 
to the main mathematical concepts which underlie this work, 
including the space of Bridgeland stability conditions on the derived
category of coherent sheaves $\cC=D^b\Coh\CY$ and the associated generalized DT invariants. 
We also introduce the family of weak stability conditions $\nu_{b,w}$, which plays a central role in relating
rank 0  DT invariants and PT invariants, and spell out the expected modular properties 
of generating series of Abelian D4-D2-D0 indices. In \S\ref{sec_GVdirect}, we recall the relation between PT invariants and GV invariants, and explain how the latter can be computed using the direct integration method. We further give a heuristic computation of GV invariants for maximal genus $g=\gmax(Q)$ and 
submaximal genus $g=\gmax(Q)-1$, which is confirmed in \S\ref{sec_wcrm1}  as a consequence of Theorem 1 in Appendix \S\ref{sec_appS}. In \S\ref{sec_rank0fromPT},
we explain the main results of Appendix \ref{sec_appS} in more physical terms,
starting in \S\ref{sec_wcr0} with Theorem \ref{thm.rk0} which expresses 
D4-D2-D0 indices as contributions of D6-$\overline{\rm D6}$-bound states, but whose
applicability is limited to the most polar terms, and continuing in \S\ref{sec_wcrm1} with Theorem \ref{thm-main}, 
which is less transparent physically but of much wider applicability.
In \S\ref{sec_test} we use Theorem \ref{thm-main} to compute D4-D2-D0 indices and test modularity in three representative models, namely $X_5$, $X_{10}$ and $X_{4,2}$, leaving the details of other models to Appendix \ref{sec_gen}. Finally, in \S\ref{sec-disc} we summarize our findings and discuss avenues for future research. Extensive tables 
of GV, PT and DT invariants computed in the course of this project are available in Mathematica-readable form at the website~\cite{CYdata}.

\begin{table} [t]
\begin{centering}
$$
\begin{array}{|l|r|r|r|r|r|r|r|r|r|r|r|r|}
\hline \CY  & \chi_\CY& \kappa  &c_{2}  & \chiOD     & n_1^p  & n_1^c & (a_i) 
& {\rm type} & g_{\rm integ} & g_{\rm mod} & g_{\rm avail}\\   \hline
X_5(1^5)   
& -200   &5   &  50 &   5   & 7 & 0 
& (\frac15,\frac25,\frac35,\frac45)
& F  & 53 & 69 & 64   \\
 X_6(1^4,2)  
 & -204& 3 & 42& 4  & 4 & 0 & (\frac16,\frac26,\frac46,\frac56)
 &F & 48 & 66 & 48  \\
X_8(1^4,4)  
&-296   &2              &  44  &   4 &    4 & 0  & (\frac18,\frac38,\frac58,\frac78) &F  & 60 & 84 & 64  \\
 X_{10}(1^3,2,5)  
 &  -288&      1 & 34& 3  & 2   & 0 & (\frac1{10},\frac3{10},\frac7{10},\frac9{10}) &F & 50 & 70 & 68 \\
 X_{4,3}(1^5,2)  
& -156   &6              &  48 &    5  &  9 & 0 &  (\frac14,\frac13,\frac23,\frac34) &F & 20 &24 & 24    \\
X_{6,4}(1^3,2^2,3) 
 & -156&         2 & 32& 3  &  3  & 0  &  (\frac16,\frac14,\frac34,\frac56) &F & 14 & 17 & 17    \\
 X_{3,3}(1^6)  
& -144   &9         &  54   &   6  & 14  & 1 &  (\frac13,\frac13,\frac23,\frac23) &K & 29 & 33 & 33 
\\
 X_{4,4}(1^4,2^2)  
 & -144&          4 & 40& 4   &  6  & 1 & (\frac14,\frac14,\frac34,\frac34) &K & 26 & 34 & 34   \\
X_{6,6}(1^2,2^2,3^2)  
&-120   &1      &  22 & 2 &  1  & 0 &  (\frac16,\frac16,\frac56,\frac56)
&K & 18 & 21 & 21   \\
X_{6,2}(1^5,3)  
&-256   &4             &  52  &    5  &  7 & 0 & (\frac16,\frac12,\frac12,\frac56) &C & 63 & 84 & 49   \\
 X_{4,2}(1^6)  
 & -176&   8 & 56& 6  &  15  & 1&  (\frac14,\frac12,\frac12,\frac34)
 &C & 50 & 64 & 50  \\
X_{3,2,2}(1^7) 
&-144   &12              &  60 &    7  &  21 & 1 &  (\frac13,\frac12,\frac12,\frac23)  &C & 14 & ? & 14  \\
 X_{2,2,2,2}(1^8) 
& -128&       16 & 64 & 8  & 33  & 3 &  (\frac12,\frac12,\frac12,\frac12)
& M & 17 & ? & 32  \\
\hline
\end{array}
$$
\caption{Relevant data for the 13 hypergeometric CY threefolds. The second to fifth columns indicate the Euler number of $\CY$, the  self-intersection $\kappa=H^3$ of the generator of $\Pic\CY$, the second Chern class $c_2=c_2(T\CY).H$ and the holomorphic Euler characteristic $\chiOD=\chi(\cO_\cD)$ of the primitive divisor $\cD$ dual to $H$ (not to be confused with its topological Euler characteristic $\chi(\cD)$). 
The columns $n_1^p$ and $n_1^c$ indicate the number of polar terms and modular constraints on the generating series of Abelian D4-D2-D0 invariants, taken from \cite{Alexandrov:2022pgd}.
The columns $(a_i)$ and "type"  indicate the local exponents in the Picard-Fuchs equation and the resulting degeneration type at $z=\infty$ in the notation of \cite{Joshi:2019nzi}.
The column $g_{\rm integ}$ and $g_{\rm mod}$  indicate the maximal genus 
for which GV invariants $\GVg{g}$ can be determined by the direct integration method, either using only the usual regularity conditions
and the expression \eqref{eqn:gvgmax} for GV invariants saturating the bound 
$g\leq g_{\rm max}(Q)$ for $Q=0 \mod \kappa$, 
or also including GV invariants predicted by the modular series of Abelian D4-D2-D0 indices.
The column $g_{\rm avail}$ indicates the genus up to which 
complete tables of GV invariants are currently known. For updates check~\cite{CYdata}.  
\label{table1}}
\end{centering}
\end{table}

\subsection*{Glossary of invariants}
For the reader's convenience we summarize the notations for the various types of enumerative invariants that appear in this work.
More details will be provided in the corresponding sections.

We generally denote by $\bOm_\bullet(\gamma)\in\mathbb{Q}$ the rational Donaldson-Thomas invariants counting $\bullet$-semistable objects of class $\gamma$ defined as in~\cite{Joyce:2008pc}, where $\bullet$ denotes a (weak) stability condition or a limit thereof, and by $\Omega_\bullet(\gamma)$ the (conjecturally integral) generalized Donaldson-Thomas invariants obtained from $\bOm_\bullet(\gamma)$ via the
`multicover formula' \eqref{eqn:gDTdef}. 
This applies to the following invariants:
\begin{itemize}
	\item $\bOm_\sigma$, with $\sigma$ a general (weak) stability condition on $\cC=D^b\Coh(\CY)$, introduced in~\S\ref{sec_DTdef};
	\item $\bOm_{b,w}=\bOm_{\nu_{b,w}}$, with $\nu_{b,w}$ the slope function~\eqref{noo} on the heart $\cA_b$;
	\item $\bOm_\infty=\lim\limits_{w\rightarrow+\infty}\bOm_{b,w}$, introduced above~\eqref{HilbP};
	\item $\bOm_H$ counting Gieseker-semistable sheaves with respect to an ample class $H$, defined below~\eqref{HilbP};
	\item $\bOmPi_{\rmz}$, the DT invariant along the $\Pi$-stability slice, defined in \S\ref{sec_pi}.
\end{itemize}
We deviate from this notation for the D4-D2-D0 index $\bOm_{r,\mu}(\hq_0)$ introduced in \S\ref{sec_modconj}, which determines the
rank 0 DT invariant $\bOmPi_{\rmz}(0,r,q_1,q_0)$
in the large volume attractor chamber. 
In the special case of CY threefolds with Picard group $\Pic\CY=H \IZ$, it coincides with the index
$\bOmH(\gamma)$, see  ~\eqref{defOmrmu}.
In \S\ref{sec_DTLV}, we also introduce lighter notations for rank $\pm 1$ DT invariants at large volume,
\begin{itemize}
    \item Donaldson-Thomas invariants ${\mathrm I}_{n,\beta}=\DT(\beta. H,n)$;
	\item Pandharipande-Thomas invariant ${\mathrm P}_{n,\beta}=\PT(\beta. H,n)$.
\end{itemize}
As explained  in \S\ref{sec_GVPT}, these invariants
are closely related to Gromov-Witten  invariants $\GW^{(g)}_Q\in\mathbb{Q}$ and Gopakumar-Vafa  invariants $\GVg{g}\in\mathbb{Z}$.

 \subsection*{Acknowledgments}
 The authors are grateful to  Pierre Descombes, Amir-Kian Kashani-Poor, Sheldon Katz, Bruno Le Floch, 
 Emmanuel Macr\`i, Richard Thomas  for useful discussions.
 SA and BP are especially grateful to Jan Manschot and Nava Gaddam for collaboration
 on the earlier work \cite{Alexandrov:2022pgd}. AK likes to thank Yongbin Ruan for discussions, Oliver Freyermuth and Andreas Wisskirchen for computer support 
 and Claude Duhr and Franziska Porkert for making computer resources  available. 
The research of BP and TS is supported by Agence Nationale de la Recherche under contract number ANR-21-CE31-0021. SF acknowledges the support of EPSRC postdoctoral fellowship EP/T018658/1.

\section{Preliminaries}
\label{sec_prim}

In this section, we recall the basic definitions of the mathematical structures which we use in this work, emphasizing their physical interpretation. In \S\ref{sec_dcoh} we introduce the derived category of coherent sheaves $\cC=D^b\Coh\CY$, which formalizes the notion of BPS states in type IIA string theory compactified on a Calabi-Yau threefold $\CY$. 
In \S\ref{sec_DTdef} we recall the definition of the space of Bridgeland stability conditions $\Stab\cC$ and the associated 
generalized Donaldson-Thomas invariants $\Omega_{\sigma}(\gamma)$, which are the mathematical counterpart of BPS indices.
In \S\ref{sec_stab} we review the mathematical construction of Bridgeland stability conditions in an open set around the large volume point. As an intermediate step, 
we introduce a two-parameter family of weak stability conditions defined by the central charge \eqref{Ztilt} which will play a central role in \S\ref{sec_rank0fromPT}.  
In \S\ref{sec_pi} we identify the physical slice of $\Pi$-stability conditions inside $\Stab\cC$. 
In \S\ref{sec_modconj}, we introduce the rank 0 DT invariants counting D4-D2-D0 bound states, and state the 
modular properties of generating series of these invariants predicted by string theory arguments, 
restricting to the Abelian case (one unit of D4-brane charge).
Finally, in \S\ref{sec_DTLV} we introduce the rank 1 DT and PT invariants,
$\DT(Q,n)$ and $\PT(Q,n)$, which count bound states with $\pm 1$ unit of D6-brane charge at large volume. 
Their relation to Gopakumar-Vafa invariants is deferred to \S\ref{sec_GVdirect}. 

After reading \S\ref{sec_dcoh} where notations for charge vectors are introduced, 
a reader uninterested in mathematical details may skip ahead to \S\ref{subsec-summary}, 
where we briefly summarize the necessary mathematical constructions. 
In the last two subsections we introduce
the main objects studied in this work, namely the D4-D2-D0 indices and the 
rank 1 DT and PT invariants.

\subsection{BPS branes and derived category of coherent sheaves}
\label{sec_dcoh}

 As explained in \cite{Douglas:2000ah,Douglas:2000gi,Aspinwall:2004jr}, BPS states in type IIA  string theory compactified on a Calabi-Yau (CY) threefold $\CY$ are identified with B-branes in the A-twisted topological sigma model on $\CY$. Mathematically, they are best understood as objects in the bounded derived category of coherent sheaves 
$\cC=D^b\Coh \CY$. Such an object is a bounded complex
\begin{align}
	E=\(\ldots \stackrel{d^{-2}}{\rightarrow} \cE^{-1}\stackrel{d^{-1}}{\rightarrow} \cE^0 \stackrel{d^0}{\rightarrow} \cE^1
	\stackrel{d^1}{\rightarrow} \dots\)\,,
	\label{eqcomplex}
\end{align}
 where at each place $k\in\IZ$, $\cE^k$ is a coherent sheaf on $\CY$ which vanishes for all but a finite set of indices $k$, and $d^k:\cE^k \to \cE^{k+1}$ 
a morphism such that $d^{k+1} d^k=0$ for all $k\in\IZ$. Up to quasi-isomorphisms (which preserve the cohomology of the complex and physically correspond  to irrelevant boundary deformations), the coherent sheaf $\cE^k$ can be assumed to be a vector bundle on $\CY$, and is physically interpreted as a stack of wrapped D6-branes for $k$ even, respectively anti-D6-branes for $k$ odd. The morphism $d^k$ is then interpreted as an open string tachyon field. More generally, the extension group $\Ext^n(E,E'):=\Hom(E,E'[n])$, where
$[n]$ is the translation functor mapping $E=(\cE^k,d^k)_{k\in\IZ} \mapsto E[n]=(\cE^{k-n},d^{k-n})_{k\in \IZ}$, is interpreted
physically as the space of open strings of ghost number $n$. 

Besides the grading by ghost number, the category $\cC$ is also graded by the numerical Grothendieck group $K(\cC)$, which plays the role of the lattice of electromagnetic charges. Using the Chern character map $E\mapsto \ch(E) = \sum_{k} (-1)^k \ch(\cE^k)$, $K(\cC)$ can be identified with the lattice $\Gamma\subset  H^{\rm even}(\CY,\IQ)$ spanned by vectors $\v=(\ch_0,\ch_1,\ch_2,\ch_3)$ satisfying the quantization conditions \cite[Theorem 4.19]{Joyce:2008pc}
\be
\begin{split}
\ch_0 \in H^0(\CY,\IZ), &\qquad 
\ch_1 \in H^2(\CY,\IZ), 
\\
\ch_2-\frac12 \ch_1^2 \in H^4(\CY,\IZ), 
&\qquad
\ch_3+\frac12 c_2(T\CY) \ch_1 \in H^6(\CY,\IZ).
\end{split}
\ee
The respective integer cohomology classes correspond physically to the D6, D4, D2 and D0 brane charges. The lattice $\Gamma$ is endowed with the integer skew-symmetric pairing 
\be
\label{dsz}
\langle \ch(E), \ch(E') \rangle := \int_{\CY} (\ch E')^\vee \ch (E)  \Td(T\CY)\, ,
\ee
where $\vee$ acts as $(-1)^p$ on a form of degree $2p$ and $\Td(T\CY)=1+\frac1{12} c_2(T\CY)$ is the Todd class of the tangent bundle. This pairing is
skew-symmetric due to Serre duality 
$\Ext^n(E,E')=\Ext^{3-n}(E',E)$, and integer valued by the Grothendieck-Riemann-Roch (GRR) theorem, which identifies it with 
the alternating sum of the dimensions 
\be
\label{GRR}
\chi(E',E):=
\sum_{n} (-1)^n \dim \Ext^n(E',E) = \langle \ch(E), \ch(E') \rangle\, .
\ee 
Physically, \eqref{dsz} is interpreted as the Dirac-Schwinger-Zwanziger pairing between electromagnetic charge vectors. 
It is useful to introduce the Mukai vector\footnote{Note that a different convention 
$\gamma(E)=\ch(E)^\vee \sqrt{\Td(T\CY)}$ also appears in the literature.}
\be
\label{Mukaimap}
\gamma(E)=\ch(E) \sqrt{\Td(T\CY)}\, ,
\ee 
such that the pairing \eqref{dsz} takes the Darboux form  $\int_{\CY}  
\gamma(E')^\vee \, \gamma(E)$. We shall abuse notation and denote it by 
$\langle \v,\v'\rangle$ or $\langle \gamma,\gamma'\rangle$ interchangeably. 
We note that both $\ch(E)$ and $\gamma(E)$ change sign under the translation functor $E\mapsto E[1]$, corresponding to CPT symmetry in physics, which maps D-branes to anti-D-branes. Instead, the transformation $\ch(E)\mapsto (\ch E)^\vee$ follows by taking the derived dual
$E\mapsto E^{\vee}$, which is the physical counterpart of a parity transformation. 

In this paper, unless mentioned otherwise, we always assume that $\CY$ 
is a smooth projective CY threefold $\CY$  with $b_2(\CY)=1$ and $H^2(\CY,\IZ)_{\rm tors}=0$.
This last property holds for any general complete intersection in weighted projective spaces  by a generalisation of Grothendieck-Lefschetz theorem proved in \cite[Theorem 1]{gl}, in particular for all models in Table 1.
We denote by $H$ the generator of 
$\Lambda\coloneqq H^2(\CY,\IZ)=H \IZ$. The lattice $\Lambda^*=H^4(\CY,\IZ)$ is then generated by $H^2/\kappa$ where
$\kappa=\int_{\CY} H^3$. Poincar\'e duality maps $H$ to a primitive divisor class
$[\cD]$ in $H_4(\CY,\IZ)$, where $\cD$ is an ample divisor with cubic self-intersection  $\kappa=[\cD]^3$, and $H^2/\kappa$ to a primitive curve class $[C]\in H_2(\CY,\IZ)$.
 
We identify the Chern character $\ch(E)$ with the vector of rational numbers 
\be
\label{defC0123}
[C_0,C_1,C_2,C_3](E) := \int_{\CY} [H^3 \ch_0(E)\,,\, H^2.\ch_1(E)\,,\, H.\ch_2(E)\,,\, \ch_3(E)] \in \IQ^4\,,
\ee 
such that $\ch=(C_0+C_1 H + C_2 H^2+ C_3 H^3)/\kappa$. 
Its components satisfy the quantization conditions
\be
C_0\in \kappa \IZ, 
\qquad 
C_1\in  \kappa \IZ,
\qquad
C_2 \in \IZ+\frac{C_1^2}{2 \kappa}, 
\qquad
C_3 \in \IZ - \frac{c_2}{12\kappa}\, C_1,
\label{defCvec}
\ee
where we use the shorthand notation $c_2:=H.c_2(T\CY)$. 
We also define the charge vector $\gamma(E)=(p^0,p^1,q_1,q_0)$ 
obtained by expanding the Mukai vector \eqref{Mukaimap} as in \cite[(4.8)]{Alexandrov:2010ca},
\be
\label{defMukai}
\gamma(E)
= p^0 + p^1 H -  \frac{q_1}{\kappa}\, H^2
+ \frac{q_0}{\kappa}\,H^3\, .
\ee 
The Chern and Mukai vectors are related by  
\bea
\label{Mukaibasis}
p^0 &=& \ch_0,
\quad
p^1 = \frac{1}{\kappa}\, H^2.\ch_1, 
\quad 
q_1 
= - H.\ch_2 - \frac{c_2}{24\kappa}\, H^3 \ch_0,
\quad 
q_0= \ch_3+\frac{c_2}{24\kappa}\, H^2.\ch_1,
\eea
such that 
\be
\label{quant}
p^0\in\IZ, 
\qquad 
p^1\in\IZ, 
\qquad 
q_1\in \IZ+\frac{\kappa}{2} \,(p^1)^2 - \frac{c_2}{24}\, p^0, 
\qquad 
q_0\in \IZ-\frac{c_2}{24}\, p^1.
\ee
In this basis, the Dirac pairing \eqref{GRR} takes the Darboux form 
\be
\langle \gamma, \gamma'\rangle = q_0 p'^0 + q_1 p'^1 - q'_1 p^1 - q'_0 p^0\, .
\ee
Under the action of the auto-equivalence $E\mapsto E( k):=E\otimes \cO_{\CY}( k H)$ with $ k\in\IZ$, the Chern character transforms as $\ch (E)\mapsto e^{ k H} \ch(E)$, while the components of the Mukai vector transform as 
\be
\label{specflowD6}
\begin{split}
p^0\mapsto p^0,
\qquad &
p^1\mapsto p^1+ k p^0,
\qquad
q_1 \mapsto q_1 - \kappa k\, p^1  -\frac{\kappa k^2}{2}\, p^0\, ,
\\
&\,q_0 \mapsto q_0 - k q_1 +\frac{\kappa  k^2}{2}\,  p^1 +
\frac{ \kappa  k^3}{6}\, p^0 \, .
\end{split}
\ee
We refer to this transformation as a spectral flow.

For later reference, we record the Mukai vectors for the primitive D6, D4, D2 and D0-branes, represented by the structure sheaves of the threefold $\CY$,
of the ample divisor $\cD$, of the curve $C$ and of
a point $x\in \CY$ Poincar\'e dual to $H^3/\kappa$,  
\be
\label{gammastruc}
\begin{split}
\gamma(\cO_\CY)=& \(1,0,-\frac{c_2}{24},0\),
\qquad 
\gamma(\cO_\cD)=\(0,1,\frac{\kappa}{2}, \frac{\kappa}{6} + \frac{c_2}{24}\),
\\
\gamma(\cO_C)=&\, (0,0,1,-1),
\qquad \qquad\
\gamma(\cO_x)=(0,0,0,1) .
\end{split}
\ee
It is immediate to check that the quantization conditions \eqref{quant} are obeyed, using the
fact that   $\chiOD\coloneqq \chi(\cO_{\cD})=\frac{\kappa}{6} + \frac{c_2}{12}$ \sf{$ $} is integer (and equal
to the arithmetic genus plus one).

\subsection{Bridgeland stability conditions and Donaldson-Thomas invariants}
\label{sec_DTdef}

Physically, BPS states are elements in the point particle spectrum  whose mass $M$ saturate the Bogomolnyi-Prasad-Sommerfeld bound $M\geq |Z(\gamma)|$, where $Z(\gamma)$ is a central generator in the 
super-Poincar\'e algebra, which depends linearly on the electromagnetic charge vector $\gamma$ and is otherwise a transcendental function of the complexified K\"ahler moduli $z\in \cM_K(\CY)$. The BPS index $\Omega_z(\gamma)$ counts the number of BPS states with charge $\gamma$, weighted with a sign $(-1)^{2J_3}$ where $J_3$ is the projection of the angular momentum along a fixed axis, such that 
$\Omega_z(\gamma)$ becomes robust under complex deformations of $\CY$.
Mathematically, this is formalized by introducing
the notion of $\Pi$-stability conditions, which are special cases of Bridgeland stability conditions\footnote{Stability conditions are defined on triangulated categories, which include the data of a translation functor $E\mapsto E[1]$ and a collection of distinguished triangles $A\to B\to C\to A[1]$ satisfying various axioms. The derived category of coherent sheaves is automatically endowed with a triangulated structure. For simplicity, we 
conflate distinguished triangles with short exact sequences $0\to A\to B\to C\to 0$.}, and the associated generalized Donaldson-Thomas invariants.

A Bridgeland stability condition consists of a pair $\sigma=(Z,\cA)$ satisfying the following axioms \cite{MR2373143}:
\begin{itemize}
\item[i)] $Z:\Gamma\to \IC$ is a linear map, known as the (holomorphic) central charge (we abuse notation and denote $Z(E)=Z(\gamma(E))$ for any $E\in\cC$);
\item[ii)]  $\cA$ is the heart of a bounded $t$-structure on $\cC$ (i.e. $\cA=\cD^{\leq 0}\cap \cD^{\geq 0}$ where $(\cD^{\leq 0},\cD^{> 0})$  is a pair of orthogonal subcategories of $\cC$ which are invariant under the left and right translation functors $[1]$ and $[-1]$, respectively),
in particular $\cA$ is
an Abelian subcategory of $\cC$;
\item[iii)] For any non-zero $E\in \cA$, the central charge  $Z(E)$ is contained in 
the Poincar\'e upper half-plane $\IH_B=\IH \cup (-\infty,0)$, i.e. 
 $Z(E)=\rho(E) e^{\I\pi \phi(E)}$ where $\rho(E)>0$
and $0<\phi(E)\leq 1$;
\item[iv)] {\it (Harder-Narasimhan property)} Every non-zero $E\in \cA$ admits a finite filtration $0 \subset E_0\subset E_1 \cdots \subset E_n =E$ by objects $E_i$ in $\cA$, such that each factor $F_i\coloneqq E_i/E_{i-1}$ is $\sigma$-semistable (as defined below)
and $\phi(F_1)>\phi(F_2)\cdots > \phi(F_n)$;
\item[v)] {\it (Support property)} There exists a constant $C>0$ such that, for all  $\sigma$-semistable objects $E\in \cA$, $\| \gamma(E) \| \leq C \, |Z(E)|$
where $\| \cdot \|$ is any fixed Euclidean norm on $\Gamma\otimes \IR$.
\end{itemize}
In the last two items above, an object $F\in \cA$ is called  $\sigma$-semistable if $\phi(F')\leq \phi(F)$
for every non-zero subobject $F'$ of $F$. More generally, an object $F\in \cC$ is called $\sigma$-semistable
if there exists $n\in\IZ$ such that $F[n]\in\cA$ and $F[n]$ is  $\sigma$-semistable in the previous sense. For most purposes in this paper, we shall only need the notion of {\it weak stability condition} (as defined in \cite[Appendix B]{bayer2016space}), which essentially amounts to relaxing the axiom iii) and allowing $\cA$ to contain objects with vanishing central charge. 

For any weak stability condition $\sigma$ (subject to certain technical conditions spelled out in ~\cite{Joyce:2008pc})
and any charge vector $\gamma\in \Gamma$, one defines the generalized Donaldson-Thomas invariant $\Omega_\sigma(\gamma)$ as follows. Let $\cM_\sigma(\gamma)$ be the moduli stack
of $\sigma$-semistable objects in $\cA$ with $\gamma(E)=\pm \gamma$, where the sign is chosen such that  $\pm Z(\gamma)\in \IH_B$. 
If $\gamma$ is primitive and $\sigma$ generic, $\Omega_\sigma(\gamma)$ can be defined as the weighted Euler number 
\be
\Omega_\sigma(\gamma) = \chi(\cM_\sigma(\gamma),\nu)
\coloneqq \sum_{m\in\IZ} m \chi(\nu^{-1}(m)),
\label{eqn:gDTdefPrim}
\ee
where $\nu:\cM_\sigma(\gamma)\rightarrow \IZ$ is Behrend's constructible function \cite{behrend2009donaldson}.\footnote{As explained 
e.g. in \cite[\S 2.3]{Bouchard:2016lfg}, the weight $\nu(p)$ can be interpreted  physically as the
dimension of the chiral ring of the superpotential whose critical locus determines the moduli space $\cM_\sigma(\gamma)$.}
In the simplest case when $\cM_\sigma(\gamma)$ is a smooth projective variety (up to the trivial $\IC^\times$ action), $\Omega_\sigma(\gamma)$ is equal to the topological Euler characteristic up to a sign, 
\be
\label{OmbToOm}
\Omega_\sigma(\gamma) = (-1)^{\dim_{\IC}\cM_\sigma(\gamma)}\, \chi(\cM_\sigma(\gamma)).
\ee
For non-primitive charge vectors, one first defines a rational invariant $\bOm_\sigma(\gamma)\in\IQ$ following~\cite{Joyce:2008pc}, and then sets
\be
\Omega_\sigma(\gamma)=\sum_{k|\gamma} \frac{\mu(k)}{k^2}\,  \bOm_\sigma(\gamma/k),
\label{eqn:gDTdef}
\ee
where $\mu(k)$ is the M\"obius function.\footnote{Recall that $\mu(k)=0$ if $k$ has repeated prime factors, otherwise $\mu(k)=(-1)^n$ with $n$ the number of prime factors.\label{fooMoebius}} While $\Omega_\sigma(\gamma)$ is manifestly integral when $\gamma$ is primitive, its integrality for general charge $\gamma$ and $\sigma$ generic remains conjectural. We shall often abuse notation and denote $\Omega_\sigma(\v)=\Omega_\sigma(\gamma)$
where $\v=\gamma/\sqrt{\Td(T\CY)}$ is the Chern character associated to the Mukai vector $\gamma$.

For a compact CY threefold, the space of Bridgeland stability conditions $\Stab(\cC)$ is hard to construct and poorly understood in general. Assuming that it is non-empty 
(as physics strongly suggests), 
one can show  \cite{MR2373143} that it is a complex manifold of dimension $\rk\Gamma$, such that 
the forgetful map $\Stab(\cC)\to \Hom(\Gamma,\IC)$ which sends $\sigma=(Z,\cA)\mapsto Z$ is a local homeomorphism. In other words, the heart $\cA$ is locally determined by the central charge function $Z$. In particular, the complex dimension $\rk\Gamma=b_{\rm even}(\CY)=2b_2(\CY)+2$ is larger than the dimension $b_2(\CY)$ of
K\"ahler moduli space $\cM_K(\CY)$, which is conjecturally embedded 
as a co-dimension $b_2(\CY)+2$ submanifold $\Pi\subset\Stab(\cC)$, as we discuss in \S\ref{sec_pi}.

Moreover, $\Stab(\cC)$ admits an action of $\GLt \times \Aut\cC$  
\cite[Lemma 8.2]{MR2373143},
where $\GLt$ is the universal cover of  the group of $2\times 2$ real matrices  with 
positive determinant and $\Aut\cC$ is the group of autoequivalences of $\cC$. 
The group $GL^+(2,\IR)$ acts on the central charge $Z$ via 
\be
\begin{pmatrix} \Re Z \\ \Im Z \end{pmatrix} \mapsto 
\begin{pmatrix} a & b \\ c & d \end{pmatrix} \begin{pmatrix} \Re Z \\ \Im Z \end{pmatrix},
\qquad ad-bc>0\,,
\ee
preserving the orientation on $\IR^2$, hence the phase ordering of the central charges and hence stability of objects. Its universal cover acts on the stability condition $(Z,\cA)$
by suitably tilting the heart $\cA$.  
By construction, the Donaldson-Thomas invariant $\Omega_\sigma(\gamma)$ is invariant under 
the action of $\GLt$ on $\sigma$,
and under the combined action of 
$\Aut(\cC)$ on $(\gamma,\sigma)$.

Importantly, being integer valued, the generalized DT invariants
$ \Omega_\sigma(\gamma) \in \IZ$ are locally constant on $\Stab\cC$, but they may  jump when some object $E\in \cA$
of charge $\gamma$ goes from being stable to unstable. This may happen when the central charge $Z(\gamma')$ of a subobject $E'\subset E$ of charge $\gamma'$ becomes aligned with  
$Z(\gamma)$, therefore along the
real-codimension one {\it wall of instability} (or marginal stability)
\be
\label{eqWall}
\cW(\gamma,\gamma') \coloneqq \{ \sigma=(Z,\cA) \in\Stab\cC : \Im(Z(\gamma') \overline{Z(\gamma)}) =0\}\,.
\ee
The discontinuity across $\cW(\gamma,\gamma')$ is  determined from the invariants on either side of the wall by the wall-crossing formulae of \cite{ks,Joyce:2008pc}. Physically, the jump in the BPS index is due to the appearance or disappearance of multi-centered black hole bound states \cite{Denef:2007vg}. Of course, this physical
interpretation only holds along the physical slice of $\Pi$-stability conditions.

\subsection{Stability conditions for one-modulus CY threefolds}
\label{sec_stab}

We now restrict again to compact CY threefolds with $b_2(\CY)=1$,
and explain a general construction of an open set of Bridgeland stability conditions around the large volume limit following \cite{bayer2011bridgeland,bayer2016space}. While the full construction is not needed for the rest of the paper, it allows us to introduce, as an intermediate step, a family of weak stability conditions \eqref{Ztilt} (called tilt-stability in \cite{bayer2011bridgeland,bayer2016space}) and a conjectural inequality \eqref{BMTorig}, which will play an essential role in  relating rank 1 and rank 0 DT invariants in \S\ref{sec_rank0fromPT}.

\subsubsection*{Parametrizing central charge functions modulo $\GLt$ action}

As explained in the previous subsection, the space of Bridgeland stability conditions is parametrized locally by the central charges
of the objects \eqref{gammastruc}, or equivalently by the  components $(X^0,X^1,F_1,F_0)\in\IC^4$
of the holomorphic central charge in the Mukai basis,
\be
\label{ZMukai}
Z(\gamma)  = q_0 X^0 + q_1 X^1 - p^1 F_1 - p^0 F_0.
\ee
Using the $\GLt$ action, we may restrict to the real four-dimensional slice with central charge
 \cite[\S 8]{bayer2016space}\footnote{We swap  $(a,b)$ and $(\alpha,\beta)$ 
compared to \cite{bayer2016space}, and rescale the imaginary part by the positive factor $a$.}
parametrized by $(a,b,\alpha,\beta)\in \IR^4$,
\be
\label{ZBMT}
Z_{a,b,\alpha,\beta}(\gamma) =
\left( - \ch_3^b + \beta \, \ch_2^b+ \alpha\, \ch_1^b \right) 
+ \I \left( a \,  \ch_2^b  -  \frac12\, a^3 \ch_0^b  \right),
\ee
where  $\ch^b_k (E)=\int_{\CY} e^{-b H}. H^{3-k}.\ch(E)$, or more explicitly 
\be
\begin{split}
\ch_0^b=C_0,
\qquad &
\ch_1^b = C_1 - b C_0, 
\qquad 
\ch_2^b =C_2-b C_1+\frac12\, b^2 C_0,
\\
& \ch_3^b=C_3-b  C_2+\frac12\, b^2 C_1-\frac16\, b^3 C_0.
\end{split}
\label{relCch}
\ee
This slice is invariant under the spectral flow transformation \eqref{specflowD6} provided
it is accompanied by a translation $b\mapsto b+ k$.
We note that under derived duality $\gamma\mapsto\gamma^\vee$ (see below \eqref{Mukaimap}) accompanied by a sign flip of $(b,\beta)$, the central charge \eqref{ZBMT} transforms into its complex conjugate, 
\be
\label{ZBMSdual}
Z_{a,b,\alpha,\beta}(\gamma^\vee) = 
- \overline{Z_{a,-b,\alpha,-\beta}(\gamma)}\, .
\ee
Upon setting 
\be
\label{LVslice}
a=\sqrt{\frac13\, t^2 - \frac{c_2}{12\kappa}}\, , 
\qquad 
\alpha=\frac12\, t^2 - \frac{c_2}{24\kappa}\, ,
\qquad 
\beta=0\, ,
\ee
the function \eqref{ZBMT} coincides with the large volume central charge\footnote{As discussed below \eqref{FLV}, this formula agrees with the physical central charge in the large volume $t\to\infty$, up to an $\cO(t^0)$ correction proportional to $\zeta(3)\chi_{\CY}$. Agreement up to $\cO(e^{-t})$ can be achieved by replacing $\sqrt{\Td(T\CY)}$ in $\gamma(E)$ \eqref{Mukaimap} by the
$\Gamma$-class \cite{galkin2016gamma}. }
\be
\label{ZLVint}
Z_{b,t}^{\rm LV}(E) =- \int_{\CY} e^{-  (b+\I t) H} \gamma(E)\, ,
\ee
up to rescaling of its imaginary part by $t/a$ using the $\GLt$ action.

In \cite{bayer2011bridgeland,bayer2016space}, a method to construct a heart $\cA_{a,b}$ (depending only on $a$ and $b$) is introduced so that the pair $(Z_{a,b,\alpha,\beta},\cA_{a,b})$ is a Bridgeland stability condition on $\cD^b(\CY)$
whenever the inequalities
\be
\label{alphabound}
a>0,
\qquad 
\alpha> \frac16\, a^2 + \frac12\, a |\beta| 
\ee
are satisfied. The second condition ensures that the central charge \eqref{ZBMT} never vanishes on objects $\cO_\CY(mH)$ with $m\in\IZ$. In particular, the region \eqref{alphabound} includes the large volume slice \eqref{LVslice} for $t^2>\frac{c_2}{4\kappa}$. As we review in the remainder of this subsection, the construction of \cite{bayer2011bridgeland,bayer2016space} proceeds in two steps,
\bea
&
\hspace{1.7cm}
\mbox{slope stability} \hspace{1.4cm} N_{b,a}\mbox{-stability} \hspace{1cm} \mbox{Bridgeland stability} &
\nn\\
& \displaystyle
\({\Coh\CY} \, , \, {-\ch_1^b+\I\ch_0^b }\)
\ {\mathop{\longrightarrow}\limits^{\rm tilt}}\
\({\cA_b}\, ,\, {Z_{b,a}}\)
\  {\mathop{\longrightarrow}\limits^{\rm tilt}}\
\({\cA_{b,a}}\, ,\, {Z_{a,b,\alpha,\beta}}\) &
\nn
\eea
Independently of its use for constructing Bridgeland stability conditions, the 
family of weak stability conditions $N_{b,a}$ appearing in the intermediate step
plays an essential role in relating rank 0 DT invariants to rank 1 DT invariants.

\subsubsection*{Step 1}

We first start with the Abelian category of coherent sheaves $\Coh\CY$ where for any $b \in \mathbb{R}$ we define the slope function  
\be
\label{defslope}
\mu_b(\cE) = \frac{\ch_1^b(\cE)}{\ch_0^b(\cE)}
\ee
for $\ch_0(\cE)\neq 0$, and $\mu_b(\cE)=+\infty$ otherwise. We say a coherent sheaf $\cE$ is slope semi-stable if $\mu_b(\cE')\leq \mu_b(\cE)$ for any subsheaf $\cE'\subset \cE$. We know that any slope semistable sheaf satisfies the classical Bogolomov-Gieseker inequality \cite[Theorem 3.2]{bayer2016space}:
\be
\label{BGineq}
\Delta_H(E) \coloneqq (\ch_1^b(E))^2 - 2 \ch_0^b(E) \ch_2^b(E) = C_1^2-2C_0C_2 \geq 0.
\ee
Following \cite{bayer2011bridgeland},
one defines \begin{itemize}
    \item 
$\cT_{b}\subset\Coh\CY$ as the subcategory generated by slope-semi-stable
sheaves $\cE$ with $\mu_b(\cE)>0$,
\item $\cF_b\subset\Coh\CY$ as the subcategory generated by slope-semi-stable
sheaves $\cE$ with $\mu_b(\cE)\leq 0$. 
\end{itemize}
Then $\cA_b \coloneqq \langle \cF_b[1], \cT_b\rangle$ is the heart of a bounded t-structure on $\cD^b(\CY)$ generated
by length two complexes of the form $E=(F\xrightarrow{d} T)$ with $\ker d \in \cT_b$ and cok$\, d \in \cF_b$. 
For objects in the heart $\cA_b$, we consider the central charge function 
\be
\label{Ztilt}
 Z_{b,a}(\gamma) =-a \ch_2^b   +  \frac12\, a^3 \ch_0^b  + \I \, a^2\ch_1^b \, .
\ee
Note that up to $\widetilde{GL}(2,\IR)$-action, it can be obtained by setting 
$\beta=0$ and  $\alpha=\infty$ in \eqref{ZBMT}, effectively
getting rid of the dependence on $\ch_3$.  The resulting 
pair $(Z_{b,a},\cA_b)$ satisfies the axioms (i,ii,iv,v)
in the previous subsection, but not iii), since the central charge of skyscraper sheaves vanishes. Nonetheless, it defines a family of weak stability conditions in the sense of \cite[Appendix B]{bayer2016space}.

For an object $E \in \cA_b$, we define\footnote{The ratio
\eqref{slopeBMT} agrees with $\sqrt3 N_{bH,tH}(E)$ in  \cite{bayer2011bridgeland,bayer2016space}, upon setting $t=a\sqrt3$.} 
\be
\label{slopeBMT}
N_{b,a}(E)
\coloneqq -\frac{\Re[Z_{b,a}(E)]}{\Im[Z_{b,a}(E)]} 
= \frac{   \ch_2^b(E)  -  \frac12\, a^2  \ch_0^b(E)}
{a\,\ch_1^b(E)  }\,,
\ee
with $N_{b,a}(E)=+\infty$ if $\ch_1^b(E)=0$. Then by definition, $E \in \cA_b$ is semistable with respect to the pair $(Z_{b,a},\cA_b)$ if and only if for any non-trivial subobject $F \subset E$ in $\cA_b$, we have $N_{b,a}(F) \leq N_{b,a}(E)$. By \cite[Theorem 3.5]{bayer2016space}, any such semistable object $E \in \cA_b$ satisfies the classical Bogomolov inequality \eqref{BGineq}. Moreover, it is conjectured in \cite[Conjecture 4.1]{bayer2016space} that it satisfies the following inequality involving the third Chern class $\ch_3(E)$: 
\be
a^2 \left[ (\ch^b_1)^2 -2 \ch_0^b \ch_2^b\right]
 + 4 (\ch^b_2)^2-6 \ch_1^b \ch_3^b 
\geq 0\,,
\label{BMTch}
\ee
which we refer to as the BMT inequality. Moreover \cite[Theorem 4.2]{bayer2016space} shows that the inequality \eqref{BMTch} is equivalent to the original Conjecture 1.3.1 in  \cite{bayer2011bridgeland}, which says that for any object $E \in \cA_b$ which is semistable with respect to the stability function 
$Z_{b,a}$ and satisfies $N_{b,a}(E)=0$, i.e. $\ch_2^b=\frac12\, a^2 \ch_0^b$, 
one has 
\be
\label{BMTorig}
\ch_3^b \leq \frac{a^2}{6}\, \ch_1^b. 
\ee

\subsubsection*{Step 2}

Similar to the construction of $\cA_b$ in the first step, one defines 
\begin{itemize}
\item $\cT_{b,a}\subset\cA_b$ as the subcategory generated by semi-stable objects in $\cA_b$ with $N_{b,a}(E)>0$, 
\item $\cF_{b,a}\subset\cA_b$ as the subcategory generated by semi-stable objects in $\cA_b$ with $N_{b,a}(E)\leq 0$. 
\end{itemize}
Then we define $\cA_{b,a}=\langle \cF_{b,a}[1], \cT_{b,a}\rangle$. By construction, $\Im Z_{a,b,\alpha,\beta}(E)\geq 0$ for any object $E\in \cA_{b,a}$. The conjectural inequality \eqref{BMTorig} further guarantees that $\Re Z_{a,b,\alpha,\beta}(E)<0$ whenever
$\Im Z_{a,b,\alpha,\beta}(E)=0$
\cite{bayer2016space}, which shows that the axioms of \S\ref{sec_DTdef} are indeed satisfied. This was in fact the original motivation for the conjectural BMT inequality.

\subsubsection*{Wall-crossing in the space of weak stability condition} 
To obtain the formula relating rank zero DT invariants to rank one DT invariants in Appendix \ref{sec_appS}, we 
shall apply the wall-crossing formula in the space of weak stability conditions $(Z_{b,a}, \cA_b)$, rather than in the space of Bridgeland stability conditions $\Stab(\cC)$, as walls are much easier to control. 

It will be convenient to rescale and shift the slope function $N_{b,a}$ \eqref{slopeBMT} into
\begin{equation}\label{scale}
\nu_{b,w}\coloneqq \ a N_{b,a}+b, \quad\text{where } w \coloneqq \frac12(a^2+b^2),
\end{equation}
for $w>b^2/2$. This is because the new slope
\begin{equation}\label{noo}
\nu_{b,w}(E)\ =\ \left\{\!\!\begin{array}{cc} 
\frac{C_2(E) - wC_0(E)}{C_1(E)-b C_0(E)}
 & \text{if }\ch_1^{b}(E)\ne 0, \\
+\infty & \text{if }\ch_1^{b}(E)=0 \end{array}\right.
\end{equation}
has a denominator that is linear in $b$ and numerator linear in $w$, so the walls of $\nu_{b,w}$-instability (which is by construction equivalent to $N_{b,a}$-instability) are line segments in the region 
\be
U \coloneqq  \,\left\{(b,w) \in \mathbb{R}^2 \colon w > \tfrac12b^2  \right\}
\ee
of the $(b,w)$ plane (see the green line in Fig. \ref{fig-curves}). We shall abuse notation and denote by $\bOm_{b,w}(\gamma)$ the rational DT invariant 
$\bOm_{\nu_{b,w}}(\gamma)$ counting $\nu_{b,w}$-stable objects of class $\gamma$. 

More precisely, the slope \eqref{noo} coincides for two objects $E$ and $E'$ of Chern character $C_i$ and $C'_i$ along the line
\be
(C_1 C'_0 - C'_1C_0) w
+ b ( C_0 C'_2 - C'_0 C_2) + ( C_2 C'_1 - C_1 C'_2)=0\,,
\ee
passing through the points $\varpi(\gamma)$ and $\varpi(\gamma')$ defined by 
\be
\label{defPi}
\varpi(\gamma) = \left( \frac{C_1}{C_0}, \frac{C_2}{C_0} \right).
\ee
Note that the points $\varpi(\gamma)$ lie outside the region $U$ when $E$ and $E'$ are 
$\nu_{b,w}$-semistable objects, due to the Bogomolov-Gieseker inequality \eqref{BGineq}.
In the original coordinates $(b,a)$, walls of $N_{b,a}$-instability are half-circles centered at $b=\frac{C_0 C'_2 - C'_0 C_2}{C_0 C'_1 - C'_0C_1}$ along the axis $a=0$, or vertical lines going through $b=\frac{C_1 C'_2 - C'_1 C_2}{C_0 C'_2 - C'_0C_2}$ when $C_1 C'_0 - C'_1C_0=0$.

\begin{figure}[t]
\begin{center}
\includegraphics[height=7cm]{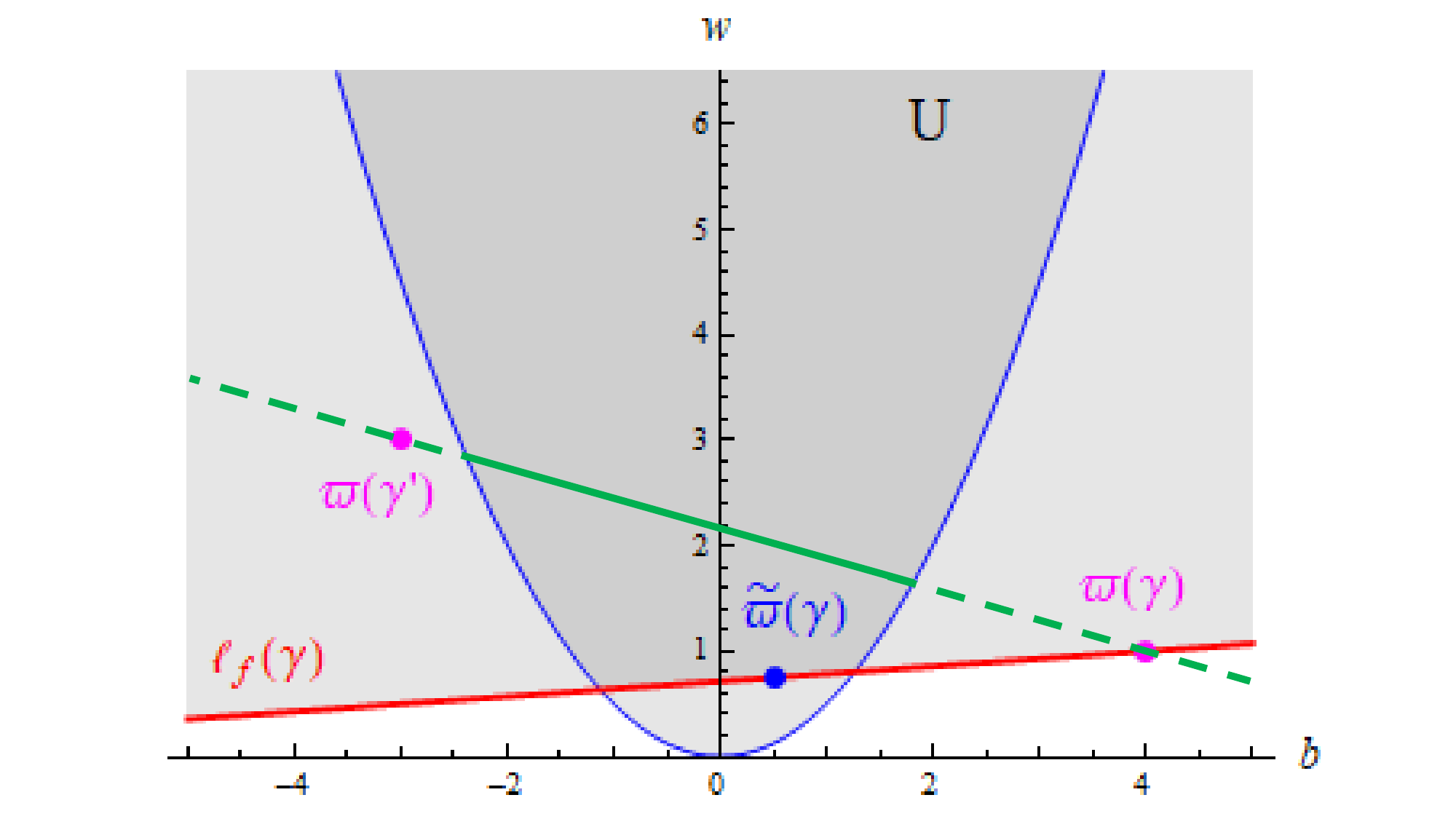}
\end{center}
\vspace{-0.5cm}
\caption{In the $(b,w)$-plane, walls of $\nu_{b,w}$-instability are straight lines 
between $\varpi(\gamma)$ and $\varpi(\gamma')$, where $\gamma$ and $\gamma'$ are the charges of the destabilizing objects. The BMT inequality \eqref{BMTineq} is saturated along the red line going through $\varpi(\gamma)$ and $\widetilde\varpi(\gamma)$. The parabola $w=\hf b^2$ is shown in blue.}
\label{fig-curves}
\end{figure}

\subsubsection*{Wall and chamber structure}
For any fixed class $\gamma$ with $C_0 \neq 0$, or $C_0=0$ and $C_1 \neq 0$, there exists a set of lines $\{\ell_i\}_{i \in I}$ in $\mathbb{R}^2$ \cite[Proposition 4.1]{feyz-noether-lef} such that the segments $\ell_i\cap U$ (called `walls') are locally finite and satisfy
\begin{enumerate}
    \item If $C_0 \neq 0$, then all lines $\ell_i$ pass through $\varpi(\gamma)$, and if $C_0=0$ then all lines $\ell_i$ are parallel of slope $\frac{C_2}{C_1}$.
    \item The $\nu_{b,w}$-semistability of any object $E \in \cC$ of class $\gamma$ is unchanged as $(b,w)$ varies within any connected component (called a ``\emph{chamber}") of $U \setminus \bigcup_{i \in I}\ell_i$. 
    \item For any wall $\ell_i\cap U$, there is an object $E \in \cC$ of class $\gamma$ which is strictly $\nu_{b,w}$-semistable for all $(b,w) \in \ell_i\cap U$. 
\end{enumerate}
The DT invariant $\bOm_{b,w^+}(\gamma)$  at a point just above $\ell_i$ is determined from the invariant $\bOm_{b,w^{-}}(\gamma)$ at a point
just below $\ell_i$ by the wall-crossing formula of \cite{Joyce:2008pc}.
Note that with this definition, the DT invariant $\bOm_{b,w}(\gamma)$ is not necessarily discontinuous across the wall.

\subsubsection*{Tilt-stability and Gieseker stability}

Since the number of walls for fixed charge $\gamma$ which are crossed as $w\to+\infty$ is finite \cite[Proposition 1.4]{Feyzbakhsh:2021nds}, the index $\bOm_{b,w}(\gamma)$ reaches a fixed value as $w\to +\infty$. For $p^0=0$,
there is no vertical wall, so this value is independent of $b$, and 
we denote it by $\bOm_{\infty}(\gamma)$.  For $p^0\neq 0$, the index
may jump across the vertical wall at $b=\frac{C_1}{C_0}$ given by the vanishing of the slope \eqref{defslope}. We denote by $\bOm_{\infty}(\gamma)$ the
limit of the index $\bOm_{b,w}(\gamma)$ as $w\to +\infty$ on the side $b<\frac{C_1}{C_0}$ for positive rank $p^0> 0$, or on the side $b>\frac{C_1}{C_0}$ for negative rank $p^0<0$. 

For non-negative rank $p^0\geq 0$ and  $\gamma$ primitive, it turns out that $\bOm_{\infty}(\gamma)$ agrees with the weighted Euler number 
$\chi(\cM_{\rm tilt}(\gamma),\nu)$ 
of the moduli space $\cM_{\rm tilt}(\gamma)$ of tilt-semi-stable sheaves 
of charge $\gamma$ \cite[Lemma 2.4]{Feyzbakhsh:2022ydn}. Here, tilt-stability is a variant of
Gieseker semi-stability defined as follows: let $P_E( k)$ be the Hilbert polynomial 
\be
\label{HilbP}
\begin{split}
P_E( k) \coloneqq &\, \chi(\cO(- k H), E) = \int_{\CY} e^{ k H} \ch E \Td(T\CY)
\\
=&\,  \frac{\kappa p^0}{6}  k^3 + \frac{\kappa p^1}{2}  k^2  - 
\left(q_1 + \frac{c_2}{24} p^0 \right)  k + 
\left(q_0 - \frac{c_2}{24} p^1 \right),
\end{split}
\ee
and $p_E( k)=P_E( k)/a_E$ the associated monic 
Hilbert polynomial, 
with $a_E$ the coefficient of the highest degree term in $ k$.
Gieseker-(semi)stability for a coherent sheaf $E$ is the requirement that for
all exact sequences $0\to E'\to E\to E''\to0$ of coherent sheaves, we have $\deg p_{E'}>\deg p_{E''}$,
or $\deg p_{E'}=\deg p_{E''}$ and $p_{E'}( k)<(\leq) p_{E''}( k)$ for $ k\gg 1$.
We denote by $\bOmH(\gamma)$ the rational index counting Gieseker-semistable sheaves with class $\gamma$, defined as in \cite{Joyce:2008pc}. 
Tilt-stability is defined in the same way, but discarding the constant term of the Hilbert polynomial before dividing by its top coefficient as before.
However, for threefolds with $\Pic\CY=H\IZ$ and two-dimensional class (i.e. $p^0=0$, $p^1\neq 0$),
the index $\bOm_{\infty}(\gamma)$ counting tilt-semistable objects coincides
with the index $\bOmH(\gamma)$ counting Gieseker-semistable sheaves~\cite[Lemma 5.2]{Feyzbakhsh:2022ydn}. In \S\ref{sec_rank0fromPT}, we shall present explicit formulae 
relating $\bOmH(\gamma)$ for rank 0 charges (counting D4-D2-D0 bound states) 
and rank $\pm 1$ charges (counting D6-D2-D0 bound states), which follow
by a sequence of wall-crossings from an empty chamber provided by
the conjectural BMT inequality \eqref{BMTch}.

\subsubsection*{Conjectural BMT inequality.} 

\begin{figure}[t]
\begin{center}
\includegraphics[height=7cm]{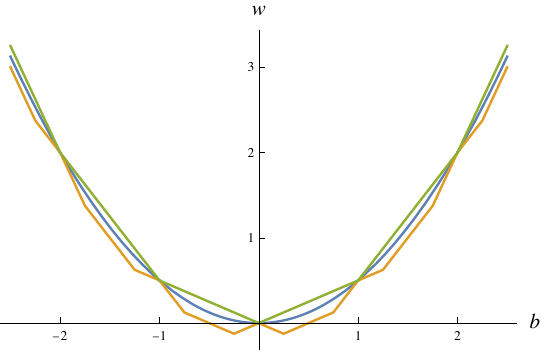}
\end{center}
\vspace{-0.4cm}
\caption{For the quintic threefold $X_5$, the stronger Bogolomov-Gieseker inequality established in \cite[Theorem 1.1]{li2019stability} implies that for any $\nu_{b,w}$-semistable object $E$, the point $\varpi(E)$ lies below the orange curve $w=G(b)$, where
$G(b)=-\frac12|b|$ for $|b|<\frac14$, $G(b)=\frac12|b|-\frac14$ for $\frac14<|b|<\frac34$, $G(b)=\frac32|b|-1$ for $\frac34<|b|<1$ and
$G(b)=G(b-\lfloor b\rfloor)+ \lfloor b\rfloor b - \frac12 \lfloor b\rfloor^2$ when
$b$ lies outside the interval $(-1,1)$. Moreover, the BMT inequality is  known to hold in the region above the green curve given in \eqref{liregion} \cite[Theorem 1.2]{li2019stability}. These two curves intersect the blue curve $w=\frac12 b^2$ for integer values of~$b$.}
\label{fig-Li}
\end{figure}

In the plane parametrized by $(b,w=\frac12(a^2+b^2))$, the BMT inequality \eqref{BMTch}  implies the linear inequality
\begin{equation}\label{BMTineq}
    L_{b,w}(\gamma) \coloneqq (C_1^2-2C_0C_2)w+(3C_0C_3-C_1C_2)b+(2C_2^2-3C_1C_3) \geq 0\,,
\end{equation}
whenever there exists a $\nu_{b,w}$-semistable object $E \in \cD^b(\CY)$ of class $\gamma$. From \eqref{BGineq}, the coefficient of $w$ in the above equation is $\Delta_H(E) \geq 0$. If $\Delta_H(E) > 0$, the inequality \eqref{BMTineq} says that $E$ can be $\nu_{b,w}$-semistable only for points $(b,w) \in U$ above the line $\ell_f(\gamma)$ defined by the equation $L_{b,w}(\gamma)=0$ (see the red line in Fig. \ref{fig-curves}). This line passes through the points $\varpi(\gamma)$ defined in \eqref{defPi} and 
\be
\label{defPit}
\widetilde\varpi(\gamma)=\left( \frac{2C_2}{C_1},\frac{3C_3}{C_1}\right).
\ee 

The conjectural BMT inequality \eqref{BMTineq} has now been proved for the quintic threefold $X_5$ and for a degree $(4,2)$ complete intersection $X_{4,2}$ in $\PP^5$ when $(b,w)$ satisfy  \cite{li2019stability,liu2021stability} 
\be
\label{liregion}
w -\frac12\, b^2 > \frac12\, [b] (1-[b])\, ,
\qquad
[b]\coloneqq b-\lfloor b \rfloor\, .
\ee
Moreover, a slightly weaker version of \eqref{BMTineq} is proved for the sextic and octic CY threefolds, $X_6$ and $X_8$, in the same restricted region \eqref{liregion} \cite{koseki2020stability}. The proofs of the BMT inequality for these models rely on a strengthening of the classical Bogolomov-Gieseker inequality \eqref{BGineq}, i.e. the existence of a function $G \colon \mathbb{R} \rightarrow \mathbb{R}$ such that any slope-semistable sheaf $E$ satisfies $\frac{\ch_2(E).H}{\ch_0(E)H^3} \leq G\left(\frac{\ch_1(E).H^2}{\ch_0(E)H^3}\right)$ and $G(b) \leq \frac{b^2}{2}$ for all $b \in \mathbb{R}$. When such a function is available, one can enlarge the space of weak stability conditions $U$ to $U_G \coloneqq \{ (b,w) \in \mathbb{R}^2 \colon w > G(b)\}$, see Figure \ref{fig-Li} for the quintic threefold. 
The existence of such a function and the status of the BMT inequality
for the other hypergeometric models in Table \ref{table1} remains open at the time of writing.

\subsection{K\"ahler moduli and $\Pi$-stability}
\label{sec_pi}

While the DT invariants $\Omega_\sigma(\gamma)$ are mathematically well-defined throughout the space of Bridgeland stability conditions $\Stab\cC$ (away from walls of marginal stability), they only
acquire physical meaning along a particular complex one-dimensional slice $\Pi\subset\Stab\cC$
where the central charge $Z(\gamma)$ coincides with the physical central charge $Z_{\rmz}(\gamma)$ determined by the complexified K\"ahler structure on $\CY$, or equivalently by the complex
structure parametrized by $z$ of the mirror family $\hat\CY$. 
On the mirror side, the central charge is given by the period integral 
\be
Z_z(\gamma) = \Pi_z(\hat\gamma) = \int_{\hat\gamma} \Omega_{3,0}\,,
\ee
of the holomorphic 3-form on the cycle $\hat\gamma\in H_3(\hat\CY,\IZ)$ dual to $\gamma$. 

We shall restrict to CY threefolds
obtained as a smooth complete intersection of degree $(d_1,\dots,d_n)$ in weighted projective space $\IP^{n+3}_{w_1,\dots,w_{n+4}}$. 
There are 13 such threefolds $\CY$, whose basic topological data are tabulated in Table \ref{table1}. 
In particular, we note that $\sum_j d_j=\sum_i w_i$ by the CY condition, and $\kappa=H^3=\prod_i d_i / \prod_j w_j$. The mirror threefold $\hat\CY$ can be obtained, for example, by applying the general construction
of \cite{Batyrev:1994pg}. For all these models, 
the periods  satisfy a Picard-Fuchs equation of hypergeometric type, 
\be
\cL \, \Pi_z(\hat\gamma) = \left[ (z\partial_z)^4 - \mu^{-1} z  \prod_{k=1}^4 (z\partial_z+a_k) \right] \Pi_z(\hat\gamma)=0\, ,
\ee
where $\mu=\prod w_i^{w_i}/\prod d_j^{d_j}$, the `local exponents' $a_k$ satisfy $\sum_k a_k=2$ and are ordered in increasing order for definiteness. 
The equation has singularities at $z=0$, $z=\mu$ and at $z=\infty$, such that the 
K\"ahler moduli space of $\CY$ (or complex structure moduli space of $\hat\CY$) consists of the punctured sphere $\cM_K(\CY)=\IP\backslash\{0,\mu,\infty\}$. 
The two singularities at $z=0$ and $z=\mu$ are universal, and correspond to the 
large volume limit and conifold point, respectively. Following \cite{Joshi:2019nzi}, we denote these two types of degeneration by $M$ (for maximal unipotent monodromy) and $C$ (for conifold). The type of degeneration at $z=\infty$ depends on the local exponents $a_k$, and may be of type $F$ (when all $a_k$ are distinct, corresponding to a monodromy of finite order), $C$ (when $a_2=a_3$), $K$ (when $a_1=a_2$ and $a_3=a_4$), or $M$ (when all $a_k$'s coincide). Degenerations of type $K$ and $M$ occur at infinite distance with respect to the special K\"ahler metric on $\cM_K$, while degenerations of type $F$ and $C$ occur at finite distance.
Under a type $C$ degeneration, the conformal field theory on $\CY$ becomes singular, due to a brane becoming massless, while a type $F$ degeneration leads to a regular 
CFT, often with a Gepner-model type description.
The regulator $\rho$, that will be relevant for the direct integration of the holomorphic anomaly equations discussed in Section~\ref{sec_direct}, is defined to be the smallest denominator among the local exponents at $z=\infty$.
The exponents and the type of the singularity at $z=\infty$ are indicated in Table~\ref{table1}.

To construct a basis of solutions adapted to the maximal unipotent monodromy at $z=0$ (corresponding to the large volume limit on the mirror),
we apply the Frobenius method. For $\eps\in\IC$ let
\be
\Pi(z,\eps) = \sum_{k=0}^{\infty}
\frac{\prod_{j=1}^n \Gamma( d_j(k+\eps)+1)}
{\prod_{i=1}^{n+4} \Gamma( w_i(k+\eps)+1)} \, z^{k+\eps}.
\ee
Using the identity 
\be
\frac{\prod_{j=1}^n \left( d_j \prod_{\ell=1}^{d_j-1} (d_j k+\ell) \right) }
{\prod_{i=1}^{n+4} \left( w_i \prod_{\ell=1}^{w_i-1} (w_i k+\ell) \right) }
= \mu^{-1} (k+a_1) (k+a_2) (k+a_3) (k+a_4) ,
\ee
one easily checks that 
\be
\cL \Pi(z,\eps) = 
\eps^4 z^\eps\,  \frac{\prod_j \Gamma( d_j \eps+1)}
{\prod_i \Gamma( w_i \eps+1)}\, .
\ee
Thus, the first three terms $\Pi_{0\leq p\leq 3}(z)$ in the Taylor expansion around $\eps=0$
\be
\Pi(z,\eps) =\sum_{p=0}^{\infty} \Pi_p(z) (2\pi\I\eps)^p\,,
\ee
are annihilated by $\cL$. In the Mukai basis \eqref{ZMukai}, the coefficients  $(X^0,X^1,F_1,F_0)$
are given by  
\be
\begin{split}
X^0=\Pi_0(z), & 
\qquad 
F_0 = \kappa \Pi_3(z) + \frac{c_2}{24}\, \Pi_1(z),
\\
X^1=\Pi_1(z), &
\qquad 
F_1=-\kappa \Pi_2(z) - \frac{c_2}{24}\, \Pi_0(z).
\end{split}
\ee
We define the flat coordinate $\rmz=b+\I t = X^1/X^0$, such that $\rmz\mapsto \rmz+1$ under
monodromy $z\mapsto e^{2\pi \I} z$ around $z=0$. The components can be integrated to a prepotential 
$F(\rmz)$ such that 
\be
F_1/X^0 = \partial_{\rm z}F(\rmz)\, ,
\qquad 
F_0/X^0 = 2 F(\rmz) - \rmz \partial_{\rmz} F(\rmz)\, .
\ee
In the large volume limit $t\to\infty$, the prepotential has an asymptotic expansion
\be
\label{FLV}
F(\rmz) = -\frac{\kappa}{6}\, \rmz^3 +  \frac{\zeta(3) \chi_{\CY}}{2(2\pi\I)^3} -\frac{1}{(2\pi\I)^3} 
\sum_{Q=1}^{\infty} \GVg{0} \Li_3\left( e^{2\pi\I Q \rmz}\right),
\ee
where $\GVg{0}$ are the genus-zero Gopakumar-Vafa invariants. Keeping only the
leading cubic term in \eqref{FLV} and fixing the K\"ahler gauge $X^0=-1$
in \eqref{ZMukai}, one arrives at 
\be
\begin{split}
Z^{LV}_{b,t}(\gamma) =&\,   \frac{\kappa}{6}\, \rmz^3 p^0 - \frac{\kappa}{2}\, p^1 \rmz^2 - q_1 \rmz - q_0
\\
=&\, \left( - \ch_3^b + \left(\frac12\, t^2-\frac{c_2}{24\kappa}\right)   \ch_1^b \right) + \I t \left(    \ch_2^b - \left(\frac16\, t^2-\frac{c_2}{24\kappa}\right) \ch_0^b   \right).
\end{split}
\label{ZLV}
\ee
which reproduces \eqref{ZLVint} and coincides with the standard slice \eqref{ZBMT} upon making the identifications in \eqref{LVslice}.

Taking into account subleading corrections, it is necessary to apply a $\GLt$ 
transformation in
order to reach the form  \eqref{ZBMT}. The resulting values of $a,b,\alpha,\beta$ can be computed
by equating the products $x_{ij}=\Im(Z_i \overline {Z_j})$, $0\leq i<j \leq 3$ where  $Z_i$
is the central charge for the Chern vector defined by $C_k=\delta^k_i$.
Indeed, these quantities are invariant up to
scale under  $\widetilde{GL(2,\IR)^+}$ and satisfy the quadratic constraint
$x_{01} x_{23} + x_{02} x_{31} + x_{03} x_{12}=0$, so give the desired 4 local real coordinates.
In this way, one finds
\bea
\begin{split}
a(z)=&\,\frac{\sqrt{(\Im \tF_1)^2 - 2 \kappa \Im\rmz \Im \tF_0}}{\kappa \sqrt{\Im\rmz}}\, ,
\qquad
b(z) = -\frac{\Im \tF_1}{\kappa \Im\rmz}\, ,
\\
\alpha(z)=&\, -\frac{ \Im(\brmz\tF_1)}{\kappa \Im\rmz}
-\frac{(\Im \tF_1)^2}{2 \kappa ^2  (\Im\rmz)^2}\, ,
 \\
\beta(z) =&\, \frac{6   \kappa ^2   (\Im\rmz)^2
\Im(\brmz \tF_{0}) 
+6 \kappa{\Im\rmz}  \Im \tF_{1}  \Im (\tF_{0}-\brmz \tF_{1})
-4 (\Im \tF_{1})^3} 
{3  \kappa  {\Im\rmz} \left(2  \kappa {\Im\rmz} \Im \tF_{0}-(\Im \tF_{1})^2\right)} \, ,
\end{split}
\eea
where we denoted\footnote{Note that the signs are such that shift by $\frac{c_2}{24}$ cannot be absorbed into a linear shift of $F$!} $\tF_0=2F-\rmz \partial_{\rmz} F + \frac{c_2}{24}\rmz$, $\tF_1=\partial_{\rmz}F - \frac{c_2}{24}$. 
In the region where $\alpha> \frac16 a^2 + \frac12 a |\beta|$, the heart $\cA(z)=\cA_{b(z),a(z)}$ is given by the double-tilt construction explained in the previous subsection. 
The construction of the heart on the full physical slice $\Pi$, including the vicinity of the singularities at $z=\mu$ and $z=\infty$, remains a challenging open problem. Assuming that this problem has been solved, we denote by $\bOmPi_{\rmz}(\gamma)$ the generalized DT invariant along the physical slice.
Fortunately, the relation between rank 0
and higher rank DT invariants at large volume can be derived using only the family of weak stability  conditions $\nu_{b,w}$, assuming that the BMT inequality holds.

\subsection{Interlude -- summary}
\label{subsec-summary}

Let us briefly summarize the previous subsections. First, we introduced the space of Bridgeland stability conditions $\Stab\cC$ on the derived category $\cC$ of coherent sheaves, which is the appropriate mathematical framework for BPS branes in type IIA string theory. A stability condition is a pair $(Z,\cA)$ of a central charge function $Z$ and a heart $\cA\subset \cC$ which determines which constituents may bind into stable objects. For one-modulus CY threefolds, after dividing out by the $\GLt$ action (which preserves the phase ordering of central charges), the space of stability conditions effectively has  real dimension 4. Assuming the BMT conjecture \eqref{BMTorig}, we outlined the construction of an open set $\cU$ in $\Stab\cC$ 
parametrized by $(a,\alpha,b,\beta)\in\IR^4$ subject to the inequalities \eqref{alphabound}.
The physical subspace of $\Pi$-stability conditions provides a real-codimension two slice $\Pi$ in this open set, determined by the prepotential $F(\rm z)$  (see Fig. \ref{fig-stab}). For $\rmz\to\I\infty$, this slice asymptotically coincides with the large volume slice \eqref{ZLV} or equivalently \eqref{ZLVint}. On the boundary of $\cU$, there is also an important family of weak stability conditions with central charge \eqref{Ztilt} parametrized by $(b,a)\in \IR\times \IR^+$, which is obtained from \eqref{ZLV} by omitting the contributions 
proportional to the D0-brane charge $\ch_3^b$ and to the second Chern class $c_2(T\CY)$, and setting $t=a\sqrt3$. This family, interchangeably called $N_{b,a}$-stability or $\nu_{b,w}$-stability with $w=\frac12(a^2+b^2)$, serves as a key intermediate step for the construction of the
heart $\cA$ throughout the open set $\cU$, and is the subject of the BMT conjecture \eqref{BMTorig}, which constrains the existence of $\nu_{b,w}$-semi-stable objects for small $w$. We denote  by $\bOm_{b,w}(\gamma)$ the rational Donaldson-Thomas invariant counting $\nu_{b,w}$-semistable objects of class $\gamma$ in the heart $\cA_b$,
and by 
$\bOmH(\gamma)$ the rational Donaldson-Thomas invariant counting $H$-Gieseker-semi-stable
sheaves of class $\gamma$ defined following \cite{Joyce:2008pc}. 
In the next two subsections, we spell out the modularity properties predicted by string theory for  DT invariants $\bOmH(\gamma)$ in the rank 0 case,
and the relation with ordinary DT invariants and PT invariants in the rank 
$\pm 1$ case.

\begin{figure}[t]
\begin{center}
\includegraphics[height=8cm]{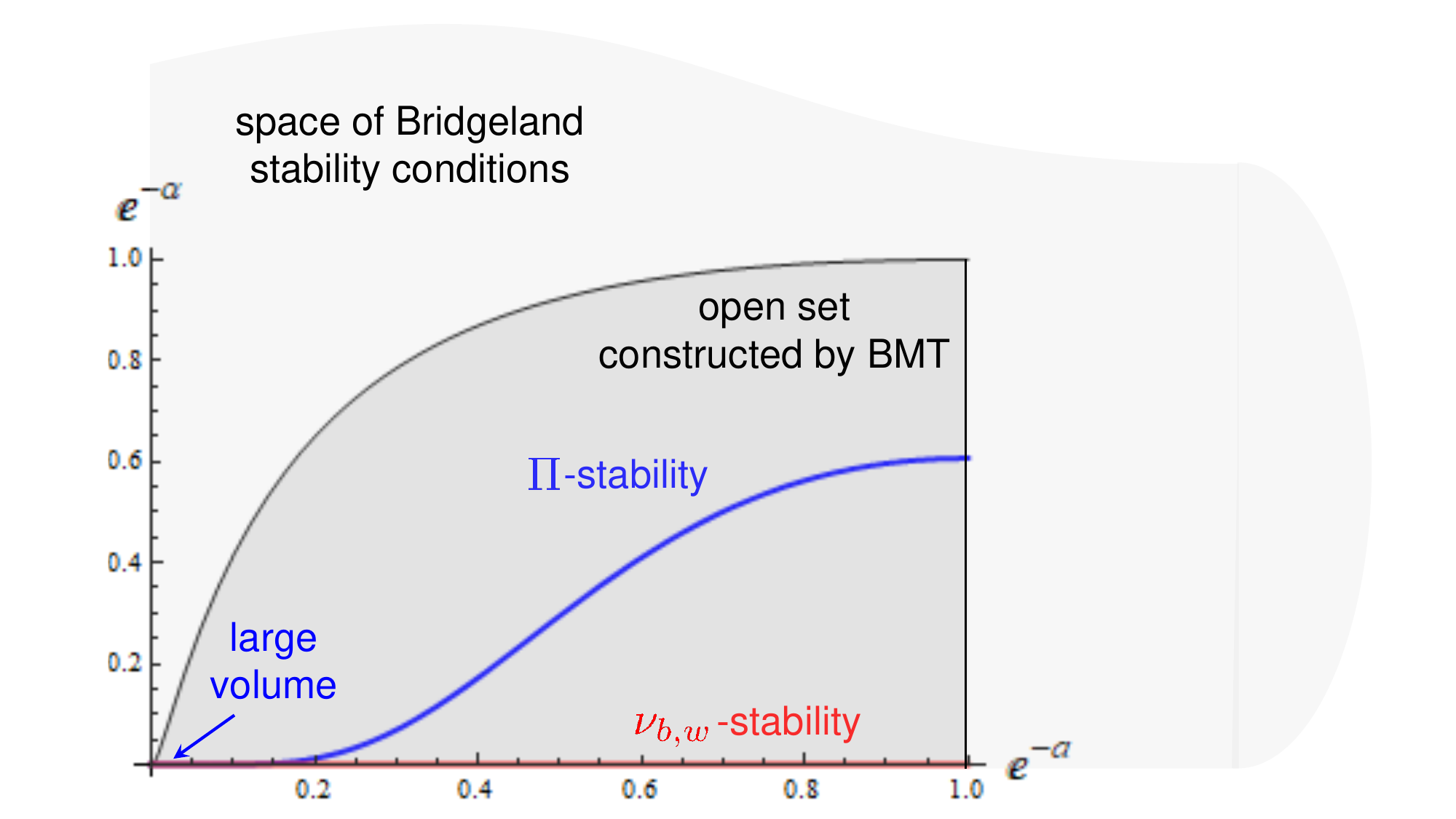}
\end{center}
\vspace{-0.5cm}
\caption{Section of the space of Bridgeland stability conditions by the plane $\beta=0$, $b=\mbox{const}$, 
drawn in coordinates $(e^{-a},e^{-\alpha})$. The boundary of the set 
constructed in \cite{bayer2011bridgeland,bayer2016space} corresponds to the inequalities \eqref{alphabound}. 
The red line is the slice of weak $\nu_{b,w}$-stability conditions with the central charge \eqref{Ztilt} and 
the blue line represents the slice of $\Pi$-stability conditions parametrized  by 
the complexified K\"ahler structure of $\CY$. The large volume limit corresponds to the region near the origin 
where the two slices approach each other.}
\label{fig-stab}
\end{figure}

\subsection{D4-D2-D0 indices and modularity conjecture}
\label{sec_modconj}

Here we consider the case $p^0=0$, $p^1:=r>0$, corresponding to D4-D2-D0 bound states.
As explained below \eqref{HilbP}, for a fixed charge $\gamma$,  the index $\bOm_{b,w}(\gamma)$ reaches a finite value $\bOm_{\infty}(\gamma)$ as $w\to +\infty$, which turns out to coincide with the index 
$\bOmH(\gamma)$ counting Gieseker-semi-stable sheaves.
For CY threefolds with Picard rank one, this index also agrees with the `large volume attractor index' (also called MSW index in \cite{Alexandrov:2012au,Alexandrov:2016tnf,Alexandrov:2017qhn,Alexandrov:2018lgp})\footnote{In general, the MSW index is defined as the value of $\bOmPi_\rmz(\gamma)$
in the asymptotic direction $\rmz^a=-\kappa^{ab} q_b+\I \lambda p^a$ with 
$\lambda\gg 1$, where $\kappa^{ab}$ is the inverse of the matrix $\kappa_{ab}=\kappa_{abc}p^c$. When $b_2(\CY)=1$, the distinction between 
Gieseker index and MSW index becomes irrelevant.}
\be
\label{lvolatt}
\bOm_{\infty}(0,r,q_1,q_0) = \lim_{\lambda\to +\infty}
\bOmPi_{-\frac{q_1}{\kappa r}+\I \lambda  r}(0,r,q_1,q_0)\,,
\ee
where $\bOmPi_{\rmz=b+\I t}(\gamma)$ denotes the DT invariant along the $\Pi$-stability slice. The index $\Omega_{\infty}(0,r,q_1,q_0)$ 
is preserved 
under spectral flow \eqref{specflowD6} with $k\in\IZ$, which leaves 
the D4-brane charge $r$ invariant, as well
as the reduced D0-brane charge
\be
\label{defqhat}
\hq_0 :=
q_0 -\frac{q_1^2}{2\kappa r}\,,
\ee
and the class of $\mu:=q_1 - \frac12\, \kappa r^2$ in $\Lambda^*/\Lambda=\IZ/(\kappa r \IZ)$. Accordingly, we denote 
\be
\label{defOmrmu}
\bOm_{r,\mu}(\hq_0)=\bOm_\infty(\gamma)=\bOmH(\gamma)\, .
\ee 
Note that for fixed $r\geq 1,\mu\in\IZ$, the argument $\hq_0$ is such that the combination
\be
\label{nfromhq0}
n = \frac{\chi(\cD_r)}{24} -\frac{\mu^2}{2\kappa r} - \frac{r \mu}{2} - \hq_0\,,
\ee
is an integer. Here $\chi(\cD_r)$ is the topological Euler characteristic of the divisor $\cD_r$ Poincar\'e dual to $rH$, \cite[(3.8)]{Maldacena:1997de}\footnote{We urge the reader not to confuse $\chi(\cD_r)$ with the holomorphic Euler characteristic 
$\chi_{\cD_r}$ defined in \eqref{defL0}.} 
\be
\chi(\cD_r) = \kappa r^3+c_{2}r.
\label{defchiD}
\ee

It follows from
derived duality $E\mapsto E^\vee$, which acts on the Chern vector by $\ch(E)\mapsto \ch(E)^\vee$ as defined below \eqref{dsz}, that the index 
\eqref{defOmrmu} is invariant under 
$\mu\mapsto -\mu$, on top of the periodicity $\mu\mapsto\mu+\kappa r$ (note however that the integer \eqref{nfromhq0} is not invariant under these symmetries). 
Furthermore, 
$\Omega_{r,\mu}(\hq_0)$ vanishes unless  the reduced charge $\hq_0$ 
is bounded from above by \cite[Corollary 3.3]{Toda:2011aa}
\be
\hq_0 \leq \hq_0^{\rm max}=\frac{1}{24}\, \chi(\cD_r)\, .
\label{qmax}
\ee
Upon identifying $\hat q_0=\frac{c_L}{24}-L_0$ with $c_L=\chi(\cD_r)$,
this coincides with the unitarity bound $L_0\geq 0$ in the two-dimensional $(0,4)$ superconformal field theory obtained by reducing the worldvolume theory of an M5-brane wrapped
on $\cD_r$ \cite{Maldacena:1997de}.

Since the reduced D0-brane charge is bounded from above for fixed D4-brane charge $r>0$ and D2-brane charge $q_1=\mu+\frac12\, \kappa r^2$, one 
can define the generating series of rational invariants
\be
h_{r,\mu}(\tau) =\sum_{\hq_0 \leq \hq_0^{\rm max}}
\bOm_{r,\mu}(\hq_0)\,\q^{-\hq_0 }\, , 
\qquad 
\q=e^{2\pi\I \tau},
\label{defhDT}
\ee
Since $\mu$ takes values in  $\IZ/(\kappa r \IZ)$, \eqref{defhDT}
defines a vector with $\kappa r$ entries (half of which being redundant due to the symmetry under $\mu\mapsto -\mu$). For $r=1$, the case of interest in this paper, the charge vector is primitive and therefore the rational DT invariant $\bOm_{1,\mu}(\hq_0)$ coincides with the integer DT invariant
$\Omega_{1,\mu}(\hq_0)$, defined by replacing $\bOm_\infty(\gamma)$
by $\Omega_\infty(\gamma)$ in \eqref{defOmrmu}.

By exploiting the constraints of S-duality in string theory, it has been argued that the generating series $h_{r,\mu}$  must possess specific modular properties 
under $SL(2,\IZ)$ transformations of the parameter $\tau$
\cite{Alexandrov:2012au,Alexandrov:2016tnf,Alexandrov:2018lgp,Alexandrov:2019rth}.
More precisely, $h_{r,\mu}$ should transform  as a weakly holomorphic vector valued mock modular form 
of depth $r-1$, with a specific modular anomaly.
In this paper, we restrict ourselves to the simplest Abelian case $r=1$, and
refer to the invariants $\Omega_{1,\mu}(\hq_0)$ as  Abelian D4-D2-D0 indices. In this situation, the modular anomaly disappears and $h_{1,\mu}$ must transform as a vector-valued modular form 
of weight $-3/2$ in the Weil representation attached to the lattice 
$\IZ[\kappa]$ with quadratic form 
$m\mapsto \kappa m^2$. Equivalently, it
must transform with the following matrices under $T:\tau\mapsto \tau+1$ and $S:\tau\mapsto -1/\tau$
\cite[Eq.(2.10)]{Alexandrov:2019rth} (see also \cite{Gaiotto:2006wm, deBoer:2006vg, Denef:2007vg, Manschot:2007ha})
\be
\begin{split}
M_{\mu \nu}(T)=&\,
e^{\frac{\pi\I}{\kappa}(\mu+\frac{\kappa}{2} )^2 +\frac{\pi\I}{12}\, c_{2} }
\,\delta_{\mu\nu},
\\
M_{\mu\nu}(S)=&\,
\frac{(-1)^{\chiOD}}{\sqrt{\I \kappa}}\,
e^{-\frac{2\pi\I}{\kappa}\,\mu \nu},
\end{split}
\label{multsys-h2}
\ee
where $\chiOD=\frac{\kappa}{6}+\frac{c_2}{12}$ (see below \eqref{gammastruc}).
We denote by $\scM_1(\CY)$ the space of weakly holomorphic vector-valued modular forms with these transformation properties under $SL(2,\IZ)$.

It is well known that any  weakly holomorphic vector-valued modular form of negative weight $w<0$ is completely determined 
only by its 'polar coefficients', i.e. the terms in its Fourier expansion that become singular in the limit $\tau\to\I\infty$. Such terms correspond to the terms with $\hq_0>0$ in \eqref{defhDT}. Once the polar terms are known, the full modular form can then be determined, for example by constructing the Poincar\'e-Rademacher sum (see e.g. \cite{Cheng:2012qc}).
It is important however, that the dimension of the space of modular forms can be strictly smaller
than the number of polar terms, which means that the polar coefficients must satisfy certain linear constraints,
which are related to the existence of cusp forms in dual weight $2-w$ \cite{Bantay:2007zz,Manschot:2007ha,Manschot:2008zb}. In Table \ref{table1},  we list the number of polar terms (denoted by $n_1^p$) and constraints (denoted by $n_1^c$) for the 13 smooth hypergeometric threefolds computed in \cite{Alexandrov:2022pgd}, such that the dimension of  $\scM_1(\CY)$ is given by $n_1^p-n_1^c$.

\subsection{Rank 1 DT invariants and stable pair invariants}
\label{sec_DTLV}

We now turn to the case $p^0=\pm 1$, as the corresponding invariants will turn out to provide the information needed to compute the polar coefficients in the generating series of D4-D2-D0 Abelian indices.

For $p^0=1$ and $p^1=0$, the index $\Omega_{\infty}(1,0,q_1,q_0)$ reduces to the invariant originally defined in \cite{MR1818182}, counting ideal sheaves $E$ with $\ch(E)=1-\beta- n H^3$, 
with $\beta=(q_1+\frac{c_2}{24}) H^2/\kappa$ (identified by Poincar\'e duality with a class in $H_2(\CY,\IZ)$) and $n=-q_0\in\IZ$. Equivalently, it counts dimension-one subschemes $C\subset \CY$ with class $[C]=\beta$ and holomorphic Euler number 
$\chi(\cO_C)=n$. The moduli space of such subschemes is projective and admits a perfect symmetric obstruction theory (see e.g. \cite{Pandharipande:2011jz} and references therein). We denote the corresponding DT invariant by 
\be
\label{noteDT}
{\mathrm I}_{n,\beta}=\DT(\beta. H,n)
=\Omega_\infty\left(1,0,-\beta.H-\frac{c_2}{24},-n\right)
,
\ee 
where the first notation is standard in the mathematics literature and the second was used in \cite{Alexandrov:2022pgd}. The case $p^0=1, p^1\neq 0$ can be reached by tensoring by a line bundle $\cL$ on $\CY$, or equivalently using the spectral flow \eqref{specflowD6}.
As a result the index $\Omega_\infty(\gamma)=\DT(Q_+,n_+)$ depends only on the invariant 
combinations
\be
\label{defQn}
Q_+ = q_1 + \frac12\, \kappa (p^1)^2 + \frac{c_{2}}{24}\, ,
\qquad
n_+ = - q_0 - p^1 q_1 - \frac13\, \kappa (p^1)^3\, ,
\ee
which both take integral values as a consequence of the quantization conditions \eqref{quant} and the integrality of  
the arithmetic genus of the divisor class $\cD_r$ with $r=p^1$,
\be
\label{defL0}
\chi_{\cD_r} \coloneqq \chi(\cO_{\cD_r}) 
= \frac16\, \kappa r^3 + \frac{1}{12}\, c_{2} r \, .
\ee

For $p^0=-1$ and $p^1=0$, the index $\Omega_{\infty}(-1,0,q_1,q_0)$ instead counts
stable pairs \cite[\S 3]{Toda:2011aa}
(more precisely, derived dual of stable pairs) 
$E=(\cO_\CY \stackrel{s}{\rightarrow} F)^\vee$, where $F$ is a pure one-dimensional sheaf with $[F]=\beta=-q_1 H^2/\kappa$ and $\chi(F)=n=-q_0$,
and $s$ is a section  with zero-dimensional cokernel. The Chern character for this object is $\ch(E)=-1+\beta-n H^3$.
As shown by Pandharipande and Thomas~\cite{pandharipande2009curve}, the moduli space of stable pairs is also projective and admits a perfect symmetric obstruction theory. We denote the corresponding PT invariant by 
\be 
\label{notePT}
{\mathrm P}_{n,\beta}=\PT(\beta. H,n)
=\Omega_\infty\left(-1,0,\beta.H+\frac{c_2}{24},-n\right)
\,,
\ee
where the first notation is standard in the mathematics literature and the second is similar to the one used for DT invariants.
The case $p^0=-1, p^1\neq 0$ can again be reached by tensoring by a line bundle $\cL$ on $\CY$, so that the 
index $\Omega_\infty(\gamma)=\PT(Q_-,n_-)$ depends only on the invariant 
combinations
\be
\label{defQn2}
Q_- = -q_1 + \frac12\, \kappa (p^1)^2 + \frac{c_{2}}{24}\, ,
\qquad
n_- = - q_0 + p^1 q_1 - \frac13\, \kappa (p^1)^3\, .
\ee

As shown in Appendix \ref{sec_appS}, Theorem 2, 
the invariants $\DT(Q,n)$ and $\PT(Q,n)$ vanish unless
\be
\label{CastPT}
Q\geq 0 \quad \mbox{and}\quad n\geq - \left\lfloor \frac{Q^2}{2\kappa} + \frac{Q}{2}\right\rfloor.
\ee
where the first
condition follows from the Bogolomov-Gieseker inequality \eqref{BGineq}.
In fact, in the range $0\leq Q<\kappa$, the BMT inequality \eqref{BMTch} implies  the slightly stronger bound
\be
\label{CastPTs}
n \geq -\left\lfloor \frac{2Q^2}{3\kappa} + \frac{Q}{3}\right\rfloor .
\ee
Given these lower bounds on $Q$ and $n$
we can define the generating series
\be
\label{defZDT}
\begin{split}
Z_{DT} (y,\q)=&\, \sum_{Q,n} \DT(Q,n)\,
y^Q\,  \q^{n}, 
\\ 
Z_{PT} (y,\q)=&\, \sum_{Q,n} \PT(Q,n)\,
y^Q\,  \q^{n}.
\end{split}
\ee
In terms of these formal series,  the DT/PT relation conjectured
in \cite{pandharipande2009curve} and proven in \cite{toda2010curve,bridgeland2011hall} takes the simple form
\be
Z_{DT} (y,\q)= M(-\q)^{\chi_{\scriptstyle\CY}} \, Z_{PT} (y,\q),
\label{eqn:DTPTrelation}
\ee
where $M(\q)=\prod_{k>0}(1-\q^k)^{-k}$ is the Mac-Mahon function. 
In \S\ref{sec_direct}, we shall explain how PT invariants, hence also DT invariants, can be computed from the knowledge of the topological string partition function.

\section{Gopakumar-Vafa invariants and direct integration}
\label{sec_GVdirect}

In this section, we recall how Gopakumar-Vafa (GV) invariants can be 
determined by integrating the holomorphic anomaly equations satisfied by the topological string partition function. 
Physically, GV invariants were introduced as  multiplicities of five-dimensional BPS states that arise from M2-branes wrapping curves in a CY threefold~\cite{Gopakumar:1998ii,Gopakumar:1998jq}.
We shall not go into the details of the mathematical definition of GV invariants but instead refer to~\cite{Maulik2018} and for an introduction to~\cite{Huang:2020dbh}.

\subsection{PT/GV relation}
\label{sec_GVPT}
As explained in \cite{Bershadsky:1993cx}, the A-twisted topological string associates to any CY threefold $\CY$
an infinite family of genus $g$ topological string free energies $\mathcal{F}^{(g)}(\rmz,\bar{\rmz})$, which depend
on the K\"ahler moduli $\rmz$ in a non-holomorphic fashion.
In the `holomorphic limit' $\bar{\rmz}\to -\I\infty$, and in an appropriate K\"ahler gauge, $\mathcal{F}^{(g)}(\rmz,\bar{\rmz})$ reduces to the generating series of Gromov-Witten invariants,
\begin{align}
    F^{(g)}(\rmz)\equiv \lim\limits_{\bar{\rmz}\rightarrow-\I\infty}(X^0)^{2g-2}\mathcal{F}^{(g)}(\rmz,\bar{\rmz})=\sum\limits_{Q=1}^\infty \GW^{(g)}_Q e^{2\pi \I Q \rmz}\,,
    \label{eqn:ztopgwex}
\end{align}
where $\GW^{(g)}_Q\in\IQ$ depends only on the symplectic structure of $\CY$.  For $g=0,1$, there are additional 
polynomial terms in $\rmz$ which we have dropped for brevity.
For $g=0$, $F^{(0)}$ coincides (up to an overall factor $-1/(2\pi\I)^3$) with the worldsheet instanton contribution to the tree-level prepotential~\eqref{FLV}.
The instanton part of the topological string partition function is then given by
\be
\label{PsitopF}
\Psi_{\rm top}(\rmz,\lambda) = \exp\left( \sum_{g\geq 0} \lambda^{2g-2} F^{(g)}(\rmz) \right),
\ee
According to \cite{Gopakumar:1998ii,Gopakumar:1998jq}, 
the Gromov-Witten invariants $\GW^{(g)}_Q$ can be traded
for new invariants  $\GVg{g}$ defined by equating
\be
\label{eqn:gvex}
\log \Psi_{\rm top}(\rmz,\lambda)=\sum\limits_{g=0}^\infty\sum\limits_{k=1}^\infty\sum\limits_{Q=1}^\infty\frac{\GVg{g}}{k}\left(2\sin\frac{k\lambda}{2}\right)^{2g-2}e^{2\pi \I k Q\rmz}\,.
\ee
The integrality of the Gopakumar-Vafa invariants $\GVg{g}$ defined by \eqref{eqn:gvex} was shown in \cite{IonelParker:2013}\footnote{As discussed in \S\ref{sec_GVmax}, an independent definition of GV invariants which makes integrality manifest was proposed 
in~\cite{Maulik2018}, but its compatibility with \eqref{eqn:gvex} remains conjectural.}. 
More recently, it was shown
in~\cite{gvfinite} that for fixed degree $Q$, there is only a finite number of non-vanishing 
invariants $\GVg{g}$. We shall denote by $\gmax(Q)$ the maximal genus $g$ such that $\GVg{g}\neq 0$.

It was conjectured in~\cite{gw-dt,gw-dt2} that the topological string partition function is related to the generating series of PT invariants defined in~\eqref{defZDT} via
\be
\label{PsitopPT}
\Psi_{\rm top}(\rmz,\lambda) =  M(-e^{\I\lambda})^{\hf\,\chi_{\scriptstyle{\CY}}}
Z_{PT} \left(e^{2\pi\I \rmz/\lambda},e^{\I\lambda} \right) .
\ee
The corresponding relation to the partition function $Z_{DT}$, which follows by using~\eqref{eqn:DTPTrelation}, was motivated physically in~\cite{Iqbal:2003ds} and a derivation in M-theory was given in~\cite{Dijkgraaf:2006um}. The MNOP conjecture \eqref{PsitopPT} is known to hold for non-compact toric CY threefolds~\cite{gw-dt,gw-dt2}, and for complete intersections in products of projective spaces~\cite{Pandharipande:2012re}. We shall assume that it continues to hold for complete intersections in weighted projective spaces.

The MNOP relation~\eqref{PsitopPT} can be expressed in a product form such that the PT invariants are related to the GV invariants by the following PT/GV relation~\cite{gw-dt,gw-dt2}
\be
\begin{split}
\label{PTGV}
Z_{PT}(y,\q)=&\,
\prod_{Q>0}\prod_{k>0} \left(1-(-\q)^k y^Q\right)^{k \GVg{0}}
\\
&\times
\prod_{Q>0}\prod_{g=1}^{\gmax(Q)}
\prod_{\ell=0}^{2g-2}
\left(1- (-\q)^{g-\ell-1} y^Q
\right)^{(-1)^{g+\ell} {\scriptsize \begin{pmatrix} 2g-2 \\ \ell \end{pmatrix}}
\GVg{g}} \, .
\end{split}
\ee
After some elementary algebra, one can rewrite \eqref{PTGV} as a plethystic exponential~\cite{Pandharipande:2011jz}
\be
\label{PTGVpleth}
Z_{PT}(y,\q)= \PE\[
\sum_{Q>0} \sum_{g=0}^{\gmax(Q)} (-1)^{g+1} \GVg{g} 
 \(1-x\)^{2g-2} x^{(1-g)} y^Q\](-\q,y),
\ee
where 
\be
\PE[f](x,y)=\exp\(\sum_{k=1}^\infty\frac{1}{k}\, f(x^k,y^k)\).
\ee
Conversely, GV invariants may be expressed in terms of PT invariants 
by taking the plethystic logarithm~(see e.g.~\cite{Cadogan1971TheMF}),
\be
\begin{split}
&\sum_{Q>0}
\sum_{g=0}^{g_{\rm max}(Q)}
\GVg{g} (1+\q)^{2g-2} \q^{1-g} y^Q
\\
=&\,
\sum_{k=1}^\infty \frac{\mu(k)}{k}\log\left[ 1+\sum_{Q>0,m} (-1)^m (-\q)^{km}  \PT(Q,m)\, y^{kQ} \right],
\end{split}
\ee
where $\mu(k)$ is the M\"obius function (see Footnote \ref{fooMoebius}), 
and expanding in powers of $\q$ and $y$ on either side.
The plethystic representation of the MNOP relation turns out to be computationally much more efficient than the original formula \eqref{PTGV}.

It easily follows from this relation and the bound \eqref{CastPT} that for any $Q>0$, the maximal genus $\gmax(Q)$ is bounded from above by 
\be
\label{gCast}
\gmax(Q) \leq g_C(Q) := \left\lfloor \frac{Q^2}{2\kappa} + \frac{Q}{2}\right\rfloor+1 \, .
\ee
As in \eqref{CastPTs}, the bound is strengthened to 
$\gmax(Q) \leq\left\lfloor \frac{2Q^2}{3\kappa} + \frac{Q}{3}\right\rfloor+1$ when $0<Q<\kappa$. 
We refer to 
\eqref{gCast} as the Castelnuovo bound, in reference to Castelnuovo's work on the maximal arithmetic genus of irreducible curves in projective space \cite{Castelnuovo89} (see e.g. \cite{Griffiths1994} for a more recent account). In this work, we have obtained \eqref{gCast} using rather different methods pioneered in \cite{macri-genus-bound} (see Appendix \ref{sec_appII}).  
We note that the bound \eqref{gCast}  was established recently for the quintic threefold in \cite{Liu:2022agh}. 

It is worth noting that for $m$ and $g$ sufficiently close to the Castelnuovo bound, the relation between PT and GV invariants becomes linear,
\be
\label{PTapprox}
\PT(Q,m) = \sum_{g=1}^{g_{\rm max}(Q)} \binom{2g-2}{g-1-m} \,\GVg{g}\, .
\ee
The exact range of validity of this relation depends on $\CY$, but it is easy to check that it holds true for $m=m_{\min}(Q)+\delta$ with 
$m_{\min}(Q)\coloneqq 1-g_{\rm max}(Q)$ and 
$\delta=0,1$: 
\bea
\PT(Q,m_{\min}(Q)) &=& \GVg{\gmax(Q)}\, ,
\label{PTmax} 
\\
\PT(Q,m_{\min}(Q)+1) &=& \GVg{\gmax(Q)-1} + (2\gmax(Q)-2)  \GVg{\gmax(Q)}\, . 
\label{PTsubmax}
\eea
In particular, $m_{\min}(Q)$ is the minimal value of $m$ such that $\PT(Q,m)\neq 0$, and satisfies
\be
\label{mCast}
m_{\min}(Q) \geq m_C(Q) := - \left\lfloor \frac{Q^2}{2\kappa} + \frac{Q}{2}\right\rfloor .
\ee

\subsection{Direct integration method for computing GV invariants}
\label{sec_direct}

As shown in \cite{Bershadsky:1993ta,Bershadsky:1993cx}, the topological string free energies $\mathcal{F}^{(g)}(\rmz,\bar{\rmz})$ satisfy the holomorphic anomaly equations
\begin{gather}
\label{F1hae}
\frac{\partial}{\partial {\brmz}} \frac{\partial}{\partial \rmz}\mathcal{F}^{(1)}=\frac12\, C_{\brmz}^{\rmz\rmz}C_{\rmz\rmz\rmz}+\left(\frac{\chi_\CY}{24}-1\right)G_{\brmz\rmz}\,,
\\
\label{Fghae}
\frac{\partial}{\partial {\brmz}} \mathcal{F}^{(g)} = 
\frac12\, \bar C_{\brmz}^{\, \rmz\rmz} \left( D^2 \cF^{(g-1)} + \sum\limits_{h=1}^{g-1} D \cF^{(g-h)} D \cF^{(h)} \right),
\quad
\text{for }g\ge 2\,,
\end{gather}
where $C_{\rmz\rmz\rmz}=\partial^3_{\rm z}F^{(0)}(\rmz)$ is the so-called Yukawa coupling, 
$\bar C_{\brmz\brmz\brmz}$ is its complex conjugate, and indices are raised using the K\"ahler metric $G_{\rmz\brmz}=\partial_{\rmz}\partial_{\brmz}K$ with
$K=-\log(\brmz \partial_{\rmz} F - \rmz \partial_{\brmz} \bar F)$.
In the flat coordinates $\rmz,\brmz$, the Christoffel symbols $\Gamma^\rmz_{\rmz\rmz}=G^{\brmz\rmz}\partial_{\rmz}G_{\brmz\rmz}$ vanish.
Denoting the Hodge line bundle with connection $K_\rmz=\partial_\rmz K$ on the moduli space by $\mathcal{L}$, the free energies $\mathcal{F}^{(g)}$ are sections of $\mathcal{L}^{2-2g}$ and the covariant derivative acting on a section of $\mathcal{L}^n$ takes the form $D=\partial_{\rmz}+n K_{\rmz}$.
Given the amplitudes $\mathcal{F}^{(h)}(\rmz,\brmz)$ for $h<g$, the equations
\eqref{Fghae} determine $\mathcal{F}^{(g)}(\rmz,\brmz)$ up to a holomorphic ambiguity $f^{(g)}(\rmz)$.

The non-holomorphic dependence of the free energies can be absorbed in a set of
`propagators'
$S^{\rmz\rmz},S^{\rmz}, S$, satisfying \cite{Bershadsky:1993cx}
\be
\partial_{\brmz} S^{\rmz\rmz} = \bar C_{\brmz}^{\, zz}\,,
\qquad
\partial_{\brmz} S^{\rmz} = G_{\rmz\brmz} S^{\rmz\rmz}\,,
\qquad
\partial_{\brmz} S = G_{\rmz\brmz} S^{\rmz}\,.
\ee
More precisely, $\mathcal{F}^{(g)}(\rmz,\brmz)$ is an inhomogeneous polynomial of degree $3g-3$ in 
$K_\rmz,S^{\rmz\rmz},S^{\rmz}, S$ (of respective degrees $1, 1,2,3$)
with holomorphic coefficients~\cite{Yamaguchi:2004bt,Alim:2007qj,Hosono:2008np}.
It turns out that the dependence on the connection $K_\rmz$ can also be absorbed by introducing the shifted propagators~\cite{Alim:2007qj,Hosono:2008np}
\begin{align}
    \begin{split}
        \tilde{S}^{\rmz\rmz}=S^{\rmz\rmz}\,,\qquad \tilde{S}^\rmz=S^\rmz-S^{\rmz\rmz}K_\rmz\,,\qquad \tilde{S}=S-S^\rmz K_\rmz+\frac12\, S^{\rmz\rmz}K_\rmz K_\rmz\,.
    \end{split}
\end{align}
Up to a holomorphic ambiguity $f^{(1)}(\rmz)$, the equation~\eqref{F1hae} can then be integrated to obtain
\begin{align}
    \partial_\rmz \mathcal{F}^{(1)}=\frac12\, C_{\rmz\rmz\rmz}\tilde{S}^{\rmz\rmz}-\left(\frac{\chi_\CY}{24}-1\right)K_\rmz+f^{(1)}(\rmz)\,,
\end{align}
and the holomorphic anomaly equations~\eqref{Fghae} for $g\ge 2$ can be rewritten as
\begin{align}
    \frac{\p \mathcal{F}^{(g)}}{\p \tilde{S}^{\rmz\rmz}}-K_\rmz\,\frac{\p \mathcal{F}^{(g)}}{\p \tilde{S}^{\rmz}}+\frac12\,K_\rmz K_\rmz\frac{\p \mathcal{F}^{(g)}}{\p \tilde{S}}=\frac12\, D^2 \mathcal{F}^{(g-1)}+\frac12 \sum\limits_{h=1}^{g-1}D \mathcal{F}^{(g-h)} D \mathcal{F}^{(h)}\,.
    \label{eqn:holanshift}
\end{align}
The holomorphic limit is obtained by replacing $K_\rmz,\tilde{S}^{\rmz\rmz},\tilde{S}^\rmz,\tilde{S}$ with the corresponding holomorphic limits $\rm{K}_\rmz,\rm{S}^{\rmz\rmz},\rm{S}^\rmz,\rm{S}$ and $\mathcal{F}^{(g)}(\rmz,\brmz)$ by $F^{(g)}(\rmz)$.
Since the dependence of the free energies on $K_\rmz$ is absorbed in the shifted propagators, the equations~\eqref{eqn:holanshift} can be integrated by collecting the powers of $K_\rmz$ on the right-hand side, and identifying them with the corresponding derivatives on the left-hand side.

In terms of the algebraic coordinate $z$, special geometry implies
\begin{align}
    \Gamma^z_{zz}=2K_z-C_{zzz}\tilde{S}^{zz}+s^z_{zz}\,,
    \label{eqn:sgeorel}
\end{align}
and the propagators are partially determined by the BCOV ring~\cite{Alim:2007qj}
\begin{align}
\begin{split}
    \partial_z \tilde{S}^{zz}=&\,C_{zzz}\tilde{S}^{zz}\tilde{S}^{zz}+2\tilde{S}^z-2s^z_{zz}\tilde{S}^{zz}+h^{zz}_z\,,
    \\
    \partial_z \tilde{S}^z=&\,C_{zzz}\tilde{S}^{zz}\tilde{S}^z+2 \tilde{S}-s^z_{zz}\tilde{S}^z-h_{zz}\tilde{S}^{zz}+h^z_z\,,
    \\
    \partial_z \tilde{S}=&\,\frac12\, C_{zzz}\tilde{S}^z\tilde{S}^z-h_{zz}\tilde{S}^z+h_z\,,
    \\
    \partial_z K_z=&\,K_zK_z-C_{zzz}\tilde{S}^{zz}+s^z_{zz}K_z-C_{zzz}\tilde{S}^z+h_{zz}\,,
    \label{eqn:bcovring}
\end{split}
\end{align}
up to another set of holomorphic (propagator) ambiguities $s^z_{zz},h^{zz}_z,h^z_z,h_z,h_{zz}$.
There is no canonical way to fix these ambiguities and different choices lead to a different functional dependence of the free energies on the propagators~\cite{Hosono:2008np,Alim:2008kp}.
For the 13 hypergeometric families, it turns out that there are always solutions of the form
\begin{align}
    s^z_{zz}=\frac{1}{z}\,\tilde{s}^z_{zz}\,,\qquad h^{zz}_z=z\tilde{h}^{zz}_{z}\,,\qquad h^z_z=0\,,\qquad h_z=\frac{1}{z}\,\tilde{h}_z\,,\qquad h_{zz}=\frac{1}{z^2}\,\tilde{h}_{zz}\,,
    \label{eqn:propambans}
\end{align}
with $\tilde{s}^z_{zz},\tilde{h}^{zz}_z,\tilde{h}_z,\tilde{h}_{zz}\in\mathbb{Q}$. Such solutions are determined by a polynomial equation in $s^z_{zz}$, and we pick the root such that $\tilde{h}_z$ has the maximal possible value.

The free energies in terms of the propagators can be obtained by integrating~\eqref{eqn:holanshift}.
If necessary, the full non-holomorphic dependence can then be restored by inserting the corresponding expressions for the propagators.
However, to obtain the enumerative invariants that are encoded in the free energies we only need to consider the holomorphic limit.
Using ${\rm K}_z=-\partial_z\log(X^0)$ and the Ansatz~\eqref{eqn:propambans}, the BCOV ring~\eqref{eqn:bcovring} can be used to calculate the holomorphic limits of the propagators.
Before carrying out the direct integration procedure, it remains to discuss how the holomorphic ambiguities $f^{(g)}(\rmz)$ that arise at each genus from the integration of \eqref{eqn:holanshift} can be fixed.

Let us first discuss the solution at genus $g=1$ for the hypergeometric families.
Combining~\eqref{F1hae} with~\eqref{eqn:sgeorel} and using the behaviour in the large volume limit~\cite{Bershadsky:1993ta} and at the conifold point~\cite{Vafa:1995ta}, the holomorphic anomaly equation for the genus one free energy can be integrated to obtain
\begin{align}
    \mathcal{F}^{(1)}=-\frac12\left(4-\frac{\chi_\CY}{12}\right)K-\frac12\log\det G_{\brmz \rmz}-\frac{1}{24}\left(12+c_2
    \right)\log z-\frac{1}{12}\log\Delta\,,
\end{align}
where $\Delta$ is the discriminant polynomial and $c_2$ is the numerical second Chern class defined below \eqref{defCvec}. 

At genus $g\ge 2$, the ambiguity can be written as a rational function in terms of the algebraic coordinates. It follows from the analysis in~\cite{Yamaguchi:2004bt,Huang:2006hq}, see~\cite{MR3965409} 
for a review of B-model techniques, that
\begin{align}
\label{defKg}
	f^{(g)}(z)=\frac{1}{\Delta^{2g-2}}\sum\limits_{k=0}^{2g-2}f_k z^k+\sum\limits_{k=1}^{N(g)} f'_k z^k\,,
	\qquad 
	N(g)=\left\lfloor \frac{2(g-1)}{\rho}\right\rfloor,
\end{align}
where the `regulator' $\rho$ is the smallest denominator among the local exponents $a_i$, and $f_k,f'_k$ are rational coefficients.
To fix the coefficients, one can use known enumerative invariants --- for example due to Castelnuovo like vanishing --- together with the behaviour of the free energies around special points in the moduli space.

The generic constant map contribution of the free energies in the large volume limit $z=0$ and the so-called gap condition at the conifold point $z=\mu$ can be used to fix all of the $f_k,\,k=0,\ldots,2g-2$.
On the other hand, for most of the hypergeometric families the current knowledge about the behaviour at $z=\infty$ is limited to the degree of regularity of the free energies at this point and, as discussed in~\cite{Huang:2006hq}, determines the upper limit $N(g)$ of the second sum in~\eqref{defKg}.
Additional constraints are currently only understood for $X_{6,2}$, $X_{4,2}$ and most recently also for $X_{2,2,2,2}$~\cite{Katz:2022lyl}.
For $X_{6,2}$, $X_{4,2}$ the point at infinity is of conifold type and the expansion of the free energies around this point satisfy an additional `small gap'.
This imposes an additional $\lfloor 2g/\rho'\rfloor$ constraints, with $\rho'$ for the two geometries respectively given by $3$ and $4$.
On the other hand, the point at infinity in the moduli space of $X_{2,2,2,2}$ corresponds to a `non-commutative resolution' of a singular degeneration of $X_8$.
The corresponding free energies encode certain $\mathbb{Z}_2$-refined GV-invariants that also exhibit a Castelnuovo-like vanishing~\cite{Katz:2022lyl}.

Using the Castelnuovo bound~\eqref{gCast},
together with the closed expression~\eqref{eqn:gvgmax} for the invariants that saturate the bound and, in the case of $X_{6,2}$, $X_{4,2}$, additional conditions at infinity,
the coefficients of the holomorphic ambiguity can be completely fixed as long as
\begin{align}
\label{Kgmax}
	N(g)\le \left\lfloor -\frac{\kappa}{2}+\frac12\sqrt{\kappa\bigl( 8(g-1)+\kappa\bigr)}\right\rfloor+\left\{\begin{array}{cl}
	\lfloor \frac{2g}{3}\rfloor&\text{ for }X_{6,2}\\
	\lfloor \frac{g}{2}\rfloor&\text{ for }X_{4,2}\\
	0&\text{ else}
	\end{array}\right.\,.
\end{align}
Moreover, as discussed in~\cite{Katz:2022lyl}, for $X_{2,2,2,2}$ taking into account the additional Castelnuovo bound for the $\mathbb{Z}_2$-refined GV invariants at the point at infinity determines the holomorphic ambiguity up to genus $32$.

We denote by $g_{\rm integ}$ the maximal value of $g$ for which the previously discussed boundary conditions are sufficient to fix the holomorphic ambiguity\footnote{Ignoring the floor functions in \eqref{defKg} and \eqref{Kgmax}, and absorbing the correction term in \eqref{Kgmax} into an effective regulator  $\rho=6$ for $X_{6,2}$ or  $\rho=4$  for $X_{4,2}$, one finds a rule-of-thumb estimate $g_{\rm integ}\simeq \frac12\kappa\rho(\rho-1)+1$.\label{foothumb}}, and tabulate its values for the various hypergeometric models in 
Table \ref{table1}. Due to computational limitations, we have not yet reached this genus for all models. In \S\ref{sec_test}, we shall see that the knowledge of D4-D2-D0 invariants can be used to push the direct integration method to even higher genus, denoted by $g_{\rm mod}$ in Table \ref{table1}.

\subsection{GV invariants at maximal and submaximal genus}
\label{sec_GVmax}

Although the definition of GV invariants via Gromov-Witten invariants presented in \S\ref{sec_GVPT} makes it clear that
they are robust under complex structure deformations of $\CY$, its main drawback is that integrality of the resulting invariants is not manifest. In~\cite{Maulik2018} an 
alternative definition using moduli of stable sheaves was proposed, inspired by the 
geometric picture developed in~\cite{Katz:1999xq} (and earlier attempts in \cite{kiem2012categorification,Hosono:2001gf}).
While the mathematical definition in~\cite{Maulik2018} is quite involved (see \cite{Huang:2020dbh} for a  review aimed at physicists),
the  approach of~\cite{Katz:1999xq}  can be used to  calculate GV invariants near the Castelnuovo bound, 
at least heuristically. The results \eqref{eqn:gvgmax} and \eqref{eqn:gvgmaxm1} will be justified rigorously in \S\ref{sec_wcrm1} by combining Theorem \ref{thm-Cast} with the MNOP conjecture.

Motivated by the interpretation in terms of bound states of D2-D0 branes in Type IIA string theory, one considers one dimensional (semi-)stable sheaves supported on a curve of class $\beta\in H_2(\CY,\IZ)$.
The invariants are conjecturally independent of the D0-brane charge~\cite{Toda:2017sdo,Maulik2018}, which can therefore be taken to be $1$, such that semi-stability implies stability.
For a fixed curve class $\beta$, the corresponding moduli space of stable sheaves $\widehat{\mathcal{M}}_\beta$ is fibered over the Chow variety $\mathcal{M}_\beta=\text{Chow}(\beta)$, which parametrizes effective curves $\cC\subset \CY$ with $[\cC]=\beta$.
If a point $p\in\widehat{\mathcal{M}}_\beta$ projects to a smooth curve of genus\footnote{In this section we abuse notation and denote $\GVg[\beta]{g}=\GVg[\beta.H]{g}$ and $\gmax(\beta)=\gmax(\beta.H)$.}
$\gmax(\beta)$, the corresponding fiber is the  Jacobian torus $T^{2\gmax(\beta)}$ of $\cC$.

It was argued in~\cite{Katz:1999xq} that the little group $Spin(4)=SU(2)_L\times SU(2)_R$ in the five-dimensional effective theory arising from M-theory compactified on $\CY$, should be identified with the product of the Lefschetz actions on the cohomology of the fiber and base of $\widehat{\mathcal{M}}_\beta$. As a result, 
in cases where $\widehat{\mathcal{M}}_\beta$ is smooth, the genus zero GV invariants can  be defined as $\GVg[\beta]{0}=(-1)^{\text{dim}_{\mathbb{C}}\,\widehat{\mathcal{M}}_\beta}\chi(\widehat{\mathcal{M}}_\beta)$. More generally, genus zero GV invariants are related to generalized Donaldson-Thomas invariants via $\GVg[\beta]{0}=\Omega_H(0,0,\beta,1)$, where $H$ is any ample divisor on $\CY$~\cite{Katz:2006gn,Joyce:2008pc}.

On the other hand, if the Chow variety $\mathcal{M}_\beta$ itself is smooth, one finds that for maximal genus $g=\gmax(\beta):=\gmax(\beta.H)$,
\begin{align}
    \GVg[\beta]{\gmax(\beta)} =(-1)^{\text{dim}_{\mathbb{C}}\,\mathcal{M}_\beta}\chi(\mathcal{M}_\beta)\,.
\end{align}
Invoking a localization argument motivated by~\cite{Yau:1995mv}, the authors of~\cite{Katz:1999xq} propose further geometric expressions for the invariants at genera close to $\gmax(\beta)$.
In particular, in favorable cases
\begin{align}
\GVg[\beta]{\gmax(\beta)-1}=\,(-1)^{\text{dim}_{\mathbb{C}}\,\mathcal{M}_\beta+1}\Bigl[\chi(\mathcal{C}_\beta)+(2\gmax(\beta)-2)\chi(\mathcal{M}_\beta)\Bigr]\, ,
    \label{eqn:kkvgv2}
\end{align}
where $\mathcal{C}_\beta\subset \CY\times \mathcal{M}_\beta$ is the the universal curve.
We observe that these relations agree with ~\eqref{PTmax},~\eqref{PTsubmax}, after identifying~\cite{Pandharipande:2007qu,Choi:2012jz}
\begin{align}
    \begin{split}
    \PT(\beta,m_{\min}(\beta))=&\,
    (-1)^{\text{dim}_{\mathbb{C}}\,\mathcal{M}_\beta}\chi(\mathcal{M}_\beta)\,,
    \\
    \PT(\beta,m_{\min}(\beta)+1)=&\, (-1)^{\text{dim}_{\mathbb{C}}\,\mathcal{M}_\beta+1}\chi(\mathcal{C}_\beta)\,.
     \end{split}
\end{align}
We shall now apply these relations to determine the GV invariants
$\GVg{\gmax(Q)}$ and $\GVg{\gmax(Q)-1}$ for degree $Q=\kappa d$ with $d\in\mathbb{N}$ for the 13 hypergeometric CY threefolds.

Let us consider a CY threefold $\CY$ obtained as a complete intersection of $n$ generic hypersurfaces of respective degrees $(d_1,\dots,d_n)$ in weighted projective space $W:= \IP^{n+3}_{w_1,\dots,w_{n+4}}$.
Curves of degree $Q=\kappa d$ on $\CY$ are obtained by intersecting $\CY$ with two additional hypersurfaces of respective degrees 1 and $d$.
Using the adjunction formula, one can check that a generic curve of this type has the maximal possible genus $g=\gC(\kappa d)$.

We can identify the restriction of the linear subspace to $\CY$ with the ample divisor $\cD$.
By Bertini's theorem, $\cD$ is smooth if the restriction of $\mathcal{O}_{W}(1)$ to $\CY$ is base-point free.\footnote{Recall that the base locus of a bundle consists of the points where all sections vanish simultaneously.}
This is generically the case if the number
\begin{align}
    \chiOD=\#\{\,i\,|\,w_i=1\,,\,\, i=1,\ldots,n+4\}\,,
\end{align}
of weights that are equal to one is strictly greater than three.
The equality with $\chi(\cO_D)$ can be verified using the Hirzebruch-Riemann-Roch (HRR) theorem.
Then a generic $(1,d)$-curve is smooth if the restriction of $\mathcal{O}_{W}(d)$ to $\cD\subset \CY$ is basepoint free as well, which is automatically implied.
Comparing with Table~\ref{table1}, we see that $\cD$ is singular for $X_{10}$, $X_{6,4}$ and $X_{6,6}$.

For $d\ge 2$, the moduli space $\mathcal{M}(1,d)$ of complete intersection curves of degree $(1,d)$ is a projective bundle with fibers $\mathbb{P}\left[h^0\left(\cD,\mathcal{O}_{\cD}(d)\right)-1\right]$ over $\mathbb{P}\left[\chiOD-1\right]$, using $\mathbb{P}[k]:=\mathbb{P}^k$.
Using again HRR or generating functions, we further calculate  that
\begin{align}
    h^0\left(\cD,\mathcal{O}_{\cD}(d)\right)=\left\{
    \begin{array}{cl}
    \chi_{\cD}-1& d=1,
    \\
    \frac12 \kappa d(d-1)+\chi_{\cD}& d\ge 2.
    \end{array}\right.
\end{align}
Assuming that every smooth curve of degree $Q=\kappa d$ is a complete intersection, we conclude that
\begin{align}
	\GVg[\kappa d]{\gC(\kappa d)}=\left\{
		\begin{array}{cl}
			\frac12\chi_{\cD}(\chi_{\cD}-1)&d=1,
			\\
			(-1)^{\frac{1}{2}\kappa d(d-1)}\chi_{\cD}\, h^0\left(\cD,\mathcal{O}_{\cD}(d)\right)&d\ge 2,
		\end{array}\right.
		\label{eqn:gvgmax}
\end{align}
where for $d=1$ we took into account the fact that the two linear sections play a symmetric role.

To calculate $\GVg[\kappa d]{\gC(\kappa d)-1}$, we first note that the universal curve $\mathcal{C}_{\kappa d}$ is fibered over $\CY$, with the fiber over a point $p$ being the subset $\mathcal{M}_p\subset\mathcal{M}$ of curves that intersect $p$ 
(to avoid cluttering, we now suppress the subscript denoting the curve class).
If the point $p$ is sufficiently generic, which is always true if $\chi_\cD>3$, we obtain one condition on the linear section as well as on the degree $d$ section, such that
\begin{align}
	\chi(\mathcal{M}_p)=\left\{
		\begin{array}{cl}
			\frac12(\chi_{\cD}-1)(\chi_{\cD}-2)&d=1,
			\\
			(\chi_{\cD}-1)\bigl(\chi_{\cD}-1+\frac{1}{2}\kappa d(d-1)\bigr) & d\ge 2.
		\end{array}\right.
\end{align}
For the nine hypergeometric cases with a smooth divisor $\cD$, using~\eqref{eqn:kkvgv2} together with $\chi(\mathcal{C})=\chi(\mathcal{M}_p)\times\chi({\CY})$ one then arrives at
\begin{align}
	\GVg[\kappa d]{\gC(\kappa d)-1}=\left\{
		\begin{array}{cl}
			\frac12J_1\bigl(h^0\(\cD,\mathcal{O}_{\cD}(d)\)-1\bigr)+(\gC(\kappa d)-1)(\chi_{\cD}-1)\chi_{\cD} &d=1,
			\\
			J_1\bigl( h^0\left(\cD,\mathcal{O}_{\cD}(d)\right)-1\bigr)+(2\gC(\kappa d)-2)\chi_{\cD}\, h^0\left(\cD,\mathcal{O}_{\cD}(d)\right)&d\ge 2,
		\end{array}\right.
		\label{eqn:gvgmaxm1}
\end{align}
with $J_1=\chi_\CY (\chi_{\cD}-1)$.

In Section~\ref{sec_wcrm1} we shall derive the expressions~\eqref{eqn:gvgmax} and~\eqref{eqn:gvgmaxm1} using the relation between PT invariants and rank 0 DT invariants.
In particular, we shall find that~\eqref{eqn:gvgmax} holds also for $X_{10}$, $X_{6,4}$, $X_{6,6}$, and~\eqref{eqn:gvgmaxm1} holds for those geometries if  $\kappa d\ge 4$ and $J_1$ is defined in terms of a particular rank 0 DT invariant counting D4-D0 bound states (see below~\eqref{eqn:PTJ1}).

\section{D4-D2-D0 indices from GV invariants}
\label{sec_rank0fromPT}

In this section, we explain how to compute the Abelian D4-D2-D0 indices
$\Omega_{1,\mu}(\hq_0)$ introduced in \S\ref{sec_modconj} in terms of the Gopakumar-Vafa invariants $\GVg{g}$ 
determined by the topological string partition function. The strategy is to combine the 
relation between rank 0 DT invariants and PT invariants, investigated in the series of mathematical papers~\cite{Toda:2011aa,Feyzbakhsh:2020wvm,Feyzbakhsh:2021rcv,Feyzbakhsh:2021nds,Feyzbakhsh:2022ydn}, 
with the PT/GV relation explained in \S\ref{sec_GVPT}. 
Unfortunately, the explicit formulae stated in Thm 1.1 and Thm 1.2 of \cite{Feyzbakhsh:2022ydn} 
are not yet sufficient for our purposes. 
In Appendix \ref{sec_appS}, one of the authors proves a generalization of both theorems
which we present in the following two subsections using more physics-friendly notations.
The first theorem has a close relationship to the physical picture based on D6-$\overline{\rm D6}$ 
bound states advocated in earlier works on D4-D2-D0 indices~\cite{Gaiotto:2006wm,Gaiotto:2007cd,Denef:2007vg,Collinucci:2008ht,VanHerck:2009ww,Gaddam:2016xum,Alexandrov:2022pgd}, but turns out to be much less powerful than the second theorem which is
fully explicit and allows to compute a large number of Abelian D4-D2-D0 indices. 
It also implies the Castelnuovo bounds on PT and GV invariants,
as we explain in \S\ref{sec_wcrm1}.

\subsection{Wall-crossing for rank 0 class}
\label{sec_wcr0}

In Theorem \ref{thm.rk0} from Appendix \ref{sec_appS}, a slightly stronger version of \cite[Thm 1.1]{Feyzbakhsh:2022ydn} is established by studying the walls of $\nu_{b,w}$-instability for rank 0 classes. 
In this subsection we reformulate this result
by restricting to CY threefolds with $b_2(\CY)=1$ and vanishing torsion
$H^2(\CY,\IZ)_{\rm tors}=0$, and translating to the notations of \S\ref{sec_modconj} and \S\ref{sec_DTLV}. 
To this end, we 
identify in Eqs. \eqref{defQSoh}-\eqref{cc.2}
\be
\begin{split}
D=&\,  rH,
\qquad \qquad
\beta. H=-\mu-\frac12\,\kappa r^2, 
\qquad 
m=n-\frac16\,\kappa r^3=- q_0 + \frac{r c_2}{24}\,,
\\
D_i=&\, (-1)^{i} r_i H,
\qquad\quad
\beta_i.H=Q_i, 
\qquad \qquad
n_i=(-1)^{i} m_i, 
\qquad
i=1,2.
\end{split}
\ee
Under these identifications, we obtain that, provided the reduced D0-brane charge $\hq_0$ \eqref{defqhat} lies in the range
\be
\label{condFeyz}
0 \leq \frac{\chi(\cD_r)}{24} -\hq_0 < \frac{\kappa r}{12} \min\(\frac{r^2}{2}-\frac18\, ,\, r-\hf\),
\ee
the rank 0 DT invariant \eqref{defOmrmu} can be expressed as  
\be
\bOm_{r,\mu}(\hq_0) = \sum_{r_i,Q_i,n_i} 
(-1)^{\gamma_{12}}\,\gamma_{12} \,
\PT(Q_1,n_1) \, \DT(Q_2,n_2)\,,
\label{conseqTh1}
\ee
where (using the notation \eqref{defL0})  
\be
\gamma_{12}
=r(Q_1+Q_2)+n_1+n_2-\chi_{\cD_r}\,,
\ee
and the sum runs over integers $r_i$, $Q_i$ and  $n_i$ restricted to satisfy
\be
\label{v12Q}
\begin{split}
r_1+r_2=&\, r\, , 
\\
Q_2-Q_1=&\, \mu+\kappa rr_2 \, ,
\\
n_1+n_2=&\, n-r_1 Q_1-r_2 Q_2-\frac{\kappa}{2}\, rr_1r_2\, ,
\end{split}
\ee
and
\be
\begin{split}
&\qquad
\left| r_i-\sqrt{\frac{r^2}{4}-\frac{6}{r\kappa} \(\frac{\chi(\cD_r)}{24} -\hq_0\)}+(-1)^i\(\frac{r}{2}+\frac{\mu}{r\kappa}\)\right|<1\, ,
\\
& 
0\leq  Q_i \leq \frac{3}{r}\(\frac{\chi(\cD_r)}{24} -\hq_0\)
+\frac{1}{2\kappa}\(\frac{\mu}{r}+\frac{\kappa}{2}\(r+2(-1)^i r_i\)\)^2 - \frac{\kappa r}{8}\, , 
\\
&\qquad\qquad\qquad
 n_i \geq -\frac23\, Q_i\( \frac{Q_i}{\kappa}+\frac{1}{2}\).
\end{split}
\label{ineqDTPT}
\ee

Physically, the  r.h.s. of 
\eqref{conseqTh1} can be interpreted as contributions of two-centered bound states of an anti-D6-brane
bound to $(Q_1,n_1)$ D2-D0 branes, carrying index $\PT(Q_1,n_1)$, and a D6-brane bound to $(Q_2,n_2)$ D2-D0 branes, carrying index $\DT(Q_2,n_2)$, with the D4-brane charge arising from the fluxes $r_i$ on either side.

Unfortunately, the condition
\eqref{condFeyz} is so restrictive that the theorem can only apply, at best, to the most polar term 
in each component of the modular vector $h_{1,\mu}$. In particular, for $\mu=0$ it is valid only for 
$\hq_0=\frac{\chi(\cD_r)}{24}$ where only $Q_i=n_i=0$ contribute, leading to
\be
\label{eqFeyzus}
\bOm_{r,0}\left( \frac{\chi(\cD_r)}{24} \right) =  (-1)^{1+\chi_{\cD_r}} \chi_{\cD_r}\,,
\ee
where $\chi_{\cD_r}$ was defined in \eqref{defL0}.
In practice however, it was observed in \cite[\S D]{Alexandrov:2022pgd}
that the formula \eqref{conseqTh1} predicts the correct polar terms in many examples with $r=1$, provided one restricts the sum only to $Q_1=n_1=0$. Using $\PT(0,0)=1$, one arrives at the naive Ansatz for polar coefficients in \cite[(5.20)]{Alexandrov:2022pgd},
\be
\label{naive}
\bOm_{r,\mu}(\hq_0) = (-1)^{r \mu+n + \chi_{\cD_r}} (r \mu + n-\chi_{\cD_r})\, \DT(\mu,n)\,,
\ee
where $n$ is the integer defined in \eqref{nfromhq0}. 
The physical intuition for this Ansatz was that D4-D2-D0 branes at large volume arise as bound states of a D6-brane with D2-brane charge $\mu$ and D0-brane charge $n$, and an anti-D6-brane carrying $-r$ units of D4-brane flux.  Unfortunately, it appears difficult to relax
the condition \eqref{condFeyz}, and to justify physically or mathematically the truncation
to terms with $Q_1=n_1=0$, which appears to work in many cases.

\subsection{Wall-crossing for rank $-1$ class}
\label{sec_wcrm1}
In \cite[Thm 1.2]{Feyzbakhsh:2022ydn}, one of the authors of the present work obtained a different formula relating rank 0 and rank 1 DT invariants, which is valid only for CY threefolds with $\Pic\CY=\IZ$ (hence $b_2(\CY)=1$ and 
$H^2(\CY,\IZ)_{\rm tors}=0$). The formula follows by studying the possible walls for objects of rank $-1$ class 
\be
\label{vrankm1}
\v_k=\v-e^{-kH} = \[-1, D+kH, \beta-\tfrac12\, k^2H^2, -m+\tfrac16\, k^3H^3\]\,,
\ee 
in the space of weak stability conditions for $k\gg 1$, and applies for arbitrary $D\in H^2(\CY,\IZ)$, Poincar\'e dual to an arbitrary divisor class. Unfortunately, an explicit lower bound on $k$ was not provided. 
In Appendix \ref{sec_appS}, restricting to the case of primitive divisor, which is sufficient for computing Abelian D4-D2-D0 invariants,
a more general formula is derived that does not require taking $k$ large. 
Below, we rephrase Theorem \ref{thm-main} from Appendix \ref{sec_appS} using the same notations as in the previous subsection, and explain how to use it to compute Abelian D4-D2-D0 invariants from the knowledge of GV invariants. 

\subsubsection*{Main result}
Let us fix $(Q,m)\in \IZ_+\times \IZ$, and 
define the function $f:\IR^+\to \IR$ by
\be
f(x)\ \coloneqq \ \left\{\!\!\!\begin{array}{cc} x+ \frac{1}{2} & \text{if $0 <x < 1$,} 
\\
\vspace{.1 cm}
\sqrt{2x+\frac{1}{4}} & \text{if $1 \leq x < \frac{15}{8}$}\, ,
\\
\vspace{.1 cm}
\frac{2}{3}x+ \frac{3}{4} & \text{if $\frac{15}{8} \leq x< \frac{9}{4}$}\, ,
\\
\vspace{.1 cm}
\frac{1}{3}x+ \frac{3}{2} & \text{if $\frac{9}{4} \leq x< 3$,}
\\
\vspace{.1 cm}
\frac{1}{2}x+ 1 & \text{if $3 \leq x$.}
\end{array}\right.
\end{equation}
Note that this function is uniformly bounded by $\frac12(x+1)\leq f(x)\leq \frac34(x+1)$ (see Figure \ref{fig2plots}, left). 	
Theorem \ref{thm-main} then shows that, whenever $x>0$ and $f(x)<\alpha$,
with $x,\alpha$ defined by\footnote{The ratio $\alpha$ is unrelated to the parameter in \eqref{ZBMT}. Instead, the variables $(x,\alpha)$ are the coefficients of the line $\ell_f(\v)$ defined by  $L_{b,w}(\v)=2\kappa Q (w-\alpha b+x)=0$ in \eqref{BMTineq}
for the class $\v=[-1,0,Q H^2/\kappa,-m]$.} 
\be
\label{defxa}
x=\frac{Q}{\kappa}\,, 
\qquad 
\alpha=-\frac{3m}{2Q}\,,
\ee
the stable pair invariant $\PT(Q,m)$ can be expressed in terms of invariants 
$\PT(Q',m')$ with $Q'<Q$ {\it and} Abelian D4-D2-D0 invariants $\Omega_{1,\mu}(\hq_0)$. 
More precisely,\footnote{Translating the formula \eqref{pt-thm} to the notations of this section, 
one finds that the index of the Abelian invariant should be $\mu=Q'-Q-\kappa$. 
Then we used the invariance of $\Omega_{1,\mu}(\hq_0)$ under shifts of $\mu$ by $\kappa$ and the flip of the sign to get \eqref{thmS11}.}
\be
\label{thmS11}
\PT(Q,m) =   \sum_{(Q', m')} (-1)^{\chi(Q', m')}
\chi(Q', m') \, \PT(Q',m') \, \Omega_{1,Q-Q'}(\hat q'_0)\,,
\ee 	
where on the right-hand side
\be
\begin{split}
\chi(Q', m') =&\,
m -m'+ Q+Q'-\chiOD \, ,
\\
\hat q'_0 =&\, m'-m -\frac{1}{2\kappa}\(Q'-Q \)^2-\hf\, (Q+Q')+\frac{\chi(\cD)}{24}\,,
\end{split}
\label{defq0hatp}
\ee
with $\chiOD$ and $\chi(\cD)$ defined in \eqref{defL0} and \eqref{defchiD}, respectively.
The sum runs over pairs of integers $(Q', m')$ such that
\bea
\label{Qpbound}
0 \leq & Q' & \leq Q + \kappa \left(\frac12-\alpha\right),
\\
\label{mpbound}
-\frac{Q'^2}{2\kappa} - \frac{Q'}{2}
\leq& m' &\leq  m+\frac{1 }{2\kappa}\, (Q-Q')^2 
+ \frac{1}{2}\,(Q+Q') \, .
\eea
Note that the lower bound on $Q'$ simply follows from vanishing of PT invariants for negative degrees, while
the upper bound on $m'$ similarly corresponds to vanishing of Abelian invariants for charges spoiling \eqref{qmax}.
On the other hand, the upper bound on $Q'$ implies that $Q' < Q$, since $\alpha>f(x)>1/2$ which shows that \eqref{thmS11} has a recursive nature.

\begin{figure}[h]
\begin{center}
\includegraphics[height=5.5cm]{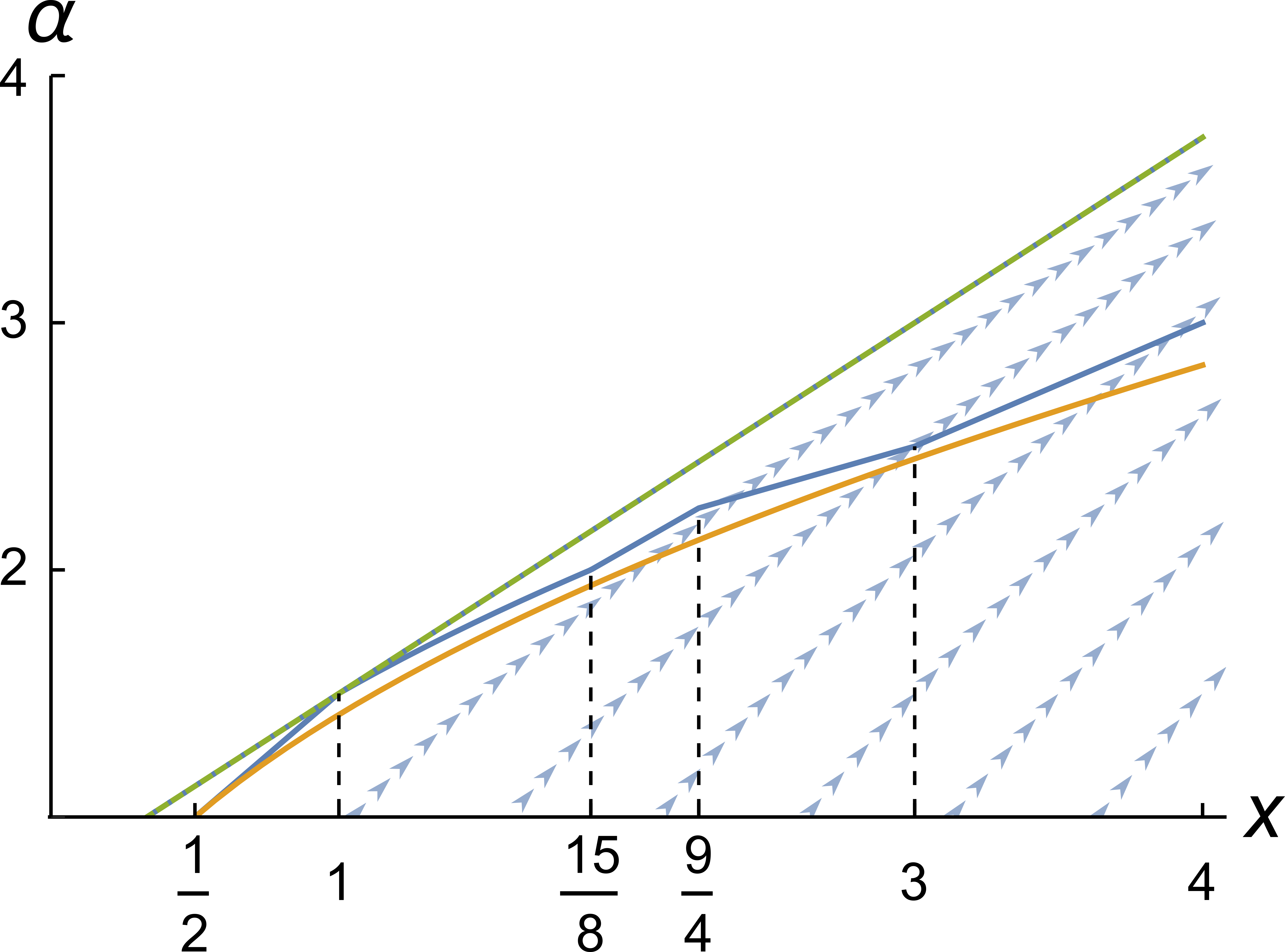}\hspace*{1cm}
\includegraphics[height=5.5cm]{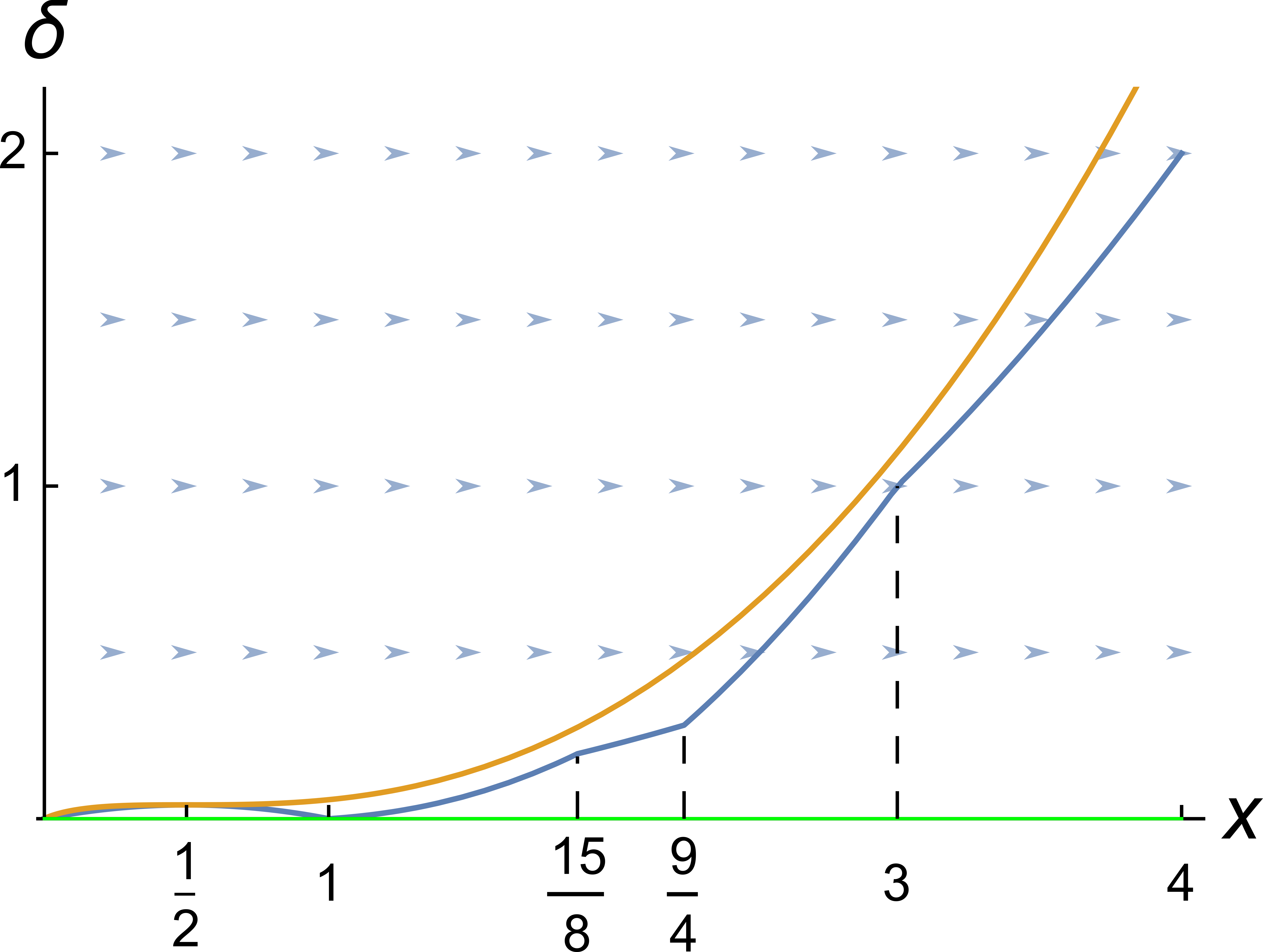}
\end{center}
\caption{Left: The blue line represents the curve $f(x)$, above which Theorem \ref{thm-main} applies. The green line is the Castelnuovo line $\alpha=\frac34(x+1)$, above which PT and DT invariants vanish. Below the orange line $\alpha=\sqrt{2x}$, the BMT line \eqref{BMTineq} does not intersect the parabola $w=\frac12 b^2$ in the $(b,w)$ plane, so the argument in \S\ref{sec_proof} fails.
The dotted lines, oriented to the right, indicate the trajectories induced by spectral flow $(Q,m)\mapsto(Q+\kappa k,m-Q k-\frac12\kappa  k( k+1))$. Right: Same diagram in the $(x,\delta)$ plane, where $\delta=\frac{m}{\kappa}+\frac12 x(x+1)$. In these coordinates, Theorem \ref{thm-main} applies when $(x,\delta)$ lies below the blue line $\delta=-\frac23 x f(x)+\frac12 x(x+1)$. The Castelnuovo line,  below which PT and DT invariants vanish, is the horizontal axis while the trajectories induced by spectral flow are now horizontal lines, oriented to the right.  }
\label{fig2plots}
\end{figure}

Mathematically, the equality \eqref{thmS11} follows by collecting the contributions from all walls for the Chern vector $\v_0=(-1,0,\beta,-m)$, between an empty chamber provided by the BMT inequality and the large volume limit
$w\to\infty$ where the index $\Omega_{b,w}(\v)$ coincides with $\PT(Q,m)$.
Schematically, the formula \eqref{thmS11} says that anti-D6-brane bound to $(Q,m)$ D2-D0-branes 
arises from bound states of anti-D6-branes bound to $(Q',m')$ D2-D0-branes and carrying $-1$ unit of D4-brane flux, and D4-brane bound to $(Q-Q',m-m')$ D2-D0-branes.
The relation \eqref{thmS11} in principle gives a recursive way of computing the PT invariants if Abelian D4-D2-D0 invariants are known, with the caveat that the terms $(Q',m')$ contributing to the sum may not satisfy the condition
$f(x')<\alpha'$. 

A crucial observation is that the term $(Q',m')=(0,0)$ with $\PT(0,0)=1$ always contributes to the sum \eqref{thmS11}, so one may invert this relation to extract the Abelian D4-D2-D0 invariant $\Omega_{1,Q}(\hq_0)$,
where $m$ should now be seen as a function of $Q$ and $\hq_0$ obtained by setting 
$Q'=0,m'=0$ in \eqref{defq0hatp}, 
\be
m(Q,\hq_0) = \frac{\chi(\cD)}{24}-\hq_0 - \frac{Q^2}{2\kappa} - \frac{Q}{2} \, .
\label{relmhatq}
\ee
As above, the resulting formula may be used to recursively compute Abelian D4-D2-D0 invariants in terms of PT invariants, with the same caveat.

In practice, however, the condition $f(x)<\alpha$ is typically not satisfied for the charges of interest.
Indeed, to compute the generating functions \eqref{defhDT},  we are interested in $Q\in [0,\kappa/2]$
and it is easy to see that for such small $Q$, in the best case, the condition is satisfied only 
for D0-brane charges very close to the bound \eqref{qmax}.
Fortunately, we can always use the spectral flow invariance to make $Q$ large enough so that the condition becomes satisfied.
Indeed, for $Q\ge 3\kappa$ the condition $f(x)<\alpha$ can be rewritten as 
\be
\frac{\chi(\cD)}{24} - \hq_0 < 
\frac{Q^2}{6\kappa} - \frac{Q}{6} \, ,
\label{condQq0}
\ee
and is clearly satisfied if $Q$ is sufficiently large. 

Thus, we arrive at the following recipe. 
To compute $\Omega_{1,\mu}(\hq_0)$, 
let us choose $ k\in \IZ_+$ such that 
\be
\label{condfeps}
f(Q_k)<-\frac{3m_k}{2Q_k},
\qquad \mbox{where}\ \
m_k=m(Q_k,\hq_0),
\quad
Q_k=\mu+\kappa  k.
\ee
Then the Abelian index is given by the following formula
\be
\label{thmS11inv}
\Omega_{1,\mu}(\hq_0) = \frac{(-1)^{m_k+Q_k-\chiOD}}{m_k+Q_k-\chiOD}
\left[ \PT(Q_k,m_k) 
-  \!\!\sum_{(Q',m')\neq (0,0)} \!\!
(-1)^{\chi(Q', m')}\chi(Q', m') \, \PT(Q',m') \, 
\Omega_{1,Q_k-Q'}(\hat q'_0) \right],
\ee
where one should apply \eqref{defq0hatp}-\eqref{mpbound} with $(Q,m)$ replaced by $(Q_k,m_k)$. For practical computations, it is of course convenient to choose the minimal possible value of $ k$ satisfying \eqref{condfeps}, because PT invariants are usually known  for small degrees $Q$ only. 

Before we proceed in the next section to apply this result to the CY threefolds listed in Table \ref{table1},
we spell out two important consequences of the formula \eqref{thmS11}, which are also proven in Appendix \ref{sec_appS} (see \S\ref{sec_appII} and \S\ref{sec_appI}).

\subsubsection*{Castelnuovo bound}

As a consequence of the wall structure established for the proof of Theorem \ref{thm-main}, and using induction on $Q$, one obtains a Castelnuovo-type inequality for PT invariants: namely, for any $(Q,m)\in \IZ_+\times \IZ$, 
$\PT(Q,m)=0$ unless
\be
\label{mCastC}
m \geq 
-\cC(Q), \qquad  \cC(Q)\coloneqq 
\begin{cases}
 \left\lfloor \frac{Q^2}{2\kappa}+ \frac{Q}{2} \right\rfloor & Q\geq \kappa, \\[3mm]
  \left\lfloor \frac{2Q^2}{3\kappa}+ \frac{Q}{3} \right\rfloor & 0<Q<\kappa.
\end{cases}
\ee
As a result, we can replace the lower bound in \eqref{mpbound} by $-\cC(Q') \leq m'$, as stated in Appendix \ref{sec_appS}.
By the DT/PT relation, \eqref{mCastC} implies the same statement for the DT invariant
$\DT(Q,m)$, while the PT/GV relation implies the Castelnuovo bound for GV invariants in \eqref{gCast}. Note that in terms of $(x,\alpha)$ defined in \eqref{defxa}, the bounds in \eqref{mCastC} take a universal form independent of $\kappa$:
\be
\alpha\leq 
\begin{cases}
\frac34\, (x+1) & x\geq 1, 
\\
x+\frac12 & x\leq 1. 
\end{cases}
\ee

Since \eqref{thmS11} provides a way to compute PT invariants in the range $\frac12 x+1 < \alpha \leq \frac34(x+1)$ (assuming that $Q=\kappa x$ is large, for the sake of argument) in terms of PT invariants of lower degree, it follows that for fixed degree $Q$, the number of unknown GV invariants is effectively reduced from $Q^2/(2\kappa)$ to $Q^2/(3\kappa)$. Fixing instead the genus $g$, the number of constraints on holomorphic ambiguities from known GV invariants now grows as $\sqrt{3\kappa g}$, rather than $\sqrt{2\kappa g}$, therefore allowing to fix them up to genus $g\apprle \frac34 \kappa\rho^2$ rather than $g\apprle \frac12\kappa\rho^2$ (see footnote \ref{foothumb}). Thus, we expect that the additional
constraints from \eqref{thmS11} will allow to push the direct integration method 
to genus $g_{\rm mod} \simeq \frac32 g_{\rm integ}$, i.e. a factor $3/2$ higher than the maximal genus predicted by \eqref{Kgmax}. Unfortunately, this reasoning overlooks the complicated relation between PT and GV invariants, and in practice the gain in genus will be slighter smaller (see the last column in Table \ref{table1}).

Returning to the prescription \eqref{condfeps}, we note that the distance away from the Castelnuovo bound \eqref{mCastC} is independent of $ k$,
\be
m(Q_k) + \frac{Q_k^2}{2\kappa} + \frac{Q_k}{2} = \hq_0-\frac{\chi(\cD)}{24}\, .
\ee
In Figure \ref{fig2plots} (right), we represent the region of validity of Thm 1 in the plane $(x,\delta)$ where $\delta=\frac{m}{\kappa}
+\frac{Q^2}{2\kappa^2} + \frac{Q}{\kappa}$, where spectral flow acts by horizontal translations $x\mapsto x- k$, keeping $\delta$ fixed. This makes it clear that Thm 1 is always valid for $ k\geq  k_0$ large enough. Experimentally, we shall see in \S\ref{sec_test} and \S\ref{sec_gen} that the formula  \eqref{thmS11inv} often gives the correct result for $ k= k_0-1$ or (less often) $ k= k_0-2$, even though the assumptions
of Thm 1 are no longer satisfied, see in particular Figure \ref{figX5}
for the quintic threefold.

\subsubsection*{Optimal case}
\label{sec:optimalcase}

The formula \eqref{thmS11} becomes particularly simple in cases where 
the sum over 
$(Q',m')\neq (0,0)$ becomes empty. This occurs provided $(Q',m')=(0,0)$ is the only solution to \eqref{Qpbound}, \eqref{mpbound} --- in this case we call the pair $(Q,m)$ optimal.  A sufficient set of conditions is that 
\be
\label{optimalsufficient}
\alpha>f(x),
\qquad 
\Psi\left(x,1/\kappa,\alpha\right)<0, 
\qquad 
\Psi\left(x,x-\alpha+\tfrac12,\alpha\right)<0,
\ee
where 
\be
\label{defPsi}
\Psi(x,x',\alpha) :=  \frac12\,(x-x')^2 + \frac12\,(x+x') -\frac23\,\alpha x + \frac12 \,x'^2 + \frac12\, x'
\ee
is the difference between the upper and lower bounds in \eqref{mpbound},
after expressing the result in terms of $x=Q/\kappa$ and $x'=Q'/\kappa$ and rescaling by $\kappa$. The values
$x'=1/\kappa$ and $x'=x-\alpha+\frac12$ correspond to the minimal and maximal values $Q'=1$ and $Q'=\kappa(\frac12-\alpha)$ to be ruled out. 
The condition $\Psi\left(x,1/\kappa,\alpha\right)<0$ can be equivalently written as 
\be
\label{Psicond1}
\frac34(x+1) - \frac{3(x-1)}{2x\kappa} + \frac{3}{2\kappa^2 x}  < \alpha .
\ee
This shows that the condition \eqref{Psicond1} can only be satisfied 
close to the Castelnuovo bound. The last condition in \eqref{optimalsufficient} turns out to be implied by the condition $\alpha>f(x)$ when $x>\frac14(5+\sqrt7)\simeq 1.91$. When the conditions in \eqref{optimalsufficient} are obeyed (or more generally when $(Q,m)$ is optimal), \eqref{thmS11} simply reduces to 
\be
\label{ptOm1}
\PT(Q,m)= (-1)^{m+Q-\chiOD} \Bigl(m+Q-\chiOD\Bigr)\,\Omega_{1,Q}(\hq_0) \, .
\ee

In fact, using the invariance of $\Omega_{1,\mu}(\hq_0)$ under spectral flow as above, 
one can always choose the spectral flow parameter $ k$ large enough such that 
$(Q_k,m_k)$ is optimal. In particular, this implies that the ratio
\be
\label{ratiostab}
\Omega_{1,Q}(\hq_0) = 
(-1)^{m_k+Q_k-\chiOD} \frac{\PT(Q_k,m_k)}{m_k+Q_k-\chiOD}\,,
\ee
must stabilize to a constant value for $ k$ larger than a suitable $ k_1\geq  k_0$.

It is also possible to use these relations to derive general formulae for GV invariants near the Castelnuovo bound. 
Let us choose, for example, $Q=m=0$. Then using \eqref{ptOm1} and \eqref{eqFeyzus} with $r=1$, we find that the optimality condition is satisfied for any $ k\geq 2$, leading to\footnote{This result reduces to the first part of Theorem \ref{thm-quintic} for the quintic upon setting
$\mu=\kappa k, \kappa=\chi_\cD=5$.}
\be
\PT\left(\kappa  k, -\tfrac12 \,\kappa  k( k+1) \right) = 
(-1)^{\frac12 \kappa  k( k-1)} \left(\chiOD + \tfrac12\, \kappa  k( k-1) \right)
\chiOD\, .
\ee
Since the second argument on the left-hand side is equal to 
$1-g_C(\kappa k)$, one can use the relation \eqref{PTmax} to obtain the GV invariant for $Q=\kappa  k$ and maximal genus,
\be
\GVg[\kappa  k]{\gmax(\kappa  k)} = (-1)^{\hf\kappa  k( k-1)} \(\chiOD + \tfrac12\,\kappa  k( k-1)\)\chiOD\, .
\ee
This reproduces the result which was obtained by heuristic arguments in \eqref{eqn:gvgmax} for $d= k\geq 2$. 
	
Similarly, choosing $Q=0$ and $m=1$, we find\footnote{By a case-by-case analysis, one checks that the optimality conditions are verified when  $ k\geq 2$ for $X_{4,3},X_{3,3},X_{4,2},X_{3,2,2},X_{2,2,2,2}$, 
when $ k\geq 3$ for $X_8,X_{4,4},X_{6,2}$, and when  $ k\geq 4$
for $X_{10}$ and $X_{6,6}$.}
\be
\PT\left(\kappa  k, -\tfrac12 \,\kappa  k( k+1) +1 \right) = 
(-1)^{\frac12 \kappa  k( k-1)+1} \left(\chiOD + \tfrac12 \,\kappa  k( k-1) -1 \right) J_1\, ,
\label{eqn:PTJ1}
\ee
where $J_1:=(-1)^{\chiOD+1}\,\bOm_{1,0}\left(\frac{\chi(\cD)}{24}-1\right)$. 
Using \eqref{PTsubmax}, we conclude that the GV invariant for $Q=\kappa k$ and submaximal genus is given by
\be
\begin{split}
\GVg[\kappa  k]{\gmax(\kappa  k)-1} =&\, (-1)^{\frac12 \kappa  k( k-1)}  
\[-\hf\, \kappa^2  k^4\chiOD
-\hf\, \kappa  k^2\(2 \chiOD^2-\kappa\chiOD  +J_1\)
+\kappa  k \(\frac{J_1}{2}-\chiOD^2\)+J_1(1-\chiOD ) \] .
\end{split}
\ee
This reproduces the result obtained by heuristic arguments in \eqref{eqn:gvgmaxm1}, although the constant $J_1$ is not determined by the present computation. In the examples in \S\ref{sec_test} and \S\ref{sec_gen}, we shall see that $J_1=\chi_\CY (\chi_{\cD}-1)$ when the divisor $\cD$ is smooth, which is the case when $\chiOD\geq 4$, but that it may otherwise differ from this value.

\section{Testing the modularity of rank 0 DT invariants}
\label{sec_test}

In this section, we apply the results explained in \S\ref{sec_wcrm1} to determine the generating series of Abelian D4-D2-D0 indices for several examples of one-parameter threefolds, including $X_5$ (the quintic in $\IP^4$), $X_{10}$ (the decantic in weighted projective space 
$\IP^4_{5,2,1,1,1}$ and $X_{4,2}$ (a complete intersection of degree $(4,2)$ in $\IP^5$). In particular, we rigorously compute the polar coefficients and a large number of non-polar coefficients, and confirm the modularity property predicted by string theory. 
For $X_5$ our results coincide with those in \cite{Gaiotto:2006wm},
for $X_{10}$ we confirm the proposal in \cite{VanHerck:2009ww}
(which deviates from the original computation in \cite{Gaiotto:2007cd}), 
while for $X_{4,2}$ we determine the generating series that was previously unknown.
In Appendix \ref{sec_gen}, we give similar results for all other hypergeometric models, except for $X_{3,2,2}$ and $X_{2,2,2,2}$ for which our current knowledge of GV invariants is still insufficient to determine (or just guess) the polar terms.

\subsection{Basis of vector-valued modular forms}
\label{sec_basis}

As explained in \S\ref{sec_modconj}, string theory predicts
that the generating series \eqref{defhDT} of Abelian D4-D2-D0 indices (for brevity we drop the rank index $1$)
\be
h_\mu(\tau) =\sum_{\hq_0 \leq \frac{\chi(\cD)}{24}}
\Omega_{1,\mu}(\hq_0)\,\q^{-\hq_0 }
\label{defhDT1}
\ee
should behave under $SL(2,\IZ)$ transformations as a vector-valued modular form of weight $-3/2$, transforming in the Weil representation of the lattice $\IZ[\kappa]$. The space  $\scM_1(\CY)$
of such functions has dimension $n_1^p-n_1^c$, where
$n_1^p$ is the number of polar coefficients, corresponding to terms with negative power $\hq_0>0$ in \eqref{defhDT1},
and $n_1^c$ is the number of linear relations which these coefficients must satisfy, in order for a modular form to exist (the numbers $n_1^p$ and $n_1^c$ are listed in Table \ref{table1}). 

An overcomplete basis of $\scM_1(\CY)$
can be constructed as follows \cite{Alexandrov:2022pgd}.
We define the theta series
\be
\label{theta}
\vths{\kappa}_{\mu}(\tau)=
\begin{cases}
(-1)^{\mu+\frac{\kappa}{2}}\, \sum_{{k}\in \IZ+\frac{\mu}{\kappa}+\frac{\kappa}{2}}
\q^{\tfrac{\kappa}{2}\,k^2}  & \kappa\ \mbox{even,}
\\
 -\I \kappa
\sum_{{k}\in \IZ+\frac{\mu}{\kappa}+\frac{\kappa}{2}} 
(-1)^{\kappa k}\,k\, \q^{\tfrac{\kappa}{2}\,k^2}  & \kappa\ \mbox{odd.}
\end{cases} 
\ee
They satisfy 
\be
\label{thsym}
\vths{\kappa}_{\mu}(\tau)= \vths{\kappa}_{\mu+\kappa}(\tau)= \vths{\kappa}_{-\mu}(\tau)
\ee
and transform under $\tau\mapsto \frac{a\tau+b}{c\tau+d}$
as vector-valued modular forms of weight 1/2 and 3/2, respectively.
For $\kappa=1$, we note that $\vths{1}_{\mu}(\tau)=\eta^3$ where $\eta(\tau)$ is the Dedekind theta function. More generally, for $\mu=0$ one has
\be
\vths{\kappa}_{0}(\tau)=
\begin{cases}
2 (-1)^{\frac{\kappa}{2}} \displaystyle{\frac{\eta(2\kappa\tau)^2}{\eta(\kappa\tau)}} & \kappa \mbox{ even},
\\
\kappa  (-1)^{\frac{\kappa-1}{2}} \eta(\kappa\tau)^3
& \kappa\mbox{ odd}.
\end{cases}
\ee
We claim that any element of $\scM_1(\CY)$ is a linear combination of the form
\be
h_\mu(\tau) = 
\sum_{\ell=0}^{\mm}
\sum_{k=0}^{k_\ell} a_{\ell,k}\, 
E_4^{\lfloor w_\ell/4 \rfloor-\eps_\ell-3k}(\tau)\, E_6^{2k+\eps_\ell}(\tau)\, 
\frac{D^\ell \ths{\kappa}_{\mu}(\tau)}{\eta^{4\kappa + c_2}(\tau)}\, ,
\label{decomp-modform}
\ee
where 
$E_4(\tau)$ and $E_6(\tau)$ are the standard Eisenstein series, and $D^{\ell}$ is the iterated Serre derivative\footnote{Rather than the standard iterated Serre derivative, one can just as well use its improved version introduced in  \cite[Eq (35)]{rodriguez1993square} or Rankin-Cohen brackets. Unfortunately this does not lead to smaller denominators in the resulting coefficients $a_{\ell,k}$.},
acting on holomorphic modular forms of weight $w$
through  $D_w = \q \partial_{\q} - \frac{w}{12} E_2$, where $E_2$ is the normalized quasi-modular Eisenstein series. Finally, the integers 
$k_\ell, \eps_\ell, w_\ell$ in \eqref{decomp-modform} are given by
\be
\begin{split}
& k_\ell=\lfloor w_\ell/12 \rfloor-\delta^{(12)}_{w_\ell-2},
\qquad
\eps_\ell=\delta^{(2)}_{w_\ell/2-1}, 
\\
&
w_\ell=2\kappa+\hf\, c_2-3-2\ell+\delta^{(2)}_\kappa,
\end{split}
\ee 
where $\delta^{(n)}_x$ is equal to 1 if $x=0\!\mod n$ 
and 0 
otherwise, and $\mm$ should be chosen sufficiently large so that
$\sum_{\ell=0}^{\mm}(k_\ell+1)$ is not smaller than the 
dimension of the space $\scM_1(\CY)$.
The coefficients $a_{\ell,k}$ are not unique in general (since the basis is overcomplete), but the modular form $h_\mu(\tau)$ is uniquely fixed by providing $n_1^p-n_1^c$ of its Fourier coefficients (for example the polar coefficients). 
Having determined a suitable set of coefficients $a_{\ell,k}$, it is then straightforward to expand
$h_\mu(\tau)$ to arbitrary order, and obtain a prediction for an infinite number of Abelian D4-D2-D0 invariants.

\subsection{$X_5$}
\label{subsec-X5}

\begin{figure}[h]
\begin{center}
\includegraphics[height=5.5cm]{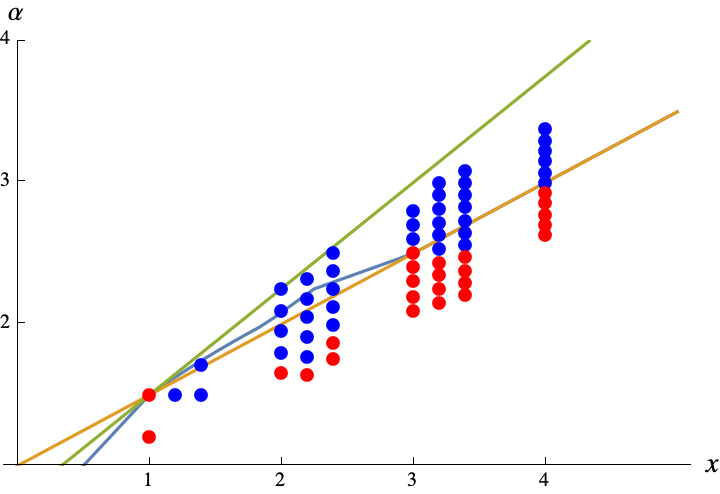}\hspace*{5mm}
\includegraphics[height=5.5cm]{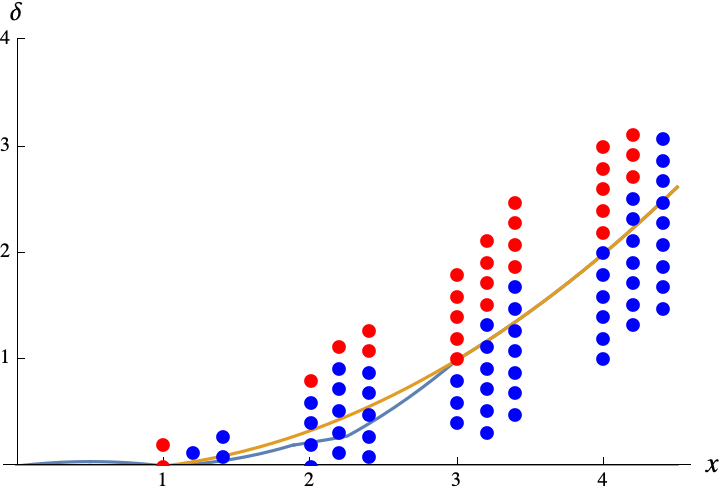}
\end{center}
\caption{Left: Blue dots indicate values of $(x,\alpha)$ for which the formula \eqref{thmS11inv} turns out to give the correct D4-D2-D0 indices for $X_5$. Red dots instead indicate values of $(x,\alpha)$ for which   \eqref{thmS11inv} fails to give the correct result. The values of $(x,\alpha)$ correspond to $(Q_k,m_k)$ with $k\in\{k_0,k_0-1,k_0-2\}$ where $k_0$ is the minimal value of $k$ such that \eqref{condfeps} holds. 
All red dots lie below the line $\alpha=f(x)$ shown in blue, and in fact they all lie below the line $\alpha=\frac{x}{2}+1$ shown in orange (see Remark \ref{remark stronger BG in}). Interestingly, there are also some blue dots lying below this line, which indicates that the condition $f(x)<\alpha$ for the validity of  \eqref{thmS11inv} can probably be weakened. Right: same diagram in the 
$(x,\delta)$ plane.}
\label{figX5}
\end{figure}

\begin{table}
\be
\begin{array}{|r|r|rrrrrr|}
\hline
Q & g_C & \delta=0 &  \delta=1 &  \delta=2 &  \delta=3 &  \delta=4 & \delta=5\\
\hline
 1 & 1 & 0 & 2875 & \text{} & \text{} & \text{} & \text{} \\
 2 & 2 & 0 & 0 & 609250 & \text{} & \text{} & \text{} \\
 3 & 3 & 0 & 0 & 609250 & 317206375 & \text{} & \text{} \\
 4 & 4 & 0 & 8625 & 534750 & 3721431625 & 242467530000 & \text{} \\
 5 & 6 & 10 & 1100 & 49250 & -15663750 & 75478987900 & 12129909700200 \\
 6 & 7 & 0 & -34500 & -3079125 & -7529331750 & 3156446162875 & 871708139638250 \\
 7 & 9 & 0 & 0 & 4874000 & 1300955250 & -1917984531500 & 245477430615250 \\
 8 & 11 & 0 & 0 & -6092500 & -1670397000 & 2876330661125 & -471852100909500 \\
 9 & 13 & 0 & 60375 & 5502750 & 18763368375 & -12735865055000 & 1937652290971125 \\
 10 & 16 & -50 & -5700 & -286650 & 50530375 & -454092663150 & 150444095741780 \\
 11 & 18 & 0 & -86250 & -7357125 & -29938013250 & 22562306494375 & -4041708780324500 \\
 12 & 21 & 0 & 0 & -13403500 & -3937166500 & 8725919269125 & -2017472506595500 \\
 13 & 24 & 0 & 0 & -15840500 & -4638330000 & 10690009494250 & -2578098061480250 \\
 14 & 27 & 0 & -138000 & -10177500 & -52227066000 & 42752384997625 & -8759526658670500 \\
 15 & 31 & -100 & -9200 & -342400 & 136695125 & -1214106563650 & 484402370601245 \\
 16 & 34 & 0 & 181125 & 11178000 & 70714095125 & -60120995398500 & 13182681427726625 \\
 17 & 38 & 0 & 0 & -28025500 & -7761538500 & 20623428936750 & -5693356905665000 \\
 18 & 42 & 0 & 0 & 31681000 & 8578113250 & -23636174920000 & 6726357908107750 \\
 19 & 46 & 0 & -258750 & -10246500 & -103897578000 & 92567501962875 & -22247603793898250 \\
 20 & 51 & 175 & 9700 & 113650 & -271460000 & 2362533313525 & -1059131220525950 \\
 21 & 55 & 0 & 319125 & 7158750 & 129691149375 & -118821918509250 & 30276261813046500 \\
 22 & 60 & 0 & 0 & 48740000 & 11680440750 & -37863219131500 & 12130764520281750 \\
 23 & 65 & 0 & 0 & 53614000 & 12356541750 & -41972283930000 & 13849264699781000 \\
 24 & 70 & 0 & 422625 & -2829000 & 174040666500 & -165847969399750 & 46048552308175750 \\
 25 & 76 & 275 & 1950 & -261225 & -437171250 & 3908290893900 & -1955377337896550 \\
 26 & 81 & 0 & -500250 & 14145000 & -207540563250 & 202764143836375 & -59568660504287750 \\
 27 & 87 & 0 & 0 & 75547000 & 14268228250 & -60651049880500 & 22529431755767500 \\
 28 & 93 & 0 & 0 & -81639500 & -14474860500 & 65883050745250 & -25213918522757500 \\
 29 & 99 & 0 & 629625 & -40175250 & 264127092375 & -267394402192000 & 85401556513695875 \\
 30 & 106 & -400 & 20800 & -111400 & 631692625 & -5861793912900 & 3278134921975475 \\
 \hline
\end{array}
\nonumber
\ee
\caption{GV invariants $\GVg{g_C(Q)-\delta}$ for $X_5$, assuming modularity.}
\vspace{-0.5cm}
\label{table_GVX5}
\end{table}

\begin{table}
\be
\begin{array}{|r|r|rrrrrr|}
 \hline
Q & m_C & \delta=0 &  \delta=1 &  \delta=2 &  \delta=3 &  \delta=4 & \delta=5\\
\hline
 1& 0 & 0 & 2875 & -5750 & 8625 & -11500 & 14375 \\
 2&-1 & 0 & 0 & 609250 & 2912875 & -14703500 & 38888250 \\
 3&-2 & 0 & 0 & 609250 & 317206375 & 1117181000 & -2098275750 \\
 4&-3 & 0 & 8625 & 569250 & 3722552875 & 244219693000 & 609122565875 \\
 5& -5 & 10 & 1200 & 58500 & -15336250 & 75441932225 & 12282361758020 \\
 6& -6 & 0 & -34500 & -3395375 & -7552124750 & 3111341190625 & 884181641560000 \\
 7& -8 & 0 & 0 & 4874000 & 1359443250 & -1904746390000 & 230184283873875 \\
 8& -10 & 0 & 0 & -6092500 & -1767877000 & 2852214003125 & -437477532060500 \\
 9 & -12 & 0 & 60375 & 6831000 & 18887370000 & -12396985924250 & 1736738444379375 \\
 10 &-15 & -50 & -7200 & -468000 & 40719875 & -452993138850 & 140467307991350 \\
 11& -17 & 0 & -86250 & -10117125 & -30201650750 & 21720393561500 & -3466439656488000 \\
 12&-20 & 0 & 0 & -13403500 & -4419692500 & 8583611403125 & -1740547789348750 \\
 13&-23 & 0 & 0 & -15840500 & -5303631000 & 10490837623750 & -2175677447038750 \\
 14&-26 & 0 & -138000 & -17077500 & -52884636000 & 40335754941625 & -6932684543525000 \\
 15&-30 & -100 & -15200 & -1053000 & 98891125 & -1207584961600 & 421450499252120 \\
 16&-33 & 0 & 181125 & 22770000 & 71772279125 & -55849465988500 & 9821365434297875 \\
 17&-37 & 0 & 0 & -28025500 & -9723323500 & 20027962736250 & -4351425496412500 \\
 18&-41 & 0 & 0 & 31681000 & 11049231250 & -22889100270000 & 5004138750546250 \\
 19&-45 & 0 & -258750 & -33016500 & -105769272000 & 83774260263375 & -15020907593198000 \\
 20&-50 & 175 & 27200 & 1930500 & -186148000 & 2339695863100 & -842900254597650 \\
 21&- 54 & 0 & 319125 & 40986000 & 132211590000 & -105493513413000 & 19064998024136500 \\
 22& -59 & 0 & 0 & 48740000 & 17236800750 & -36241075427500 & 8050136250878750 \\
 23 & -64 & 0 & 0 & 53614000 & 19004677750 & -40055925472500 & 8920421250973750 \\
 24 & -69 & 0 & 422625 & 54648000 & 177541278000 & -142726517485750 & 25997724680535000 \\
 25 & -75 & 275 & 43200 & 3100500 & -302490500 & 3849177065100 & -1404833757662750 \\
 26 & -80 & 0 & -500250 & -64894500 & -211538544000 & 170651270906875 & -31197269617418250 \\
 27 & -86 & 0 & 0 & 75547000 & 26960124250 & -57222750675000 & 12836703751401250 \\
28 & - 92 & 0 & 0 & -81639500 & -29169970500 & 61991313231250 & -13924560001520000 \\
29 & - 98 & 0 & 629625 & 81972000 & 268200654000 & -217192526608750 & 39863177843487000 \\
30 & -105 & -400 & -63200 & -4563000 & 447918625 & -5736028567600 & 2107250636494125 \\
  \hline
\end{array}
\nonumber
\ee
\caption{Stable pair invariants $\PT(Q,m_C(Q)+\delta)$ for $X_5$, assuming modularity.}
\vspace{-0.5cm}
\label{table_PTX5}
\end{table}

Abelian D4-D2-D0 invariants for the quintic threefold were first studied in \cite{Gaiotto:2006wm}, using a different basis of modular forms and an ingenuous but non-rigorous method for computing the polar terms. In this case,
$\kappa=5$, $n_1^p=7$ and $n_1^c=0$ so the vector-valued modular form is uniquely determined by computing 7 of its coefficients.
Using the overcomplete basis of the previous subsection, the result of \cite{Gaiotto:2006wm} can be written as 
\be
\begin{split}
h_{\mu}=&\, \frac{1}{\eta^{70}} \[-\frac{222887 E_4^8+1093010 E_4^5 E_6^2+177095 E_4^2 E_6^4}{35831808}
\right.
\\
&\,
+\frac{25 \(458287 E_4^6 E_6+967810 E_4^3 E_6^3+66895 E_6^5\)}{53747712}\, D
\\
&\, \left.
+\frac{25 \(155587 E_4^7+1054810 E_4^4 E_6^2+282595 E_4 E_6^4\)}{8957952}\, D^2\] \vths{5}_{\mu},
\end{split}
\ee
In view of the symmetry properties \eqref{thsym}, there are only three distinct components, with the following expansion: 
\be
\begin{split}
h_{0} =&\, \q^{-\frac{55}{24}}\, \Bigl(
\underline{5 - 800 \q + 58500 \q^2} + 5817125 \q^3 + 75474060100 \q^4 
\\
&\, 
+28096675153255 \q^5+3756542229485475 \q^6+277591744202815875 \q^7  
\\
&+ 13610985014709888750 \q^8 +490353109065219393125 \q^9
+ \dots \Bigr),
\\
h_{1} =&\, \q^{-\frac{55}{24}+\frac35} \, \Bigl(  \underline{0+8625 \q}
- \dotuline{ 1138500 \q^2} + 3777474000 \q^3
+ 3102750380125 \q^4 
\\
&\, 
+577727215123000 \q^5 +52559194851824125 \q^6+2990604504777589125 \q^7
+ \dots\Bigr),
\\
h_{2} =&\, \q^{-\frac{55}{24}+\frac25}\,
\Bigl(  \underline{0+0 \q}  -\dotuline{1218500 \q^2} + 441969250 \q^3 + 953712511250 \q^4 
\\
&\,  
+217571250023750 \q^5+22258695264509625 \q^6 
+1374043315791020500 \q^7
+ \dots\Bigr).
\end{split}
\ee
Here and elsewhere, the polar coefficients are underlined.
Using Eq. \eqref{thmS11inv} and GV invariants up to genus 53, we have reproduced all terms up to (and including) orders $\q^9$, $\q^5$ and $\q^6$ in these expansions, respectively.\footnote{For the coefficients up to $\q^5$ in $h_0$, $\q^2$ in $h_1$ and $\q^3$ in $h_2$, the relevant value of $(Q,m)$ is optimal and the formula \eqref{thmS11} has only one non-vanishing contribution (or none when the coefficient is zero). For the terms of order $\q^6, \q^7,\q^8, \q^9$ in in $h_0$, there are contributions from 2,3,4,5 walls, respectively. For the order $\q^3,\q^4,\q^5,\q^6,\q^7$ in $h_1$, there are contributions from 2,3,4,4, 5 walls, respectively. For the terms of order $\q^4, \q^5,\q^6, \q^7$ in in $h_2$, there are contributions from 2,4,3,3, 5 walls, respectively.} In many cases, we find that \eqref{thmS11inv} holds even though the assumption $f(x)<\alpha$ is not obeyed (see Figure \ref{figX5}), in particular we can also reproduce the coefficients of $\q^6$ in $h_{1}$ and $\q^7$ in $h_2$ using \eqref{thmS11inv} with $k=k_0-1$, where $k_0$ is the minimal value of $k$ for which \eqref{condfeps} is satisfied.

As already noted in \cite{Alexandrov:2022pgd}, the naive Ansatz \eqref{naive} with $r=1$ gives the correct polar terms in this case. In addition, it also correctly predicts the $\cO(\q^2)$ terms in $h_1$ and $h_2$, as indicated with dotted underline. The coefficient of the order $\cO(\q^3)$ term in $h_0$ can be 
understood as 
\be
5817125 = 
-2 \DT(0, 3)+\DT(0,2)+\DT(1,1)^2 \ ,
\ee 
where the first term is the naive ansatz prediction, the second is a correction from the locus where the 3 D0-branes are aligned, and the last term corresponds to a bound state of D6-D2
and $\overline{\rm D6}$-$\overline{\rm D2}$-branes \cite{Collinucci:2008ht}. It would be interesting to have a similar bound state interpretation for other non-polar coefficients. 

Using modularity we can also predict GV and PT invariants of arbitrary degree, provided they are close enough to the Castelnuovo bound. In Table \ref{table_GVX5}, we list the GV invariants with $\delta=g_C(Q)-g\leq 5$, and similarly
in Table \ref{table_PTX5} we list the PT invariants with $\delta=m-m_C(Q)\leq 5$,
where $g_C(Q)$ and $m_C(Q)$ were defined in \eqref{gCast} and \eqref{mCast}, respectively. Using these GV invariants, we have in principle sufficiently many boundary conditions to fix the holomorphic ambiguity up to genus 69. Due to limited computer resources, we have currently pushed up the direct integration to genus 64, and confirmed the predictions of modularity up to that order.

\subsection{$X_{10}$}
\label{subsec-X10}

We now turn to the decantic in $\IP^4_{5,2,1,1,1}$, which was first studied in \cite{Gaiotto:2007cd} and revisited in \cite{VanHerck:2009ww}. In this case, $\kappa=1$,  $n_1^p=2$ and $n_1^c=0$ so the scalar modular form $h:=h_0$ is
uniquely fixed by 2 coefficients. In \cite{Gaiotto:2007cd},
it was suggested that 
\be
\begin{split}
h\stackrel{?}{=}&\, \frac{541 E_4^4+1187 E_4 E_6^2}{576\, \eta^{35}}
\\
=&\, \q^{-\frac{35}{24}}\,\Bigl(
\underline{3-576 \q}+271704 \q^2+206401533 \q^3+ \dots \Bigr).
\end{split}
\label{resfunX10old}
\ee
The same result was found in \cite{Alexandrov:2022pgd}
using the naive Ansatz \eqref{naive}. Instead, assuming that the BMT inequality is satisfied, Eq. \eqref{thmS11inv}
predicts that the coefficient of the subleading polar term should be $-575$, as suggested in \cite{VanHerck:2009ww}.
In fact, using \eqref{thmS11inv} and GV invariants up to genus 47, we can check all the terms up to order $\q^{11}$ in the resulting expansion, 
\be
\begin{split}
h=&\, \frac{203 E_4^4+445 E_4 E_6^2}{216\, \eta^{35}}
\\
=&\, \q^{-\frac{35}{24}}\,\Bigl(
\underline{3-\underline{575} \q}+271955 \q^2+206406410 \q^3+21593817025 \q^4
\\
&\,
+1054724115956 \q^5
+32284130488575 \q^6+712354737460415 \q^7
\\
&\, +12285858824682770 \q^8 + 174458903522212025 \q^9 + 
 2114022561929255740 \q^{10}\\
 &\,  + 22434520426025264925 \q^{11} 
 +212611407819858981640 \q^{12}
+\cdots\Bigr).
\end{split}
\label{resfunX10}
\ee
Here and in Appendix \ref{sec_gen}, the double-underline underscores the fact that the polar coefficient deviates from the naive ansatz \eqref{naive}.
Interestingly, applying Eq.\eqref{thmS11inv} with $k=k_0-1$, where $k_0$ is the minimal value for which \eqref{condfeps} is satisfied, one can reproduce the expansion \eqref{resfunX10} 
to even higher order $\q^{14}$.

\begin{table}
\be
\begin{array}{|r|r|rrrrrr|}
\hline
Q & g_C & \delta=0 &  \delta=1 &  \delta=2 &  \delta=3 &  \delta=4 & \delta=5\\
\hline
  1 & 2 & 3 & 280 & 231200 & \text{} & \text{} & \text{} \\
 2 & 4 & -12 & -1656 & -537976 & 207680960 & 12215785600 & \text{} \\
 3 & 7 & -18 & -2646 & -1057570 & 630052679 & -46669244594 & 1264588024791 \\
 4 & 11 & 27 & 4060 & 1825541 & -1268283512 & 125509540304 & -5611087226688 \\
 5 & 16 & 39 & 5730 & 2814100 & -2139555052 & 244759232792 & -13239429980228 \\
 6 & 22 & -54 & -7507 & -4004506 & 3254742758 & -416588796648 & 25859458639950 \\
 7 & 29 & -72 & -9193 & -5375708 & 4629222449 & -655954806090 & 45976776864607 \\
 8 & 37 & 93 & 10554 & 6910207 & -6280307986 & 981118531775 & -77100442475920 \\
 9 & 46 & 117 & 11320 & 8597590 & -8227101620 & 1413894771755 & -124031731398850 \\
 10 & 56 & -144 & -11185 & -10438670 & 10490492480 & -1979933144970 & 193210634123311 \\
 11 & 67 & -174 & -9807 & -12450166 & 13093396333 & -2709028151150 & 293124778727973 \\
 12 & 79 & 207 & 6808 & 14669923 & -16061324744 & 3635467145440 & -434786567257064 \\
 13 & 92 & 243 & 1774 & 17162672 & -19423381916 & 4798433274180 & -632285283576376 \\
 14 & 106 & -282 & 5745 & -20026330 & 23213797570 & -6242490557180 & 903422424012068 \\
 15 & 121 & -324 & 16235 & -23398840 & 27474114305 & -8018190890070 & 1270440806044980 \\
 \hline
\end{array}
\nonumber
\ee
 \caption{GV invariants $\GVg{g_C(Q)-\delta}$ 
 for $X_{10}$, assuming modularity.}
 \vspace{-0.5cm}
\label{table_GVX10}
\end{table}

As discussed in~\cite{VanHerck:2009ww}, the deviation from the naive ansatz arises because the moduli space of D4-D0 bound states is in general not a bundle over the moduli space of the D0-brane, which is $\CY$ itself. When the 
D0-brane is at a generic position, the requirement that it should belong to the divisor imposes one condition on the defining equation of the divisor.
Since the divisor is the vanishing locus of a linear polynomial in the three homogeneous coordinates of weight one, the moduli space of divisors containing a given generic point on $\CY$ is $\mathbb{P}^1$. However, when the D0-brane lies at the special point where  all homogeneous coordinates of weight one vanish, 
it  no longer imposes any condition on the divisor, whose moduli space is then enhanced to $\mathbb{P}^2$. Hence, the index for a D4-brane bound to a single D0-brane should be \cite{VanHerck:2009ww}
\begin{align}
\label{X10resolve}
    \chi(\mathbb{P}^1)\times(\chi_{\CY}-\chi(\text{pt}))+\chi(\mathbb{P}^2)\times\chi(\text{pt})=-575\,,
\end{align}
in agreement with \eqref{resfunX10}. Ignoring the effect of the special point, one would instead find 
 $\chi(\mathbb{P}^1)\times\chi_{\CY}=-576$,
as predicted by the naive ansatz.

While the maximal genus attainable by the standard direct integration method is 50, using modularity, we can predict
GV invariants close to the Castelnuovo bound to arbitrary genus (see Table \ref{table_GVX10}), and provide sufficiently many boundary conditions
to push the direct integration method, in principle, up to genus 70.

\subsection{$X_{4,2}$}
\label{subsec-X42}

\begin{table}
\be
\begin{array}{|r|r|rrrrrr|}
\hline
Q & g_C & \delta=0 &  \delta=1 &  \delta=2 &  \delta=3 &  \delta=4 & \delta=5\\
\hline
 1 & 1 & 0 & 1280 &  &  &  &  \\
 2 & 2 & 0 & 0 & 92288 &  &  &  \\
 3 & 3 & 0 & 0 & 2560 & 15655168 &  &  \\
 4 & 4 & 0 & -8 & -672 & 17407072 & 3883902528 &  \\
 5 & 5 & 0 & 0 & 7680 & 16069888 & 24834612736 & 1190923282176 \\
 6 & 6 & 0 & 0 & 276864 & 12679552 & 174937485184 & 23689021709568 \\
 7 & 7 & 0 & 7680 & 591360 & -285585152 & 2016330670592 & 494602061689344 \\
 8 & 9 & 15 & 1520 & 67208 & -8285120 & -46434384200 & 37334304102560 \\
 9 & 10 & 0 & -25600 & -2270720 & 370290688 & -4031209095680 & 1103462757073920 \\
 10 & 12 & 0 & 0 & 1384320 & 117390080 & 528559731712 & -344741538150784 \\
 11 & 14 & 0 & 0 & -46080 & -160005120 & -109083434240 & 163217721434624 \\
 12 & 16 & 0 & -96 & -12096 & 208486080 & 49221875968 & -145360041245120 \\
 13 & 18 & 0 & 0 & -61440 & -223475712 & -160179161088 & 272915443716096 \\
 14 & 20 & 0 & 0 & 2491776 & 175162624 & 1228486889728 & -1047846937829632 \\
 15 & 22 & 0 & -56320 & -4428800 & 1220514304 & -16165844458240 & 7742999973263360 \\
 16 & 25 & 84 & 7408 & 286784 & -30323216 & -231113426452 & 341194684288608 \\
 17 & 27 & 0 & 71680 & 5002240 & -1685727232 & 22238429571584 & -11254527777976576 \\
 18 & 30 & 0 & 0 & 4152960 & 230535424 & 2276356656640 & -2136509421094912 \\
 19 & 33 & 0 & 0 & 130560 & 507426816 & 386536492032 & -775389100867584 \\
 20 & 36 & 0 & -240 & -23808 & 626523936 & 144150871104 & -587797370270104 \\
 21 & 39 & 0 & 0 & 161280 & 632463360 & 482500187136 & -1002354648247296 \\
 22 & 42 & 0 & 0 & 6367872 & 217984256 & 3670593912832 & -3643185915136000 \\
 23 & 45 & 0 & 133120 & 5255680 & -3616804864 & 46617513355264 & -26329088088999936 \\
 24 & 49 & 180 & 8240 & 90016 & -91088144 & -597217698472 & 974876677046816 \\
 25 & 52 & 0 & -158720 & -4172800 & 4443311104 & -56810684083200 & 33070947498452480 \\
 26 & 56 & 0 & 0 & 9136512 & 71061760 & 5412945197824 & -5609947543679488 \\
 27 & 60 & 0 & 0 & -276480 & -1095613440 & -818893387776 & 1871635810564608 \\
 28 & 64 & 0 & -480 & -22848 & 1324638144 & 242234826816 & -1337626038427488 \\
 29 & 68 & 0 & 0 & -322560 & -1278422016 & -943519552512 & 2226739820757504 \\
 30 & 72 & 0 & 0 & 12458880 & -289968896 & 7510718536448 & -8087071198417408 \\
 31 & 76 & 0 & -250880 & 5058560 & 7256038912 & -93671359907840 & 59478452149884928 \\
 32 & 81 & 324 & -5200 & -169696 & -165887120 & -1137829570120 & 1982002329031968 \\
 33 & 85 & 0 & 286720 & -10846720 & -8197510144 & 108071476324864 & -70629413377719296 \\
 34 & 90 & 0 & 0 & 16334976 & -958134016 & 9978853510144 & -11133687621246976 \\
 \hline
\end{array}
\nonumber
\ee
 \caption{GV invariants $\GVg{g_C(Q)-\delta}$ 
 for $X_{4,2}$, assuming modularity.}
 \vspace{-0.5cm}
\label{table_GVX42}
\end{table}

Finally, we turn to the degree $(4,2)$ complete intersection in $\IP^5$. For this model $\kappa=8$ and there are 15 polar coefficients with one modular constraint. In  \cite{Alexandrov:2022pgd}
it was found that the naive Ansatz \eqref{naive} is incompatible with modularity. Using \eqref{thmS11inv} and GV invariants up to genus 50, we find that the first terms of the generating function are given by 
\be
\begin{split}
\hpol_{0} =&\, \q^{-\frac{8}{3}}\Bigl(\underline{-6 + 880 \q -60192\q^2}
-780416\q^3 +23205244196\q^4 +36880172393344\q^5 
\\
&\, +10924546660884800\q^6
+1454816640629235200\q^7
+\cdots \Bigr),
\\
\hpol_{1} =&\, \q^{-\frac{8}{3}+\frac{9}{16}}\Bigl( \underline{0-5120 \q -\tfrac{1}{30} \PT(25, -49) \q^2}
+\tfrac{1}{29}\, (222720 + \PT(25, -48))\q^3 +\cdots \Bigr),
\\
\hpol_{2} =&\, \q^{-\frac{8}{3}+\frac14} \Bigl( \underline{0+ 0 \q + \tfrac{1}{33} \PT(26, -53) \q^2}
-\tfrac{1}{32} \PT(26, -52)\q^3+\cdots \Bigr),
\\
\hpol_{3} =&\, \q^{-\frac{8}{3}+\frac1{16}} \Bigl( \underline{0+0\q+7680 \q^2}
+\tfrac{1}{35} \PT(27, -56)\q^3+\cdots\Bigr),
\\
\hpol_{4} =&\, \q^{-\frac{8}{3}} \Bigl( \underline{0+12\q-2112\q^2} -34689216\q^3
+\tfrac{1}{37} \PT(28, -59)\q^4
+\cdots\Bigr).
\end{split}
\label{resX42}
\ee
Although not all polar terms are found in this way, the result \eqref{resX42} provides an overdetermined set 
of coefficients which are compatible with the modular constraint and sufficient to fix uniquely 
the corresponding modular form. It is found to be
\bea
h_{\mu} &=& \frac{1}{\eta^{88}} \[ -\frac{827243  E_6^7}{13060694016 } 
\right.
\nn\\
&&  -\frac{E_4 \left(-71601885840 E_4^9-69248772786 E_4^6 E_6^2
+131750318292 E_4^3 E_6^4+14988448525 E_6^6\right)}{2190387225600}D 
\nn\\
&& +\frac{ \left(-7850108795 E_4^8 E_6-3026319343 E_4^5 E_6^3
+15844024271 E_4^2 E_6^5\right)}{30422044800}D^2 
\nn\\
&& +\frac{\left(41784458605 E_4^9+14762282727 E_4^6E_6^2
-68049440469 E_4^3 E_6^4-1016731100 E_6^6\right)}{19013778000}D^3 
\label{hmuX42}\\
&&-\frac{4  \left(173171 E_4^7 E_6+342266 E_4^4 E_6^3+44435 E_4 E_6^5\right)}{229635 }D^4 
\nn\\
&& \left.
+\frac{16\left(-93844535 E_4^8-89437029 E_4^5 E_6^2
+93510063 E_4^2 E_6^4\right)}{132040125} D^5 
\] \vths{8}_{\mu}\, ,
\nn
\eea
and produces the following expansions
\be
\begin{split}
h_{0} =&\, \q^{-\frac{8}{3}} \,\Bigl(\underline{ -6 + 880 \q - 60192 \q^2} - 780416 \q^3 + 23205244196 \q^4 +
 36880172393344 \q^5 
\\ 
& + 10924546660884800 \q^6 + 1454816640629235200 \q^7
+\dots \Bigr),
\\
h_{1} =&\, \q^{-\frac{8}{3}+\frac{9}{16}}\,\Bigl(
\underline{ 0-5120 q + 668160 \q^2 }+ 112032256 \q^3 + 2015342615552 \q^4 
\\ 
& + 1027768507417600 \q^5 
+ 184583137843579904 \q^6 + 17979440506308718592 \q^7
+\dots  \Bigr),
\\
h_{2} =&\, \q^{-\frac{8}{3}+\frac14}\,\Bigl(
\underline{ 0 + 0 \q + 276864 \q^2} - 32485376 \q^3 + 176489687424 \q^4 + 168522803580928 \q^5 
\\ 
&
+ 39373360484128256 \q^6 + 4527688807584194560 \q^7
+\dots \Bigr),
\\
h_{3} =&\, \q^{-\frac{8}{3}+\frac1{16}}\,\Bigl(
\underline{0 +0 \q+7680 \q^2} - 32203776 \q^3 + 27746555904 \q^4 + 53778203675136 \q^5 
\\ 
&  +
 15108125739695104 \q^6 + 1937976067726382592 \q^7
+\dots \Bigr),
\\
h_{4} =&\, \q^{-\frac{8}{3}}\,\Bigl(
\underline{ 0 + \underline{12} \q - \underline{2112} \q^2} - 34689216 \q^3 + 10834429824 \q^4 +
 36099879476640 \q^5 
\\ 
&
+ 10900431340916352 \q^6 + 1454331023779312896 \q^7
+\dots \Bigr).
\end{split}
\label{resX42full}
\ee
Furthermore, coefficients of $\q^2$, $\q^3$ in $h_0$, $\q^2$, $\q^3$, $\q^4$ in $h_1$, up to $\q^5$ in $h_2$, $\q^3$, $\q^4$, $\q^5$ in $h_3$, and $\q^4$ in $h_4$ are also reproduced by \eqref{thmS11inv} with $k=k_0-1$. Thus, there is overwhelming evidence that \eqref{hmuX42} is correct. 
While the maximal genus attainable by the standard direct integration method is 50, using modularity we can predict
GV invariants close to the Castelnuovo bound to arbitrary genus (see Table \ref{table_GVX42}), and provide sufficiently many boundary conditions 
to push the
direct integration method up to genus 64.

\section{Discussion}
\label{sec-disc}

In this work, we have exploited a triangle of relations between Gopakumar-Vafa invariants $\GVg{g}$, which determine the topological string partition function on a CY threefold $\CY$, Pandharipande-Thomas invariants $\PT(Q,m)$ which count bound states of a single anti-D6-brane with $Q$ D2 and $m$ D0 branes, and  D4-D2-D0 invariants $\bOm_{r,\mu}(\hq_0)$, which count BPS black holes with $r$ units of D4-brane charge along an ample divisor $\cD$, and D2-D0 brane charge determined by $\mu$ and $\hq_0$. Mathematically, these invariants count embedded curves, stable pairs and Gieseker-stable coherent sheaves supported on $\cD$, respectively. 
While the relation between GV and PT invariants is standard \cite{gw-dt}, and relations between GV invariants and D4-D2-D0 indices were first proposed in \cite{Ooguri:2004zv}, we used a novel explicit formula \eqref{thmS11inv} proven in Appendix \ref{sec_appS}, which applies for one-parameter CY threefolds with $\Pic\CY=\IZ H$ (or more generally, CY threefolds satisfying Assumption (*) in Prop. \ref{prop.relax}) and for $r=1$ unit of D4-brane charge. We applied this formula for the 13 CY threefolds of hypergeometric type, for which we have computed GV invariants (and therefore PT invariants) to relatively high genus using the direct integration method of \cite{Huang:2006hq}. 

For most models, we could rigorously compute the first few coefficients in the generating series of Abelian D4-D2-D0 invariants, including both polar and non-polar terms, and find a unique vector-valued modular form which reproduces all of them, providing a 
striking confirmation of the modularity properties predicted by string theory. These results also provide indirect support for the BMT inequality which is assumed in the derivation of \eqref{thmS11inv}, in cases where it is not yet known to hold.  
For $X_{3,3}$, $X_{4,3}$, $X_{3,2,2}$ and $X_{2,2,2,2}$, we could not compute sufficiently many terms rigorously to uniquely fix the vector-valued modular form, but in the first two cases we could determine a unique candidate which agrees with the formula \eqref{thmS11inv} for many coefficients, albeit outside the known regime of validity for this formula. 
For $X_{3,2,2}$ and $X_{2,2,2,2}$, our current knowledge of GV invariants is not sufficient to identify the modular form with sufficient confidence.
Conversely, in cases where the vector-valued modular form could be identified,  
we used these modular predictions
to determine GV invariants close to the Castelnuovo bound for arbitrarily high genus. These results provide new boundary conditions for the direct integration method, which in principle allow to reach higher genus than hitherto possible (in practice, some computational limitations need to be overcome in order to reach such high genera). 
The case of  $X_{4,3}$ is particularly noteworthy, as it requires combining information from direct integration, modularity and wall-crossing to go beyond the  restrictions imposed by each of these methods separately.

These results raise several natural questions. First, it is intriguing that the naive Ansatz \eqref{naive}, which was proposed as an educated guess in \cite{Alexandrov:2022pgd}, so often manages to produce the correct polar terms. 
As discussed in \S\ref{sec_wcr0}, a similar result \eqref{conseqTh1} arises by studying the chamber structure of rank 0 DT-invariants in the space of weak stability conditions.
Unfortunately, the walls can only be controlled under the very restrictive assumption \eqref{condFeyz} which in practice limits its use to the most polar terms. It would be very interesting to relax the condition \eqref{condFeyz}, but this seems to require a stronger bound on $\ch_3$ than provided by the standard BMT inequality. Physically, such a result would give a clear physical origin of the polar coefficients in terms of bound states of D6-$\overline{\rm D6}$-branes, as proposed in \cite{Denef:2007vg}. Instead, the formula \eqref{thmS11} expresses the spectrum of anti-D6-branes as a sum of bound states of D4-D2-D0 branes and anti-D6-branes with lower D4-D2-D0 brane charge, and does not provide any insight on the micro-structure of D4-D2-D0 bound states by themselves. 

A second question is, why on earth should the generating series of Abelian D4-D2-D0 invariants be modular. Of course, physics gives a clear reason, by identifying them with the elliptic genus of the $(0,4)$-superconformal field theory obtained by wrapping an M5-brane on the divisor $\cD$. From the mathematics viewpoint however, modularity is still largely mysterious. For non-compact CY threefolds of the form $\CY=K_S$ where $S$ is a complex projective surface, the generating series of Abelian D4-D2-D0 invariants supported on the divisor $S$ is given by G\"ottsche's formula for the Euler number of the Hilbert scheme of points on $S$ \cite{Gottsche:1999ix}, which is manifestly modular. For K3-fibered CY threefolds, the modularity of vertical D4-D2-D0 indices (counting D4-branes supported on a K3-fiber) can be shown to follow from  G\"ottsche's formula for the Hilbert scheme of points on $K3$ and from the modularity of the generating series of Noether-Lefschetz numbers determined by the fibration \cite{Bouchard:2016lfg}. In our generating series \eqref{decomp-modform} of Abelian D4-D2-D0 invariants, it is tempting to identify the factor $1/\eta^{\kappa+c_2}$ as coming from the Hilbert scheme of points on the divisor $\cD$, and the remainder as the generating series of some Noether-Lefschetz-type numbers taking into account the moduli of the divisor $\cD\subset\CY$ equipped with a line bundle \cite{Feyzbakhsh:2020wvm}. Eventually, one would hope that modularity can be derived from the existence of an underlying vertex operator algebra acting on the cohomology of the moduli space of semi-stable sheaves, similar to the case of Hilbert scheme of points on surfaces \cite{nakajima1999lectures}.

Third, it would be very interesting to generalize this approach to the case
of non-Abelian D4-D2-D0 indices, where the generating series are expected to be mock modular. While the relation between rank 1 and rank 0 DT invariants 
from \cite[Thm 1.2]{Feyzbakhsh:2022ydn} 
covers this case, it requires taking 
the spectral flow parameter to be large enough, with unspecified lower bound.
Nonetheless, we expect that the approach in Appendix \ref{sec_appS} can be generalized and used to compute polar coefficients for $r>1$ as well. The strategy outlined in \cite{Alexandrov:2022pgd} can then be used to construct a suitable mock modular series (using the generating series of Hurwitz class numbers to cancel the modular anomaly in the $r=2$ case). In a subsequent work \cite{Alexandrov:2023ltz}, we apply this strategy 
 for the models $X_{8}$ and $X_{10}$ at rank $r=2$ and verify the 
mock modular properties predicted in \cite{Alexandrov:2016tnf,Alexandrov:2017qhn,Alexandrov:2018lgp,Alexandrov:2019rth}. 
It would also be desirable to generalize this approach to other classes of one-parameter CY threefolds (such as freely acting orbifolds of hypergeometric models, or complete intersections in Grassmannians or Pfaffians), and to the 
CY threefolds with 2 or more K\"ahler parameters. In particular, we expect an interesting interplay between the modularity of D4-D2-D0 invariants and the modularity associated to elliptic fibrations~\cite{Klemm:2012sx,Alim:2012ss,Huang:2015sta}. 

Finally, having found the generating series of Abelian D4-D2-D0 invariants, it is now straightforward to extract the asymptotic growth of the Fourier coefficients, and produce a Rademacher-type series which computes them explicitly~\cite{Dijkgraaf:2000fq,Dabholkar:2005by,Dabholkar:2005dt,Manschot:2007ha}. It would be very interesting to reproduce these microstate degeneracies from localization in supergravity, in analogy to cases with $\cN=8$ or $\cN=4$ supersymmetry \cite{Dabholkar:2011ec,Dabholkar:2014ema,Iliesiu:2022kny,Ferrari:2017msn,LopesCardoso:2022hvc} (see \cite{Murthy:2013xpa,Murthy:2015yfa,Gomes:2019vgy} for some progress in this direction).  
We hope to return to these issues in near future.

\medskip
\medskip
\newpage

\appendix

\section{New explicit formulae, by S. Feyzbakhsh\label{sec_appS}}

Let $(\CY, H)$ be a smooth polarised Calabi-Yau threefold (i.e. $K_ \CY \cong \cO_\CY$ and $H^1(\CY, \cO_\CY) =0$) with $\Pic(\CY) =\Z.H$ satisfying the BMT conjecture. In \S\ref{sec_Thm1}, we improve the result of \cite[Theorem 1.2]{Feyzbakhsh:2022ydn} for rank zero classes with minimal D4-brane charge $\ch_1=H$ and obtain, under some assumptions,  an explicit formula for the stable pair invariants $\PT_{m, \beta}$  in terms of rank zero DT invariants and stable pair invariants $\PT_{m', \beta'}$ for $\beta'. H < \beta. H$. In \S\ref{sec_appI}, we explain how this result can be inverted to determine rank 0 DT invariants with minimal D4-brane charge from stable pair invariants. In \S\ref{sec_appII}, we apply the wall-crossing argument for Theorem \ref{thm-main} to establish a Castelnuovo-type bound
for PT invariants, and compute the PT invariants saturating this bound explicitly for the quintic threefold. In \S\ref{sec_nonprimwc}, we extend Theorem \ref{thm-main} to a special case where non-primitive wall-crossing occurs. 
In \S\ref{sec_relaxPic}, we state a generalisation of Theorem \ref{thm-main} under the weaker assumption $(\star)$ that $H^3$ divides $H'.H^2$ for all $H'\in\Pic(\CY)$. Finally, in \S\ref{sec_rk0fromDTPT}, we strengthen \cite[Theorem 1.1]{Feyzbakhsh:2022ydn} under 
assumption $(\star)$. 
The proofs of these results are collected in \S\ref{sec_proof}. 

\noindent {\it Notation:} In this section, we label the charges by the Chern character (rather than the Mukai vector), so for instance $\bOmH(\v)$ counts $H$-Gieseker semi-stable sheaves of Chern character $\v$. Furthermore, for readability we decompose $\v$ into
its components in $H^0(\CY,\IZ)$, $H^2(\CY,\IZ)$, $H^4(\CY,\IQ)$ and $H^6(\CY,\IQ)$. For example,
$\v=(-1,0,\beta,-m)$ stands for $\v=-1+\beta-m$. We also use
the notation $\ch_{\leq 2}E$ for the projection of $\ch E$ on $H^0(\CY,\IZ)\oplus H^2(\CY,\IZ)\oplus H^4(\CY,\IQ)$.

\subsection{Stable pair invariants from rank 0 DT invariants with minimal D4-brane charge}
\label{sec_Thm1}
For any $\beta \in H_2(\CY, \Z)$ (which can also be regarded as a class in 
$H^4(\CY, \Z)$ by Poincar\'e duality), we define the integer
\begin{equation}\label{functionf}
\cC(\beta)\ \coloneqq \ \left\{\!\!\!\begin{array}{cc} \ \ \left\lfloor \frac{2}{3H^3} (\beta. H)^2 + \frac{\beta. H}{3}   \right\rfloor & \ \ \  \text{if $\beta. H < H^3$,} 
\vspace{.3 cm}\\
\ \ \left\lfloor \frac{1}{2H^3} (\beta. H)^2 + \frac{\beta. H}{2}  \right\rfloor & \ \ \ \text{if $H^3 \leq \beta. H$,}
\end{array}\right.
\end{equation}
which determines the Castelnuovo bound as we explain below. 
Consider the function 
\begin{equation}\label{function f}
f(x)\ \coloneqq \ \left\{\!\!\!\begin{array}{cl} 
\vspace{.1 cm}
x+\frac12 & \text{if $0 < x < 1$,}\\
\vspace{.1 cm}
\sqrt{2x+\frac{1}{4}} & \text{if $1 < x < \frac{15}{8}$,}\\
\vspace{.1 cm}
\frac{2}{3}x+ \frac{3}{4} & \text{if $\frac{15}{8} \leq x< \frac{9}{4}$,}\\
\vspace{.1 cm}
\frac{1}{3}x+ \frac{3}{2} & \text{if $\frac{9}{4} \leq x< 3$,}\\
\vspace{.1 cm}
\frac{1}{2}x+ 1 & \text{if $3 \leq x$.}
\end{array}\right.
\end{equation}
\begin{theorem}\label{thm-main}
	Fix $m \in \Z$ and $\beta \in H_2(\CY, \Z)$ such that $\beta.H>0$ and 
	\begin{equation}\label{condition-f}
	f\left(\frac{\beta. H}{H^3}\right) < -\frac{3m}{2\beta. H}\, .
	\end{equation}
	Then 
\be	
\begin{split}\label{pt-thm}
	& \PTi_{m, \beta} = 
	\\ 
	&  \sum_{(m',\, \beta') \,\in\, M_{m,\, \beta}} (-1)^{\chi_{m',\beta'}}\chi_{m',\beta'}\ \PTi_{m', \beta'}\ \bOmH\left(0,\ H, \ \frac{1}{2}H^2 -\beta' +\beta\ , \ \frac{1}{6}H^3 +m'-m -\beta'. H \right),
\end{split}
\ee
	where 
	\begin{equation}
	\chi_{m',\beta'} = \beta. H + \beta'. H +m -m' - \frac{H^3}{6} - \frac{1}{12}c_2(\CY).H\,.
	\end{equation}
	The set $M_{m,\, \beta}$ consists of pairs $(m', \,\beta') \in H_0(\CY, \Z) \oplus H_2(\CY, \Z)$  such that
	\begin{equation}\label{all}
	0 \leq \beta'. H  \leq \frac{H^3}{2} + \frac{3mH^3}{2\beta. H} + \beta. H
	\end{equation} 
	and 
	\begin{align}\label{final}
	- \cC(\beta')
	\leq \ m' \ \leq  \frac{1}{2H^3} (\beta. H -\beta'. H)^2 + \frac{1}{2}(\beta. H +\beta'. H) +m. 
	\end{align}
\end{theorem}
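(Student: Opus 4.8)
\textbf{Proof strategy for Theorem \ref{thm-main}.}

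The plan is to run the wall-crossing machinery in the space of weak stability conditions $(Z_{b,a},\cA_b)$ for the rank $-1$ Chern character $\v_0=(-1,0,\beta,-m)$, following the template of \cite[Thm 1.2]{Feyzbakhsh:2022ydn} but tracking walls explicitly rather than only in the $n\to\infty$ limit. First I would set up the relevant parabola/line picture in the $(b,w)$-plane from \S\ref{sec_stab}: the point $\varpi(\v_0)$ lies outside the region $U$, and the BMT line $\ell_f(\v_0)$ given by $L_{b,w}(\v_0)=0$ passes through $\varpi(\v_0)$ and $\widetilde\varpi(\v_0)$. The hypothesis \eqref{condition-f}, i.e. $f(\beta.H/H^3)<-3m/(2\beta.H)$, is precisely the condition ensuring that this line crosses the parabola $w=\tfrac12 b^2$ \emph{and} that along the vertical segment above $\varpi(\v_0)$ the BMT inequality forces an empty chamber: there are no $\nu_{b,w}$-semistable objects of class $\v_0$ for $(b,w)$ in a suitable region just below/around $\ell_f(\v_0)$. (The piecewise definition of $f$ in \eqref{function f} encodes the different geometric regimes: in the range $1<x<15/8$ the binding constraint is that $\ell_f$ stays below the parabola, while for $x\geq 15/8$ it is the strengthened BMT/BG bounds that matter.) This empty chamber is the starting point.

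Next I would transport the invariant $\bOm_{b,w}(\v_0)$ from the empty chamber to the large-volume regime $w\to+\infty$, where by \cite[Lemma 2.4]{Feyzbakhsh:2022ydn} (and the discussion below \eqref{HilbP}) it equals the PT invariant $\PTi_{m,\beta}$. Along the way one crosses finitely many walls \cite[Prop.~1.4]{Feyzbakhsh:2021nds}, each of which (by the structure in \S\ref{sec_stab}) passes through $\varpi(\v_0)$ and corresponds to a short exact sequence in $\cA_b$ destabilizing $\v_0$. The key geometric input is to classify the possible destabilizing subobjects/quotients: one factor must be (the derived dual of) a stable pair of some lower class $\v_0'=(-1,0,\beta',-m')$ with $\beta'.H\leq\beta.H$, and the complementary factor a rank-0 Gieseker-semistable sheaf with Chern character $(0,H,\tfrac12 H^2-\beta'+\beta,\tfrac16 H^3+m'-m-\beta'.H)$ — the minimal D4-charge $\ch_1=H$ arising because $\ch_1(\v_0)=0$ and the quotient picks up exactly one copy of $H$. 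The Euler-pairing computation via \eqref{GRR} produces the multiplicity factor $\chi_{m',\beta'}=\langle\v_0',(0,H,\dots)\rangle$, which after inserting the Todd-class correction $\tfrac1{12}c_2(\CY).H$ gives the stated formula for $\chi_{m',\beta'}$; the sign $(-1)^{\chi_{m',\beta'}}$ and the linear (rather than full Joyce) form of the wall-crossing contribution follow because the two classes $\v_0'$ and $(0,H,\dots)$ are rigid enough that only the primitive wall-crossing term survives (no higher-order terms in the Joyce formula, since $\ch_0$ of the second factor is $0$ and of the first is $-1$). Summing over all walls and reorganizing gives \eqref{pt-thm}.

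The index set $M_{m,\beta}$ then has to be pinned down from two sources. The upper bound $\beta'.H\leq \tfrac{H^3}{2}+\tfrac{3mH^3}{2\beta.H}+\beta.H$ in \eqref{all} comes from requiring that the wall associated to $(m',\beta')$ actually lies in the region $U$ between the empty chamber and infinity — geometrically, that the line through $\varpi(\v_0)$ and $\varpi(\v_0')$ intersects the relevant portion of $U$ above $\ell_f(\v_0)$; the lower bound $\beta'.H\geq0$ is just vanishing of $\PTi_{m',\beta'}$ for negative degree. The constraints \eqref{final} on $m'$ come from: the lower bound $-\cC(\beta')\leq m'$ is the Castelnuovo bound for PT invariants (established in \S\ref{sec_appII}, inductively on $\beta'.H<\beta.H$, using the wall structure already set up here — this is a mild circularity that is resolved by induction on $\beta.H$); and the upper bound $m'\leq \tfrac1{2H^3}(\beta.H-\beta'.H)^2+\tfrac12(\beta.H+\beta'.H)+m$ is the requirement that the complementary rank-0 sheaf have $\bOm_H\neq0$, i.e. that its reduced D0-charge respect the Bogomolov-Gieseker bound \eqref{qmax}/\eqref{BGineq} — a direct computation with the Chern character $(0,H,\tfrac12 H^2-\beta'+\beta,\tfrac16 H^3+m'-m-\beta'.H)$.

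\textbf{Main obstacle.} The hard part is the wall classification: showing that \emph{every} wall for $\v_0$ between the empty chamber and $w=\infty$ is of the claimed two-term type with one factor a PT object and the other a minimal-D4-charge Gieseker sheaf, and that no spurious walls (e.g. involving objects of rank $\leq -2$ or $\geq 1$, or higher D4-charge) can occur in this range. This requires carefully exploiting the BMT inequality \eqref{BMTch} together with the Bogomolov-Gieseker inequality \eqref{BGineq} to bound the possible Chern characters of destabilizing factors — essentially a discrete geometry argument in the $(b,w)$-plane controlling which $\varpi(\v_0')$ can be collinear with $\varpi(\v_0)$ inside $U$ — and is exactly the place where the precise shape of the threshold function $f$ in \eqref{function f} and the condition \eqref{condition-f} are forced. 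A secondary technical point is justifying the reduction of the Joyce wall-crossing formula to its leading linear term at each wall, which needs an argument that the destabilizing decomposition is generically of length two with coprime-type factors so that no rational corrections appear; this is where restricting to a \emph{primitive} divisor class (so that $\ch_1=H$ is primitive in $H^2(\CY,\Z)$, using $H^2(\CY,\Z)_{\rm tors}=0$) is used, with the non-primitive generalization deferred to \S\ref{sec_nonprimwc}.
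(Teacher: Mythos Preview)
Your proposal is correct and follows essentially the same route as the paper: wall-crossing for the rank $-1$ class $\v=(-1,0,\beta,-m)$ from the BMT-empty chamber below $\ell_f$ to large volume, with the wall classification (your ``main obstacle'') carried out in the paper as Lemma~\ref{lem.destabilising objects} via the three conditions on $b_1,b_2$ (shown in Lemma~\ref{lem-check conditions} to be equivalent to $f(x)<\alpha$), and the bounds on $M_{m,\beta}$ derived exactly as you describe. One minor sharpening: the reduction to the two-term Joyce formula is not a coprimality argument but follows because Lemma~\ref{lem.destabilising objects} shows $E_0$ is Gieseker-\emph{stable} (minimality of $\ch_1(E_0).H^2=H^3$) and $E_1$ is large-volume stable with no intervening walls, so the Jordan--H\"older filtration along every wall has length exactly two.
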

Since $f(x) > \frac{1}{2}$, \eqref{condition-f} implies that $\frac{1}{2} < -\frac{3m}{2\beta. H}$, thus $\beta'. H < \beta. H$ in \eqref{all}. 

\begin{remark}
Here are three comments regarding Theorem \ref{thm-main}: a) to prove Theorem \ref{thm-main}, we only need a weaker version of BMT conjecture explained in Remark \ref{remark weaker BMT}, b) one can strengthen Theorem \ref{thm-main} for specific CY threefolds (e.g. quintic threefolds) where a stronger version of Bogomolov-Gieseker inequality holds, see Remark \ref{remark stronger BG in} for more details, and c) a generalisation of Theorem \ref{thm-main}, when $\CY$ is not of Picard rank one but satisfies assumption $(\star)$, is proved in Proposition \ref{prop.relax}.
\end{remark}

For any $(m,\beta) \in H_0(\CY, \Z) \oplus H_2(\CY, \Z)$, consider the function $\Psi(x, x', \alpha)$ defined in \eqref{defPsi} for $x = \frac{\beta.H}{H^3}$ and $\alpha= - \frac{3m}{2\beta.H}$. We define the function $\Psi_{m, \beta} \colon \mathbb{R} \to \mathbb{R}$ as
\begin{align}
    \Psi_{m,\,  \beta}(x') \coloneqq \Psi\left(\frac{\beta.H}{H^3},\ x',\ -\frac{3m}{2\beta.H}\right) =  \frac12\,\left(\frac{\beta.H}{H^3}-x'\right)^2 + \frac12\,\left(\frac{\beta.H}{H^3}+x'\right)+ \frac{m}{H^3} + \frac12 \,x'^2 + \frac12\, x'\ .
\end{align}
Note that $\Psi_{m, \beta}(x')$ is the difference of the right and the left hand side of \eqref{final}, up to a factor of $H^3$, for $x'=\beta'.H/H^3$. 
\begin{definition}\label{def-optimal} 
	A value of $(m,\beta) \in H_0(\CY, \Z) \oplus H_2(\CY, \Z)$ is called optimal if 
	\begin{itemize} 
		\item the inequality \eqref{condition-f} is satisfied, and 
		\item 	$\Psi_{m, \beta}(\frac{1}{H^3}) < 0$ and $\Psi_{m, \beta}\left(\frac{1}{2} + \frac{3m}{2\beta. H} + \frac{\beta. H}{H^3}\right) < 0$ (i.e. $\Psi_{m, \beta}$ is negative for all possible positive values of $\beta'. H$ in \eqref{all}).
		\end{itemize}
\end{definition}
For an optimal value of $(m,\beta)$, the only possible value of $\beta'$ in the wall-crossing formula \eqref{pt-thm} is $\beta' = 0$. Since $\PTi_{m',0} = 1$ if $m' =0$ and otherwise vanishes, we get the following:
\begin{corollary}\label{cor-explicit formula}
For any optimal value $(m,\beta) \in H_0(\CY, \Z) \oplus H_2(\CY, \Z)$, 
we have
\begin{equation}\label{claim.1}
\PTi_{m, \beta} = (-1)^{1+\chi(\cO_\CY(H), \w)}\chi(\cO_\CY(H), \w)\ \bOmH(\w),
\end{equation}
where $\w = \left(0, H, \beta+ \frac{H^2}{2}, -m + \frac{H^3}{6}\right)$. 
\end{corollary}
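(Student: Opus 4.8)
The plan is to deduce Corollary~\ref{cor-explicit formula} directly from Theorem~\ref{thm-main} by specializing the wall-crossing sum~\eqref{pt-thm} to an optimal charge. First I would observe that the hypothesis of optimality includes the inequality~\eqref{condition-f}, so Theorem~\ref{thm-main} applies verbatim and gives
\[
\PTi_{m,\beta} = \sum_{(m',\beta')\in M_{m,\beta}} (-1)^{\chi_{m',\beta'}}\chi_{m',\beta'}\,\PTi_{m',\beta'}\,\bOmH\!\left(0,H,\tfrac12 H^2-\beta'+\beta,\tfrac16 H^3+m'-m-\beta'.H\right).
\]
The core of the argument is then to show that under the optimality conditions the index set $M_{m,\beta}$ contributing nontrivially collapses to the single pair $(m',\beta')=(0,0)$.

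The key step is a convexity/monotonicity observation about $\Psi_{m,\beta}(x')$: as a function of $x'$ it is a quadratic with positive leading coefficient (one reads off from the definition that the coefficient of $x'^2$ is $1$), hence it is convex, so on the interval of allowed values of $\beta'.H$ appearing in~\eqref{all} — namely $x'\in[0,\tfrac12+\tfrac{3m}{2\beta.H}+\tfrac{\beta.H}{H^3}]$ — it attains its maximum at one of the two endpoints. The two endpoints are precisely $x'=\tfrac1{H^3}$ (the smallest \emph{positive} integer value $\beta'.H=1$, which is the relevant test point once $\beta'=0$ is excluded) and $x'=\tfrac12+\tfrac{3m}{2\beta.H}+\tfrac{\beta.H}{H^3}$. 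By Definition~\ref{def-optimal}, $\Psi_{m,\beta}$ is strictly negative at both, hence strictly negative on the whole interval $x'\ge \tfrac1{H^3}$. Since $\Psi_{m,\beta}(x')$ equals $\frac1{H^3}$ times the difference between the upper and lower bounds in~\eqref{final}, its negativity means no integer $m'$ can satisfy~\eqref{final} when $\beta'.H\ge 1$; equivalently, every $\beta'\in H_2(\CY,\Z)$ with $\beta'.H\ge 1$ (all effective classes, since $H$ is ample) contributes an empty range for $m'$ and so drops out of the sum. The remaining case $\beta'.H\le 0$ forces $\beta'.H=0$, hence $\beta'=0$ because $H$ is ample and $\Pic(\CY)=\Z H$; then~\eqref{final} reads $-\cC(0)\le m'\le \tfrac12(\beta.H)^2/H^3+\tfrac12\beta.H+m$, and one checks $\cC(0)=0$ so the lower bound is $m'\ge 0$.

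Next I would insert $\beta'=0$ and use the base values $\PTi_{m',0}=1$ for $m'=0$ and $\PTi_{m',0}=0$ otherwise (the empty stable-pair moduli space, or directly the Castelnuovo bound~\eqref{mCastC} which forces $m'\ge 0$ together with the fact that $\PTi_{0,0}=1$ counts the structure sheaf and higher $m'$ contribute nothing after the recursion bottoms out — in any case this vanishing is already invoked in the statement preceding the corollary). This leaves exactly one term, $(m',\beta')=(0,0)$, whose Chern character argument of $\bOmH$ becomes $\left(0,H,\beta+\tfrac12 H^2,-m+\tfrac16 H^3\right)=\w$ and whose prefactor is $(-1)^{\chi_{0,0}}\chi_{0,0}$ with $\chi_{0,0}=\beta.H+m-\tfrac16 H^3-\tfrac1{12}c_2(\CY).H$. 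Finally I would identify $\chi_{0,0}$ with $\chi(\cO_\CY(H),\w)$ via Grothendieck--Riemann--Roch (equivalently the Euler pairing~\eqref{GRR}): a direct computation of $\langle \ch(\w),\ch(\cO_\CY(H))\rangle$ with $\Td(T\CY)=1+\tfrac1{12}c_2$ reproduces $\beta.H+m-\tfrac16 H^3-\tfrac1{12}c_2.H$, so $\chi_{0,0}=-\chi(\cO_\CY(H),\w)$ and thus $(-1)^{\chi_{0,0}}\chi_{0,0}=(-1)^{1+\chi(\cO_\CY(H),\w)}\chi(\cO_\CY(H),\w)$, which is exactly~\eqref{claim.1}.

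The main obstacle is the clean bookkeeping in the second step: one must be careful that the \emph{positive} values of $\beta'.H$ to be ruled out are the integers $\ge 1$ (not arbitrary reals), that convexity of $\Psi_{m,\beta}$ genuinely reduces the check to the two stated endpoints of the interval in~\eqref{all}, and that the endpoint $x'=\tfrac1{H^3}$ is the worst case among integer points in $[1/H^3,\cdot]$ — this is where the hypothesis $\Psi_{m,\beta}(1/H^3)<0$ in Definition~\ref{def-optimal} is used, and it is the only place the precise form of the optimality condition enters. Everything else is a routine specialization of Theorem~\ref{thm-main} together with the GRR identification of the Euler characteristic, both of which are already available in the paper.
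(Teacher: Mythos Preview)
Your proposal is correct and follows the same approach as the paper. The paper's argument is essentially the one-line remark preceding the corollary (optimality forces $\beta'=0$, and then $\PTi_{m',0}=\delta_{m',0}$ leaves a single term); you have simply filled in the two details the paper leaves implicit, namely the convexity of $\Psi_{m,\beta}$ (leading coefficient $1$) that reduces negativity on the whole interval to the two endpoint checks in Definition~\ref{def-optimal}, and the GRR identification $\chi_{0,0}=-\chi(\cO_\CY(H),\w)$ giving the sign in~\eqref{claim.1}.
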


\subsection{Application I: An explicit formula for minimal rank zero DT invariants}
\label{sec_appI}

Corollary \ref{cor-explicit formula} gives us a way to write arbitrary minimal rank zero DT invariants in terms of PT invariants. Pick $\beta_0 \in H^4(\CY, \mathbb{Q})$ and $m_0 \in \Z$. 
After tensoring by the line bundle $\cO_\CY(k H)$ with $k\in\IZ$,
the rank 0 DT invariant is unchanged,
\begin{equation}
\bOmH(0, H, \beta_0, m_0) = \bOmH\left(0, H, \beta_0+k H^2 , m_0+k\beta_0.H+\frac{k^2}{2}H^3 \right). 
\end{equation} 
Then as a consequence of Corollary \ref{cor-explicit formula}, one gets the following. 
\begin{corollary}\label{Cor-explicit}
There exists $k(\beta_0, m_0) \in \mathbb{Z}_{>0}$ such that for any integer $k \geq k(\beta_0, m_0)$, the class $(m,\beta)$ for
\begin{equation}
\beta  \coloneqq \beta_0+k H^2 -\frac{H^2}{2} \ , \qquad m \coloneqq \frac{H^3}{6} -m_0- k\beta_0.H-\frac{k^2}{2}H^3
\end{equation}
is optimal (see Definition \ref{def-optimal}), thus
\begin{align}
\bOmH(\w) = (-1)^{1+\chi\big(\cO_\CY((1-k)H), \,\w\big)}
\frac{\PTi_{m, \beta} }{\chi\big(\cO_\CY((1-k)H), \,\w\big)}\, ,
\end{align}
where $\w = \left(0, H, \beta_0, m_0 \right)$.
\end{corollary}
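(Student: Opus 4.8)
\textbf{Proof proposal for Corollary \ref{Cor-explicit}.}

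The plan is to reduce the statement to a direct application of Corollary \ref{cor-explicit formula} after a suitable spectral flow, with the only real content being the existence of an integer threshold $\epsilon(\beta_0,m_0)$ beyond which the shifted class becomes optimal. First I would record the spectral-flow invariance of rank $0$ DT invariants: tensoring by $\cO_\CY(\epsilon H)$ maps the Chern character $(0,H,\beta_0,m_0)$ to $(0,H,\beta_0+\epsilon H^2, m_0+\epsilon\beta_0.H+\tfrac{\epsilon^2}{2}H^3)$ and leaves $\bOmH$ unchanged. This is already stated in the excerpt just above the corollary, so it can be quoted. Dualizing this relation (or equivalently matching against the definition of $\w$ in Corollary \ref{cor-explicit formula}), the class $\w=(0,H,\beta+\tfrac{H^2}{2},-m+\tfrac{H^3}{6})$ attached to the pair $(m,\beta)$ defined in the statement is exactly $(0,H,\beta_0,m_0)$, so the identity to be proven is literally the conclusion \eqref{claim.1} of Corollary \ref{cor-explicit formula}, with $\chi(\cO_\CY(H),\w)$ replaced by $\chi(\cO_\CY((1-\epsilon)H),\w)$ because the auto-equivalence shifts the test object by $\epsilon H$. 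Hence the whole task is to verify that $(m,\beta)$ is optimal for $\epsilon$ large.

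The second and main step is therefore to check the two optimality conditions of Definition \ref{def-optimal} for the shifted pair. Writing $x=\beta.H/H^3$ and $\alpha=-3m/(2\beta.H)$, one computes from the definitions that $x = \epsilon + (\beta_0.H)/H^3 - \tfrac12 \to +\infty$ linearly in $\epsilon$, while $\alpha = (3/(2x))\big(\epsilon^2/2 + \epsilon\,\beta_0.H/H^3 - 1/6 + m_0/H^3\big)$, which grows like $\tfrac34 x + O(1)$ as $\epsilon\to\infty$ — i.e. $\alpha$ approaches the Castelnuovo line $\tfrac34(x+1)$ from below at a rate $O(1/x)$; more precisely the gap $\tfrac34(x+1)-\alpha$ tends to a finite constant determined by $\beta_0,m_0$. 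Since $f(x)=\tfrac12 x + 1$ for $x\ge 3$, the condition \eqref{condition-f}, equivalently $f(x)<\alpha$, reads (for $\epsilon$ large) $\tfrac12 x + 1 < \alpha \approx \tfrac34 x$, which holds once $x$ — hence $\epsilon$ — exceeds an explicit bound; this is precisely the region discussed around \eqref{condQq0}. For the two $\Psi_{m,\beta}$-negativity conditions, I would plug $x'=1/H^3$ and $x'=\tfrac12+3m/(2\beta.H)+\beta.H/H^3$ into $\Psi_{m,\beta}$. Using $\Psi(x,x',\alpha)=\tfrac12(x-x')^2+\tfrac12(x+x')-\tfrac23\alpha x+\tfrac12 x'^2+\tfrac12 x'$ and the asymptotics $\alpha=\tfrac34 x - c + o(1)$ with $c$ a constant, the dominant terms in $\Psi$ at $x'=O(1)$ are $\tfrac12 x^2 - \tfrac23\alpha x \sim \tfrac12 x^2 - \tfrac12 x^2 = O(x)$ with the subleading coefficient governed by $c$; a short computation shows this is negative for $\epsilon$ large. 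The second value of $x'$ is itself $O(1)$ by the same asymptotics (it equals $x-\alpha+\tfrac12 \sim -\tfrac14 x$, wait — one must be careful: $x-\alpha+\tfrac12 \approx x - \tfrac34 x = \tfrac14 x$, so it also grows linearly), so here I would instead substitute and use the closed form \eqref{Psicond1}-type rearrangement: the condition $\Psi_{m,\beta}(\tfrac12+\tfrac{3m}{2\beta.H}+\tfrac{\beta.H}{H^3})<0$ becomes, after clearing denominators, a quadratic inequality in $\epsilon$ whose leading coefficient is negative, hence satisfied for all $\epsilon$ beyond some threshold.

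Finally, I would set $\epsilon(\beta_0,m_0)$ to be the maximum of the finitely many explicit thresholds produced in the previous step, note that the three defining inequalities of optimality are then simultaneously satisfied for every $\epsilon\ge\epsilon(\beta_0,m_0)$, invoke Corollary \ref{cor-explicit formula} for the class $\w=(0,H,\beta_0,m_0)$, and solve the resulting equation \eqref{claim.1} for $\bOmH(\w)$, obtaining the displayed formula; the sign and the $\chi$-factor come out as $(-1)^{1+\chi(\cO_\CY((1-\epsilon)H),\w)}\chi(\cO_\CY((1-\epsilon)H),\w)$ because the test sheaf $\cO_\CY(H)$ in Corollary \ref{cor-explicit formula} is the one adapted to the shifted pair $(m,\beta)$, which corresponds to $\cO_\CY((1-\epsilon)H)$ after undoing the spectral flow. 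I expect the main obstacle to be purely bookkeeping: carrying the $\epsilon$-dependence cleanly through $\Psi_{m,\beta}$ at the two boundary values of $x'$ — in particular being careful that the second boundary value also scales linearly in $\epsilon$ rather than staying bounded — and making sure the Euler-characteristic index in Corollary \ref{cor-explicit formula} is transported correctly under the auto-equivalence. None of this is deep, but the constant $c=\tfrac34(x+1)-\alpha$ must be tracked accurately to conclude negativity of $\Psi$, since at leading order $\Psi$ is only $O(x)$, not $O(x^2)$.
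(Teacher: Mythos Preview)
Your approach is correct and is exactly what the paper intends: the paper gives no separate proof of this corollary, stating it simply ``as a consequence of Corollary \ref{cor-explicit formula}'' after recording the spectral-flow invariance. Your reduction to verifying optimality for large $\epsilon$, together with the transport of the Euler-characteristic factor $\chi(\cO_\CY(H),\w_{\mathrm{shifted}})=\chi(\cO_\CY((1-\epsilon)H),\w)$, is the whole argument.

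Two small corrections to your bookkeeping. First, your two claims about the gap $\tfrac34(x+1)-\alpha$ contradict each other; the correct statement is that the gap is $O(1/x)$ and tends to \emph{zero}, not to a finite constant. Indeed, with $b_0=\beta_0.H/H^3$ and $c=m_0/H^3-b_0^2/2-1/24$ one finds $-m/H^3=\tfrac12 x^2+\tfrac12 x+c$, hence $\alpha=\tfrac34(x+1)+\tfrac{3c}{2x}$. Second, and more usefully, this makes the $\Psi$-negativity much easier than you fear at the end: substituting $\tfrac23\alpha x=\tfrac12 x^2+\tfrac12 x+c$ into the definition of $\Psi$ gives the exact identity
\[
\Psi_{m,\beta}(x')\ =\ x'^2+x'(1-x)-c,
\]
so for $x'=1/H^3$ the dominant term is $-x/H^3$, and for $x'=x-\alpha+\tfrac12\sim x/4$ the dominant term is $-\tfrac{3}{16}x^2$. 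Both are manifestly negative for large $x$ regardless of the value of $c$, so no delicate constant-tracking is needed.
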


\subsection{Application II: Castelnuovo bound for PT invariants} 
\label{sec_appII}
As a result of wall-crossing for rank $-1$ classes and induction on $\beta. H$, we can prove the following Castelnuovo-type bound for stable pairs.  
\begin{theorem}\label{thm-Cast}
	Fix $\beta \in H_2(\CY, \Z)$ and $m \in \Z$. Let $\cO_\CY \xrightarrow{s} F$ be a stable pair such that $\ch_2(F) =\beta$ and $\ch_3(F) =m$, then 
	\begin{equation}\label{claim-2}
	m \geq -\cC(\beta).  
	\end{equation}
In particular, the invariant $\PTi_{m, \beta}$ vanishes unless the inequality \eqref{claim-2} is satisfied.
\end{theorem}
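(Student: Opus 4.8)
The plan is to prove the Castelnuovo-type bound \eqref{claim-2} by induction on $\beta.H$, running the same wall-crossing machinery in the space of weak stability conditions $\nu_{b,w}$ that underlies Theorem \ref{thm-main}. Fix a stable pair $\cO_\CY \xrightarrow{s} F$ with $\ch_2(F)=\beta$, $\ch_3(F)=m$, and consider the rank $-1$ Chern character $\v = (-1,0,\beta,-m)$ of the derived dual $E = (\cO_\CY\xrightarrow{s}F)^\vee$, which is $\nu_{b,w}$-stable (in fact PT-stable) in the large volume chamber $w\to\infty$. If $m<-\cC(\beta)$, then the BMT line $\ell_f(\v)$ defined by $L_{b,w}(\v)=0$ in \eqref{BMTineq} lies strictly \emph{above} the relevant portion of the region $U$, so that the BMT inequality \eqref{BMTineq} forces the moduli space $\cM_{b,w}(\v)$ to be empty in a chamber near the parabola $w=\tfrac12 b^2$; concretely, $-\cC(\beta)$ is precisely the largest integer value of $m$ for which $\ell_f(\v)$ still intersects $U$ above $w=\tfrac12 b^2$ (this is where the two cases in \eqref{functionf}, coming from \eqref{CastPT} versus the sharper \eqref{CastPTs}, enter). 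Thus $\bOm_{b,w}(\v)=0$ in the empty chamber, while $\bOm_{\infty}(\v)=\PT(\beta.H,m)$ at large volume, and the two are connected by a finite chain of walls.

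The second step is to analyze the walls crossed along this path. At each wall $\ell_i$ for the class $\v$, a $\nu_{b,w}$-semistable object of class $\v$ is destabilized by a subobject/quotient exact sequence, and by the wall-and-chamber structure recalled in \S\ref{sec_stab} (finiteness of walls for fixed $\v$ from \cite[Prop. 1.4]{Feyzbakhsh:2021nds}, \cite[Prop. 4.1]{feyz-noether-lef}) there are only finitely many such walls. The key point, exactly as in the proof of Theorem \ref{thm-main}, is that the destabilizing objects have classes of the form "rank $-1$ PT-type class of strictly smaller $\beta'.H$" together with "rank $0$ Gieseker-type class supported on $H$"; the constraint $\ch_0=-1$ for $\v$ means any subobject splitting has one factor of rank $-1$ and one of rank $0$ (or $+1$ which is ruled out by slope considerations near the wall). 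The rank $-1$ factor $\v'$ has $\beta'.H<\beta.H$, so by the inductive hypothesis $\PT(\beta'.H,m')$ (hence $\bOm_{b,w}(\v')$) vanishes unless $m'\geq -\cC(\beta')$, and the rank $0$ factor's index $\bOm_H$ vanishes unless its reduced D0-charge obeys \eqref{qmax}. Feeding these vanishing statements into the Joyce--Kontsevich-Soibelman wall-crossing formula, one checks that if $m<-\cC(\beta)$ then \emph{every} term that could contribute to $\bOm_{b,w}(\v)$ across every wall vanishes, so the index stays zero all the way from the empty chamber to large volume, giving $\PT(\beta.H,m)=0$.

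Concretely I would set up the induction on the integer $\beta.H\ge 0$ (base case $\beta.H=0$: then $F$ is zero-dimensional, $m=\chi(F)\ge 0 = -\cC(0)$, trivially). For the inductive step, I would: (i) pick a point just below the parabola-facing edge of the BMT-empty region and verify $\bOm_{b,w}(\v)=0$ there using \eqref{BMTineq} exactly as in \S\ref{sec_proof}; (ii) list the walls between that point and $w=\infty$, and for each wall write the Harder--Narasimhan/Jordan--Hölder factors, noting each factor is either a smaller-$\beta.H$ rank $\mp1$ class or a rank $0$ class with $\ch_1=r_0H$; (iii) bound the numerical invariants of these factors using the geometry of the walls (they must pass through $\varpi(\v)$ and lie in $U$), deriving that the $\beta'.H$ of the rank $-1$ factor is strictly smaller and the rank-$0$ factor's charge lies in the range where its $\bOm_H$ vanishes once $m<-\cC(\beta)$; (iv) invoke the inductive hypothesis and the D4-D2-D0 vanishing \eqref{qmax} to conclude all wall-crossing contributions vanish, hence $\PT(\beta.H,m)=\bOm_\infty(\v)=0$.

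The main obstacle I anticipate is step (iii): controlling \emph{all} the destabilizing configurations precisely enough to see that the numerics never escape the region where the inductive/boundary vanishing applies. In particular one must handle possible non-primitive walls (where the destabilizing sequence involves multiples of a class) — this is presumably why \S\ref{sec_nonprimwc} treats non-primitive wall-crossing separately — and one must be careful that the sharper bound \eqref{CastPTs} in the range $0<\beta.H<H^3$ (reflecting the stronger BMT constraint there, which is what makes $\cC(\beta)$ have two cases) is genuinely forced by the wall geometry rather than just the weaker \eqref{CastPT}. Getting the floor functions in \eqref{functionf} to come out exactly right, rather than off by one, will require a careful case analysis of where $\ell_f(\v)$ meets the lines $w=\tfrac12 b^2$ and the integrality of the relevant Euler pairings $\chi_{m',\beta'}$.
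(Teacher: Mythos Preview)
Your overall strategy---induction on $\beta.H$ together with wall-crossing for the rank $-1$ class $\v$---matches the paper's, but the mechanism you propose in step~(iii) is wrong, and this is precisely the heart of the argument. You claim that when $m<-\cC(\beta)$ the rank-$0$ factor $E_0$ has charge in the range where $\bOmH(E_0)$ vanishes by \eqref{qmax}. It cannot: $E_0$ is $\nu_{b,w}$-semistable along the wall, so it satisfies BMT, which is \emph{equivalent} to $\hat q_0\le \chi(\cD)/24$; the bound \eqref{qmax} is never violated for an actual semistable factor. What the paper does instead is use BMT for $E_0$ to produce an \emph{upper} bound on $m'$ (this is \eqref{bound for ch3(E_0)}), pair it with the \emph{lower} bound $m'\ge -\cC(\beta')$ coming from the inductive hypothesis applied to the rank $-1$ factor $E_1$, and observe that the resulting sandwich forces $\beta'.H>\beta.H-H^3$. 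Combined with the wall-geometry constraint $\beta'.H\le \tfrac{H^3}{2}+\tfrac{3mH^3}{2\beta.H}+\beta.H$ from \eqref{location of the wall}, this yields $-\tfrac{3m}{2\beta.H}<\tfrac32$, contradicting $m<-\cC(\beta)$ when $\beta.H>H^3$. So the contribution does not vanish because some factor's invariant is zero; rather, the numerical constraints on the factors are mutually incompatible, hence no destabilising sequence---and therefore no stable pair---can exist. This object-level argument is what delivers the full statement of the theorem, not merely the vanishing of $\PTi_{m,\beta}$.

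Two smaller points. Your geometric description of $\cC(\beta)$ as ``the largest integer $m$ for which $\ell_f(\v)$ still intersects $U$'' is incorrect: that condition is $\alpha>\sqrt{2x}$ (the orange curve in Figure~\ref{fig2plots}), not the Castelnuovo line $\alpha=\tfrac34(x+1)$. And the sharper case $0<\beta.H<H^3$ is not obtained by wall-crossing at all: the paper applies BMT directly at the boundary point $(b,w)=(1,\tfrac12)$, after noting via \cite[Lemma~3.5]{feyz:effective-restriction-theorem} that no wall for $\v$ crosses the vertical line $b=1$; this gives $-m\le \tfrac{2}{3H^3}(\beta.H)^2+\tfrac{\beta.H}{3}$ immediately and anchors the induction. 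Your worry about non-primitive walls is unnecessary here, since the argument is at the level of objects rather than invariants.
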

The nature of the proof of Theorem \ref{thm-Cast} is similar to \cite[Proposition 1.3]{macri-genus-bound} where the same result is proved under the extra assumption that for any $E \in \Coh(\CY)$, we have $\ch_2(E) \in \frac{H^2}{2}\Z$ and $\ch_3(E) \in \frac{H^3}{6}\Z$. Theorem \ref{thm-Cast} for quintic threefolds, as well as the first part of Theorem \ref{thm-quintic} below, is also proved in the recent paper \cite{Liu:2022agh} via different arguments.   

\begin{remark}\label{rem-cast}
Note that when $m$ is close to the Castelnuovo bound $-\cC(\beta)$, then one can apply Theorem \ref{thm-main} to find an explicit formula for $\PTi_{m, \beta}$ in terms of rank zero invariants $\bOmH(\w)$ with $\ch_1(\w) =H$. Since $\bOmH(\w) = \bOmH(\w \otimes \cO_\CY(k H))$ for any $k \in \Z$, the knowledge of a few rank zero DT invariants determines PT invariants along the boundary of Castelnuovo bound.  
\end{remark}

Let us spell out Remark \ref{rem-cast} for the case of the quintic threefold $X_5$. Based on physical arguments in \S\ref{subsec-X5}, the following vanishings are expected (here $\kappa = H^3= 5$): 
\begin{conj}\label{con}
	\begin{enumerate}
		\item [(i)] $\bOmH\Bigl(0,\, H,\, \pm(\frac{1}{2} + \frac{1}{\kappa})H^2 ,\, \frac{H^3}{6}\Bigr) = 0$,
		\item [(ii)] $\bOmH\Bigl(0,\, H,\, \pm(\frac{1}{2} + \frac{2}{\kappa})H^2 ,\, (-\frac{m}{\kappa} + \frac{1}{6})H^3\Bigr) = 0$ for $m=0, -1$. 
	\end{enumerate} 
\end{conj}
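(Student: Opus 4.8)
\textbf{Proof proposal for Conjecture \ref{con}.}

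The plan is to treat this as a direct application of the wall-crossing machinery developed for Theorem \ref{thm-main}, combined with the known modularity of the generating series $h_\mu$ for the quintic established in \S\ref{subsec-X5}. Since the two vanishings in the conjecture concern the \emph{most polar} and \emph{second-most polar} Fourier coefficients of the vector-valued modular form $h_\mu$ attached to $X_5$ (with $\kappa = \chi_\cD = 5$), the cleanest route is to identify each $\bOmH$-value in the conjecture with a specific coefficient $\Omega_{1,\mu}(\hat q_0)$ and then read off that it vanishes from the explicit $q$-expansion. Concretely, I would first use the dictionary of \S\ref{sec_modconj}: for $p^0=0$, $p^1=1$, the Mukai charge $\bigl(0,H,(\tfrac12+\tfrac{a}{\kappa})H^2,(-\tfrac{m}{\kappa}+\tfrac16)H^3\bigr)$ corresponds via \eqref{defqhat}, \eqref{defOmrmu} to $\bOm_{1,\mu}(\hat q_0)$ with $\mu \equiv a \bmod \kappa$ (using $q_1 = \mu + \tfrac{\kappa}{2}$) and $\hat q_0$ determined by \eqref{nfromhq0}. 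For part (i), $a=1$, $m$ corresponds to $n=0$ in \eqref{nfromhq0}, giving $\hat q_0 = \tfrac{\chi(\cD_1)}{24}$; for part (ii), $a=2$, and $m=0,-1$ give $n=0,1$. The symmetry $\mu \mapsto -\mu$ of \eqref{defOmrmu} accounts for the $\pm$ in the statement.

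Next I would invoke the explicit formulae of \S\ref{subsec-X5}. For (i), the relevant coefficient is the leading term of $h_1$, which the expansion in \S\ref{subsec-X5} shows equals $0$ (the $\underline{0+8625\q}$ line begins with $0$); equivalently, this is the statement that $\Omega_{1,1}(\hat q_0^{\max})=0$, which also follows directly from Theorem \ref{thm-main} applied to the optimal class via Corollary \ref{cor-explicit formula} together with the Castelnuovo bound \eqref{CastPTs}: the charge is too polar to be supported. For (ii), the two relevant coefficients are the leading and subleading terms of $h_2$, which the $q$-expansion gives as $0$ and $0$ respectively (the line $\underline{0+0\q}$, with the first genuinely nonzero term being $-1218500\,\q^2$). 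So both vanishings are immediate corollaries of the modular form $h_\mu$ once its polar part is pinned down; the polar part is itself pinned down rigorously in \S\ref{subsec-X5} via \eqref{thmS11inv} using GV invariants up to genus $53$.

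A complementary, self-contained route — which I would actually prefer to present, since it does not presuppose modularity — is to argue the vanishing \emph{geometrically} from the Castelnuovo bound of Theorem \ref{thm-Cast} read backwards through Corollary \ref{cor-explicit formula}. For an optimal class $(m,\beta)$ with $\beta'=0$ the only contribution, \eqref{claim.1} equates $\PTi_{m,\beta}$ (up to a nonzero integer factor) with the rank $0$ invariant in question; spectral-flowing the charges in Conjecture \ref{con} to an optimal representative using Corollary \ref{Cor-explicit}, one checks that the resulting $\PTi_{m,\beta}$ lies strictly \emph{below} the Castelnuovo bound $-\cC(\beta)$ of \eqref{claim-2}, hence vanishes, hence so does the $\bOmH$. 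The main obstacle here is bookkeeping: one must verify that the spectral-flowed class is genuinely optimal in the sense of Definition \ref{def-optimal} (both inequalities $\Psi_{m,\beta}(\tfrac1{H^3})<0$ and $\Psi_{m,\beta}(\ldots)<0$), and that the $\chi$-prefactor in \eqref{claim.1} is nonzero so that vanishing of $\PTi$ forces vanishing of $\bOmH$ rather than the converse. For the specific tiny charges here ($\kappa=5$, small $a$ and $m$) these are finite arithmetic checks, so I expect no real difficulty — the only subtlety is confirming that the relevant classes fall in the regime $f(x)<\alpha$ after enough spectral flow, which is guaranteed by \eqref{condQq0} for $\epsilon$ large, combined with the observation (Figure \ref{figX5}) that in practice much smaller $\epsilon$ already suffices.
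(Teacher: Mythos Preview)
The statement is labeled \emph{Conjecture} in the paper, and the paper does not prove it.  The only justification offered is the phrase ``based on physical arguments in \S\ref{subsec-X5}''.  So there is no paper-proof to compare against; the question is whether your proposal furnishes an actual proof.

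Your first route --- identify the three invariants with the leading coefficients of $h_1$ and $h_2$ and read off the zeros from the $q$-expansion computed via \eqref{thmS11inv} --- is exactly the paper's informal justification.  The identification of the charges is correct.  But note what this really amounts to: for an optimal spectral-flow representative, Corollary~\ref{cor-explicit formula} reduces the claim $\Omega_{1,\mu}(\hat q_0)=0$ to the vanishing of a specific PT invariant such as $\PT(11,-17)$ or $\PT(12,-19)$.  Those vanishings are read off from GV invariants obtained by direct integration of the holomorphic anomaly equations --- i.e.\ they presuppose mirror symmetry at genus up to $\sim 18$--$20$.  That is precisely why the appendix (written in a self-contained algebro-geometric style) records the statement as a conjecture rather than a theorem; your approach~1 does not remove that dependence.

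Your second, ``self-contained'' route contains a concrete error.  You assert that after spectral flow the resulting $\PT_{m,\beta}$ lies \emph{strictly below} the Castelnuovo bound, so vanishes by Theorem~\ref{thm-Cast}.  This is false: as the paper observes just before \eqref{condfeps}, the quantity
\[
m(Q_\eps) + \frac{Q_\eps^2}{2\kappa} + \frac{Q_\eps}{2} \;=\; \frac{\chi(\cD)}{24} - \hat q_0
\]
is independent of $\eps$.  For the three charges in Conjecture~\ref{con} this equals $\tfrac35$, $\tfrac25$, $\tfrac75$ respectively, and a direct check shows the corresponding $(Q_\eps,m_\eps)$ lands \emph{exactly on} the Castelnuovo bound $m_C(Q_\eps)$ for parts~(i) and (ii)$_{m=-1}$, and one unit above it for (ii)$_{m=0}$ --- never below.  (The strengthened bound \eqref{CastPTs} for $Q<\kappa$ does not help either: at $\eps=0$ the charges still sit on or above that bound, and for $\eps\ge 1$ one has $Q_\eps>\kappa$.)  Hence Theorem~\ref{thm-Cast} is silent, and Corollary~\ref{cor-explicit formula} yields only the \emph{relation} $\Omega_{1,\mu}(\hat q_0)=\pm\PT(Q_\eps,m_\eps)/\chi$, not a vanishing.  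You are forced back to computing that PT invariant --- i.e.\ back to approach~1.
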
   
\begin{theorem}\label{thm-quintic}
	Take an integer $\mu \geq 13$ or $\mu =10$. If $\mu \overset{\kappa}{\equiv} 0$ and $m= -\,\cC(\frac{\mu}{\kappa}H^2)$, then 
		\begin{equation}
		\PTi_{m ,\, \frac{\mu}{\kappa}H^2} = (-1)^{m + \mu}(5-m-\mu) \times 5.
		\end{equation}
		If $\mu \overset{\kappa}{\equiv} \pm p$ where $p =  1\ (\text{resp. } 2)$ and Conjecture \ref{con}(i) (resp. Conjecture \ref{con}(ii)) holds, then $\PTi_{m,\, \frac{\mu}{\kappa}H^2}$ vanishes if $m <  -\,\cC(\frac{\mu}{\kappa}H^2) +p$; moreover, for $m= -\,\cC(\frac{\mu}{\kappa}H^2) +p$, then 
	\begin{equation}
	\PTi_{m ,\ \frac{\mu}{\kappa}H^2 } = (-1)^{1+\chi(\cO_\CY(H), \w)}\chi(\cO_\CY(H), \w)\ \bOmH(\w),
	\end{equation}
	where $\w= \Bigl(0,\, H,\, \big(\frac{1}{2} + \frac{\mu}{\kappa}\big)H^2 ,\,  \big(-\frac{m}{\kappa} + \frac{1}{6}\big)H^3 \Bigr)$. 
\end{theorem}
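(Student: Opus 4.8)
\textbf{Proof strategy for Theorem \ref{thm-quintic}.}

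The plan is to apply Theorem \ref{thm-main} together with the Castelnuovo bound of Theorem \ref{thm-Cast}, specialised to $\CY = X_5$ so that $\kappa = H^3 = 5$, $c_2(\CY).H = 50$, and $\chi(\cO_\CY(H),\w)$ can be computed explicitly by Hirzebruch--Riemann--Roch. The key observation, already recorded in Remark \ref{rem-cast}, is that when $m$ is close to $-\cC(\beta)$ the set $M_{m,\beta}$ of pairs $(m',\beta')$ contributing to \eqref{pt-thm} is very small, and in the optimal case (Definition \ref{def-optimal}) it collapses to $\beta'=0$, yielding the closed formula of Corollary \ref{cor-explicit formula}. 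So the proof divides into three cases according to the residue $\mu \bmod \kappa$.

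\textbf{Case $\mu \equiv 0$.} Here $\beta = \frac{\mu}{\kappa}H^2 = d H^2$ with $d = \mu/\kappa \in \IZ_{>0}$, and $m = -\cC(\beta) = -\lfloor \frac{d^2\kappa}{2} + \frac{d\kappa}{2}\rfloor = -\frac{\kappa}{2}d(d+1)$ (which is an integer since $\kappa=5$ is odd and $d(d+1)$ is even). First I would check that $(m,\beta)$ satisfies \eqref{condition-f}: one computes $x = \beta.H/H^3 = d$ and $-3m/(2\beta.H) = \frac{3(d+1)}{4}$, and since $f(x) \le \frac34(x+1)$ with equality only in the large-$x$ branch, for $d$ large (in particular $\mu \ge 13$, i.e. $d \ge 3$) one has $f(d) = \frac{d}{2}+1 < \frac{3(d+1)}{4}$; the borderline value $\mu = 10$, i.e. $d = 2$, must be checked by hand against the $\frac{15}{8}\le x<\frac94$ branch where $f(2) = \sqrt{17}/2 \approx 2.06 < 9/4$. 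Next I would verify optimality: the quantity $\Psi_{m,\beta}(x')$ equals, up to rescaling, the difference of the two bounds in \eqref{final}; evaluating it at the endpoints $x' = 1/\kappa$ and $x' = \frac12 + \frac{3m}{2\beta.H} + \frac{\beta.H}{H^3}$ and showing both are negative (which reduces to the inequality \eqref{Psicond1}, manifestly satisfied since we are exactly on the Castelnuovo line, away from the correction terms $O(1/\kappa)$) forces $\beta' = 0$. Then Corollary \ref{cor-explicit formula} gives $\PTi_{m,\beta} = (-1)^{1+\chi(\cO_\CY(H),\w)}\chi(\cO_\CY(H),\w)\,\bOmH(\w)$ with $\w = (0,H,\beta+\frac{H^2}{2},-m+\frac{H^3}{6})$; I would compute $\chi(\cO_\CY(H),\w)$ by GRR and verify it equals $5-m-\mu$ (up to sign conventions this is the quantity $\mu + m - \chiOD$ of \eqref{ptOm1} with $\chiOD = \kappa/6 + c_2/12 = 5$), and finally invoke the already-established value $\bOmH(\w) = \bOm_{1,0}(\chi(\cD)/24) = (-1)^{1+\chiOD}\chiOD = 5$ from \eqref{eqFeyzus} (or equivalently the optimal-case computation leading to $\GVg[\kappa\eps]{\gmax} = (-1)^{\kappa\eps(\eps-1)/2}(\chiOD + \frac12\kappa\eps(\eps-1))\chiOD$ in \S\ref{sec:optimalcase}). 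Collecting signs yields $\PTi_{m,\frac{\mu}{\kappa}H^2} = (-1)^{m+\mu}(5-m-\mu)\cdot 5$.

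\textbf{Cases $\mu \equiv \pm 1$ and $\mu \equiv \pm 2$.} By the symmetry $\bOmH(\gamma) = \bOmH(\gamma^\vee)$ and the $\mu \mapsto -\mu$ invariance of the Abelian index it suffices to treat $\mu \equiv p$ with $p \in \{1,2\}$. For $m < -\cC(\beta) + p$ the vanishing $\PTi_{m,\beta} = 0$ should follow by running the wall-crossing of Theorem \ref{thm-main} (or its inductive refinement in the proof of Theorem \ref{thm-Cast}) and observing that every term on the right-hand side of \eqref{pt-thm} involves either a PT invariant $\PTi_{m',\beta'}$ with $\beta'.H < \beta.H$ that vanishes by induction below the Castelnuovo bound, or a rank-$0$ DT invariant of exactly the shape appearing in Conjecture \ref{con}, which is assumed to vanish; one has to check that $\w$-classes occurring are precisely $(0,H,\pm(\frac12+\frac{p}{\kappa})H^2,\ldots)$ after absorbing the tensoring freedom $\w \mapsto \w\otimes\cO_\CY(\epsilon H)$, matching the hypotheses of Conjecture \ref{con}(i) for $p=1$ and \ref{con}(ii) for $p=2$. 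For the boundary value $m = -\cC(\beta) + p$ I would again verify that $(m,\beta)$ is optimal --- the check of \eqref{condition-f} and of the two $\Psi_{m,\beta}$-inequalities is the same kind of elementary estimate as in the $\mu\equiv 0$ case, now with $\alpha = -3m/(2\beta.H)$ shifted by $O(1/\mu)$ off the Castelnuovo line, which is still within the region of validity for $\mu \ge 13$ (the exceptional value $\mu = 10$ in the statement is presumably handled only for $p = 0$, or checked separately) --- so that Corollary \ref{cor-explicit formula} applies verbatim and produces $\PTi_{m,\frac{\mu}{\kappa}H^2} = (-1)^{1+\chi(\cO_\CY(H),\w)}\chi(\cO_\CY(H),\w)\,\bOmH(\w)$ with the stated $\w = (0,H,(\frac12+\frac{\mu}{\kappa})H^2,(-\frac{m}{\kappa}+\frac16)H^3)$.

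\textbf{Main obstacle.} The delicate point is not the structure of the argument but the bookkeeping in the two congruence cases: one must show that, for $m$ strictly below (resp. exactly at) the Castelnuovo bound and $\beta'.H < \beta.H$ ranging over \eqref{all}, the only rank-$0$ DT invariants that can appear with a nonzero PT coefficient are exactly the finitely many ``polar'' classes covered by Conjecture \ref{con}, and that no other rank-$0$ contribution sneaks in through the floor function in $\cC(\beta')$. This requires a careful comparison of the lower bound $-\cC(\beta')$ in \eqref{final} against the reduced-charge bound coming from the Bogomolov--Gieseker inequality on the $\w$-side, exactly the interplay between \eqref{CastPT} and \eqref{CastPTs} that makes the $0 < Q < \kappa$ regime subtle; getting the endpoint arithmetic right for $\kappa = 5$, $p = 1, 2$ is where the real work lies, and where the hypothesis $\mu \ge 13$ (or $\mu = 10$) is actually used to guarantee optimality. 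Everything else is HRR sign-chasing and reduction to the known value $\bOmH = 5$ (resp. to the undetermined but fixed $J_1$ for the submaximal-genus variant).
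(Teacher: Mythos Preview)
Your overall strategy is the same as the paper's: verify optimality of the relevant $(m,\beta)$ pairs, apply Corollary~\ref{cor-explicit formula}, and identify the resulting $\bOmH(\w)$ either with the known value $5$ (via \eqref{eqFeyzus} or \cite[Thm~1.1]{Feyzbakhsh:2022ydn}) or with one of the classes in Conjecture~\ref{con}. The $\mu\equiv 0$ case is correct as you have it.

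There are two places where your write-up deviates from what actually works. First, your vanishing argument for $m<-\cC(\beta)+p$ is confused: you propose that terms in \eqref{pt-thm} have PT factors ``vanishing by induction below the Castelnuovo bound'', but by construction every $(m',\beta')\in M_{m,\beta}$ satisfies $m'\ge -\cC(\beta')$, so Theorem~\ref{thm-Cast} gives no vanishing there. The paper's argument is simpler and cleaner: one checks that for $\mu\ge 13$ or $\mu=10$, the pairs $(m,\beta)$ with $m=-\cC(\beta)$ (all $p$) and $m=-\cC(\beta)+1$ (for $p=\pm1,\pm2$) are themselves optimal in the sense of Definition~\ref{def-optimal}. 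Then Corollary~\ref{cor-explicit formula} reduces $\PTi_{m,\beta}$ to a single term proportional to $\bOmH(\w)$, and after tensoring $\w$ by a suitable $\cO_\CY(\epsilon H)$ this is exactly one of the classes in Conjecture~\ref{con}(i) or (ii), hence zero. No induction is needed.

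Second, for the formula at $m=-\cC(\beta)+2$ (the $p=\pm 2$ case), the optimality check only goes through for $\mu\ge 18$. For $\mu=13$ and $\mu=17$ (both $\equiv\pm2\bmod 5$), the sufficient condition $\Psi_{m,\beta}(1/\kappa)<0$ fails --- you can check e.g.\ that $\Psi_{-21,\,\frac{13}{5}H^2}(1/5)>0$. The paper handles these two values by a direct inspection of the set $M_{m,\beta}$ in \eqref{all}--\eqref{final}: the integrality and floor constraints still force $\beta'=0$, even though the continuous bound $\Psi_{m,\beta}<0$ is not satisfied. You should flag these two residual cases explicitly rather than absorbing them into the generic optimality claim.
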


\subsection{An example with non-primitive wall-crossing}
\label{sec_nonprimwc}
In this part, we generalize Theorem \ref{thm-main} for some examples of pairs $(m,\beta)$ where the inequality \eqref{condition-f} is saturated. 
Similar results can be obtained when the inequality is mildly violated.
\begin{proposition}\label{lem-x=4}
	If $\beta.H>4 H^3$ and $\alpha = -\frac{3m}{2\beta.H} = f(x)$, where
	$(x,\alpha)$ are defined in \eqref{defxa},
	then Theorem \ref{thm-main} holds true. However if $\beta.H = 4H^3$ and $m=-8H^3$, then there is an additional wall with non-primitive wall-crossing, leading to
	\be
	\begin{split}
	\PTi_{m, \beta}  = &
	 \sum_{(m',\, \beta') \,\in\, M_{m,\, \beta}} (-1)^{\chi_{m',\beta'}}\chi_{m',\, \beta'} \PTi_{m', \beta'}\ \bOmH\left(0,\ H, \ \frac{1}{2}H^2 -\beta' +\beta\ , \ \frac{1}{6}H^3 +m'-m -\beta'. H \right) 
 \\ 
	& + \frac{1}{2}\,\big(\chi(\cO_\CY, \cO_\CY(2H))\big)^2 - \frac{1}{2}\,\chi(\cO_\CY, \cO_\CY(2H))\ ,
	\end{split}
	\ee
	where we recall that $\chi(\cO_\CY, \cO_\CY(2H))=\chi_{2\cD}=\frac43\kappa+\frac16 c_2$.
\end{proposition}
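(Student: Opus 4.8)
The plan is to follow the same wall-crossing strategy that proves Theorem \ref{thm-main}, but now carry it out for the specific class $\v_0 = (-1, 0, 4H^2, 8H^3)$ (equivalently $\beta = 4H^2$, $m = -8H^3$), where the condition \eqref{condition-f} just barely fails. Recall the general mechanism: starting from an empty chamber guaranteed by the BMT inequality \eqref{BMTch} and moving towards the large-volume limit $w\to\infty$ where $\bOm_{b,w}(\v_0)$ equals $\PTi_{m,\beta}$, every wall crossed contributes a term to the sum, and the combinatorics of which Harder-Narasimhan factors can appear is dictated by the geometry of the walls in the $(b,w)$-plane. When $f(x)<\alpha$ holds, the analysis in \S\ref{sec_proof} shows that all destabilizing subobjects have rank $-1$ and the quotient is a rank $0$ sheaf with $\ch_1 = H$, giving the primitive wall-crossing of \eqref{pt-thm}. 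For $(x,\alpha)=(4,3)$, the BMT line \eqref{BMTineq} for $\v_0$ still intersects the parabola $w=\tfrac12 b^2$ (since $\alpha = 3 > \sqrt{2x} = 2\sqrt2$), so an empty chamber still exists; but the failure of \eqref{condition-f} means one extra wall becomes accessible, and on that wall the destabilizing sequence is $0 \to \cO_\CY(2H) \to E \to \cO_\CY(2H)[\text{shift}] \to 0$-type, i.e. both factors carry the \emph{same} Chern character, producing a non-primitive wall.

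First I would identify precisely the walls for $\v_0$ in the region $U$ by using the wall-and-chamber structure recalled after \eqref{defPi}: all walls pass through $\varpi(\v_0)$ and are line segments between $\varpi(\v_0)$ and $\varpi(\v')$ for destabilizing $\v'$ satisfying the Bogomolov-Gieseker constraint. I would enumerate these using the bounds on $\beta'.H$ and $m'$ analogous to \eqref{all}–\eqref{final}, but now computing them directly with $\beta.H/H^3 = 4$ and $-3m/(2\beta.H) = 3$ rather than invoking the inequality $f<\alpha$. The key new feature is that one of the charge vectors that becomes relevant is $\v' = \ch(\cO_\CY(2H)) = (1, 2H, 2H^2, \tfrac{4}{3}H^3)$, and the corresponding destabilizing object on that wall is an extension of two copies of $\cO_\CY(2H)$ (up to shift), so $\v_0 = 2\v'$ is a non-primitive decomposition. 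For such walls one must use the full Joyce-Song wall-crossing formula rather than the primitive (Kontsevich-Soibelman) version; the contribution of a charge $\v'$ appearing with multiplicity $2$ involves the rational invariant $\bOm(\v')$ together with the quadratic refinement, and for $\bOm(\v') = \bOm_H(\ch(\cO_\CY(2H)))$ — which counts the single stable object $\cO_\CY(2H)$ — one gets a term of the shape $\tfrac12 \langle \v',\v'\rangle_{\text{eff}}(\ldots)$. Evaluating the Euler pairing $\chi(\cO_\CY, \cO_\CY(2H)) = \chi_{2\cD}$ via Hirzebruch-Riemann-Roch gives $\tfrac43\kappa + \tfrac16 c_2$, and the binomial-type structure of the Joyce-Song formula for a rank-2 collision produces exactly $\tfrac12(\chi_{2\cD})^2 - \tfrac12\chi_{2\cD}$.

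The remaining primitive walls are handled exactly as in Theorem \ref{thm-main}: each contributes $(-1)^{\chi_{m',\beta'}}\chi_{m',\beta'}\,\PTi_{m',\beta'}\,\bOm_H(\ldots)$ with the same $\chi_{m',\beta'}$ and the same rank-$0$ argument $\left(0, H, \tfrac12 H^2 - \beta' + \beta, \tfrac16 H^3 + m' - m - \beta'.H\right)$, and the set $M_{m,\beta}$ is still described by \eqref{all}–\eqref{final} (one checks the failure of \eqref{condition-f} does not enlarge the index set of primitive walls beyond $M_{m,\beta}$; only the single non-primitive wall is added). Summing all contributions and using $\bOm_{b,w}(\v_0) = \PTi_{m,\beta}$ at large volume and $\bOm_{b,w}(\v_0) = 0$ in the empty chamber gives the stated formula.

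The main obstacle I anticipate is the careful bookkeeping at the non-primitive wall: one must verify that no \emph{other} non-primitive collisions occur in the relevant region (so that the single correction term $\tfrac12(\chi_{2\cD})^2 - \tfrac12\chi_{2\cD}$ is the only deviation from the primitive formula), and one must get the sign and the precise rational coefficient right in the Joyce-Song formula, since the quadratic refinement $\sigma(\v',\v')$ and the Euler form $\langle\v',\v'\rangle$ both enter. Checking that the empty chamber genuinely exists for $(x,\alpha)=(4,3)$ — i.e. that the BMT line lies above the parabola and no wall intervenes between it and the parabola — is a short computation but logically essential; it is the place where the hypothesis $\alpha > \sqrt{2x}$ (the orange line in Figure \ref{fig2plots}) is used.
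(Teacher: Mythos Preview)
Your overall strategy is right --- run the same wall-crossing argument as in Theorem~\ref{thm-main}, locate the extra wall that appears because $(x,\alpha)=(4,3)$ just fails \eqref{condition-f}, and compute its contribution via the full Joyce--Song formula --- but you have misidentified the structure of that wall, and as written your argument cannot produce the claimed correction term.

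Specifically, the class $\v_0 = (-1,0,4H^2,8H^3)$ has rank $-1$, so it cannot decompose as $2\v'$ with $\v' = \ch(\cO_\CY(2H))$ of rank $+1$; your proposed destabilising sequence $0\to\cO_\CY(2H)\to E\to\cO_\CY(2H)[\text{shift}]\to 0$ yields a rank $0$ object, not $\v_0$. What actually happens on the critical wall $\ell_f$ is a \emph{three}-body phenomenon: the Jordan--H\"older factors of a strictly semistable $E$ of class $\v_0$ are two copies of $\cO_\CY(2H)[1]$ together with one copy of $\cO_\CY(4H)$, i.e.\ $\v_0 = 2\v_1 + \v_2$ with $\v_1=\ch(\cO_\CY(2H)[1])$ and $\v_2=\ch(\cO_\CY(4H))$. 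To see this you must enumerate the solutions of $b_1 r \le c \le b_2(r+1)$ with $b_1=4$, $b_2=2$; besides $(r,c)=(0,1)$ you now get $(r,c)=(0,2)$ and $(r,c)=(1,4)$, and both of the latter force the factors to be (shifts of) line bundles, landing you on the same wall $\ell_f$. The Joyce--Song contribution of this wall therefore involves three separate partitions of $\v_0$ --- namely $\{\v_1,\v_1+\v_2\}$, $\{2\v_1,\v_2\}$ (using $\bOmH(2\v_1)=\tfrac14$), and $\{\v_1,\v_1,\v_2\}$ --- and it is only after summing all three that one obtains $-\tfrac12\chi^2+\tfrac12\chi$ with $\chi=\chi(\cO_\CY,\cO_\CY(2H))$.

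There is a second gap: you assert that ``the failure of \eqref{condition-f} does not enlarge the index set of primitive walls beyond $M_{m,\beta}$'', but this requires ruling out the possibility that a wall of type $(r,c)=(0,1)$ coincides with $\ell_f$ itself (rather than lying strictly above it). This is not automatic and in the paper is excluded by a direct BMT-inequality computation on the putative factors.
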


\subsection{Relaxing the Picard rank one assumption}\label{sec_relaxPic} 
In this subsection, we relax the Picard rank one assumption to
\begin{description}
\item[Assumption $(\star)$] $H^3$ divides $H'.H^{2}$ for all $H' \in \Pic(\CY)$.
\end{description}
The following proposition generalises 
Theorem \ref{thm-main}. 
\begin{proposition}\label{prop.relax}
    Let $(\CY, H)$ be a smooth polarised Calabi-Yau threefold $\CY$ satisfying the BMT conjecture and assumption $(\star)$.
Fix $m \in \Z$ and $\beta \in H_2(\CY, \Z)$ such that the condition \eqref{condition-f} is satisfied. Then
\be
\begin{split}
	& 
	\PTi_{m, \beta} = 
	\\ & \sum_{\begin{subarray}{l} 
	(H', \, m',\, \beta') \,\in\, \widetilde{M}_{m,\, \beta}\end{subarray}}
	(-1)^{\chi_{H', m',\beta'}}\chi_{H', m',\beta'} \ \PTi_{m', \beta'}\ \bOm_H\left(0,\ H', \ \frac{1}{2}H'^2 -\beta' +\beta\ , \ \frac{1}{6}H'^3 +m'-m -\beta'. H' \right), 
	\end{split} 
	\ee
	where 
	\begin{equation}
	\chi_{H', m',\beta'} = \beta. H' + \beta'. H' +m -m' - \frac{H'^3}{6} - \frac{1}{12}c_2(\CY).H'\,,
	\end{equation}
	and the set $\widetilde{M}_{m,\, \beta}$ consists of triples $(H', m', \,\beta') \in H^2(X, \mathbb{Q}) \oplus H_0(\CY, \Z) \oplus H_2(\CY, \Z)$  such that
	\begin{enumerate}
	    \item[(i)] $H'.H^2 = H^3$, 
	    \item [(ii)] $0 \leq \beta'. H  \leq \frac{H'^2.H}{2} + \frac{3mH^3}{2\beta. H} + \beta. H$,  
	    \item [(iii)]$m'\ \geq\ -\frac{2}{3}\beta'.H\left(\frac{\beta'.H}{H^3}+ \frac{1}{2}  \right)$, and
	    \item[(iv)] $ m' \leq \frac{1}{2H^3}\left( \frac{1}{2}H'^2H-\beta'. H +\beta. H\right)^2 + \frac{H^3}{24} -\frac{H'^3}{6} +m + \beta'.H'$.
	\end{enumerate}
\end{proposition}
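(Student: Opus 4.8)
\textbf{Proof strategy for Proposition \ref{prop.relax}.}

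The plan is to rerun the entire wall-crossing argument underlying Theorem \ref{thm-main}, keeping track at each step of where the Picard rank one hypothesis was actually used and replacing it with the weaker assumption $(\star)$. The key observation is that the proof of Theorem \ref{thm-main} (deferred to \S\ref{sec_proof}) works by studying the family of weak stability conditions $\nu_{b,w}$ on the heart $\cA_b$ for the fixed rank $-1$ Chern character $\v_0=(-1,0,\beta,-m)$, and collecting wall-crossing contributions between an empty chamber (guaranteed by the BMT inequality via the condition \eqref{condition-f}) and the large-volume limit $w\to\infty$ where $\bOm_{b,w}(\v_0)=\PTi_{m,\beta}$. The slope $\nu_{b,w}$ only depends on the Chern character pairings $C_i = \int_\CY H^{3-i}.\ch_i$, i.e. on the degrees $\ch_i.H^{3-i}$, so the positions of the walls are insensitive to whether $\ch_1$ is a multiple of $H$ or merely has the right degree. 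What changes is the structure of the destabilizing subobjects: a rank-$0$ destabilizer now has $\ch_1 = H'$ for some effective class $H'$ with $H'.H^2 = H^3$ (condition (i)), rather than forcibly $H'=H$, and its complement in the exact sequence is a rank $-1$ object with $\ch_1 = -(H'-H)$, which after tensoring by a line bundle one arranges to have minimal D4-charge. First I would set up this modified exact-sequence bookkeeping: at the relevant wall, a $\nu_{b,w}$-semistable object of class $\v_0$ fits in $0\to A \to E \to B \to 0$ (or the reverse) with $A$ of rank $0$, $\ch_1(A)=H'$, and $B$ of rank $-1$; the Euler-characteristic weight $\chi(A,B)$ in the Joyce wall-crossing formula produces exactly the sign $(-1)^{\chi_{H',m',\beta'}}$ and multiplicity $\chi_{H',m',\beta'}$ appearing in the statement, now with $H^3$, $c_2(\CY).H$ replaced by $H'^3$, $c_2(\CY).H'$ as dictated by GRR \eqref{GRR}.

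The next step is to reproduce the range constraints (ii)–(iv). The bound (ii) on $\beta'.H$ comes, as in \eqref{all}, from requiring that the wall for the pair $(\v_0,\text{destabilizer})$ actually lies in the region $U=\{w>\frac12 b^2\}$ and intersects the segment of $\nu_{b,w}$-stability between the empty chamber and large volume; assumption $(\star)$ enters because it guarantees $\ch_2(A).H \in \frac12 H^3\Z + \Z$ (after the GRR quantization \eqref{quant}), keeping $\beta'.H = \ch_2(B).H$ integral so the sum is over a genuine lattice. The lower bound (iii) on $m'$ is the Castelnuovo-type bound \eqref{CastPTs}/\eqref{mCastC} applied to the rank $-1$ complementary factor $B$: here one needs Theorem \ref{thm-Cast}, whose proof (\S\ref{sec_appII}) is itself a wall-crossing plus induction argument that I would need to check goes through under $(\star)$ — but Theorem \ref{thm-Cast} is stated for general polarised CY threefolds satisfying BMT, not just Picard rank one, so it applies directly. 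The upper bound (iv) on $m'$ is the requirement that the rank $0$ factor $A$ have $\hq_0$ below the unitarity bound \eqref{qmax}, i.e. that $\bOm_H(0,H',\tfrac12 H'^2-\beta'+\beta, \tfrac16 H'^3+m'-m-\beta'.H')$ not vanish automatically; translating \eqref{qmax} for the divisor class $H'$, using $H'.H^2=H^3$ to rewrite everything in terms of $H^3$, gives precisely the stated inequality with its $H'^3/6$ and $H^3/24$ terms.

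The main obstacle, I expect, is the following: in the Picard rank one case the set of possible destabilizing D4-charges is just $\{H\}$ (up to spectral flow), so the wall structure for fixed $\v_0$ is controlled by \cite[Proposition 4.1]{feyz-noether-lef} and the walls are line segments through a single point $\varpi(\v_0)$. Under $(\star)$, there can be several effective classes $H'$ with $H'.H^2=H^3$, and one must argue (a) that the walls remain locally finite, (b) that on each wall the destabilizing subobject has rank $0$ \emph{or} the complement has rank $0$ (no purely rank-$\pm1$-against-rank-$\pm1$ splittings contribute in the relevant chamber — this used the minimality of the D4-charge), and (c) that the BMT-induced empty chamber is still reached, i.e. that condition \eqref{condition-f} — which only involves $\beta.H$ and $H^3$ — still forces emptiness for \emph{all} the relevant subobject classes simultaneously. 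Point (b) is where I anticipate the real work: one must show that any rank-$0$ subsheaf or quotient appearing at a wall has $\ch_1$ with $\ch_1.H^2 = H^3$ exactly (not a larger multiple), which should follow from the Bogomolov–Gieseker inequality \eqref{BGineq} for the rank-$0$ factor combined with the position of the wall, but the bookkeeping is more delicate than in the rank one case. Modulo this, the proof is a faithful transcription of the \S\ref{sec_proof} argument with $H\rightsquigarrow H'$ in the rank-$0$ slots, and I would organize it so that the Picard rank one statement Theorem \ref{thm-main} is recovered verbatim by specializing $H'=H$.
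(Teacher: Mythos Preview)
Your overall strategy is correct and matches the paper's proof: rerun the wall-crossing argument of Theorem \ref{thm-main} for the class $\v=(-1,0,\beta,-m)$, observing that the $\nu_{b,w}$-slope sees only the $H$-degrees $\ch_i.H^{3-i}$, so Lemma \ref{lem.destabilising objects} and the wall bookkeeping transfer almost verbatim. However, two of your detailed attributions are off and would cause you to work harder than necessary.

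First, the actual role of assumption $(\star)$ is not about $\ch_2$-integrality as you suggest, but about $\ch_1$: it guarantees that $c \coloneqq \frac{\ch_1(E_0).H^2}{H^3}$ is an \emph{integer} for any destabilizing factor $E_0$. This is exactly what makes your anticipated obstacle (b) disappear rather than require delicate work: the inequality $b_1 r \le c \le b_2(1+r)$ from the definition of $\cA_b$ (equation \eqref{A1} in the proof of Lemma \ref{lem.destabilising objects}) then forces $(r,c)=(0,1)$ by condition (iii) of that lemma, i.e.\ $\ch_1(E_0).H^2 = H^3$, with no appeal to Bogomolov--Gieseker. This also immediately gives Gieseker-stability of $E_0$ (minimal positive degree) and the absence of further walls for $E_0$ up to large volume.

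Second, your derivation of bound (iii) is misattributed. Theorem \ref{thm-Cast} is \emph{not} stated for general polarised CY threefolds --- the standing hypothesis at the start of Appendix \ref{sec_appS} is $\Pic(\CY)=\Z.H$, and its proof uses induction on $\beta.H$ which does not obviously survive under $(\star)$. Note also that the bound (iii) in Proposition \ref{prop.relax} is the \emph{weaker} form $m' \ge -\tfrac{2}{3}\beta'.H(\beta'.H+\tfrac{1}{2H^3})$, not the Castelnuovo bound $-\cC(\beta')$ appearing in \eqref{final}. The paper obtains (iii) directly from BMT via \cite[Proposition 2.6]{Feyzbakhsh:2020wvm}, applied to the stable pair $(E_1\otimes\det(E_1))^\vee[1]$, with no induction required. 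Your identification of bound (iv) with the rank-zero BMT bound (i.e.\ \cite[Lemma B.3]{Feyzbakhsh:2021rcv}) is correct.
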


\subsection{Rank 0 DT invariants from rank one DT and PT invariants}
\label{sec_rk0fromDTPT}
Finally, we provide a strengthening of the wall-crossing formula \cite[Theorem 1.1]{Feyzbakhsh:2022ydn} for rank zero classes, which holds for CY threefolds satisfying the BMT inequality and assumption $(\star)$.  

Given a polarization $H\in H^2(\CY,\IZ)$ and a rank 0 Chern class\footnote{Note that
the sign of $m$ is flipped compared to \cite[Theorem 1.1]{Feyzbakhsh:2022ydn}.}
\be
\label{vrank0}
\v= [0,D,\beta,-m] \in H^2(\CY, \mathbb{Q}) \oplus H^4(\CY, \mathbb{Q}) \oplus H^6(\CY, \mathbb{Q})\,,
\ee 
with $D\neq 0$,
let us define
\be
\label{defQSoh}
Q_H(\v) = \frac12 \left( \frac{ D. H^2}{H^3 }\right)^2 +
6 \left(\frac{\beta. H}{D. H^2}\right)^2 +\frac{12 m}{D. H^2}\, .
\ee
\begin{theorem}\label{thm.rk0}
    Let $(\CY, H)$ be a smooth polarised CY threefold $\CY$ satisfying the BMT conjecture and assumption $(\star)$. If a rank zero class $\v = [0, D, \beta, -m]$ satisfies $Q_H(\v) < \min\{\frac{r^2}{2}-\frac{1}{8},\, r-\frac{1}{2}\}$ for $r = \frac{D.H^2}{H^3}$ then 
\begin{equation}\label{sumprop3}
		\bOmH(\v) = \big(\#H^2(X,\Z)_{\mathrm{tors}}\big)^2 \sum_{\substack{\v_1\, =\, -e^{D_1}(1, 0, -\beta_1, -m_1)\\ \v_2 \,=\, e^{D_2}(1, 0, - \beta_2, -m_2)\\
				\v_1+\v_2\, =\, \v \\
				(D_i,\ \beta_i,\ m_i) \,\in\, M_i\left(\v \right)
		}}  (-1)^{\chi(\v_2, \v_1) -1}\;\chi(\v_2, \v_1)\, {\mathrm P}_{-m_1, \beta_1}\ {\mathrm I}_{m_2, \beta_2}\,.
		\end{equation}
Here $M_i(\v)$ for $i=1, 2$ is the set all classes $(D_i,\, \beta_i,\, m_i) \in H^2(X, \mathbb{Q}) \oplus H^4(X, \mathbb{Q}) \oplus H^6(X, \mathbb{Q})$ such that
\begin{equation}\label{bound on Di}
    \left| \frac{D_i.H}{H^3} - \theta_i \right| <1 \qquad \text{for} \qquad \theta_i \coloneqq  \frac{\beta.H}{D.H^2} + (-1)^i \sqrt{\frac{r^2}{4} - \frac{1}{2}Q_H(\v) }\ ,
\end{equation}
\begin{equation}\label{cc.1}
	-\frac{D_i^2.H}{2H^3} +\frac{\beta_i.H}{H^3} + \frac{\beta.H}{D.H^2} \frac{D_i.H^2}{H^3}  \leq  -\frac{1}{8} \left(\frac{D.H^2}{H^3}\right) + \frac{1}{2}\left(\frac{\beta.H}{D.H^2}\right)^2 + \frac{1}{4}Q_H(\v),
\end{equation}
and 
\begin{equation}\label{cc.2}
	(-1)^{i+1}m_i \leq\  \frac{2}{3}\beta_i.H \left( \frac{\beta_i.H}{H^3} + \frac{1}{2}\right).
\end{equation}
\end{theorem}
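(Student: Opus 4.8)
The proof of Theorem \ref{thm.rk0} will be a wall-crossing argument for the class $\v=[0,D,\beta,-m]$ in the family of weak stability conditions $\nu_{b,w}$ of \eqref{noo}, run for a rank zero Chern character rather than the rank $-1$ character used for Theorem \ref{thm-main}; it refines \cite[Theorem 1.1]{Feyzbakhsh:2022ydn}. One endpoint of the path is the large volume limit $w\to+\infty$, at which $\bOm_{b,w}(\v)$ stabilizes to the Gieseker index $\bOmH(\v)$ (see the discussion below \eqref{HilbP} and \cite[Lemma 5.2]{Feyzbakhsh:2022ydn}). The other endpoint is an empty chamber: since $D\neq0$ we have $\Delta(\v)=(\ch_1^b(\v))^2>0$, so the BMT inequality \eqref{BMTineq} forces $\bOm_{b,w}(\v)=0$ at every $(b,w)\in U$ below the line $\ell_f(\v)=\{L_{b,w}(\v)=0\}$. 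Because $\ch_0(\v)=0$, the wall structure recalled in \S\ref{sec_stab} (see \cite[Proposition 4.1]{feyz-noether-lef}) shows that all walls for $\v$ are line segments parallel to $\ell_f(\v)$, and by \cite[Proposition 1.4]{Feyzbakhsh:2021nds} only finitely many lie above it. The numerical hypothesis $Q_H(\v)<\min\{\tfrac{r^2}{2}-\tfrac18,\ r-\tfrac12\}$, $r=D.H^2/H^3$, is exactly what guarantees both that $\ell_f(\v)$ rises above the parabola $w=\tfrac12 b^2$, so an empty chamber is genuinely reached, and that every wall crossed on the way up to large volume is a primitive two-body wall $\v=\v_1+\v_2$ with $\v_1$ of rank $-1$ and $\v_2$ of rank $+1$ --- no higher-rank factors, no rank $0$ factors, no longer Jordan--H\"older filtrations. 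Summing the contributions of these finitely many walls via the wall-crossing formula of \cite{Joyce:2008pc} will add up to \eqref{sumprop3}.

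To pin down the shape of the walls I would take a Jordan--H\"older factor of $\v$ of nonzero rank on a putative wall, apply the classical Bogomolov--Gieseker inequality \eqref{BGineq} and the support property (axiom (v) of \S\ref{sec_DTdef}) to all the factors, and check that under the stated bound on $Q_H(\v)$ there is no room for a factor of rank $\pm2$ or more, nor for a genuinely three-body configuration (when this fails, as in Proposition \ref{lem-x=4}, a non-primitive wall does appear). Converting the wall equation together with the Bogomolov and BMT inequalities \eqref{BGineq}, \eqref{BMTch} for $\v$, $\v_1$ and $\v_2$ into bounds on the parameters then yields exactly the constraints \eqref{bound on Di}, \eqref{cc.1} and \eqref{cc.2} cutting out the index sets $M_i(\v)$: the range of $D_i.H/H^3$ in \eqref{bound on Di} is the solution interval of the resulting quadratic in the tilt-slope, while \eqref{cc.1} and \eqref{cc.2} record where the rank $\pm1$ factors can still be $\nu_{b,w}$-stable at the wall.

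It then remains to identify the constituent invariants and assemble the answer. Writing $\v_1=-e^{D_1}(1,0,-\beta_1,-m_1)$ and $\v_2=e^{D_2}(1,0,-\beta_2,-m_2)$, one shows that along the relevant wall no further wall for $\v_1$ or $\v_2$ intervenes before $w\to+\infty$, so the $\nu_{b,w}$-stable objects of class $\v_1$ are derived duals of stable pairs and those of class $\v_2$ are ideal sheaves, and hence their rational invariants equal ${\mathrm P}_{-m_1,\beta_1}$ and ${\mathrm I}_{m_2,\beta_2}$. The twist invariance $\bOmH(\w)=\bOmH(\w\otimes\cO_\CY(\epsilon H))$ collapses the dependence on $D_1,D_2$ to the normalisation fixed in \eqref{bound on Di}, and under assumption $(\star)$ the residual freedom in choosing the actual line bundles is precisely a class in $H^2(X,\Z)_{\mathrm{tors}}$ on each of the two factors; summing these multiplicities produces the prefactor $\big(\#H^2(X,\Z)_{\mathrm{tors}}\big)^2$. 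Finally the primitive wall-crossing formula of \cite{ks,Joyce:2008pc} contributes, for each admissible pair, the term $(-1)^{\chi(\v_2,\v_1)-1}\chi(\v_2,\v_1)\,{\mathrm P}_{-m_1,\beta_1}\,{\mathrm I}_{m_2,\beta_2}$, and a Grothendieck--Riemann--Roch computation from \eqref{GRR}, using $\Td(T\CY)=1+\tfrac1{12}c_2(T\CY)$, evaluates the Euler pairing $\chi(\v_2,\v_1)$, giving \eqref{sumprop3}.

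The step I expect to be the main obstacle is the wall classification of the second paragraph: proving, once and for all under the single hypothesis on $Q_H(\v)$, that \emph{every} wall of $\v$ is a primitive two-body wall of the stated type and that only finitely many contribute, so the sum is well defined. This is the same delicate interplay among the support property, the classical Bogomolov inequality, and the BMT inequality that powers the proof of Theorem \ref{thm-main}, and it is carried out in \S\ref{sec_proof}; Theorem \ref{thm.rk0} differs from \cite[Theorem 1.1]{Feyzbakhsh:2022ydn} mainly in performing this analysis precisely enough to obtain the explicit bounds \eqref{bound on Di}, \eqref{cc.1}, \eqref{cc.2} and to work under assumption $(\star)$ rather than $\Pic(\CY)=\Z.H$. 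A secondary, bookkeeping difficulty is keeping track of rational versus integral Chern characters and the torsion contributions once $\Pic(\CY)$ has rank larger than one; assumption $(\star)$ is exactly what keeps the one-dimensional wall geometry of \eqref{noo} valid verbatim, since it forces $H'.H^2/H^3\in\Z$ for every $H'\in\Pic(\CY)$.
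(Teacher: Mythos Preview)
Your proposal is correct and follows essentially the same wall-crossing strategy as the paper's proof. The one point to sharpen is that the paper rules out higher-rank factors not via the support property but directly via the heart axioms: for any destabilising sequence $E_1\to E\to E_2$ along a wall above $\ell_f$, the definition of $\cA_b$ gives $\mu_H^+(\cH^{-1}(E_i))\le b_2$ and $b_1\le\mu_H^-(\cH^0(E_i))$, and summing over $i$ with $\rk E=0$ yields $r\ge(b_1-b_2)\sum_i\ch_0(\cH^0(E_i))$, so the hypothesis on $Q_H(\v)$ (which is exactly $b_1-b_2>\max\{r-1,\tfrac12\}$) forces the factors to have ranks $\pm1$ and, via $b_1-b_2>r-1$, produces the bound \eqref{bound on Di} on $\mu_H(E_i)$.
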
   	
Note that if $Q_H(\v)$ is as small as required in \cite[Theorem 1.1]{Feyzbakhsh:2022ydn}, then one can apply \cite[Proposition 3.5 \& 3.6]{Feyzbakhsh:2022ydn} to show that the above sets $M_i(\v)$ reduce to $M(\v)$ so that the final wall-crossing formulae agree.

\subsection{Proofs}
\label{sec_proof}
In this subsection, we collect the proofs of the various claims up to now.

To prove Theorem \ref{thm-main}, we use weak stability conditions $\nu_{b,w}$ for $(b,w) \in U$ where $U = \{(b,w) \in \mathbb{R}^2 \colon w > \frac{b^2}{2}\}$, see Fig. \ref{fig-curves} and \cite[\S 1]{Feyzbakhsh:2021rcv} for all details. We study wall-crossing for the fixed class
\begin{equation}
\v= (-1, 0, \beta, -m) \in \textstyle{\bigoplus_{i=0\,}^3}H^{2i}(\CY,\mathbb{Q}). 
\end{equation}  
We start in the large volume limit $b>0$ and $w \gg 1$ where an object $E \in \cA_b$ of class $\v$ is $\nu_{b,w}$-semistable if and only if $E^{\vee} \otimes \det(E)^{-1}[1]$ is a stable pair, see 
\cite[\S 3]{Toda:2011aa}. Then we move down and investigate all walls of instability for objects of class $\v$. 


By the conjectural BMT inequality \eqref{BMTineq}, if there is a $\nu_{b,w}$-semistable object in $\cA_b$ of class $\v$, then $L_{b, w}(\v) = w(2H^3\beta. H) + 3b(-H^3)(-m) +2(\beta. H)^2 \geq 0$, i.e.
\begin{equation}
w \geq -b \frac{3m}{2\beta. H} - \frac{\beta. H}{H^3}\,.  
\end{equation}	
Hence any wall for class $\v$ lies above or on the line $\ell_f$ of equation $w = \alpha b -x$ where
\begin{equation}\label{alpha, x}
    \alpha = -\frac{3m}{2\beta. H} \qquad \text{and} \qquad x= \frac{\beta. H}{H^3}\, . 
\end{equation}
 Let $b_1< b_2$ be the values of $b$ at the intersection points of the line $\ell_f$ with the boundary $\partial U$. 
 
Assume there is a wall $\ell$ for class $\v$. Then there is an object $E \in \cA_b$ of class $\v$ which is strictly $\nu_{b,w}$-semistable for all $(b,w) \in \ell \cap U$. Let $E' \rightarrow E \rightarrow E''$ be a destabilising sequence along the wall $\ell$.  
\begin{lemma}\label{lem.destabilising objects}
	Suppose 
 	\begin{enumerate}
	    \item[(i)] $0 < b_1 < 2$, 
	    \item[(ii)] $1 < b_2-b_1$, and 
	    \item[(iii)] if $b_2 r \leq c \leq b_1(r+1)$ for some $(r, c) \in \mathbb{Z}_{\geq 0} \oplus \mathbb{Z}_{\geq 0}$, then either $(r, c) = (0, 1)$ or $(0, 0)$.
	\end{enumerate}
   Then $b_1 \geq 1$ and there is an ordering $E_0,E_1$ of $E',E''$ such that
	\begin{itemize}
		\item $E_0$ is a rank zero Gieseker-stable sheaf with $\ch_1(E_0) =H$.   
		\item $E_1$ is a rank $-1$ object and $\nu_{b,w}$-stable for $b> -1$ and $w \gg 1$. 
	\end{itemize}
	Moreover, there is no $\nu_{b,w}$-semistable object of class $\v$ for $(b,w) \in U$ below $\ell_f$. 
\end{lemma}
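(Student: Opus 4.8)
The strategy is to analyze the finitely many walls for the class $\v = (-1,0,\beta,-m)$ in the region $U$, working downward from the large volume limit. By the wall and chamber structure recalled above (from \cite{feyz-noether-lef}), for fixed $\v$ with $C_0 = -1 \neq 0$ all walls are line segments $\ell_i \cap U$ passing through the common point $\varpi(\v) = (C_1/C_0, C_2/C_0) = (0, -\beta.H/H^3)$, and they are locally finite; moreover only finitely many are crossed as $w \to +\infty$ by \cite[Proposition 1.4]{Feyzbakhsh:2021nds}. The conjectural BMT inequality \eqref{BMTineq}, specialized to $\v$, forces $L_{b,w}(\v) \geq 0$ on any $\nu_{b,w}$-semistable object, which is exactly the statement that every wall lies on or above the line $\ell_f$ of equation $w = \alpha b - x$ with $\alpha, x$ as in \eqref{alpha, x}; this immediately gives the last sentence of the lemma (no $\nu_{b,w}$-semistable object of class $\v$ exists below $\ell_f$). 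So the bulk of the work is to show that, under hypotheses (i)--(iii), any \emph{actual} wall $\ell$ (necessarily on or above $\ell_f$) admits a destabilizing sequence of the asserted shape, and that $b_2 \geq 1$.

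\textbf{Main steps.} First I would set up coordinates: let $b_2 < b_1$ be the abscissas where $\ell_f$ meets $\partial U = \{w = b^2/2\}$, so that any wall, lying above $\ell_f$, meets $U$ only for $b$ in a subinterval of $(b_2, b_1)$; hypotheses (i) and (ii) say $0 < b_2 < 2$ and $b_1 - b_2 > 1$. Next, take a destabilizing sequence $E' \to E \to E''$ along a wall $\ell$, with destabilizing subobject/quotient of ranks $r', r''$ summing to $-1$, so one of them, say $E_1$, has negative rank and the other, $E_0$, has rank $\geq 0$; I'd normalize so that $E_0$ has rank $r \geq 0$. The key point is to bound the possible Chern characters of the Jordan--Hölder factors. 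Since $E_0, E_1 \in \cA_b$ for all $(b,w) \in \ell \cap U$ and they are $\nu_{b,w}$-semistable there, the Bogomolov--Gieseker inequality \eqref{BGineq} applies, placing $\varpi(E_0), \varpi(E_1)$ outside $U$; combined with the fact that $\varpi$ of the sub and quotient lie on the line $\ell$ extended through $\varpi(\v)$, and that $\ell$ meets $U$ over a $b$-interval of length controlled by $b_1 - b_2$, one extracts integrality constraints on $(r, \ch_1(E_0).H^2/H^3)$. This is where hypothesis (iii) enters: writing $c$ for the (integer, by $\Pic \CY = \Z H$) first Chern number of $E_0$, the condition $b_1 r \leq c \leq b_2(r+1)$ is precisely the range forced by $\varpi(E_0) \notin U$ and $\varpi(E_1) \notin U$ together with $E_0, E_1 \in \cA_b$ on the wall, and (iii) says the only solutions are $(r,c) \in \{(0,1),(0,0)\}$. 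The case $(0,0)$ would make $E_0$ have $\ch_0 = \ch_1 = 0$, hence supported in dimension $\leq 1$ with $\ch_1^b(E_0) = 0$, so $\nu_{b,w}(E_0) = +\infty$ and $E_0$ cannot destabilize; thus $(r,c) = (0,1)$, i.e.\ $\ch_1(E_0) = H$, and $E_1 = E/E_0$ has rank $-1$. One then checks $E_0$ is a \emph{Gieseker-stable} sheaf (using that a rank-0 $\nu_{b,w}$-semistable object with $\ch_1 = H$ primitive and minimal is a pure sheaf, semistable, and primitivity of $H$ upgrades semistable to stable), and that $E_1$, being the rank $-1$ quotient with the Chern character of a (dual) stable pair, is $\nu_{b,w}$-stable for $b > -1$, $w \gg 1$ by the large-volume identification with stable pairs from \cite[\S 3]{Toda:2011aa}. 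Finally $b_2 \geq 1$: if $b_2 < 1$ then $(r,c) = (0,0)$ would also satisfy $b_1 \cdot 0 \leq 0 \leq b_2 \cdot 1$ only if $b_2 \geq 0$ which holds, so one needs the sharper input that the destabilizing $E_0$ must genuinely have $\ch_1 = H$ and lie in $\cA_b$, which fails near $b = b_2$ unless $b_2 \geq 1$; I would derive $b_2 \geq 1$ from the requirement that $\cO_\CY(H) \in \cA_b$ (equivalently $\cO_\CY(-H)[1] \in \cA_b$ with the right slope) exactly when $b$ is in the appropriate range, forcing $b_2 \geq 1$.

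\textbf{Main obstacle.} The delicate part is the combinatorial bookkeeping that turns the geometric constraints ``$\varpi(E_i) \notin U$ and $E_i \in \cA_b$ on the wall'' into the clean Diophantine condition in hypothesis (iii), and then the verification that the only surviving case $(0,1)$ really does produce a Gieseker-stable sheaf rather than merely a semistable or strictly semistable one --- this uses primitivity of $H$ in an essential way and requires ruling out subobjects of $E_0$ with equal slope. A secondary subtlety is controlling the rank $-1$ factor $E_1$: one must show that no \emph{further} wall destabilizes it between $\ell$ and large volume, equivalently that $E_1$ is already $\nu_{b,w}$-stable for $b > -1$, $w \gg 1$; this should follow from the structure theory of stable pairs (the moduli space of stable pairs being the large-volume chamber for the rank $-1$ class), but pinning down the precise $b$-range $b > -1$ from hypotheses (i)--(ii) is the technical heart. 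I expect these steps to be routine but lengthy, and I would carry out the Diophantine analysis first since everything else is downstream of knowing $(r,c) = (0,1)$.
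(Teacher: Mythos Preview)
Your overall architecture matches the paper's, and you correctly identify that the Diophantine constraint $b_1 r \le c \le b_2(r+1)$ together with hypothesis (iii) forces $(r,c)\in\{(0,0),(0,1)\}$, with $(0,0)$ ruled out by slope $+\infty$. However, several of your subsequent steps contain genuine gaps.

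First, your derivation of $b_2\ge 1$ is confused. The correct argument is a one-liner: once you know $(r,c)=(0,1)$, the inequality $c \le b_2(r+1)$ reads $1 \le b_2$. No appeal to $\cO_\CY(H)\in\cA_b$ is needed.

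Second, and more seriously, you do not establish that $E_0$ is a \emph{sheaf}. A priori $E_0$ is only an object of $\cA_b$, and ``rank~$0$, $\ch_1=H$ primitive'' does not by itself imply purity or even that $\cH^{-1}(E_0)=0$. The paper proves this by showing $E_0$ has \emph{no wall} between $\ell$ and the large volume limit: for any putative destabilizing sequence $F_1 \hookrightarrow E_0 \twoheadrightarrow F_2$, one bounds $\ch_1(F_i).H^2$ using $\mu_H^+(\cH^{-1}(F_i))\le b_2$ and $\mu_H^-(\cH^0(F_i))\ge b_1$, sums, and gets $H^3 = \ch_1(E_0).H^2 \ge (b_1-b_2)(r_0+r_1)H^3$ where $r_i=\rk\cH^{-1}(F_i)$. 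Hypothesis (ii), $b_1-b_2>1$, then forces $r_0=r_1=0$, so the $F_i$ are rank~$0$ sheaves with the same $\nu$-slope as $E_0$, hence no genuine wall. Only then does $E_0$ become $\nu_{b,w}$-semistable for $w\gg 1$, hence a torsion sheaf, and minimality of $\ch_1(E_0).H^2=H^3$ upgrades this to Gieseker-stable. Your proposal collapses this into an unjustified assertion.

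Third, your argument for $E_1$ is circular. You invoke the identification with stable pairs from \cite{Toda:2011aa}, but that identification \emph{presupposes} large-volume stability --- it is the conclusion, not a tool. The paper instead argues: no wall for the class $\ch(E_1)$ crosses the vertical line $b=2$ (by \cite[Lemma 3.5]{feyz:effective-restriction-theorem}); since $b_2\ge 1$ and $b_1-b_2>1$ give $b_2<2<b_1$, the wall $\ell$ (whose $b$-range contains $[b_2,b_1]$) meets $b=2$ inside $U$, so $E_1$ is $\nu_{2,w}$-semistable for all $w>2$, hence $\nu_{b,w}$-stable for $b>-1$ and $w\gg 1$ by the wall-and-chamber structure for $\ch(E_1)$.

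Finally, a minor point: the constraint $b_1 r \le c \le b_2(r+1)$ comes directly from $\ch_1^b(E_i)\ge 0$ for $E_i\in\cA_b$ at $b=b_1,b_2$ (cf.\ \cite[Remark 1.2]{Feyzbakhsh:2021rcv}), not from the Bogomolov--Gieseker inequality for $\varpi(E_i)$ as you suggest.
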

\begin{proof}
	Since $\rk(E)=-1$, one of the objects $E',E''$ has rank$\,<0$; call it $E_1$. The other $E_0$ has rank$\,\ge0$. Let 
	\begin{equation}
	\ch_{\leq 1}(E_0) = (r, cH) \qquad \text{and} \qquad \ch_{\leq 1}(E_1) = (-1-r, -cH).
	\end{equation}
	By \cite[Remark 1.2]{Feyzbakhsh:2021rcv} for any point $(b,w) \in \ell \cap U$, we have $\ch_1(E_i)H^2 -b \ch_0(E_i)H^3 \geq 0$ for $i=1, 2$. Since $\ell$ lies above $\ell_f$, this in particular holds for $b =b_1, b_2$ which implies 
	\begin{equation}\label{A1}
	b_2 r \leq c \leq b_1(1+r).
	\end{equation}
	By assumption (iii), it follows that either $(r, c) = (0, 0)$ or $(r, c) = (0, 1)$. In the first case, $E_0$ is of $\nu_{b,w}$-slope $+\infty$ for all $(b,w) \in U$, thus it cannot have the same $\nu_{b,w}$-slope as $E$ along the wall $\ell$. Thus $(r, c) = (0, 1)$ and $1 \leq b_1$. 
	
Hence $E_1$ is a rank $-1$ object with $\ch_1(E_1) = -H$. By \cite[Lemma 3.5]{feyz:effective-restriction-theorem}, there is no wall for $E_1$ crossing the vertical line $b =2$, so if $E_1$ is $\nu_{b,w}$-semistable for some $b=2$ and $w > 2$, then it is $\nu_{b, w}$-stable for $b=2$ and any $w >2$. By conditions (i) and (ii), the wall $\ell_f$ and so the wall $\ell$ intersects the vertical line $b=2$ at a point inside $U$, thus $\nu_{b,w}$-semistability of $E_1$ along the wall implies that $E_1$ is $\nu_{b, w}$-stable for $b=2$ and $w \gg 1$. Then the wall and chamber structure for the class $\ch(E_1)$ implies that $E_1$ is $\nu_{b,w}$-stable for any $b> -1$ and $w \gg 1$.   
	
We claim that there is no wall for $E_0$ when we move up from the wall $\ell$ to the large volume limit. Suppose for a contradiction that there was a wall $\ell_0$ with the destabilising sequence $F_1 \hookrightarrow E_0 \twoheadrightarrow F_2$. Set $r_i:=\rk\!\(\cH^{-1}(F_i)\)\ge0$. By definition of the heart $\cA_b$, for $(b,w) \in \ell_0 \cap U$, 
\begin{equation}
    \ch_1\(\cH^{-1}(F_i)\).H^2 \le b\, r_iH^3 \qquad \text{and} \qquad  \ch_1\(\cH^0(F_i)\).H^2 \ge b\,(\ch_0(F_i)+r_i)H^3. 
\end{equation}
Since $\ell_0$ lies above or on $\ell_f$, we may apply the left hand inequality for $b= b_1$ and the right hand for $b= b_2$. Then subtracting gives
$\ch_1(F_i).H^2\ \ge\ b_2\ch_0(F_i)H^3+ (b_2-b_1)r_iH^3$. 
Adding over $i=0,1$ gives 
\begin{equation}\label{B4}
H^3 = \ch_1(E).H^2\ \ge\ (b_2-b_1)(r_0+r_1)H^3.
\end{equation}
Since $b_2-b_1 >1$ by assumption (ii), we get $r_1=r_2 =0$. Thus $\cH^{-1}(F_i) =0$ as they are torsion-free sheaves by definition of the heart $\cA_b$. Thus $F_i$'s are sheaves with $\ch_0(F_1) =\ch_0(F_2) = 0$. Hence they have the same $\nu_{b,w}$-slope as $E_0$ with respect to any $(b, w) \in U$, so they cannot induce a wall. Thus $E_0$ is $\nu_{b,w}$-semistable for $w \gg 1$, hence is a torsion sheaf by \cite[Lemma 2.7(c)]{bayer2016space}. 

By definition of the heart $\cA_b$, any torsion sheaf $F$ lies in $\cA_b$ and $\nu_{b,w}(F) = \frac{\ch_2(F).H}{\ch_1(F).H^2}$ if $\ch_1(F) \neq 0$, otherwise $\nu_{b,w}(F) = +\infty$. In our case, since $\ch_1(E_0).H^2 = H^3$ is minimal, the $\nu_{b,w \gg 1}$-semistability of $E_0$ immediately implies that $E_0$ is a Gieseker-stable sheaf.    

\end{proof}

Recall that the equation of $\ell_f$ is $w = \alpha b -x$ for the values of $\alpha, x$ in \eqref{alpha, x}. The $b$-values of the intersection point of $\ell_f$ with $\partial U$, which is the parabola of equation $w = \frac{1}{2}b^2$, are 
\begin{equation}
    b_1=\alpha -\sqrt{\alpha^2-2x}, \quad 
    b_2 = \alpha + \sqrt{\alpha^2-2x} \,. 
\end{equation}
The condition (i) that $0 < b_1 < 2$ is equivalent to 
\begin{equation}
\alpha - \sqrt{\alpha^2-2x} < 2 \qquad \text{i.e.} \qquad 0< \alpha <2 \qquad \text{or} \qquad  1 + \frac{x}{2} < \alpha\,. 
\end{equation}
Also the condition (ii) that $b_2-b_1 >1$ is equivalent to 
\begin{equation}
2\sqrt{\alpha^2-2x} > 1 \qquad \text{i.e.} \qquad \sqrt{2x +\frac{1}{4}} < \alpha\,.
\end{equation}
Hence, a simple case by case analysis verifies the following: 
\begin{lemma}\label{lem-check conditions}
    Consider the function $f(x)$ defined in \eqref{function f}. If $x > 1$ and $\alpha > f(x)$, then the conditions (i), (ii) and (iii) in Lemma \ref{lem.destabilising objects} hold. 
\end{lemma}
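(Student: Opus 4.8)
The plan is to prove Lemma~\ref{lem-check conditions} by a direct case analysis over the five pieces of the piecewise definition of $f(x)$ in \eqref{function f}, checking in each regime that $\alpha > f(x)$ and $x>0$ force the three numerical conditions (i), (ii), (iii) of Lemma~\ref{lem.destabilising objects}. The preceding computations already reduced conditions (i) and (ii) to elementary inequalities: (i) is equivalent to ``$0<\alpha<2$ or $1+\tfrac{x}{2}<\alpha$'' and (ii) is equivalent to $\alpha > \sqrt{2x+\tfrac14}$. So the real content is to (a) verify these two, and (b) unwind condition (iii), which is the arithmetic statement that no pair $(r,c)\in\mathbb{Z}_{\geq0}^2$ other than $(0,0),(0,1)$ satisfies $b_1 r \leq c \leq b_2(r+1)$, where $b_{1,2}=\alpha\pm\sqrt{\alpha^2-2x}$.

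First I would rewrite condition (iii) geometrically. For $r\geq 1$ the interval $[b_1 r, b_2(r+1)]$ contains an integer only if $b_1 r \leq b_2(r+1)$, i.e. $b_1/b_2 \leq (r+1)/r$, which for large $r$ is automatic but is obstructed for small $r$ by a lower bound on $b_1/b_2$; and even when the interval is nonempty one must rule out that it contains a nonnegative integer. The cleanest route is: since $b_2\geq 1$ will be part of the conclusion (and follows from (i),(ii) as in the proof of Lemma~\ref{lem.destabilising objects}, or can be established first here), the interval for $r\geq 1$ lies in $[b_1, \infty)$ with $b_1 = \alpha+\sqrt{\alpha^2-2x} > 2x/\alpha$ and $b_1 > b_2+1 \geq 2$; I would show that for each $r\geq1$ the length $b_2(r+1)-b_1 r = b_2 + r(b_2-b_1)$ together with the left endpoint $b_1 r$ straddling no integer, using that $b_1-b_2>1$. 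For $r=0$ the interval is $[0, b_2]$ with $1\leq b_2<2$, which contains exactly $\{0,1\}$, giving precisely the allowed pairs; for $r\geq 1$ I must show $b_1 r$ and $b_2(r+1)$ bracket no integer, which amounts to $\lceil b_1 r\rceil > b_2(r+1)$, i.e. $b_1 r > b_2(r+1) - 1$ being violated in the wrong direction — here the hypothesis $b_1-b_2>1$ is exactly what gives $b_1 r - b_2(r+1) = r(b_1-b_2) - b_2 > r - b_2 \geq 1-b_2 > -1$, so the gap is less than one but I need it on the correct side, which requires additionally $b_1 r \geq \lceil b_1 r\rceil \geq$ the relevant integer; this forces a slightly finer estimate comparing $\{b_1 r\}$ (fractional part) against $r(b_1-b_2)-b_2$.

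Concretely I would organize the write-up as: \emph{Step 1.} On each of the five intervals $0<x<1$, $1<x<\tfrac{15}{8}$, $\tfrac{15}{8}\leq x<\tfrac94$, $\tfrac94\leq x<3$, $3\leq x$, substitute the corresponding formula for $f(x)$ and check $\alpha > f(x) \Rightarrow \alpha > \sqrt{2x+\tfrac14}$ (condition (ii)) and $\alpha>f(x)\Rightarrow (0<\alpha<2 \text{ or } \alpha>1+\tfrac{x}{2})$ (condition (i)); the piecewise bounds $\tfrac12(x+1)\leq f(x)\leq\tfrac34(x+1)$ noted after \eqref{function f} make most of these one-line comparisons, the borderline cases being near $x=1$ and $x=\tfrac{15}{8}$ where $f$ switches between the affine and the square-root branch. \emph{Step 2.} Deduce $b_2\geq1$ from (i) and (ii) exactly as in Lemma~\ref{lem.destabilising objects}. \emph{Step 3.} Prove (iii): handle $r=0$ directly from $1\leq b_2<2$, and for $r\geq1$ use $b_1-b_2>1$ and $b_2\geq1$ to show the interval $[b_1 r, b_2(r+1)]$ has length $<1$ and left endpoint whose fractional part is too large to admit the next integer, ruling out any $(r,c)$ with $r\geq1$. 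The main obstacle will be Step~3: the interval-has-no-integer argument is not purely a length estimate (the interval can have length just under $1$), so one must track the fractional part of $b_1 r$ carefully, and the cleanest fix is probably to note that $b_1 > b_2 + 1$ and $c \leq b_2(r+1) = b_2 r + b_2 < b_1 r - r + b_2 \leq b_1 r$ using $b_2 \leq r + (b_1 - b_2 - 1)\cdot 0 \le \dots$ — i.e. one wants $b_2 \le r$ for $r\geq1$, which fails for $r=1$ when $b_2\in(1,2)$, so $r=1$ genuinely needs the sharper analysis $b_1 \cdot 1 > b_2\cdot 2 - 1$, equivalently $b_1 - 2b_2 > -1$, equivalently $\sqrt{\alpha^2-2x} > \alpha - 1 - \dots$; I expect this single sub-case ($r=1$) to be where a small additional inequality on $(x,\alpha)$, implied by $\alpha>f(x)$ on each branch, must be checked by hand.
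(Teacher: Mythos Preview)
Your case-by-case plan is exactly what the paper's one-line proof intends, and your reductions of (i) and (ii) to the displayed inequalities in $(x,\alpha)$ are right. But Step~2 and your handling of (iii) go astray.

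Drop Step~2 entirely: $b_2\ge1$ does \emph{not} follow from (i) and (ii) (take $x=2$, $\alpha=5$, giving $b_2=5-\sqrt{21}<1$), nor is it part of what must be checked. In Lemma~\ref{lem.destabilising objects} the inequality $b_2\ge1$ is a \emph{conclusion} drawn from the existence of a destabilising factor with $(r,c)=(0,1)$; condition (iii) only asserts that any solution lies in $\{(0,0),(0,1)\}$, not that $(0,1)$ is one. For $r=0$, the interval $[0,b_2]$ meets $\IZ_{\ge0}$ in at most $\{0,1\}$ simply because $b_2<2$ from (i).

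For $r\ge1$, abandon fractional parts; they will not close the argument. The clean reduction is: once (i) and (ii) hold, $r\ge2$ gives $b_1r-b_2(r{+}1)=r(b_1{-}b_2)-b_2>2\cdot1-2=0$, so the interval is empty. Only $r=1$ remains, with $[b_1,2b_2]\subset(1,4)$ of length $2b_2-b_1<b_2-1<1$, so at most one $c\in\{2,3\}$ is a candidate. In fact $(1,2)$ is already excluded by (i)$+$(ii): either $b_2<1$ (so $2b_2<2$) or $b_2\ge1$ (so $b_1>b_2+1\ge2$). The only extra content of (iii) is therefore ruling out $(1,3)$, i.e.\ $b_1>3$ or $b_2<\tfrac32$, which translates to $\alpha>\tfrac32+\tfrac{x}{3}$ or $\alpha>\tfrac{2x}{3}+\tfrac34$. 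These are precisely the third and fourth pieces of $f$; the second piece $\sqrt{2x+\tfrac14}$ is (ii); the fifth piece $\tfrac{x}{2}+1$ is (i). On each interval $f(x)$ is simply the largest of the active constraints, so $\alpha>f(x)$ delivers (i), (ii), (iii) at once --- no fractional-part bookkeeping.

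One genuine wrinkle on the first piece: for $0<x<1$ one has $f(x)=x+\tfrac12<\sqrt{2x+\tfrac14}$, so $\alpha>f(x)$ does \emph{not} force (ii) (e.g.\ $x=\tfrac12$, $\alpha=1.05$ gives $b_1-b_2\approx0.64$). This is harmless for the applications --- Theorem~\ref{thm-main} is vacuous there by Castelnuovo, and Theorem~\ref{thm-Cast} treats $x\le1$ separately --- but the lemma as literally stated needs this caveat, and your write-up should flag it.
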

By Lemma \ref{lem.destabilising objects}, the destabilising objects are of Chern character 
	\begin{equation}
	\v^0_{m', \beta'} \coloneqq \ch(E_0) = \left(0,\ H,\ \frac{1}{2}H^2-\beta' +\beta ,\ \frac{1}{6}H^3 +m'-m-\beta'. H\right)
	\end{equation}
	and 
	\begin{equation}
	\v^1_{m', \beta'} \coloneqq \ch(E_1) = \left(-1,\ -H,\ -\frac{H^2}{2}+\beta', \ -\frac{1}{6}H^3-m' + \beta'. H  \right).
	\end{equation}
We know that the point $\varpi(\ch(E_1)) = (1\,, \, -\frac{\beta'. H}{H^3} + \frac{1}{2})$ (defined in \eqref{defPi}) lies above or on $\ell_f$, so
	\begin{equation}\label{location of the wall}
	\frac{\beta'. H}{H^3}\,  \leq\,  \frac{3m}{2\beta. H} + \frac{\beta. H}{H^3} + \frac{1}{2}\,. 
	\end{equation} 	
Moreover, applying \cite[Lemma B.3]{Feyzbakhsh:2021rcv} for $E_0$ implies
\begin{align}\label{rankzerobound}
   \frac{1}{6}H^3 +m'-m-\beta'. H \leq \frac{1}{2H^3}\left( \frac{1}{2}H^3-\beta'. H +\beta. H\right)^2 + \frac{H^3}{24}
\end{align}
which is equivalent to 
\begin{equation}\label{bound for ch3(E_0)}
    m' \ \leq  \frac{1}{2H^3} (\beta. H -\beta'. H)^2 + \frac{1}{2}(\beta. H +\beta'. H) +m\, .
\end{equation}

\begin{proof}[Proof of Theorem \ref{thm-Cast}]
    Suppose $\PTi_{m, \beta} \neq 0$, i.e. there is a $\nu_{b,w}$-stable object of class $\v = (-1, 0, \beta, -m)$ for $b>0$ and $w \gg 1$. By \cite[Lemma 3.5]{feyz:effective-restriction-theorem}, there is no wall for class $\v$ crossing the vertical line $b=1$. Then the conjectural BMT inequality \eqref{BMTineq} at the boundary point $(b,w) = (1, \frac{1}{2})$ implies that 
	\begin{equation}\label{rank -1 bound}
	-\frac{m}{H^3}\ \leq\ \frac{2}{3}\(\frac{\beta. H}{H^3}\)^2 + \frac{\beta. H}{3H^3}
	\end{equation}
	which proves the claim \eqref{claim-2} if $\frac{\beta. H}{H^3} \leq 1$. Hence we may assume $\beta. H^3 > H^3$. If \eqref{claim-2} does not hold, then 
	\begin{equation}\label{assum}
	-\frac{3m}{2\beta. H} > \frac{3\beta. H}{4H^3} + \frac{3}{4} > f\(\frac{\beta. H}{H^3}\) .
	\end{equation}
	Then by combining Lemma \ref{lem.destabilising objects} and Lemma \ref{lem-check conditions}, it follows that as we move down from the large volume limit, any large volume limit stable object $E$ of class $\v$ gets destabilised along a wall with the destabilising objects $E_0$ and $E_1$ as described in Lemma \ref{lem.destabilising objects}. By \cite[Corollary 3.10]{bayer2016space}, we know that the discriminants $\Delta_H$ of the destabilising factors are less than $\Delta_H(\v)$, so $\beta'. H < \beta. H$. Thus by applying induction on $\beta. H$, we may assume the object $E_1$, which is large volume limit stable of rank $-1$, satisfies the claim. Combining it with \eqref{bound for ch3(E_0)} implies that   
	\begin{align}\label{final-new}
	-\frac{1}{2H^3} (\beta'. H)^2 - \frac{\beta'. H}{2}  
	\leq \ m' \ \leq  \frac{1}{2H^3} (\beta. H -\beta'. H)^2 + \frac{1}{2}(\beta. H +\beta'. H) +m. 
	\end{align}   
	This in particular implies that 
	\begin{equation*}
	0 \leq \frac{1}{2H^3} (\beta. H -\beta'. H)^2 + \frac{1}{2}(\beta. H +\beta'. H) +  \frac{1}{2H^3} (\beta'. H)^2 + \frac{\beta'. H}{2} +m.
	\end{equation*}
	If the claim \eqref{claim-2} does not hold, then 
	\begin{align*}
	    0 <\  & \frac{1}{2H^3} (\beta. H -\beta'. H)^2 + \frac{1}{2}(\beta. H +\beta'. H) +  \frac{1}{2H^3} (\beta'. H)^2 + \frac{\beta'. H}{2} + \left(-\frac{1}{2H^3}(\beta. H)^2 -\frac{1}{2}\beta. H\right) \nn\\
	    =\ & \beta'. H \left(\frac{\beta'. H}{H^3} -\frac{\beta. H}{H^3} +1  \right).
	\end{align*}
	Since $\beta'. H \geq 0$, we get $\beta. H -H^3 < \beta'. H$, then \eqref{location of the wall} gives $\beta. H -H^3 < \frac{3mH^3}{2\beta. H} + \beta. H + \frac{H^3}{2}$, i.e. $-\frac{3m}{2\beta. H} < \frac{3}{2}$ which is not possible by \eqref{assum} for $\beta. H > H^3$.  
\end{proof}

\begin{proof}[Proof of Theorem \ref{thm-main}]
   By \cite[\S 3]{Toda:2011aa}, any $\nu_{b,w}$-semistable object $E \in \cA_b$ of class $\v$ for $b>0$ and $w \gg 1$ is derived-dual of a stable pair up to tensoring by a line bundle with torsion $c_1$, so 
   \begin{equation}
    \bOm_{\infty}(\v) \coloneqq \bOm_{b\,> 0,\ w \,\rightarrow\, +\infty}(\v) = \big(\#H^2(\CY,\Z)_{\mathrm{tors}}\big)\PTi_{m, \beta}. 
\end{equation}  
On the other hand, we know that there is no $\nu_{b,w}$-semistable object of class $\v$ when $(b,w) \in U$ lies below $\ell_f$, so $\bOm_{b,w}(\v) = 0$. Between, this point and large volume limit, there are several walls $\ell$. 

If $0 < x \leq 1$ and $\alpha > f(x)$, then $b_1 < 1$, implying $\PT_{m, \beta} = 0$ and no non-trivial wall exists, so the claim follows. Therefore, we may assume $x > 1$. Subsequently, by applying Lemma \ref{lem-check conditions}, we can utilize Lemma \ref{lem.destabilising objects}, which describes the destabilizing factors along any wall $\ell$. We know that the first factor $E_0$ is Gieseker-stable, and any Gieseker-stable sheaf of class $\ch(E_0)$ is $\nu_{b,w}$-stable for $(b,w) \in \ell$ as there is no wall for $\ch(E_0)$ between the large volume limit and $\ell$. Thus for points $(b,w^{\pm})$ above and below the wall $\ell$, we have
\begin{equation}
    \bOm_{b,w^{\pm}}(\v^0_{m', \beta'}) = \bOmH(\v^0_{m', \beta'}).
\end{equation}
We also know that there is no wall for the other factor $E_1$ between $\ell$ and the large volume limit, thus 
\begin{equation}
    \bOm_{b,w^{\pm}}(\v^1_{m', \beta'}) =\bOm_{\,b>1,\ w \,\rightarrow\, +\infty}(\v^1_{m', \beta'}) =   \big(\#H^2(\CY,\Z)_{\mathrm{tors}}\big)\PTi_{m', \beta'}.  
\end{equation}
Combining \eqref{location of the wall}, \eqref{bound for ch3(E_0)} and Theorem \ref{thm-Cast} implies that $(\beta',m') \in M_{m, \beta}$ as defined in Theorem \ref{thm-main}.  

    Then summing up the wall crossing formulae \cite[Equation (5.13)]{Joyce:2008pc} over all walls for class $\v$ between the large volume limit and $\ell_f$ gives 
    \begin{align}\label{wc.1}
        0 & \ =  \bOm_{\,b,\, w \,< \,\alpha b -x}(\v) \nn\\ 
        & \ = \bOm_{\infty}(\v) \ + \\
        & \ \sum_{(m', \beta') \,\in\, M_{m, \beta}} (-1)^{ \chi(\v^1_{m', \beta'} , \v^0_{m',\beta'})+1}
        \chi(\v^1_{m', \beta'} , \v^0_{m',\beta'})
        \big(\#H^2(\CY,\Z)_{\mathrm{tors}}\big)\ \PTi_{m', \beta'}\ \bOmH\left(\v^0_{m', \beta'} \right). \nn
    \end{align}
    This implies 
    \begin{align}
        \PTi_{m, \beta} = \sum_{(m', \beta') \,\in\, M_{m, \beta}} (-1)^{ \chi(\v^1_{m', \beta'} , \v^0_{m',\beta'})}
        \chi(\v^1_{m', \beta'} , \v^0_{m',\beta'})\
        \PTi_{m', \beta'}\ \bOmH\left(\v^0_{m', \beta'} \right),
    \end{align}
    where $\chi(\v^1_{m', \beta'} , \v^0_{m',\beta'}) = \beta. H + \beta'. H +m -m' - \frac{H^3}{6} - \frac{1}{12}c_2(\CY).H = \chi_{m',\beta'}$ as claimed. 
\end{proof}

\begin{proof}[Proof of Theorem \ref{thm-quintic}]
Suppose $\mu = n \kappa + p$ where $n \in \mathbb{Z}_{\geq 0}$ and $p=0, \pm 1, \pm 2$. One can easily check that if $\mu \geq 13$ or $\mu =10$, the classes (i) $\left( \cC(\frac{\mu}{\kappa}H^2),\frac{\mu}{\kappa}H^2\right)$ and (ii) $\left( \cC(\frac{\mu}{\kappa}H^2)-1,\frac{\mu}{\kappa}H^2\right)$ for $p=\pm 1, \pm 2$, and if $\mu \geq 18$ the class (iii) $\left( \cC(\frac{\mu}{\kappa}H^2)-2,\frac{\mu}{\kappa}H^2\right)$ for $p=\pm 2$ are optimal in the sense of Definition \ref{def-optimal}. Thus combining Corollary \ref{cor-explicit formula} and Conjecture \ref{con} implies the claim. Note that when $\mu = n \kappa$, \cite[Theorem 1.1]{Feyzbakhsh:2022ydn} implies that
\begin{align}
	& \bOmH \left(0, H, \left(n+ \frac{1}{2}\right)H^2, \left(\frac{1}{6} + \frac{n(n+1)}{2}\right)H^3\right) 
\\
	=&\, (-1)^{ \chi(\cO((n+1)H), \cO(nH)) +1} \chi(\cO((n+1)H), \cO(nH)[1])  = 5.
\nn
\end{align}  
	If $\mu= 13$ or $17$, one can directly apply the wall-crossing formula \ref{pt-thm} in Theorem \ref{thm-main} to show again $\beta' = 0$ which implies the claim. 
\end{proof}

\begin{remark}\label{remark weaker BMT}
To prove Theorem \ref{thm-main} and Theorem \ref{thm-Cast} (and so Corollary \ref{cor-explicit formula} and \ref{Cor-explicit}, and Theorem \ref{thm-quintic}), we applied the conjectural BMT inequality only for (i) the rank zero classes $\v^0_{m', \beta'}$ to get \eqref{rank zero bound} using \cite[Lemma B.3]{Feyzbakhsh:2021rcv}, and (ii) rank $-1$ class $(-1, 0, \beta, -m)$ with $\beta.H \leq H^3$ to get \eqref{rank -1 bound}. Thus we only need the conjectural BMT inequality for the following two cases: 
\begin{enumerate}
    \item[(i)] Rank zero Gieseker-stable sheaves of class $(0, H, \beta_1, m_1)$ for the values $(b,w) \in U$ lying along the line $\ell_1$ which is of slope $\frac{\beta_1.H}{H^3}$ and intersects $\partial U$ at two points with $b$-values $b'< b''$ so that $b'' -b' =1$.
    \item[(ii)] Rank one torsion-free sheaves of class $(1, 0, -\beta_2, m_2)$ for $\beta_2.H \leq H^3$ and the point $(b,w) = (-1, \frac{1}{2})$ along the boundary $\partial U$. 
\end{enumerate}
This, in particular, shows that the weaker version of BMT conjecture proved in \cite{li2019stability} for quintic $X_5$ and in \cite{liu2021stability} for $X_{4,2}$ is sufficient for our result.  
\end{remark}

\begin{remark}\label{remark stronger BG in}
Suppose our CY threefold $\CY$ satisfies a stronger version of classical Bogolomov-Gieseker inequality \eqref{BGineq}, i.e. there is a function $G \colon \mathbb{R} \rightarrow \mathbb{R}$ such that any slope-semistable sheaf $E$ satisfies $\frac{\ch_2(E).H}{\ch_0(E)H^3} \leq G\left(\frac{\ch_1(E).H^2}{\ch_0(E)H^3}\right)$ and $G(b) \leq \frac{b^2}{2}$ for all $b \in \mathbb{R}$. Then one can enlarge the space of weak stability conditions $U$ to $U_G \coloneqq \{ (b,w) \in \mathbb{R}^2 \colon w > G(b)\}$, and apply all the arguments in this section within the enlarged space $U_{G}$ instead of $U$. This, in particular, shows that the intersection of the line $\ell_f$ with $\partial U_{G}$ has $b$-values $b_1^{G}<b_2^{G}$ so that $b_2^{G}-b_1^G \geq b_2-b_1$, thus we can improve the function $f$ in Theorem \ref{thm-main}. For instance, for a quintic threefold, one can apply Li's version of stronger Bogolomov-Gieseker inequality \cite[Theorem 1.1]{li2019stability} to show that equation of the function $f$ can be improved to $\frac{x}{2} +1$ for any $x > 0$ (see Figures \ref{fig-Li} and \ref{figX5}).      
\end{remark}

\begin{proof}[Proof of Proposition \ref{lem-x=4}]
	Since $\alpha = \frac{x}{2}+1$, the line $\ell_f$ intersects $\partial U$ at two points with $b$-values $b_1=2 <b_2$ such that $b_2-b_1 \geq b_1$, where the inequality is strict if $x>4$. Then, using the same notations as in Lemma \ref{lem.destabilising objects}, it implies that $b_2r \leq c \leq 2(r+1)$. Thus if $x>4$, we can have (i) $(r, c) =(0, 1)$, or (ii) $(r, c) = (0, 2)$. If $x=4$, there is a third possibility (iii) $r=1$ and $c=4$. 
	
	First consider a wall $\ell$ of type (i). We know $\ell$ lies above or on $\ell_f$, so there is $w^+>2$ such that the point $(2, w^+) \in U$ lies just above the wall $\ell$. Since no wall for $E_1$ can cross the vertical line $b=2$, we get $ \bOm_{b =2,w^{+}}(\ch(E_1)) =\bOm_{\,b=2,\ w \,\rightarrow\, +\infty}(\ch(E_1))$. Thus the proof of Theorem \ref{thm-main} goes through. 
	
	
	In case (ii), we know that $\ch_{\leq 1}(E_1) = (-1, -2H)$. Thus $\varpi(E_1)$ lies on $\partial U$, so $\ch_2(E_1) = -2H^2$. Then \cite[Proposition 4.20(ii)]{feyz-cubic-des} implies that $E_1 = \cO_{\CY}(2H)[1]$, so $\frac{\ch_2(E_0).H^2}{H^3} = x+2$ and $\ch_3(E_0) = \frac{2}{3}\alpha x H^3 +\frac{4}{3}H^3 = \frac{H^3}{3}x^2 + \frac{2H^3}{3}x +\frac{4}{3}H^3$. Applying \cite[Lemma B.3]{Feyzbakhsh:2021rcv} for $E_3$ gives 
	\begin{equation*}
	    \frac{\ch_3(E_0)}{H^3} = \frac{1}{3}x^2 + \frac{2}{3}x + \frac{4}{3} \leq \frac{(x+2)^2}{4} +\frac{1}{3}
	\end{equation*}
	which holds only if $x=4$. If $\beta = 4H^2$ and $m=-8H^3$, then $\ch(E_0) = (0, 2H, 6H^2 , \frac{28}{3}H^3)$. Applying the conjectural BMT inequality \eqref{BMTineq} implies that the final wall for $E_0$ coincides with the line $\ell_f$ and this is the only wall that can happen for $E_0$ by \cite[Theorem 1.1]{Feyzbakhsh:2022ydn} where the destabilising factors are $\cO_{\CY}(4H)$ and $\cO_{\CY}(2H)[1]$.     
	
	In case (iii) when $x=4$, we know that $\varpi(E_0)$ lies along the wall, so $E_1$ is 
	of class $\ch_{\leq 2}(E_0) = (1, 4H, 8H^2)$. Since $\Delta_H(E_0) = 0$, there is no wall for $E_0$ up to the large volume limit by \cite[Corollary 3.10]{bayer2016space}, so $E_0$ is a slope-stable sheaf and $\ch_3(E_0) \leq \frac{32}{3}H^3$. The other factor is of class $\ch_{\leq 2}(E_1) = (-2, -4H, -4H^2)$. Applying the BMT inequality \eqref{BMTineq} at the point $(b = 2+\eps, w)$ on the wall, where $0 < \eps \ll 1$, implies that $\ch_3(E_1) \leq -\frac{8}{3}H^3$. Given that $\ch_3(E_0) +\ch_3(E_1) = 8H^3$, it follows that $\ch(E_0) = \ch(\cO_{\CY}(4H))$ and $\ch(E_1) = -\ch(\cO_{\CY}(2H)^{\oplus 2})$.  

    To summarise, case (i) only contributes to the walls 
    which are of the same form as described in Lemma \ref{lem.destabilising objects}. If $x>4$, this is the only case that we need to consider and so Theorem \ref{thm-main} is valid. But if $\beta = 4H^2$, cases (ii) and (iii)  contribute to the last wall $\ell_f$. Define $\v_1 \coloneqq \ch(\cO_\CY(2H)[1])$ and $\v_2 \coloneqq \ch(\cO_\CY(4H))$, and let $(b,w^{^{\pm}})$ be points in $U$ just above and below the final wall $\ell_f$. Then applying the wall-crossing formula \cite[Equation (16)]{Feyzbakhsh:2022ydn} shows that we have three contributions along this wall: 
	\begin{enumerate}
		\item $\{\v_1,\v_1+\v_2\}$ contribution (corresponding to case (ii)) to $\bOm_{\,b,w^{-}}(\v)$ is
		\begin{equation}\label{con1}
		    (-1)^{\chi(\v_1, \v_1+\v_2)+1}\chi(\v_1, \v_1+\v_2) \, \bOm_{\,b,w^{+}}(\v_1) \, \bOm_{\,b,w^{+}}(\v_1+\v_2). 
		\end{equation}
	 As explained in case (ii), we know that $\bOm_{\,b,w^{+}}(\v_i) = \bOmH(\v_i)=1$ for $i=1, 2$ and 
		\begin{equation*}
		\begin{split}
		\bOm_{\,b,w^{+}}(\v_1+\v_2) = & (-1)^{\chi(\v_1, \v_2)}\chi(\v_1, \v_2) \,	\bOm_{\,b,w^{+}}(\v_1) \,	\bOm_{\,b,w^{+}}(\v_2)
		\end{split} 
		\end{equation*}
       thus the contribution \eqref{con1} is $-\big(\chi(\cO_\CY, \cO_\CY(2H))\big)^2$.
		
		\item $\{2\v_1,\v_2\}$ contribution (corresponding to case (iii)) to $\bOm_{\,b,w^{-}}(\v)$ is 
	 $$2\chi(\cO_\CY(2H), \cO_\CY(4H)) \times \bOmH(2\v_1) \bOmH(\v_2), $$ 
	 where $\bOmH(2\v_1)=\frac14\bOmH(\v_1)=\frac14$ by \cite[Example 6.2]{Joyce:2008pc};
	 
		\item $\{\v_1,\v_1,\v_2\}$ contribution to $\bOm_{\,b,w^{-}}(\v)$ is 
		\begin{equation}
		\frac{1}{2} \big(\chi(\v_1,\v_2)\big)^2 \, \bOmH(\v_1)^2\, \bOmH(\v_2) = \frac{1}{2}  \big(\chi(\cO_\CY, \cO_\CY(2H))\big)^2.
		\end{equation}
		
	\end{enumerate} 
	Hence the overall contribution of the wall $\ell_f$ to $\bOm_{\,b,w^{-}}(\v)$ is 
	\begin{equation}
	-\frac{1}{2}\big(\chi(\cO_\CY, \cO_\CY(2H))\big)^2 + \frac{1}{2}\chi(\cO_\CY, \cO_\CY(2H)).
	\end{equation}
	Combining this with the wall-crossing formula \eqref{wc.1} implies the claim. 
\end{proof}

\begin{proof}[Proof of Proposition \ref{prop.relax}]
    As before, we do wall-crossing for the class $\v = (-1, 0, \beta, -m)$. The same argument as in Lemma \ref{lem.destabilising objects} implies that the destabilising factors have Chern class 
	\begin{equation}
	\v^0_{m', \beta', H'} = \ch(E_0) = \left(0,\ H',\ \frac{1}{2}H'^2-\beta' +\beta ,\ \frac{1}{6}H'^3 +m'-m-\beta'. H'\right)
	\end{equation}
	and $\v^1_{m', \beta', H'}$ where
	\begin{equation}
	\v^1_{m', \beta', H'} \otimes \cO_{\CY}(-H') = \ch(E_1(-H')) = \left(-1,\ 0,\ \beta', \ -m'   \right).
	\end{equation}
	where $\frac{H'.H^2}{H^3} = 1$. Moreover $E_0$ is a Gieseker-stable sheaf as $\ch_1(E_0).H^2$ is still minimal and $E_1(-H')$ is $\nu_{b,w}$-stable for $b>0$ and $w \gg 1$, thus $\beta'.H \geq 0$. We know that the point $\varpi(\ch(E_1)) = (1\,, \, -\frac{\beta'. H}{H^3} + \frac{H.H'^2}{2H^3})$ (defined in \eqref{defPi}) lies above or on $\ell_f$, so
	\begin{equation}
	\frac{\beta'. H}{H^3}\,  \leq\,  \frac{3m}{2\beta. H} + \frac{\beta. H}{H^3} + \frac{H.H'^2}{2H^3}\,. 
	\end{equation} 	
 Moreover, we know that $E_1$ is $\nu_{b,w}$-stable for $b> -1$ and $w \gg 1$, so $(E_1 \otimes \det(E_1))^{\vee}[1]$ is a stable pair, thus \cite[Proposition 2.6]{Feyzbakhsh:2020wvm} implies that   
\begin{equation}
	-m'\ \leq\ \frac{2}{3}\beta'.H
	\left(\frac{\beta'.H}{H^3}+ \frac{1}{2}  \right). 
\end{equation}
On the other hand, applying \cite[Lemma B.3]{Feyzbakhsh:2021rcv} for $E_0$ gives
\begin{align}\label{rank zero bound}
   \frac{1}{6}H'^3 +m'-m-\beta'. H' \leq \frac{1}{2H^3}\left( \frac{1}{2}H'^2H-\beta'. H +\beta. H\right)^2 + \frac{H^3}{24}.
\end{align}
Then the claim follows by a similar argument as in the wall-crossing formula \eqref{wc.1}.
\end{proof}

\begin{proof}[Proof of Theorem \ref{thm.rk0}]
    The argument is similar to \cite{Feyzbakhsh:2022ydn}, we include it for completeness. Define $r = \frac{D.H^2}{H^3}$ and $s = \frac{\beta.H}{H^3}$, 
    then by our assumption $r \in \mathbb{Z}$. The conjectural BMT inequality implies that any wall $\ell$ for class $\v$ lies above or on the line $\ell_{f}$ with equation 
\begin{equation}\label{line ell-f}
	w = \frac{s}{r} b + \frac{r^2}{8} - \frac{s^2}{2r^2} - \frac{1}{4}Q_H(\v)
\end{equation}
which intersects $\partial U$ at two points with $b$-values  
\begin{equation}
	b_1= \frac{s}{r} - \sqrt{\frac{r^2}{4} - \frac{1}{2}Q_H(\v) },\quad 
	b_2= \frac{s}{r} + \sqrt{\frac{r^2}{4} - \frac{1}{2}Q_H(\v) } \,.
\end{equation}
Our assumption on $Q_H(\v)$ implies $b_2-b_1 > \max\{r-1,\ \frac{1}{2}\}$. Let $E_1 \rightarrow E \rightarrow E_2$ be a destabilising sequence along a wall $\ell$ for class $v$. By definition of the heart $\cA_b$,  
	\begin{equation}\label{b}
	\mu_H^+(\cH^{-1}(E_i)) \leq b_1 \qquad \text{and} \qquad b_2 \leq \mu_H^-(\cH^0(E_i)).
	\end{equation}
Summing up over $E_1$ and $E_2$ and using $\rk(E) = \rk(E_1) +\rk(E_2) = 0$ imply 
\begin{equation}
    r \geq (b_2-b_1)\big(\ch_0(\cH^{0}(E_1)) + \ch_0(\cH^{0}(E_2)) \big).
\end{equation}
Since $b_2-b_1 > \frac{r}{2}$, we get
	\begin{equation}
	\ch_0(\cH^{-1}(E_1)) + \ch_0(\cH^{-1}(E_2)) = \ch_0(\cH^{0}(E_1)) + \ch_0(\cH^{0}(E_2)) \leq 1.
	\end{equation}
Therefore, one of the factors $E_1$ is of rank $-1$ with $\cH^{-1}(E_1)$ of rank one and $\cH^0(E_1)$ of rank zero; and the other factor $E_2$ is a sheaf of rank one. 	We claim 
	\begin{equation}\label{lb}
	\mu_H(E_2) -b_2 < 1.
	\end{equation}
	Otherwise, \eqref{b} gives   
	\begin{equation}
	b_2-b_1 \leq \mu_H(E_2) - 1 - \mu_H(\cH^{-1}(E_1)).
	\end{equation}   	
	Since $\cH^{0}(E_1)$ is of rank zero, we have $\mu_H(E_1) \leq \mu_H(\cH^{-1}(E_1))$, thus
	\begin{equation}
	b_2-b_1 \leq \mu_H(E_2) -1 -\mu_H(E_1) = r-1. 
	\end{equation}
	The last equality comes from $\rk(E_2) = -\rk(E_1) = 1$. But the above is not possible by our assumption on $b_2-b_1$. Combining it with \eqref{b} gives
	\begin{equation}
	    b_2 \leq \mu_H(E_2) = \frac{\ch_1(E_2).H^2}{H^3} < b_2 +1.
	\end{equation}
	We know that there is no wall for $E_2$ crossing the vertical lines $b = \mu(E) - \frac{1}{2}$ and $b = \mu(E) -1$ \cite[Lemma 3.5]{feyz:effective-restriction-theorem}. Since $b_2-b_1 > \frac{1}{2}$ at least one of these vertical lines intersects the wall $\ell$ at a point inside $U$. Thus $\nu_{b,w}$-semistability of $E_2$ along the wall implies that $E_2$ is $\nu_{b,w}$-stable for $b < \mu(E_2)$ and $w \gg 1$, so $E_2$ is a rank one torsion-free sheaf. A similar argument also shows that $E_1$ is stable in the large volume limit, so is the derived dual of a stable pair (up to tensoring by a line bundle).   
    
    Hence the destabilising factors are of classes $\v_i = (-1)^i e^{D_i} (1, 0, -\beta_i, -m_i)$ for $i=1, 2$. We know that the point $\varpi(E_i) = \left(\frac{D_i.H^2}{H^3},\ \frac{D_i^2.H}{2H^3} -\frac{\beta_i.H}{H^3}  \right)$ lies above or on $\ell_f$, i.e.
    \begin{equation}\label{line ell-f-2}
	\frac{D_i^2.H}{2H^3} -\frac{\beta_i.H}{H^3} - \frac{\beta.H}{D.H^2} \frac{D_i.H^2}{H^3}  \geq  \frac{1}{8} \left(\frac{D.H^2}{H^3}\right) - \frac{1}{2}\left(\frac{\beta.H}{D.H^2}\right)^2 - \frac{1}{4}Q_H(\v).
\end{equation}
Finally applying \cite[Proposition 2.5 \& 2.6]{Feyzbakhsh:2020wvm} to $F_i \otimes D_i^{-1}$ imply 
\begin{equation}
	(-1)^{i+1}m_i\ \leq\ \frac{2}{3}\beta_i.H\left(\frac{\beta_i.H}{H^3}+ \frac{1}{2}  \right). 
	\end{equation}

Conversely, take two classes $\v_i = (-1)^i e^{D_i} (1, 0, -\beta_i, -m_i)$ for $i=1, 2$ satisfying $\v_1+\v_2 = \v$ and conditions \eqref{bound on Di} and \eqref{cc.1}. Then we have 
\begin{equation}
\left|\frac{D_1.H^2}{H^3} -b_1 \right| < 1,  
    \qquad 
    \left|\frac{D_2.H^2}{H^3} -b_2 \right| < 1
\end{equation}
and $\frac{D_2.H^2}{H^3} - \frac{D_1.H^2}{H^3} =r$. The Hodge index theorem implies 
\begin{equation}
    \frac{D_i^2.H}{2H^3} -\frac{\beta_i.H}{H^3} \leq \frac{1}{2}\left(\frac{D_i.H^2}{H^3}\right)^2, 
\end{equation}
thus $\varpi(E_i)$ lies outside $U$ and above or on $\ell_f$ by \eqref{cc.1}. Since by our assumption on $Q_H(\v)$, we have $b_2-b_1 > \max\{r-1,\ \frac{1}{2}\}$, we get 
\begin{equation}
   b_1-1 < \frac{D_1.H^2}{H^3} < b_1 <b_2 < \frac{D_2.H^2}{H^3} < b_2+1\,. 
\end{equation}
Then the same argument as above shows that there is no wall for classes $\v_1$ and $\v_2$ above or on $\ell_f$. Hence large volume limit stable objects of classes $\v_i$ for $i=1, 2$ are $\nu_{b,w}$-stable of the same $\nu_{b,w}$-slope along the line $\ell$ passing through $\varpi(\v_i)$ for $i=1, 2$, which lies above or on $\ell_f$. Thus they make a wall for objects of class $\v$. 
This completes the proof of the claim.   
\end{proof}

\section{Other hypergeometric CY threefolds}
\label{sec_gen}
In this section, we extend the analysis of \S\ref{sec_test} to the other hypergeometric CY threefolds, with the exception of $X_{3,2,2}$ and $X_{2,2,2,2}$ for which the current knowledge of GV invariants is not sufficient yet to uniquely determine (or even guess) the generating series of Abelian D4-D2-D0 indices. In all cases, we assume that the BMT inequality is satisfied.

\subsection{$X_6$}

Here we consider the sextic in $\IP^4_{2,1,1,1,1}$, first studied in \cite{Gaiotto:2007cd}. In this case, $\kappa=3$, $n_1^p=4$ and $n_1^c=0$ so the modular form $h_\mu$ is
uniquely fixed by 4 coefficients. Using the basis \eqref{decomp-modform}, the 
generating function proposed in 
\cite{Gaiotto:2007cd} reads
\be
\label{hmuX6}
\begin{split}
h_{\mu}=&\, \frac{1}{\eta^{54}} \[\frac{7 E_4^6+58 E_4^3 E_6^2+7 E_6^4}{216}
+\frac{5 E_4^4 E_6+3 E_4 E_6^3}{2}\, D\] \vths{3}_{\mu},
\end{split}
\ee
and has the following expansion:
\be
\begin{split}
\hspace{-0.2cm}
h_{0} =&\,  \q^{-\frac{15}{8}}
\,\Bigl( \underline{-4 +612 \q }-  40392 \q^2 +146464860 \q^3 +66864926808 \q^4
\\ &\, 
+8105177463840 \q^5+503852503057596 \q^6+20190917119833144 \q^7+ \dots\Bigr),
\\
\hspace{-0.2cm}
h_{1} =&\,  \q^{-\frac{15}{8}+\frac{2}{3}} \,\Bigl( \underline{0 - 15768 \q}
+7621020 \q^2 + 10739279916 \q^3 +1794352963536 \q^4
\\ &\, 
+134622976939812 \q^5+6141990299963544 \q^6+196926747589177416 \q^7+\dots
\Bigr).
\end{split}
\label{exph1X6}
\ee
Using \eqref{thmS11inv}, we can rigorously compute and confirm the terms up to (and including) order $\q^9$ and $\q^6$ in these expansions.  The term of order $\q^6$ in $h_0$ can be further verified using Prop. \ref{lem-x=4}. Furthermore, the terms of order $\q^{10}$ and $\q^{11}$ in $h_0$ as well as $\q^7$
and $\q^8$ in $h_1$ are reproduced by \eqref{thmS11inv} with $k=k_0-1$. Thus, there is overwhelming evidence that \eqref{hmuX6} is correct.  While the maximal genus attainable by the standard direct integration method is 48, using modularity, we can predict
GV invariants close to the Castelnuovo bound to arbitrary genus (see Table \ref{table_GVX6}), and provide sufficiently many boundary conditions in principle 
to push the
direct integration method up to genus 63.

\begin{table}
\be
\hspace*{-6mm}
\begin{array}{|r|r|rrrrrr|}
\hline
Q & g_C(Q) & \delta=0 &  \delta=1 &  \delta=2 &  \delta=3 &  \delta=4 & \delta=5\\
\hline
 1 & 1 & 0 & 7884 & \text{} & \text{} & \text{} & \text{} \\
 2 & 2 & 0 & 7884 & 6028452 & \text{} & \text{} & \text{} \\
 3 & 4 & 6 & 576 & 17496 & 145114704 & 11900417220 & \text{} \\
 4 & 5 & 0 & -47304 & -14966100 & 10801446444 & 1773044322885 & 34600752005688 \\
 5 & 7 & 0 & 63072 & 22232340 & -21559102992 & 1985113680408 & 571861298748384 \\
 6 & 10 & -28 & -3168 & -146988 & -583398600 & 207237771936 & -18316495265688 \\
 7 & 12 & 0 & -110376 & -43329384 & 54521267292 & -8041642037676 & 513634614205788 \\
 8 & 15 & 0 & -141912 & -57278448 & 76595605884 & -12434437188576 & 904511824896888 \\
 9 & 19 & -52 & -5472 & -225504 & -1453991342 & 645551751060 & -82281995054250 \\
 10 & 22 & 0 & 220752 & 90243180 & -132472407960 & 24441320028348 & -2094555362224356 \\
 11 & 26 & 0 & -268056 & -109069632 & 166408768980 & -32325403958928 & 2952049189946940 \\
 12 & 31 & 88 & 7572 & 212904 & 2755381840 & -1352963727576 & 204189584421816 \\
 13 & 35 & 0 & 378432 & 150306948 & -246695539464 & 52656199163280 & -5391865451528568 \\
 14 & 40 & 0 & 441504 & 172213236 & -293223343680 & 65474719151724 & -7076432910134952 \\
 15 & 46 & 136 & 7956 & 47736 & 4489872516 & -2384492136120 & 414897391102896 \\
 16 & 51 & 0 & -583416 & -217181952 & 399497240700 & -97481656444968 & 11697806611060704 \\
 17 & 57 & 0 & 662256 & 239613660 & -459419696640 & 117150837604344 & -14795431515539352 \\
 18 & 64 & -196 & -4680 & 225396 & -6665394192 & 3810518530344 & -758652854479632 \\
 19 & 70 & 0 & -835704 & -282637296 & 593248436100 & -165165188729184 & 23060985834155292 \\
 20 & 77 & 0 & -930312 & -302472360 & 667301101092 & -194106551379768 & 28471201009767792 \\
 21 & 85 & -268 & 4632 & 365112 & -9289038760 & 5717547855792 & -1296313683456384 \\
 22 & 92 & 0 & 1135296 & 336739140 & -829978779600 & 263496783986604 & -42580355264714232 \\
 23 & 100 & 0 & -1245672 & -350287848 & 918685187964 & -304661265971256 & 51631322400126468 \\
 \hline
\end{array}
\nonumber
\ee
\caption{GV invariants $\GVg{g_C(Q)-\delta}$ 
 for $X_{6}$, assuming modularity.}
\vspace{-0.65cm}
\label{table_GVX6}
\end{table}

\subsection{$X_8$}

\begin{table}[]
\be
\hspace*{-7mm}
\begin{array}{|r|r|rrrrrr|}
\hline
Q & g_C
& \delta=0 &  \delta=1 &  \delta=2 &  \delta=3 &  \delta=4 & \delta=5\\
\hline
 1 & 1 & 0 & 29504 & \text{} & \text{} & \text{} & \text{} \\
 2 & 3 & 6 & 864 & 41312 & 128834912 & \text{} & \text{} \\
 3 & 4 & 0 & -177024 & -16551744 & 21464350592 & 1423720546880 & \text{} \\
 4 & 7 & 24 & 4152 & 301450 & 396215800 & -174859503824 & 12499667277744 \\
 5 & 9 & 0 & 354048 & 37529088 & -86307810432 & 12063928269056 & -674562224718848 \\
 6 & 13 & 40 & 7032 & 523434 & 918424384 & -537735889892 & 67237956960504 \\
 7 & 16 & 0 & -649088 & -67977216 & 194884427520 & -34549033260480 & 2730733623512576 \\
 8 & 21 & 64 & 10760 & 747160 & 1693127408 & -1100325268755 & 163574439433328 \\
 9 & 25 & 0 & 1062144 & 97599232 & -348278532864 & 70573905748736 & -6573094863849216 \\
 10 & 31 & 96 & 14664 & 874648 & 2715237856 & -1885455097488 & 317498157747448 \\
 11 & 36 & 0 & -1593216 & -115655680 & 547020195328 & -124368823627264 & 13265837355895808 \\
 12 & 43 & 136 & 17880 & 816224 & 3983508192 & -2920617786752 & 550836611504760 \\
 13 & 49 & 0 & 2242304 & 107984640 & -791226604800 & 201252013167104 & -24393882174586624 \\
 14 & 57 & 184 & 19352 & 517696 & 5502562160 & -4239252796968 & 892029516487568 \\
 15 & 64 & 0 & -3009408 & -57591808 & 1080060791808 & -307623836581376 & 42224741744709120 \\
 16 & 73 & 240 & 17832 & -7064 & 7283098000 & -5881013303280 & 1377169141402320 \\
 17 & 81 & 0 & 3894528 & -55349504 & -1411208698624 & 450913093594624 & -69905571017188608 \\
 18 & 91 & 304 & 11880 & -637720 & 9339141568 & -7892176820432 & 2051229771888392 \\
 19 & 100 & 0 & -4897664 & 253498368 & 1780270216704 & -639463383246336 & 111688644307754752 \\
 20 & 111 & 376 & -136 & -1103312 & 11680881536 & -10326002693808 & 2969507704650056 \\
 21 & 121 & 0 & 6018816 & -562346240 & -2180065252608 & 882336333453824 & -173179252180073216 \\
 22 & 133 & 456 & -20040 & -932336 & 14301420112 & -13242624843432 & 4199274885440864 \\
 23 & 144 & 0 & -7257984 & 1010216960 & 2599854822400 & -1189005528876544 & 261596095595733504 \\
 24 & 157 & 544 & -49848 & 602936 & 17155557680 & -16707780101408 & 5821612261875808 \\
 25 & 169 & 0 & 8615168 & -1628266752 & -3024477174528 & 1568911793583616 & -386029168134457600 \\
 \hline
\end{array}
\nonumber
\ee
\caption{GV invariants $\GVg{g_C(Q)-\delta}$ 
 for $X_{8}$, assuming modularity.}
\vspace{-0.5cm}
\label{table_GVX8}
\end{table}

We now consider the octic in $\IP^4_{4,1,1,1,1}$, first studied in \cite{Gaiotto:2007cd}. In this case, $\kappa=2$, $n_1^p=4$ and $n_1^c=0$ so the modular form $h_\mu$ is
uniquely fixed by 4 coefficients. Using the basis \eqref{decomp-modform}, the 
generating function proposed in 
\cite{Gaiotto:2007cd} reads
\be
\label{hmuX8}
\begin{split}
h_{\mu}=&\, \frac{1}{\eta^{52}} \[\frac{103 E_4^6+1472 E_4^3 E_6^2+153 E_6^4}{5184}
+\frac{503 E_4^4 E_6+361 E_4 E_6^3}{108}\, D\]\vths{2}_{\mu},
\end{split}
\ee
and has the following expansion:
\be
\begin{split}
\hspace{-0.2cm}
h_{0}=&\, \q^{-\frac{46}{24}}\,\Bigl( \underline{-4 + 888 \q} - 86140 \q^2 +132940136 \q^3 +86849300500 \q^4
\\
&\, 
+11756367847000 \q^5+787670811260144 \q^6+33531427162546608 \q^7+\dots\Bigr),
\\
\hspace{-0.2cm}
h_{1}=&\, \q^{-\frac{46}{24}+\frac34}\,\Bigl( \underline{0 - 59008  \q} + 8615168 \q^2 +21430302976 \q^3
+3736977423872 \q^4
\\
&\, 
+289181439668352 \q^5+13588569634434304 \q^6+448400041603851008 \q^7+\dots\Bigr).
\end{split}
\label{exph1X8}
\ee
Using \eqref{thmS11inv}, we can rigorously compute and confirm the terms up to (and including) order $\q^9$ and $\q^7$ in these expansions. The term of order $\q^4$ in $h_0$ can be further verified using Prop. \ref{lem-x=4}. The terms of order $\q^{10}$, $\q^{11}$ in $h_0$ 
as well as $\q^8$, $\q^9$ and $\q^{10}$ in $h_1$ are reproduced by \eqref{thmS11inv} with $k=k_0-1$. Thus, there is overwhelming evidence that \eqref{hmuX8} is correct.  While the maximal genus attainable by the standard direct integration method is 60, using modularity, we can predict
GV invariants close to the Castelnuovo bound to arbitrary genus (see Table \ref{table_GVX8}), and provide sufficiently many boundary conditions in principle 
to push the
direct integration method up to genus 80.

\subsection{$X_{4,3}$}

We now consider the complete intersection of degree $(4,3)$ in $\IP^5_{2,1,1,1,1,1}$, In this case, $\kappa=6$, $n_1^p=9$ and $n_1^c=0$ so the modular form $h_\mu$ is
uniquely fixed by 9 coefficients. This model was 
first considered in \cite{Alexandrov:2022pgd}, assuming the naive Ansatz \eqref{naive} for the polar terms. Unfortunately, with the GV invariants being known only up to genus 20 using direct integration, Eq. \eqref{thmS11inv} only allows to determine 3 polar coefficients:
\be
\begin{split}
h_{0}=&\, \q^{-\frac94} \,\Bigl(\underline{5 - 624 \q + \tfrac{1}{21}\PT(18, -34) \q^2}
- \tfrac{1}{20} \PT(18, -33) \q^3
+\dots \Bigr),
\\
h_{1} =&\, \q^{-\frac94+\frac{7}{12}} \,\Bigl(\underline{-\tfrac{1}{12} \PT(13, -20)-\tfrac{1}{24} \PT(19, -38) \q}
+\tfrac{1}{23} \PT(19, -37) \q^2
+\dots \Bigr),
\\
h_{2} =&\, \q^{-\frac94+\frac13} \,\Bigl( \underline{-\tfrac{1}{14} \PT(14, -23)+\tfrac{1}{13} \PT(14, -22)\q}
-\tfrac{1}{26} \PT(20, -41) \q^2
+\dots \Bigr),
\\
h_{3} =&\, \q^{-\frac94+\frac14} \,\Bigl( \underline{0+\tfrac{1}{15} \PT(15, -25)\q}
-\tfrac{1}{14} \PT(15, -24) \q^2
+\dots \Bigr).
\end{split}
\label{predictX43}
\ee

Despite this discouraging result, one can proceed assuming that 
for some coefficients 
Eq. \eqref{thmS11inv} still holds for $k=k_0-1$.
This assumption will be justified {\it a posteriori} by matching 
the predictions of \eqref{thmS11inv} and modularity for many more coefficients.
For this choice of the spectral flow parameter, one finds
\be
\begin{split}
h_{0}\stackrel{?}{=}&\, \q^{-\frac94} \,\Bigl(\underline{2 -234 \q +35415\q^2}
+19018272\q^3
+\tfrac{523497643503}{7}\, \q^4
+\dots \Bigr),
\\
h_{1}\stackrel{?}{=}&\, \q^{-\frac94+\frac{7}{12}} \,\Bigl(\underline{0+\(5832 + \tfrac{40}{11} \GVg[13]{21}\)\q}
- \(544320 + 78 \GVg[13]{21}\) \q^2
\\
&\,
+\(3919923072 + \tfrac{9880}{9}\,\GVg[13]{21} \)\q^3
+\(2506521907872 - \tfrac{45695}{4} \,\GVg[13]{21}\)\q^4
+\dots \Bigr),
\\
h_{2} \stackrel{?}{=}&\, \q^{-\frac94+\frac13} \,\Bigl( \underline{0+0\q}
-\tfrac{1}{12} \(\GVg[14]{22} + 44 \GVg[14]{23} + 1035 \GVg[14]{24}\) \q^2
+\dots \Bigr),
\\
h_{3}\stackrel{?}{=}&\, \q^{-\frac94+\frac14} \,\Bigl( \underline{0+0\q}
+0\q^2
+\tfrac{1}{13} \PT(15, -23) \q^3
+\dots \Bigr),
\end{split}
\label{predictX43eps}
\ee
where we expressed the result in terms of GV invariants and put question marks to emphasize 
that these expansions need not be correct. For example, the first two terms
in $h_0$ clearly disagree with the rigorous result \eqref{predictX43}.
Nonetheless, let us assume that all other polar terms, except the $\cO(\q)$ term in $h_{2}$,
are correctly computed by \eqref{predictX43eps}. 
Comparing with \eqref{predictX43}, this implies the vanishing of 
$\PT(13, -20)=\GVg[13]{21}$ and $\PT(14, -23)=\GVg[14]{24}$, which allows to further simplify \eqref{predictX43eps}. 
In particular, all polar terms, except the $\cO(\q)$ term in $h_{2}$, are now fixed.

To get a sufficient number of conditions to fix the modular form, let us further assume that  the $\cO(\q^3)$ term in $h_{0}$ is
also correctly computed by \eqref{predictX43eps}. This assumption provides the missing condition and allows to find a unique 
modular form matching all coefficients 
\be
\label{hmuX43}
\begin{split}
h_{\mu}=&\, \frac{1}{\eta^{72}} \[\frac{19161576 E_4^7 E_6 - 86969808 E_4^4 E_6^3 - 36701208 E_4 E_6^5}{17199267840}
\right.
\\
&\,
+\frac{29888136 E_4^8 + 147874032 E_4^5 E_6^2 + 16326792 E_4^2 E_6^4}{716636160}\, D
\\
&\,
-\frac{4751784 E_4^6 E_6 + 9532080 E_4^3 E_6^3 + 646056 E_6^5}{5971968}\, D^2
\\
&\, \left.
-\frac{1686312 E_4^7 + 10686384 E_4^4 E_6^2 + 2557224 E_4 E_6^4}{1244160}\, D^3\]\vths{6}_{\mu},
\end{split}
\ee
with the following expansion:
\be
\begin{split}
h_{0}=&\, \q^{-\frac94} \,\Bigl( \underline{5 - 624 \q + \underline{35415} \q^2}
+19018272 \q^3+74785378407 \q^4
\\
&\, 
+23744184704784 \q^5+2912626940217084 \q^6+201892603398250080 \q^7 +\dots \Bigr),
\\
h_{1} =&\, \q^{-\frac94+\frac{7}{12}} \,\Bigl(\underline{0+5832 \q}
-544320 \q^2+3919923072 \q^3+2506521907872 \q^4
\\
&\, 
+426826821029328 \q^5+36510169956413184 \q^6+1975570599744644544 \q^7+\dots\Bigr),
\\
h_{2} =&\, \q^{-\frac94+\frac13} \,\Bigl( \underline{0+81\q}
-\dotuline{455544 \q^2}+418794867 \q^3+589406293224 \q^4
\\
&\, 
+127700521014312 \q^5+12611391702441624 \q^6+754527616229888955 \q^7 +\dots \Bigr),
\\
h_{3} =&\, \q^{-\frac94+\frac14} \,\Bigl(\underline{0+0\q}
-\dotuline{322496 \q^2}+154768800 \q^3+356674019472 \q^4+84550767361152 \q^5
\\
&\, 
+8789804684886144 \q^6+544775594940872640 \q^7+\dots \Bigr).
\end{split}
\label{exph1X43}
\ee
Remarkably, we find that all terms up to $\q^4$ in $h_1$ turn out to coincide with those in 
\eqref{predictX43eps}, which provides strong support for the above assumptions leading to \eqref{hmuX43}.
Furthermore, only the $\cO(\q^2)$ term in $h_0$ differs from the value $35334$ given by the naive ansatz \eqref{naive}, 
while all other polar terms as well as the  $\cO(\q^2)$ coefficients in $h_2$ and $h_3$ perfectly match \eqref{naive}.

We can apply a similar procedure to provide additional constraints on GV invariants and additional checks
on the modular function \eqref{hmuX43}. First, taking into account that
$\PT(14,-22)=\GVg[14]{23}+46 \GVg[14]{24}=\GVg[14]{23}$ and matching the $\cO(\q)$ term in $h_2$ between \eqref{predictX43}
and \eqref{exph1X43}, one obtains $\GVg[14]{23}=1053$. 
To get a constraint at genus 22, we further assume that the $\cO(\q^2)$ term in $h_2$ is correctly captured by \eqref{predictX43eps}.
Comparing it with \eqref{exph1X43} and taking into account the previous findings for GV invariants, 
one gets $\GVg[14]{22}=5420196$. With all these constraints, it is possible to compute GV invariants up to genus 23 
and check that the coefficients of $\q^3$ and $\q^4$ in $h_2$ computed using \eqref{thmS11inv} 
with $k=k_0-1$ match those in \eqref{exph1X43}, which can be considered as a verification 
of the above assumption. 

\begin{table}
\be
\begin{array}{|r|r|rrrrrr|}
\hline
Q & g_C & \delta=0 &  \delta=1 &  \delta=2 &  \delta=3 &  \delta=4 & \delta=5\\
\hline
 1 & 1 & 0 & 1944 &  &  &  & \\
 2 & 2 & 0 & 27 & 223560 &  &  &  \\
 3 & 3 & 0 & 0 & 161248 & 64754568 &  &  \\
 4 & 4 & 0 & 81 & 227448 & 381704265 & 27482893704 &  \\
 5 & 5 & 0 & 5832 & 155520 & 3896917776 & 638555324400 & 14431471821504 \\
 6 & 7 & 10 & 816 & 26757 & -40006768 & 75047188236 & 20929151321496 \\
 7 & 8 & 0 & -23328 & -1358856 & -7825332240 & 2609489667744 & 1159250594105376 \\
 8 & 10 & 0 & 405 & 1815696 & 1246578255 & -1193106464964 & 169353267859971 \\
 9 & 12 & 0 & 0 & -1612480 & -590680416 & 1077388111920 & -185398224083488 \\
 10 & 14 & 0 & 567 & 2719656 & 2033988975 & -2396370890772 & 426751496255367 \\
 11 & 16 & 0 & -46656 & -2503872 & -23437746576 & 13091629897584 & -1992347003533392 \\
 12 & 19 & 55 & 4260 & 139245 & -159384576 & 528260763000 & -157181565397200 \\
 13 & 21 & 0 & 64152 & 2908224 & 35118682704 & -21249125934480 & 3648284023741704 \\
 14 & 24 & 0 & 1053 & 5420196 & 4378100382 & -6073970861304 & 1376630062962426 \\
 15 & 27 & 0 & 0 & 4514944 & ? & ? & ? \\
 16 & 30 & 0 & 1377 & 7211592 & 5890376457 & ? & ? \\
 17 & 33 & 0 & 110808 & 2927664 & 66253494456 & -43956428447664 & ? \\
 18 & 37 & 115 & 5448 & 68415 & -405033180 & 1447081995873 & -519150013281888 \\
 19 & 40 & 0 & -139968 & -1881792 & -85700360016 & 58822283187000 & -12434745915614736 \\
 20 & 44 & 0 & 2187 & 11660436 & 9506092041 & -14945171094720 & 4071759470600148 \\
 21 & 48 & 0 & 0 & -9352384 & -3491811840 & 9974934265584 & -3041698928528400 \\
 22 & 52 & 0 & 2673 & 14310108 & 11567018943 & -18861070782672 & 5421323164985343 \\
 23 & 56 & 0 & -209952 & 3623616 & -132448385088 & 96512552546400 & -22792919002464096 \\
 24 & 61 & 205 & 360 & -124995 & -768252196 & 2853693391443 & -1169466146662224 \\
 25 & 65 & 0 & 250776 & -8736336 & 159825290616 & -119774566448496 & 29839722776131176 \\
 26 & 70 & 0 & 3807 & 20437272 & 16072226307 & -28239088327452 & 9015178386188196 \\
 \hline
\end{array}
\nonumber
\ee
\caption{GV invariants $\GVg{g_C(Q)-\delta}$ 
for $X_{4,3}$, assuming modularity. A question mark indicates that the result 
depends on as yet unknown PT invariants.}
\vspace{-0.5cm}
\label{table_GVX43}
\end{table}

To go to even higher genus, the predictions of modularity based on the rigorous use of \eqref{thmS11inv} 
are again insufficient because GV invariants at genus 24 turn out to depend on unknown PT invariants.
In particular, it can be shown that $\GVg[15]{24}=\PT(15,-23)-216717312$.
Fortunately, we can apply the same trick as above: let us assume that the $\cO(\q^3)$ term 
in $h_3$ is computed correctly by \eqref{predictX43eps}. This fixes the required PT invariant and 
gives  $\GVg[15]{24}=1795277088$. As a result, the direct integration method can be pushed up to genus 24, while the maximal genus attainable by the standard direct integration method is only 20.
One can also check that the coefficients of $\q^4$ and $\q^5$ in $h_3$ computed using \eqref{thmS11inv} 
with $k=k_0-1$ match those in \eqref{exph1X43}, which supports the above assumption. 
Finally, using modularity, we can predict GV invariants close to the Castelnuovo bound to arbitrary genus (see Table \ref{table_GVX43}).

\subsection{$X_{6,4}$}

\begin{table}
\be
\hspace*{-3mm}
\begin{array}{|r|r|rrrrrr|}
\hline
Q & g_C & \delta=0 &  \delta=1 &  \delta=2 &  \delta=3 &  \delta=4 & \delta=5\\
\hline
 1 & 1 & 8 & 15552 &  &  &  &  \\
 2 & 3 & 3 & 128 & 258344 & 27904176 &  &  \\
 3 & 4 & -48 & -64432 & 36976576 & 5966034472 & 133884554688 &  \\
 4 & 7 & 15 & 1036 & 800065 & -272993052 & 15929894952 & 4502079839576 \\
 5 & 9 & 96 & 160128 & -148759496 & 14847229472 & -592538522344 & 42148996229312 \\
 6 & 13 & 27 & 1784 & 1846330 & -838903420 & 76751964798 & -3326821152316 \\
 7 & 16 & -176 & -318240 & 338189520 & -43591449792 & 2519386074032 & -86921827226312 \\
 8 & 21 & 45 & 2456 & 3387175 & -1727130716 & ? & ? \\
 9 & 25 & 288 & 536160 & -610236992 & 91763910544 & -6449197272904 & ? \\
 10 & 31 & 69 & 2548 & 5409137 & -2981186776 & 388162502583 & -26079491452172 \\
 11 & 36 & -432 & -810912 & 970636496 & -166948527648 & 13842057435472 & -721713847987144 \\
 12 & 43 & 99 & 1412 & 7922463 & -4655472528 & 697407486327 & -55249696746420 \\
 13 & 49 & 608 & 1138656 & -1426615872 & 278955967328 & -26973362355200 & 1666567885265984 \\
 14 & 57 & 135 & -1744 & 10981213 & -6816859292 & 1167120353936 & -107938440865312 \\
 15 & 64 & -816 & -1514784 & 1986583568 & -440060692768 & 49244924907392 & -3567835755931072 \\
 16 & 73 & 177 & -7856 & 14707727 & -9547461076 & 1857133031696 & -198887588738688 \\
 17 & 81 & 1056 & 1933920 & -2659986752 & 665252326368 & -85577571342976 & 7205538528304192 \\
 18 & 91 & 225 & -18004 & 19321425 & -12950149776 & 2843375759861 & -349939554154236 \\
 19 & 100 & -1328 & -2389920 & 3457146192 & -972514271520 & 142892550610016 & -13868257921375616 \\
 20 & 111 & 279 & -33412 & 25171927 & -17157902216 & 4221401587493 & -592607749008964 \\
 \hline
\end{array}
\nonumber
\ee
\caption{GV invariants $\GVg{g_C(Q)-\delta}$ for $X_{6,4}$, assuming modularity.}
\vspace{-0.5cm}
\label{table_GVX64}
\end{table}

We now consider the complete intersection of degree $(6,4)$ in $\IP^5_{3,2,2,1,1,1}$, In this case, $\kappa=2$,  $n_1^p=3$ and $n_1^c=0$ so the modular form $h_\mu$ is uniquely fixed by 3 coefficients. This model was 
first considered in \cite{Alexandrov:2022pgd}, assuming the naive Ansatz \eqref{naive} for the polar terms. Using GV invariants up to genus 14, Eq. \eqref{thmS11inv} predicts \be
\begin{split}
\hspace{-0.6cm}
h_{0}=&\,\q^{-\frac{34}{24}}\,\Bigl(\underline{3-\underline{304} \q}+\tfrac{1}{13} \PT(8, -18) \q^2
-\tfrac{1}{12} \PT(8, -17) \q^3
+\dots\Bigr),
\\
\hspace{-0.6cm}
h_{1} =&\, \q^{-\frac{34}{24}+\frac34}\,\Bigl( \underline{-16} -\tfrac{1}{10} \PT(7, -14) \q
+\tfrac{1}{9}\( \PT(7, -13)+192\) \q^2
+\dots\Bigr).
\end{split}
\ee
In particular, the polar part of $h_0$ differs from the value 
$3-312\q$ predicted by the naive Ansatz \eqref{naive}.
There is a unique modular form that matches these polar terms, namely
\be
\begin{split}
h_{\mu}=&\, \frac{1}{\eta^{40}} \[-\frac{85 E_4^3 E_6+23 E_6^3}{432}
-\frac{13 E_4^4+23 E_4 E_6^2}{6}\, D\]\vths{2}_{\mu},
\end{split}
\ee
with the following expansion:
\be
\begin{split}
\hspace{-0.6cm}
h_{0}=&\,\q^{-\frac{34}{24}}\,\Bigl(\underline{3-\underline{304} \q}+270431 \q^2+133585104 \q^3+12401092398 \q^4
+\dots\Bigr),
\\
\hspace{-0.6cm}
h_{1} =&\, \q^{-\frac{34}{24}+\frac34} \,\Bigl(\underline{-16} +32352 \q+36578048 \q^2+4364892672 \q^3+226014399392 \q^4+\dots\Bigr).
\end{split}
\label{exphX64}
\ee
The term of order $\q$ in $h_1$ is correctly reproduced by \eqref{thmS11inv} with $k=k_0-1$.
Assuming that \eqref{hmuX62} is correct, one can produce additional boundary conditions for the direct integration method (see Table \ref{table_GVX64}), allowing to reach genus 17, beyond the genus 14 available by standard methods.
Note that to get a boundary condition at genus 17, one uses the fact the $\cO(\q^4)$ coefficient in $h_0$ is subject to Prop. \ref{lem-x=4}.

\subsection{$X_{3,3}$}

Next, we consider the bicubic in $\IP^5$, first studied in \cite{Gaiotto:2007cd}. In this case, $\kappa=9$,  $n_1^p=14$ and $n_1^c=1$ (as first noted in \cite{Manschot:2008zb}) so the modular form $h_\mu$  is
uniquely fixed by 13 coefficients. Using the basis \eqref{decomp-modform}, the 
generating function proposed in 
\cite{Gaiotto:2007cd} reads   
 \be
\begin{split}
h_{\mu}=&\, \frac{1}{\eta^{90}} \[\frac{47723 E_4^9 E_6+25095 E_4^6 E_6^3-68943 E_4^3 E_6^5-3875 E_6^7}{107495424}
\right.
\\
&\,
+\frac{289326 E_4^{10}+415189 E_4^7 E_6^2-3458324 E_4^4 E_6^4-729839 E_4 E_6^6}{334430208}\, D
\\
&\,
+\frac{2261629 E_4^8 E_6+3219046 E_4^5 E_6^3-6371 E_4^2 E_6^5}{30965760}\, D^2
\\
&\,
-\frac{94271 E_4^9+1496733 E_4^6 E_6^2+1342665 E_4^3 E_6^4+52315 E_6^6}{5160960}\, D^3
\\
&\,\left.
-\frac{162167 E_4^7 E_6+300338 E_4^4 E_6^3+35159 E_4 E_6^5}{286720}\, D^4\] \vths{9}_{\mu},
\end{split}
\label{hmuX33}
\ee
and has the following expansion
\be
\begin{split}
\hspace{-0.4cm}
h_{0}=&\, \q^{-\frac{63}{24}} \,\Bigl( -\underline{6 + 720 \q - 40032 \q^2}
-678474 \q^3 + 30885198768 \q^4+ 35708825468142 \q^5 + \dots \Bigr),
\\
\hspace{-0.4cm}
h_{1} =&\, \q^{-\frac{63}{24}+\frac59} \,\Bigl( \underline{0 - 4212 \q +448578 \q^2}
+374980104 \q^3 +2020724648442 \q^4+\dots \Bigr),
\\
\hspace{-0.4cm}
h_{2} =&\, \q^{-\frac{63}{24}+\frac29} \,\Bigl( \underline{0 + 0 \q + 158436 \q^2}
-12471246 \q^3 +174600085086 \q^4+\dots\Bigr),
\\
\hspace{-0.4cm}
h_{3} =&\, \q^{-\frac{63}{24}} \,\Bigl(\underline{0 + 0 \q +10206 \q^2}
-\dotuline{13828428 \q^3}+24425287884 \q^4+\dots \Bigr),
\\
\hspace{-0.4cm}
h_{4} =&\, \q^{-\frac{63}{24}+\frac89} \,\Bigl( \underline{0 + 0 \q}-\dotuline{11040786 \q^2}
+6769752552 \q^3 + 17629606262268 \q^4 +\dots\Bigr).
\end{split}
\label{h1X33}
\ee
Unfortunately, with GV invariants being known up to genus 29, Eq. \eqref{thmS11inv} only allows to confirm the coefficients $-6+720\q$ in $h_0$ and $0\,\q^0$ in $h_3$. Applying \eqref{thmS11inv} with $k=k_0-1$, we find evidence that 
the coefficients of all terms up to $\q^5$ in $h_{0}$ and all the vanishing coefficients in other components are indeed correct. 
Moreover, the $\cO(\q^3)$ coefficient in $h_3$ and $\cO(\q^2)$ coefficient in $h_4$ as well as all polar terms turn out to agree with the ansatz \eqref{naive}. 
In addition, we observe that the coefficients $-4212$ in $h_1$ and 10206 in $h_3$ are given by $\frac15\PT(10,-9)$ and $-\frac16\PT(12,-12)$, even though the corresponding values of $(Q,m)$ do not satisfy the optimality conditions. 
Thus, there is strong evidence that \eqref{h1X33} is correct.

\begin{table} 
\be
\hspace*{-5mm}
\begin{array}{|r|r|rrrrrr|}
\hline
Q & g_C(Q) & \delta=0 &  \delta=1 &  \delta=2 &  \delta=3 &  \delta=4 & \delta=5\\
\hline
 1 & 1 & 0 & 1053 &  &  &  &  \\
 2 & 2 & 0 & 0 & 52812 &  &  &  \\
 3 & 3 & 0 & 0 & 3402 & 6424326 &  &  \\
 4 & 3 & 0 & 0 & 5520393 & 1139448384 &  &  \\
 5 & 4 & 0 & 0 & 5520393 & 4820744484 & 249787892583 &  \\
 6 & 6 & 0 & 0 & 10206 & 6852978 & 23395810338 & 3163476682080 \\
 7 & 7 & 0 & 0 & 158436 & -484542 & 174007524240 & 42200615912499 \\
 8 & 8 & 0 & 6318 & 372762 & -784819773 & 2028116431098 & 785786604262830 \\
 9 & 10 & 15 & 1170 & 39033 & -5412348 & -61753761036 & 36760497856020 \\
 10 & 11 & 0 & -21060 & -1421550 & 1150458714 & -4055688274977 & 1055748342473838 \\
 11 & 13 & 0 & 0 & 792180 & 42487254 & 523544632866 & -277740359622189 \\
 12 & 15 & 0 & 0 & -61236 & -67672476 & -96817818078 & 107933688748656 \\
 13 & 16 & 0 & 0 & 66244716 & 32180134734 & -71248361250798 & 17551409134469472 \\
 14 & 18 & 0 & 0 & -77285502 & -38299950252 & 89193730254030 & -23552769634742655 \\
 15 & 21 & 0 & 0 & 91854 & 107320680 & 167270244048 & -217376516354913 \\
 16 & 23 & 0 & 0 & -1584360 & -48866814 & -1393793916300 & 990222417035712 \\
 17 & 25 & 0 & 50544 & 2609334 & -3916924776 & 18349298486658 & -7855011831413205 \\
 18 & 28 & -90 & -5220 & -120186 & 23305068 & 338860808028 & -372702765685392 \\
 19 & 30 & 0 & -63180 & -2653560 & 5125104738 & -24509155811472 & 11014900785838314 \\
 20 & 33 & 0 & 0 & -2534976 & -34970130 & -2437828042176 & 1882564212119436 \\
 21 & 36 & 0 & 0 & -183708 & -222958548 & ? & ? \\
 22 & 38 & 0 & 0 & -209774934 & -107171300556 & ? & ? \\
 23 & 41 & 0 & 0 & -231856506 & -117773956584 & ? & ? \\
 24 & 45 & 0 & 0 & -244944 & -297478548 & -512109217728 & 784094829426108 \\
 25 & 48 & 0 & 0 & -4119336 & 58959090 & -4179092501304 & 3448552834527066 \\
 26 & 51 & 0 & -122148 & -589680 & 10761641532 & -53529168000492 & 27142164772551882 \\
 27 & 55 & -198 & -1656 & 78588 & 61819596 & 890324824482 & -1089181917906228 \\
 28 & 58 & 0 & 143208 & -1236222 & -12753357660 & 63994621219614 & -33427151297813844 \\
 29 & 62 & 0 & 0 & -5703696 & 236542518 & -5924469211524 & 5107223091368232 \\
 30 & 66 & 0 & 0 & 398034 & 475327980 & 841660464438 & -1378633833342540 \\
 31 & 69 & 0 & 0 & -452672226 & -211037368248 & 718903306166688 & -291661896939934680 \\
 32 & 73 & 0 & 0 & 485794584 & 223088117976 & -775956404598264 & 320699529577227510 \\
 33 & 78 & 0 & 0 & -489888 & -576452916 & -1032773712696 & 1747571748926544 \\
 34 & 82 & 0 & 0 & 8080236 & -653114988 & 8557070940234 & -7761488275449180 \\
 35 & 86 & 0 & -231660 & 14695668 & 20684191104 & -108491328034740 & 62714086906118814 \\
 \hline
\end{array}
\nonumber
\ee
\caption{GV invariants $\GVg{g_C(Q)-\delta}$ 
for $X_{3,3}$, assuming modularity. A question mark indicates that the result 
depends on as yet unknown PT invariants.}
\vspace{-0.5cm}
\label{table_GVX33}
\end{table}

Assuming that it is, one can produce additional boundary conditions for the direct integration method (see Table \ref{table_GVX33}), allowing to reach genus 33, beyond the genus 29 available using standard boundary conditions.

\subsection{$X_{4,4}$}

We now consider the complete intersection of degree $(4,4)$ in $\IP^5_{2,2,1,1,1}$, In this case, $\kappa=4$,  $n_1^p=6$ and $n_1^c=1$ so the modular form $h_\mu$ is uniquely fixed by 5 coefficients. This model was 
first considered in \cite{Alexandrov:2022pgd}, assuming the naive Ansatz \eqref{naive} for the polar terms. Using GV invariants up to genus 26, Eq. \eqref{thmS11inv} predicts \be
\begin{split}
\hspace{-0.2cm}
h_{0}=&\, \q^{-\frac{44}{24}} \,\Bigl( \underline{-4 + 432 \q}
-10032 \q^2 + 148611456 \q^3 -\tfrac{1}{24} \PT(16, -36)_4 \q^4
+\dots\Bigr),
\\
\hspace{-0.2cm}
h_{1} =&\, \q^{-\frac{44}{24}+\frac58} \,\Bigl( \underline{0 
+ \tfrac{1}{17} \PT(13,-26) \q} -\tfrac{1}{16} \PT(13,-25) \q^2 
+\dots\Bigr),
\\
\hspace{-0.2cm}
h_{2} =&\, \q^{-\frac{44}{24}+\frac12} \,\Bigl(\underline{0 -2816 \q}
+\tfrac{1}{19} \PT(14,-29) \q^2 +  \dots\Bigr). 
\end{split}
\ee
This is sufficient information to fully determine the generating series:
\be
\begin{split}
h_{\mu}=&\, \frac{1}{\eta^{56}} \[\frac{319 E_4^5 E_6+113 E_4^2 E_6^3}{11664 }
-\frac{146 E_4^6+1025 E_4^3 E_6^2+125 E_6^4}{972}\, D
\right.
\\
&\,\left.
-\frac{566 E_4^4 E_6+298 E_4 E_6^3}{81}\, D^2\]\vths{4}_{\mu},
\end{split}
\label{hmuX44}
\ee
with the following expansion:
\be
\begin{split}
\hspace{-0.2cm}
h_{0}=&\, \q^{-\frac{44}{24}}\,\Bigl(\underline{-4 + 432 \q} - 10032 \q^2 + 148611456 \q^3 +53495321332 \q^4
\\
&\, 
+5858228664240 \q^5+338470263518000 \q^6 +
+12773210724578176 \q^7 \\
&\, + 352882974651781356 \q^8 \dots\Bigr),
\\
\hspace{-0.2cm}
h_{1} =&\, \q^{-\frac{44}{24}+\frac58} \,\Bigl(\underline{0 - 7424  \q} +7488256 \q^2 +7149513728 \q^3
+1104027086592 \q^4
\\
&\, 
+78370863237632 \q^5+3411805769659904 \q^6 +\dots\Bigr),
\\
\hspace{-0.2cm}
h_{2} =&\, \q^{-\frac{44}{24}+\frac12} \,\Bigl( \underline{0 - 2816  \q} +2167680 \q^2 + 3503031296 \q^3
+619015800576 \q^4
\\
&\, 
+47430532268544 \q^5+2174342476769792 \q^6 +\dots\Bigr).
\end{split}
\label{exp-hX44}
\ee
In particular, the polar part agrees with the naive Ansatz \eqref{naive} that was assumed in \cite{Alexandrov:2022pgd}.
Assuming that \eqref{hmuX44} is correct, one can produce additional boundary conditions for the direct integration method (see Table \ref{table_GVX44}) allowing to reach genus 32. 
Furthermore, the term of order $\q^8$ in $h_0$ is subject to Prop. \ref{lem-x=4} and provides an additional boundary condition at genus 33 that allows to push the direct integration 
up to genus 34.
With this new data, we can further check
that \eqref{exp-hX44} is consistent up to orders $\q^3$, $\q^4$ and $\q^5$ with \eqref{thmS11inv}, and even reproduce the coefficients of order $\q^5$ in $h_1$ and $\q^6$ in $h_2$ by applying \eqref{thmS11inv} with $k=k_0-1$. Thus, there is overwhelming
evidence that \eqref{hmuX44} (first conjectured in \cite{Alexandrov:2022pgd}) is indeed  correct.

\begin{table}
\be
\hspace*{-2mm}
\begin{array}{|r|r|rrrrrr|}
\hline
Q & g_C & \delta=0 &  \delta=1 &  \delta=2 &  \delta=3 &  \delta=4 & \delta=5\\
\hline
 1 & 1 & 0 & 3712 &  &  &  &  \\
 2 & 2 & 0 & 1408 & 982464 &  &  &  \\
 3 & 3 & 0 & 3712 & 6953728 & 683478144 &  &  \\
 4 & 5 & 6 & 384 & -12432 & 148208928 & 26841854688 & 699999511744 \\
 5 & 6 & 0 & -22272 & -14802048 & 7282971392 & 2161190443904 & 88647278203648 \\
 6 & 8 & 0 & 11264 & 6367872 & -7046285440 & 773557598272 & 362668189458048 \\
 7 & 10 & 0 & -37120 & -29359104 & 21832649216 & -2470237776768 & 278617066306304 \\
 8 & 13 & 32 & 2256 & 1728 & 742436816 & -227235799678 & 21187753811008 \\
 9 & 15 & 0 & 59392 & 50769664 & -44144389120 & 6476935523072 & -435143766495232 \\
 10 & 18 & 0 & 28160 & 18608000 & -28596423936 & 5125410035840 & -407275256652416 \\
 11 & 21 & 0 & 89088 & 78985472 & -74401243136 & 12415279501056 & -983151655520000 \\
 12 & 25 & 64 & 3408 & -88512 & 1931209232 & -723035097878 & 91699351475728 \\
 13 & 28 & 0 & -126208 & -113249280 & 113072299008 & -20742352242176 & 1847677262046464 \\
 14 & 32 & 0 & 56320 & 37806720 & -65347039488 & 14118281042560 & -1421834838533888 \\
 15 & 36 & 0 & -170752 & -153130496 & 160635374080 & -32024183351808 & 3157850965939456 \\
 16 & 41 & 112 & 2704 & -304000 & 3721068368 & -1558763217664 & 236934426952368 \\
 17 & 45 & 0 & 222720 & 198005504 & -217632888320 & 46941587427584 & -5089550372194304 \\
 18 & 50 & 0 & 95744 & 62342016 & -118393743616 & 29795537375872 & -3617592752039168 \\
 19 & 55 & 0 & 282112 & 247161600 & -284657253888 & 66297765350656 & -7874526931335680 \\
 20 & 61 & 176 & -2544 & -535104 & 6123275152 & -2835063806944 & 506679656992912 \\
 21 & 66 & 0 & -348928 & -299796992 & 362338625536 & -91027164419584 & 11813373104231424 \\
 22 & 72 & 0 & 146432 & 90051200 & -189003413760 & 55119444471424 & -7968454696971008 \\
 23 & 78 & 0 & -423168 & -355020800 & 451332651008 & -122204523786752 & 17290687427825664 \\
 24 & 85 & 256 & -15792 & -347712 & 9139883728 & -4682106927504 & 976043266192272 \\
 25 & 91 & 0 & 504832 & 411853056 & -552306662400 & 161054029205248 & -24792539565154304 \\
 26 & 98 & 0 & 208384 & 118230912 & -278503055616 & 93943868100224 & -16042635201490176 \\
 \hline
\end{array}
\nonumber
\ee
\caption{GV invariants $\GVg{g_C(Q)-\delta}$ for $X_{4,4}$, assuming modularity.}
\vspace{-0.5cm}
\label{table_GVX44}
\end{table}

\subsection{$X_{6,6}$}

\begin{table}
\be
\begin{array}{|r|r|rrrrrr|}
\hline
Q & g_C & \delta=0 &  \delta=1 &  \delta=2 &  \delta=3 &  \delta=4 & \delta=5\\
\hline
 1 & 2 & 1 & 360 & 67104 &  &  &  \\
 2 & 4 & -6 & -928 & 291328 & 40692096 & 847288224 &  \\
 3 & 7 & -10 & -1807 & 867414 & -39992931 & 1253312442 & 254022248925 \\
 4 & 11 & 16 & 3054 & -1752454 & 111434794 & -3192574724 & 53221926192 \\
 5 & 16 & 24 & 4582 & -2962836 & 226181014 & -8162501599 & 181541450026 \\
 6 & 22 & -34 & -6284 & 4516784 & -401198640 & 17316022722 & -470838831620 \\
 7 & 29 & -46 & -8028 & 6434962 & -657358676 & 33294527348 & -1078394245876 \\
 8 & 37 & 60 & 9658 & -8736900 & 1020136914 & -59981343076 & 2281585927834 \\
 9 & 46 & 76 & 10994 & -11438612 & 1519838840 & -102904929012 & 4549405838854 \\
 10 & 56 & -94 & -11832 & 14549836 & -2191738688 & 169716778670 & -8644920617316 \\
 11 & 67 & -114 & -11944 & 18070914 & -3076075680 & 270744962214 & -15768167683888 \\
 12 & 79 & 136 & 11078 & -21989312 & 4217848666 & -419620144388 & 27750103729188 \\
 13 & 92 & 160 & 8958 & -26275780 & 5666343644 & -633965979716 & 47309465417064 \\
 14 & 106 & -186 & -5284 & 30880152 & -7474321920 & 936139362212 & -78385363446040 \\
 15 & 121 & -214 & 268 & 35726786 & -9696789948 & 1353997949560 & -126558933123332 \\
 \hline
\end{array}
\nonumber
\ee
\caption{GV invariants $\GVg{g_C(Q)-\delta}$ for $X_{6,6}$, assuming modularity.}
\vspace{-0.5cm}
\label{table_GVX66}
\end{table}

We now consider the complete intersection of degree $(6,6)$ in $\IP^5_{3,3,2,2,1,1}$, In this case, $\kappa=1$,  $n_1^p=1$ and $n_1^c=0$ so the scalar modular form $h=h_0$ is uniquely fixed by a single coefficient. Since the leading coefficient is known, the generating series is necessarily \cite{Alexandrov:2022pgd}
\be
\begin{split}
h=
-\frac{2E_4 E_6}{\eta^{23}}
=&\,  \q^{-\frac{23}{24}} \,\Bigl( - \underline{2}
+482 \q + 282410 \q^2 + 16775192 \q^3
+ 460175332 \q^4
\\
&\, 
+8112401426 \q^5+106227128612 \q^6+1118140132310 \q^7 +\dots \Bigr).
\end{split}
\label{hmuX66}
\ee
Using GV invariants up to genus 18, we can use Eq. \eqref{thmS11inv} to confirm all terms
up to (and including) $\q^3$. The $\cO(\q^2)$ coefficient
can also be verified independently using Prop. \ref{lem-x=4}. 
We note that the coefficient $J_1=482$ differs
from the naive prediction $\chi_\CY(\chi_D-1)=-120$, due to the singular curve where the two degree-one homogeneous coordinates vanish simultaneously.
Assuming \eqref{hmuX66} is correct, one can produce additional boundary conditions for the direct integration method (see Table \ref{table_GVX66}), allowing to reach genus 22, beyond the genus 18 available by standard methods.

\subsection{$X_{6,2}$}

\begin{table} 
\be
\hspace*{-1mm}
\begin{array}{|r|r|rrrrrr|}
\hline
Q & g_C & \delta=0 &  \delta=1 &  \delta=2 &  \delta=3 &  \delta=4 & \delta=5\\
\hline
 1 & 1 & 0 & 4992 &  &  &  &  \\
 2 & 2 & -4 & -504 & 2388768 &  &  &  \\
 3 & 3 & 0 & 14976 & 1228032 & 2732060032 &  &  \\
 4 & 5 & 10 & 1456 & 87376 & -13098688 & 79275664800 & 4599616564224 \\
 5 & 6 & 0 & -59904 & -7098624 & -5731751168 & 3921835430016 & 633074010435840 \\
 6 & 8 & -36 & -7176 & 18680344 & 1776341072 & -3978452463012 & 482407033529880 \\
 7 & 10 & 0 & -89856 & -11017344 & -11354017792 & 11762488063616 & -1739233315959552 \\
 8 & 13 & 45 & 7112 & 505904 & -26309632 & 397727244436 & -143734260919104 \\
 9 & 15 & 0 & 134784 & 16463616 & 19737387904 & -23804240518144 & 4517211164682496 \\
 10 & 18 & -72 & -14448 & 46979932 & 4192707384 & -16032046818880 & 3592452810930880 \\
 11 & 21 & 0 & 194688 & 22853376 & 30801879680 & -40000460268544 & 8544691377500288 \\
 12 & 25 & 85 & 12304 & 783986 & -82475800 & 1034830295100 & -453574307495648 \\
 13 & 28 & 0 & -269568 & -28838784 & -44399031040 & 60578209825920 & -14063973371548160 \\
 14 & 32 & -132 & -24072 & 94648652 & 6368703752 & -36273186323128 & 9662743440528720 \\
 15 & 36 & 0 & -359424 & -33541248 & -60489784576 & 85729492457728 & -21374438244874240 \\
 16 & 41 & 145 & 17072 & 812714 & -187672824 & 1995257447554 & -967713951224848 \\
 17 & 45 & 0 & 464256 & 35702784 & 79038018176 & -115677339047680 & 30834343468656896 \\
 18 & 50 & -216 & -32592 & 162178504 & 5450058896 & -64779929898136 & 19857403612132080 \\
 19 & 55 & 0 & 584064 & 33885696 & 100058488192 & -150674169631488 & 42863166970805504 \\
 20 & 61 & 225 & 18056 & 407662 & -336445208 & 3284718881588 & -1740135148941248 \\
 21 & 66 & 0 & -718848 & -26472576 & -123640960768 & 191002891152640 & -57945747882503680 \\
 22 & 72 & -324 & -35400 & 249881836 & -2285410296 & -101366274190332 & 35739934943323496 \\
 23 & 78 & 0 & -868608 & -11666304 & -149974972160 & 236979848907520 & -76636691924665088 \\
 24 & 85 & 325 & 10936 & -301906 & -502103864 & 4910704750656 & -2840815244088832 \\
 25 & 91 & 0 & 1033344 & -12509952 & 179377684096 & -288959785268736 & 99565363235000576 \\
 \hline
\end{array}
\nonumber
\ee
\caption{GV invariants $\GVg{g_C(Q)-\delta}$ for $X_{6,2}$, assuming modularity.}
\vspace{-0.5cm}
\label{table_GVX62}
\end{table}

Finally, we consider the complete intersection of degree $(6,2)$ in $\IP^5_{3,1,1,1,1}$, In this case, $\kappa=4$,  $n_1^p=7$ and $n_1^c=0$ so the modular form $h_\mu$  is uniquely fixed by 7 coefficients. This model was 
first considered in \cite{Alexandrov:2022pgd}, assuming the naive Ansatz \eqref{naive} for the polar terms. Using GV invariants up to genus 47, Eq. \eqref{thmS11inv} and Prop. \ref{lem-x=4} predict  \be
\begin{split}
\hspace{-0.4cm}
h_{0}=&\,\q^{-\frac{56}{24}} \,\Bigl(\underline{5-1024 \q+96390 \q^2}
+2412544 \q^3+79408559682 \q^4
\\
&\, 
+34353222823936 \q^5
+4968007484511900 \q^6
+389580600939126784 \q^7\\
&\,
+ 20087040094321343657 \q^8 
+\dots\Bigr),
\\
\hspace{-0.4cm}
h_{1} =&\, \q^{-\frac{56}{24}+\frac58}\,\Bigl( \underline{0+14976 \q}
-2520960 \q^2+2887376128 \q^3 +3893723178368 \q^4
\\
&\, 
+809149241398912 \q^5 + 78688042019771776 \q^6 
+ 4713543813612260224 \q^7 
\\
&\, 
+ 198770720341455440256 \q^8
+\dots\Bigr),
\\
\hspace{-0.4cm}
h_{2} =&\, \q^{-\frac{56}{24}+\frac12}\,\Bigl( 
\underline{\underline{6}-\underline{1536} \q}
-4647736 \q^2+621617152 \q^3 +1986721226130 \q^4
\\
&\, 
+ 453923870489088_3 \q^5
+\tfrac{1}{30}\, (-2242806300 - \PT(18, -43))\q^6
+\dots\Bigr).
\end{split}
\ee
In particular, the polar coefficients in $h_2$ differ from the values 
$16-4608\q$ predicted by the naive Ansatz \eqref{naive}. There is indeed 
a unique modular form which fits this vastly overdetermined set of coefficients:
\be
\begin{split}
h_{\mu}=&\, \frac{1}{\eta^{68}} \[-\frac{5(727 E_4^8+3322 E_4^5 E_6^2+1135 E_4^2 E_6^4)}{559872}
\right.
\\
&\,
+\frac{2409 E_4^6 E_6+5830 E_4^3 E_6^3+401 E_6^5}{5184}\, D
\\
&\, \left.
+\frac{2519 E_4^7+17978 E_4^4 E_6^2+5423 E_4 E_6^4}{1944}\, D^2\]\vths{4}_{\mu}.
\end{split}
\label{hmuX62}
\ee
Assuming that \eqref{hmuX62} is correct, one can produce additional boundary conditions for the direct integration method (see Table \ref{table_GVX62}), allowing in principle to reach genus 78, beyond the genus 63 available by standard methods.


\providecommand{\href}[2]{#2}\begingroup\raggedright\endgroup

\end{document}